\definecolor{linkblue}{rgb}{0,0,0.7}
\definecolor{citeblue}{rgb}{0,0.3,0.5}
\definecolor{lightgrey}{gray}{0.8}
\def\ls{$(\mathrm{LS}_{\tau})$}
\def\qp{$(\mathrm{QP}_{\lambda})$}
\def\bp{$(\mathrm{BP}_{\sigma})$}
\def\bq{$(\mathrm{BQ}_{\sigma})$}
\def\gls{$(\mathrm{LS}_{\tau, \mathcal{K}})$}
\def\gbp{$(\mathrm{BP}_{\sigma, \mathcal{K}})$}
\def\gqp{$(\mathrm{QP}_{\lambda, \mathcal{K}})$}
\DeclareMathOperator{\1}{\mathds{1}}
\DeclareMathOperator*{\argmin}{\arg\min}
\DeclareMathOperator{\cone}{\mathrm{cone}}
\DeclareMathOperator{\cvx}{\mathrm{cvx}}
\DeclareMathOperator{\E}{\mathbb{E}}
\DeclareMathOperator{\Null}{\mathrm{null}}
\DeclareMathOperator{\Rng}{\mathrm{range}}
\DeclareMathOperator{\sgn}{\mathrm{sgn}}
\DeclareMathOperator{\Span}{\mathrm{span}}
\DeclareMathOperator{\w}{w}
\DeclareMathOperator{\rad}{rad}
\DeclareMathOperator{\supp}{supp}
\newcommand{\iid}{\ensuremath{\overset{\text{iid}}{\sim}}}
\newcommand{\reals}{\ensuremath{\mathbb{R}}}
\newcommand{\nats}{\ensuremath{\mathbb{N}}}
\newcommand{\ints}{\ensuremath{\mathbb{Z}}}
\newcommand{\sph}{\ensuremath{\mathbb{S}}}
\newcommand{\ip}[1]{\ensuremath{\langle #1\rangle}}
\newcommand{\ie}{i.e.,~}
\newcommand{\eg}{e.g.,~}
\let\oldeqref\eqref
\RenewDocumentCommand\eqref{s m}{%
  \IfBooleanTF#1%
  {\textup{\tagform@{\ref*{#2}}}}
  {\oldeqref{#2}}
}
\renewcommand*\sectionautorefname{\S\@gobble}
\theoremstyle{plain}
\newtheorem{thm}{Theorem}[section]
\newaliascnt{prop}{thm}
\newtheorem{prop}[prop]{Proposition}
\newaliascnt{coro}{thm}
\newtheorem{coro}[coro]{Corollary}
\newaliascnt{lem}{thm}
\newtheorem{lem}[lem]{Lemma}
\theoremstyle{definition}
\newtheorem{defn}{Definition}[section]
\theoremstyle{remark}
\newtheorem{fact}{Fact}[section]
\newtheorem{rmk}[fact]{Remark}
\newtheorem*{rmk*}{Remark}
\newtheorem*{fact*}{Fact}
\newcommand{\vertiii}[1]{{\left\vert\kern-0.25ex\left\vert\kern-0.25ex\left\vert #1 
    \right\vert\kern-0.25ex\right\vert\kern-0.25ex\right\vert}}
\title{\Large On the best choice of \textsc{Lasso} program given data
  parameters}
\author{Aaron~Berk, Yaniv~Plan, and~Ozgur~Yilmaz\thanks{This work was supported
    by the Natural Sciences and Engineering Research Council of Canada (NSERC)
    [CGSD3-489677 to A.B., 22R23068 to Y.P., 22R82411 to O.Y., 22R68054 to
    O.Y.]; and the Pacific Institute for the Mathematical Sciences (PIMS) [CRG
    33: HDDA to Y.P., CRG 33: HDDA to O.Y.].}
  \thanks{A.\ Berk, Y.\ Plan \& O.\ Yilmaz are with the Dept.\ Mathematics,
    University of British Columbia, Vancouver, BC, Canada (email:~aberk@math.ubc.ca)
    }}
\begin{document}
\setlength{\parskip}{0pt plus .5pt minus .1pt}
\setlength{\abovedisplayskip}{5pt plus .2pt minus 2pt}
\setlength{\belowdisplayskip}{5pt plus .2pt minus 2pt}
\setlength{\abovedisplayshortskip}{3pt plus 1.2pt minus 1pt}
\setlength{\belowdisplayshortskip}{3pt plus 1.2pt minus 1pt}

\maketitle

\begin{abstract}
  Generalized compressed sensing (GCS) is a paradigm in which a structured
  high-dimensional signal may be recovered from random, under-determined, and
  corrupted linear measurements. Generalized \textsc{Lasso} (GL) programs are
  effective for solving GCS problems due to their proven ability to leverage
  underlying signal structure. Three popular GL programs are equivalent in a
  sense and sometimes used interchangeably. Tuned by a governing parameter, each
  admit an optimal parameter choice. For sparse or low-rank signal structures,
  this choice yields minimax order-optimal error. While GCS is well-studied,
  existing theory for GL programs typically concerns this optimally tuned
  setting. However, the optimal parameter value for a GL program depends on
  properties of the data, and is typically unknown in practical
  settings. Performance in empirical problems thus hinges on a program's
  parameter sensitivity: it is desirable that small variation about the optimal
  parameter choice begets small variation about the optimal risk. We examine the
  risk for these three programs and demonstrate that their parameter sensitivity
  can differ for the same data. We prove a \emph{gauge-constrained} GL program
  admits asymptotic cusp-like behaviour of its risk in the limiting low-noise
  regime. We prove that a \emph{residual-constrained} \textsc{Lasso} program has
  asymptotically suboptimal risk for very sparse vectors. These results contrast
  observations about an \emph{unconstrained} \textsc{Lasso} program, which is
  relatively less sensitive to its parameter choice. We support the asymptotic
  theory with numerical simulations, demonstrating that parameter sensitivity of
  GL programs is readily observed for even modest dimensional
  parameters. Importantly, these simulations demonstrate regimes in which a GL
  program exhibits sensitivity to its parameter choice, though the other two do
  not. We hope this work aids practitioners in selecting a GL program for their
  problem.
\end{abstract}

\begin{IEEEkeywords}
  Parameter sensitivity, \textsc{Lasso}, Compressed sensing, Convex
  optimization, Minimax risk
\end{IEEEkeywords}

\IEEEpeerreviewmaketitle

\section{Introduction\label{sec:introduction}}

\IEEEPARstart{U}{nder-determined} inverse problems are of fundamental importance
to modern mathematical and machine learning applications. In these problems, one
aims to recover or approximate a ground truth signal $x_{0} \in \reals^{N}$ from
noisy measurements $y \in \reals^{m}$ when $m \ll N$. The paradigm of
generalized compressed sensing (GCS) further specifies that the signal $x_{0}$
be well characterized by some structural proxy, known \emph{a priori}, and that
the measurement process be linear: $y = Ax_{0} + \eta z$, where
$A \in \reals^{m \times N}$ is typically \emph{random}, and $z$ is either a
random or deterministic corruption with noise scale $\eta > 0$.

For example, in MR imaging, one may wish to recover the wavelet coefficients of
an image by randomly subsampling its Fourier
coefficeints~\cite{lustig2007sparse, lustig2008compressed}. In geophysics, one
may wish to determine a region's bathymetry from a small number of radar
measurements taken at the surface~\cite{kumar2015source}, or to obtain a
subsurface image using a small number of
geophones~\cite{herrmann2012fighting}. Recent investigations suggest how
well-analyzed approaches for solving inverse problems may help to elucidate
``mysterious'' behaviours of high-dimensional non-linear function
approximators~\cite{hastie2019surprises, mei2019generalization}. Moreover,
compressed sensing theory may be used to prove recovery guarantees for certain
neural network architectures and particular data regimes~\cite{hand2019global}.

The now classical CS result~\cite{foucart2013mathematical} shows that when
$x_{0} \in \reals^{N}$ is an $s$-sparse signal, $m \geq Cs \log (e N /s)$
measurements suffice to efficiently recover $x_{0}$ from $(y, A)$ \emph{with
  high probability} on the realization of $A$. The \textsc{Lasso} is a common
and well-analyzed tool for effecting the recovery of
$x_{0}$~\cite{bickel2009simultaneous, chen2001atomic, shaobing1994basis,
  foucart2013mathematical, stojnic2013framework, tibshirani1996regression,
  van2008probing}. Currently, ``\textsc{Lasso}'' is an umbrella term referring
to three or more different \textsc{Lasso} programs, though it originally
referred to the $\ell_{1}$-constrained program {\ls}, which is defined
below~\cite{tibshirani1996regression}. Effective for its ability to perform
simultaneous best-basis and subset selection~\cite{tibshirani1996regression},
the \textsc{Lasso} is a convex optimization approach that has several variants
and cousins~\cite{bickel2009simultaneous, oymak2013squared,
  thrampoulidis2018precise, van2008probing}.

Of particular interest to this work, we introduce three common \textsc{Lasso}
programs and their solutions:
\begin{align}
  \label{eq:ls}
  \hat x(\tau) %
  &\in \argmin_{x\in \reals^{N}} \big\{ \|y - Ax\|_{2}^{2} : \|x\|_{1} \leq \tau \big\} %
    \tag{$\mathrm{LS}_{\tau}$}
  \\
  \label{eq:bp}
  \tilde x(\sigma) %
  &\in \argmin_{x\in \reals^{N}} \big\{ \|x\|_{1} : \|y - Ax\|_{2}^{2} \leq \sigma^{2} \big\} %
    \tag{$\mathrm{BP}_{\sigma}$}
  \\
  \label{eq:qp}
  x^{\sharp}(\lambda) %
  &\in \argmin_{x\in \reals^{N}} \big\{ \frac12 \|y - Ax\|_{2}^{2} + \lambda \|x\|_{1} \big\}. %
    \tag{$\mathrm{QP}_{\lambda}$}
\end{align}
Some naming ambiguity for these programs exists in the literature. Here, we
refer to {\ls} as constrained \textsc{Lasso}, {\qp} as unconstrained
\textsc{Lasso}, and {\bp} as [quadratically constrained] basis pursuit;
solutions for each program are denoted, respectively, by $\hat x(\tau)$,
$\tilde x(\sigma)$ and $x^{\sharp}(\lambda)$. Our notation and naming convention
for these three programs is similar to that used in~\cite{van2008probing}.

Generalizations of these programs, commonly referred to as generalized
\textsc{Lasso}, allow for the recovery of signals with other kinds of structure
that are well modelled by convex proxy sets. To introduce the generalized
\textsc{Lasso} programs, first let
$\emptyset \neq \mathcal{K} \subseteq \reals^{N}$ be a convex set and denote by
$\|\cdot \|_{\mathcal{K}}$ the Minkowski functional of $\mathcal{K}$ (\ie gauge). For
$\sigma, \tau, \lambda > 0$, the following \emph{generalized} \textsc{Lasso}
programs, which are convex, are defined by:
\begin{align}
  \label{eq:lsK}
  \hat x(\tau; A, y, \mathcal{K}) %
  &\in \argmin_{x\in \reals^{N}} \big\{ \|y - Ax\|_{2}^{2} : %
    x \in \tau \mathcal{K} \big\} %
    \tag{$\mathrm{LS}_{\tau, \mathcal{K}}$}
  \\
  \label{eq:bpK}
  \tilde x(\sigma; A, y, \mathcal{K}) %
  &\in \argmin_{x\in \reals^{N}} \big\{ \|x\|_{\mathcal{K}} : \|y - Ax\|_{2}^{2} \leq \sigma^{2} \big\} %
    \tag{$\mathrm{BP}_{\sigma, \mathcal{K}}$}
  \\
  \label{eq:qpK}
  x^{\sharp}(\lambda; A, y, \mathcal{K}) %
  &\in \argmin_{x\in \reals^{N}} \big\{ \frac12 \|y - Ax\|_{2}^{2} + \lambda \|x\|_{\mathcal{K}} \big\}. %
    \tag{$\mathrm{QP}_{\lambda, \mathcal{K}}$}
\end{align}
In the standard CS setting, the gauge is the $\ell_{1}$-norm, though $x_{0}$ is
assumed to belong to the set of $s$-sparse vectors
$\Sigma_{s}^{N} := \{ x \in \reals^{N} : |\supp(x)| \leq s\}$. So,
$x_{0}$ does not necessarily belong to the convex proxy set $\mathcal{K} = B_{1}^{N}$,
where $B_{1}^{N}$ denotes the $N$-dimensional unit $1$-norm ball. In particular,
$B_{1}^{N}$ itself serves as a convex proxy set for sparse vectors in the sense
that if $x\in \reals^{N}$ is $s$-sparse, then $\|x\|_{1} / \|x\|_{2}$ is small
relative to non-sparse vectors.

It is worth mentioning a brief note on uniqueness. When $A$ is under-determined
and has a suitable randomness, it is straightforward to show the programs {\bp}
and {\qp} admit unique solutions almost surely on the realization of $A$. A
detailed exposition for {\qp} is given in~\cite{tibshirani2013lasso}. For a
sufficient condition on $A$ giving uniqueness of {\bp},
see~\cite{zhang2015necessary}. However, {\ls} does not always admit a unique
solution. For instance, if $\tau$ is ``too large'', then there may be infinitely
many solutions $x \in \tau B_{1}^{N}$ satisfying $\|y - Ax \|_{2} = 0$. This
fact is fundamental to one of our results in~\autoref{sec:LS-instability}. By
mild abuse of notation, when the solution to a program is unique we will replace
``$\in$'' with ``$=$'' in the definitions of the solutions for each
program. Otherwise, we define each of $\hat x(\tau), \tilde x(\sigma)$, and
$x^{\sharp}(\lambda)$ as the solution yielding worst-case error, and which
appears first when ordered lexicographically. For example, $\hat x(\tau)$ refers
to the particular solution solving {\ls} such that
$\|\hat x(\tau) - x_{0}\|_{2} \geq \|\hat x - x_{0}\|_{2}$ for any other
$\hat x$ solving {\ls}. We make an analogous modification to the definitions of
the solutions to the generalized \textsc{Lasso} programs.

To relate the recovery performance of each program, we compare their recovery
errors. While there are several possibilities for measuring the recovery error,
the expected squared error and noise-normalized expected squared error of the
estimator are common when the noise, $z$, is random~\cite{oymak2013squared}. In
this work, we'll define the loss for an estimator as the noise normalized
squared error of that estimator (with respect to the ground truth signal
$x_{0}$); and define the estimator's risk as the expectation of the loss with
respect to $z$. Note that the risk and loss are functions of the random matrix
$A$. Specifically, the loss is defined for {\ls}, {\bp}, {\qp} respectively by:
\begin{align*}
  \hat{L}(\tau; x_{0}, A, \eta z) %
  &:= \eta^{-2} \|\hat x(\tau) - x_{0} \|_{2}^{2},
  \\
  \tilde L(\sigma; x_{0}, A, \eta z) %
  &:= \eta^{-2} \|\tilde x(\sigma) - x_{0} \|_{2}^{2},
  \\
  L^{\sharp}(\lambda; x_{0}, A, \eta z) %
  &:= \eta^{-2} \|\hat x(\eta\lambda) - x_{0} \|_{2}^{2},
\end{align*}
and the risk by:
\begin{align*}
  \hat R (\tau; x_{0}, A, \eta) %
  &:= \E_{z} \hat{L}(\tau; x_{0}, A, \eta z) %
  \\\tilde R (\sigma; x_{0}, A, \eta) %
  &:= \E_{z} \tilde L(\sigma; x_{0}, A, \eta z) %
  \\R^{\sharp} (\lambda; x_{0}, A, \eta) %
  &:= \E_{z} L^{\sharp}(\lambda; x_{0}, A, \eta z). %
\end{align*}

  Minimax order-optimal error rates are well-known for
  $\hat x(\tau; y, A, \mathcal{K})$ when $\tau$ is equal to the optimal
  parameter choice, $A$ is a matrix whose rows are independent, isotropic
  subgaussian random vectors, and $\mathcal{K}$ is a symmetric, closed convex
  set containing the origin~\cite{foucart2013mathematical, liaw2017simple,
    oymak2013squared}. A kind of equivalence between the three estimators
  (\emph{cf}. \autoref{prop:foucart-program-equivalence}) allows, in kind, for
  the characterization of the error rates for $\tilde x(\sigma)$ and
  $x^{\sharp}(\lambda)$ when $\sigma$ and $\lambda$ are optimally
  tuned. However, the error of $\hat x(\tau; y, A, \mathcal{K})$ is not fully
  characterized in the setting where $\tau$ is not the optimal
  choice. Similarly, the programs {\ls}, {\bp} and {\qp} are often referred to
  interchangeably, but a full comparison of the error of the three estimators
  $\hat x(\tau), \tilde x(\sigma)$, and $x^{\sharp}(\lambda)$, as a function of
  their governing parameters, is lacking. It is an open question if there are
  settings in which one estimator is always preferable to another.

  Understanding the sensitivity of a \textsc{Lasso} program to its parameter
  choice is crucial. While theoretical guarantees for recovery error are
  typically given for an oracular choice of the parameter, the optimal parameter
  setting is generally unknown in practice. Thus, the usefulness of theoretical
  recovery guarantees may hinge on the assumption that the recovery error is
  stable with respect to variation of the governing parameter. In particular,
  one may hope that small changes in the governing parameter beget no more than
  small changes in the risk or loss. 

  We take a step toward characterizing the performance and sensitivity of the
  three programs introduced by examining particular asymptotic parameter regimes
  for each program. We do this by extending results of~\cite{berk2020sl1mpc,
    berk2019pdparmsens} from the setting where $A$ is identity to the setting
  where $A$ is a matrix whose rows are independent, isotropic subgaussian random
  vectors. In this setting, we prove the existence of regimes in which CS
  programs exhibit sensitivity to their parameter choice: small changes in
  parameter values can lead to blow-up in risk. Despite the notion of
  equivalence hinted at above, we demonstrate regimes in which one program
  exhibits sensitivity, while the other two do not. For example, in the very
  sparse regime, our theory and simulations suggest not to use {\bp}. In the
  low-noise regime, they suggest not to use {\ls}. Assuredly, we identify
  situations where CS programs perform well in theory and \emph{in silico}
  alike. We hope that the asymptotic theory, coupled with fairly extensive
  numerical simulations, aid practitioners in deciding which CS program to
  select.

\section{Summary of results to follow\label{sec:summary-results}}

As a way of alluding to the main results to follow, we start by describing three
sibling results. We intend for them to contrast the behaviour of the three
$\ell_{1}$ programs that are the main focus of this work. Define the worst-case
risk for {\ls} in the low-noise regime by:
\begin{align*}
  R^{*}(s, A) %
  :=  \lim_{\eta \to 0} \sup_{x \in \Sigma_{s}^{N} \cap \partial B_{1}^{N}} %
  \hat R (1; x, A, \eta)
\end{align*}
Importantly, under mild assumptions, $R^{*}(s, A)$ is nearly equivalent to the
optimally tuned worst-case risk for {\ls}. Namely, for all $\eta > 0$,
\begin{align*}
  R^{*}(s, A) %
  \leq \sup_{x \in \Sigma_{s}^{N}} \hat R(\|x\|_{1}; x, A, \eta)
  \leq C R^{*}(s, A).
\end{align*}
This result is treated formally in \autoref{sec:rhat-nearly-monotone}. In each
case discussed below, the performance of the estimators will be compared to
$R^{*}(s, A)$ as a benchmark, noting that this quantity is minimax order optimal
in the sense of \autoref{prop:Rstar-minimax-optimal}.

In~\autoref{sec:LS-instability}, we show that {\ls} exhibits an asymptotic
instability in the low-noise regime. There is exactly one value $\tau^{*}$ of
the governing parameter yielding minimax order-optimal error, with any choice
$\tau \neq \tau^{*}$ yielding markedly worse behaviour. This result holds for
normalized $K$-subgaussian matrices $A$, which are defined
in~\autoref{def:isgrm}. The intuition provided by this result is that {\ls} is
extremely sensitive to the value of $\tau$ in the low-noise regime, making
empirical use of {\ls} woefully unstable in this regime.

\begin{thm}[{\ls} instability simplified]
  Let $1 \leq s \leq m < N < \infty$ be integers. If $A \in \reals^{m \times N}$
  is a normalized $K$-subgaussian matrix with
  $m > C_{\varepsilon} \delta^{-2} K^{2} \log (K) s \log
  \left(\tfrac{N}{s}\right)$, then with probability at least $1 - \varepsilon$ on
  the realization of $A$,
  \begin{align*}
     \lim_{\eta \to 0} \sup_{x \in \Sigma_{s}^{N} \cap B_{1}^{N}} %
    \frac{\hat R(\tau; x, A, \eta)}{R^{*}(s, A)} =
    \begin{cases}
      1 & \tau = 1\\
      \infty & \text{otherwise}
    \end{cases}
  \end{align*}

\end{thm}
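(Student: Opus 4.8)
The plan is to work directly with the low-noise limit $\eta \to 0$ of the worst-case loss, splitting into the matched case $\tau = 1$ and the mismatched case $\tau \ne 1$, and to carry out the entire argument on a single high-probability event $\mathcal{E}$ governing $A$. On $\mathcal{E}$ I would ask that $A$ have full row rank (so that the zero-residual set $\{x : Ax = y\}$ is a genuine affine subspace of dimension $N-m \ge 1$) and that $A$ satisfy a restricted nullspace / RIP-type property on $\Sigma_s^N$. Standard subgaussian concentration, packaged as in \autoref{def:isgrm}, together with the sample-complexity hypothesis $m > C_\varepsilon \delta^{-2} K^2 \log(K)\, s \log(N/s)$, delivers $\mathcal{E}$ with probability at least $1-\varepsilon$. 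On $\mathcal{E}$ the nullspace property says precisely that $\Null(A)$ meets the descent cone of $\|\cdot\|_1$ at every $s$-sparse $x_0 \in \partial B_1^N$ only at the origin, so that $\|x_0 + w\|_1 > \|x_0\|_1$ for every nonzero $w \in \Null(A)$.

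For the mismatched case $\tau \ne 1$ it suffices to exhibit one bad signal forcing the supremum to diverge, and I take $x_0 = e_1 \in \Sigma_s^N \cap \partial B_1^N$. If $\tau < 1$ then $x_0$ is infeasible, and since every competitor lies in $\tau B_1^N$ we have $\|\hat x(\tau) - x_0\|_2 \ge \mathrm{dist}_{\ell_2}(x_0, \tau B_1^N) =: b > 0$ for every realization of $z$ and every $\eta$; hence $\eta^{-2}\|\hat x(\tau) - x_0\|_2^2 \ge \eta^{-2} b^2 \to \infty$, and the risk diverges with no interchange of limits required. If $\tau > 1$ then $\|x_0\|_1 = 1 < \tau$, so $x_0$ lies in the interior of $\tau B_1^N$; because $A$ has full row rank, for all small $\eta$ the feasible zero-residual set is nonempty and lies within $o(1)$ of the fixed convex set $(x_0 + \Null(A)) \cap \tau B_1^N$, which has positive diameter determined by $\Null(A) \cap \tau B_1^N$. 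The worst-case selection then yields $\|\hat x(\tau) - x_0\|_2 \ge c > 0$ uniformly in small $\eta$ and in $z$, so again the risk diverges.

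For the matched case $\tau = 1$ the ratio is pinned from above and below. The nullspace property on $\mathcal{E}$ forces the limiting feasible zero-residual slice $(x_0 + \Null(A)) \cap B_1^N$ to collapse to the singleton $\{x_0\}$ for every $s$-sparse $x_0$ with $\|x_0\|_1 = 1$, whence $\hat x(1) \to x_0$ and the noise-normalized error converges to a finite first-order limit. Taking the supremum over $x \in \Sigma_s^N \cap \partial B_1^N$ and passing to the limit reproduces exactly $R^*(s,A)$ by its definition, giving the ratio $\le 1 + o(1)$, while the matching lower bound is immediate since the supremum ranges over the same boundary signals. The minimax optimality of $R^*(s,A)$ (\autoref{prop:Rstar-minimax-optimal}) and the program equivalence (\autoref{prop:foucart-program-equivalence}) certify that this benchmark is finite and is the correct normalization.

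The main obstacle is the $\tau > 1$ sub-case together with the dual estimate underlying $\tau = 1$: both hinge on converting the nullspace property into a \emph{quantitative} statement about the slice $(x_0 + \Null(A)) \cap B_1^N$ near a sparse point — a lower bound on its diameter when the constraint is loose, and a quantitative collapse when it is tight. Making the convergence $\hat x(\tau) \to \hat x_0$ uniform enough to lower-bound the worst-case error for all small $\eta$, and to justify selecting the lexicographically-first worst-case (possibly non-unique) solution, is the delicate step; once this is in place, everything else reduces to the concentration estimates supplying $\mathcal{E}$ at the stated probability and sample complexity.
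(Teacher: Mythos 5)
Your handling of the two mismatched cases is essentially the paper's own argument. For $\tau < 1$ your infeasibility/distance bound is exactly \autoref{lem:ls-instability-oc}, and for $\tau > 1$ your ``zero-residual slice of positive diameter plus worst-case tie-breaking'' is the same mechanism as \autoref{lem:ls-instability-uc}, where one writes $A\zeta = z$, observes $\xi = x_{0} + \eta\zeta$ is a zero-residual solution for all small $\eta$, and perturbs by a null-space vector to produce a second solution at distance of order $\tau - 1$ from $x_{0}$. One repair is needed there: the admissible range of $\eta$ depends on $z$ through $\|\zeta\|_{1}$, so the divergence of the loss is almost sure in $z$, not uniform in $z$ as you assert; passing from a.s.\ divergence of $\hat L$ to divergence of the risk $\hat R$ then requires Fatou's lemma.

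The genuine gap is the matched case $\tau = 1$, which is the technical heart of the paper's proof and is absent from yours in two respects. First, the qualitative collapse of $(x_{0} + \Null(A)) \cap B_{1}^{N}$ to $\{x_{0}\}$ gives pointwise convergence $\hat x(1) \to x_{0}$ but no rate, and hence says nothing about finiteness of $\lim_{\eta\to 0}\eta^{-2}\E\|\hat x(1) - x_{0}\|_{2}^{2}$; what is required is the quantitative bound $\|\hat x(1) - x_{0}\|_{2} \lesssim \eta\, \sup_{h}\ip{A h, z}$ over the normalized descent cone together with control of that supremum in expectation, uniformly in $\eta$. This is precisely what the paper builds in \autoref{prop:unif-noise-scale} via \autoref{lem:descent-cone-condition}, the deviation bound \autoref{lem:corollary-lmpv-thm1.4}, and the Sudakov--Fernique/Borell--TIS control of the conditionally Gaussian process in \autoref{lem:conditional-gp-expectation}; the positive lower bound on $R^{*}(s,A)$ (needed so the ratio at $\tau=1$ is $1$ rather than $0/0$ or $\infty/\infty$) comes from \autoref{lem:ls-instability-ec} via the minimax bound of \cite{candes2013well} and concentration of $\|A\|_{F}^{2}$. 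Your fallback of citing \autoref{prop:Rstar-minimax-optimal} to ``certify the benchmark'' is circular: the paper proves that proposition from exactly this machinery (\autoref{coro:opt-risk-near-equiv} and \autoref{lem:ls-instability-ec}). Second, your case analysis is internally inconsistent with the statement you set out to prove: you establish blow-up for \emph{every} mismatched pair $(x,\tau)$, yet evaluate the $\tau = 1$ branch with a supremum over $\Sigma_{s}^{N} \cap \partial B_{1}^{N}$ only. Over the stated index set $\Sigma_{s}^{N} \cap B_{1}^{N}$, any interior signal with $\|x\|_{1} < 1$ is itself mismatched at $\tau = 1$, and your own underconstrained argument would force that supremum to diverge. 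The paper's formal result, \autoref{thm:ls-instability}, is phrased to avoid this tension: in the tuned branch the parameter is matched to each signal ($\tau = \|x\|_{1}$), and $R^{*}(s,A)$ is defined by a supremum over the boundary. You must either adopt that reading explicitly or explain why interior signals do not dominate; silently swapping $B_{1}^{N}$ for $\partial B_{1}^{N}$ does neither.
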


Next, in~\autoref{sec:qp} we state a rephrasing
of~\cite[Theorem~3]{shen2015stable}. The result shows there is a parameter
$\lambda^{*}$ such that {\qp} is not sensitive to its parameter choice for
$\lambda \geq \lambda^{*}$. Right-sided parameter stability of {\qp} was first
established in~\cite[Theorem~7.2]{bickel2009simultaneous}. This well-known
result is contrasted in~\autoref{sec:numerical-results} with numerical results
demonstrating a left-sided parameter instability for {\qp} in the regime of high
sparsity, low noise, and large dimension.


\begin{thm}[{\qp} right-sided stability]
  Let $A \in \reals^{m \times N}$ be a normalized $K$-subgaussian matrix for
  $1 \leq m < N < \infty$. There is an absolute constant $C> 0$ such that if
  $\lambda \geq C \sqrt{\log N}$ and
  $m \geq C_{\varepsilon} \delta^{-2} K^{2} \log(K) s \log
  \left(\tfrac{N}{s}\right)$, then with probability at least $1 - \varepsilon$
  on the realization of $A$,
  \begin{align*}
    R^{\sharp}(\lambda; x_{0}, A, \eta) \leq C \lambda^{2} s.
  \end{align*}
\end{thm}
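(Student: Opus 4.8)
The plan is to run the classical \textsc{Lasso} oracle-inequality argument, specialized to the exactly-sparse, noise-normalized setting, and then upgrade the resulting high-probability loss bound to a bound on the risk. Write $\mu = \eta\lambda$ for the effective regularization used in {\qp}, set $h = x^{\sharp}(\mu) - x_{0}$, and let $S = \supp(x_{0})$ with $|S| \leq s$. Starting from optimality of $x^{\sharp}(\mu)$ and substituting $y = Ax_{0} + \eta z$, the basic inequality gives
\[
\tfrac{1}{2}\|Ah\|_{2}^{2} \leq \eta\langle A^{\top} z, h\rangle + \mu(\|x_{0}\|_{1} - \|x^{\sharp}\|_{1}).
\]
First I would bound the stochastic term by $\eta\|A^{\top} z\|_{\infty}\|h\|_{1}$ and use $\|x_{0}\|_{1} - \|x_{0}+h\|_{1} \leq \|h_{S}\|_{1} - \|h_{S^{c}}\|_{1}$ to obtain, on the event $\{\eta\|A^{\top} z\|_{\infty} \leq \mu/2\}$, both the cone condition $\|h_{S^{c}}\|_{1} \leq 3\|h_{S}\|_{1}$ and the inequality $\tfrac{1}{2}\|Ah\|_{2}^{2} \leq \tfrac{3\mu}{2}\|h_{S}\|_{1}$.

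Next I would invoke a restricted-eigenvalue (equivalently, RIP-type) bound for $A$ valid uniformly over the cone $\{\|h_{S^{c}}\|_{1} \leq 3\|h_{S}\|_{1}\}$. For a normalized $K$-subgaussian matrix this follows from the matrix deviation inequality (\eg \cite{liaw2017simple}) once $m \geq C_{\varepsilon}\delta^{-2}K^{2}\log(K)\, s\log(N/s)$, holding with probability at least $1-\varepsilon$ on the realization of $A$; call this event $G_{A}$. On $G_{A}$ one has $\|Ah\|_{2}^{2} \geq \kappa\|h\|_{2}^{2}$ for an absolute $\kappa > 0$, so combining with the previous step and $\|h_{S}\|_{1} \leq \sqrt{s}\,\|h\|_{2}$ gives $\|h\|_{2} \leq \tfrac{3\mu\sqrt{s}}{\kappa}$, i.e. $\eta^{-2}\|h\|_{2}^{2} \leq C\lambda^{2} s$. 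The hypothesis $\lambda \geq C\sqrt{\log N}$ enters precisely here: since $\|A^{\top} z\|_{\infty} \lesssim \sqrt{\log N}$ with high probability over $z$ (a subgaussian maximal inequality, using $G_{A}$), choosing $C$ large forces $\eta\|A^{\top} z\|_{\infty} \leq \mu/2 = \eta\lambda/2$ on that $z$-event. Note that $\eta$ cancels throughout, which is exactly why the final bound is noise-scale independent.

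The step I expect to be the main obstacle is passing from this high-probability loss bound to the claimed bound on the risk $R^{\sharp} = \E_{z} L^{\sharp}$, since the loss is unbounded on the complementary noise event and the \textsc{Lasso} solution map is not globally Lipschitz in an obviously integrable way. To handle this I would, working on $G_{A}$, derive a deterministic fallback: from $\tfrac{1}{2}\|y - Ax^{\sharp}\|_{2}^{2} + \mu\|x^{\sharp}\|_{1} \leq \tfrac{1}{2}\eta^{2}\|z\|_{2}^{2} + \mu\|x_{0}\|_{1}$ one controls $\|x^{\sharp}\|_{1}$, hence $\|h\|_{2}$ via the restricted-eigenvalue bound, by a quantity growing at most polynomially in $\|z\|_{2}$ and $\|A^{\top} z\|_{\infty}$. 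Splitting
\[
\E_{z} L^{\sharp} = \E_{z}\!\big[L^{\sharp}\1_{\{\eta\|A^{\top} z\|_{\infty} \leq \mu/2\}}\big] + \E_{z}\!\big[L^{\sharp}\1_{\{\eta\|A^{\top} z\|_{\infty} > \mu/2\}}\big],
\]
the first term is $\leq C\lambda^{2} s$ by the main argument, while the second is negligible: the bad event has probability polynomially small in $N$ thanks to $\lambda \geq C\sqrt{\log N}$, and the polynomial fallback bound has controlled moments, so Cauchy--Schwarz (or direct tail integration) renders the tail contribution $o(\lambda^{2} s)$.

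Finally, since the statement is a rephrasing of \cite[Theorem~3]{shen2015stable}, a shorter alternative is to verify that the normalized $K$-subgaussian hypothesis and the sample-complexity condition imply the restricted-eigenvalue assumptions of that reference, and then translate its error bound into the noise-normalized risk form used here; the self-contained argument above is essentially the specialization of that translation, with the risk (rather than high-probability) conclusion requiring the tail-integration step as the only genuinely additional ingredient.
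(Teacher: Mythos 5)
A point of comparison first: the paper contains no proof of this statement. \autoref{thm:qp-r-stability} is presented explicitly as a specialized rephrasing of \cite[Theorem~3]{shen2015stable} (with the original right-sided stability credited to \cite[Theorem~7.2]{bickel2009simultaneous}), and nothing corresponding to it appears in \autoref{sec:proofs}. So the ``shorter alternative'' of your last paragraph --- verify that the normalized $K$-subgaussian hypothesis and the sample-complexity condition imply the assumptions of the cited theorem, then translate its conclusion into noise-normalized form --- is in fact the paper's entire treatment. Your self-contained argument attempts strictly more than the paper does, and that is where the problem lies.

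Using your notation $\mu=\eta\lambda$, $h=x^{\sharp}(\mu)-x_{0}$, $S=\supp(x_{0})$: the high-probability portion of your argument (basic inequality, the event $\{\eta\|A^{\top}z\|_{\infty}\le\mu/2\}$, cone condition, restricted eigenvalue via the matrix deviation inequality) is the standard oracle-inequality chain and is sound; it yields $\eta^{-2}\|h\|_{2}^{2}\le C\lambda^{2}s$ on the good event. The genuine gap is the step you yourself flagged as the main obstacle: upgrading to a bound on $R^{\sharp}=\E_{z}L^{\sharp}$. Your proposed fallback does not deliver what you claim of it. From the objective comparison $\tfrac12\|y-Ax^{\sharp}\|_{2}^{2}+\mu\|x^{\sharp}\|_{1}\le\tfrac12\eta^{2}\|z\|_{2}^{2}+\mu\|x_{0}\|_{1}$, the only bound on $\|h\|_{2}$ available is $\|h\|_{2}\le\|h\|_{1}\le 2\|x_{0}\|_{1}+\eta\|z\|_{2}^{2}/(2\lambda)$; the restricted-eigenvalue bound is of no help here, since off the cone it provides no lower bound on $\|Ah\|_{2}$ in terms of $\|h\|_{2}$. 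Hence on the bad event the noise-normalized loss is controlled only by $C\big(\eta^{-2}\|x_{0}\|_{1}^{2}+\lambda^{-2}\|z\|_{2}^{4}\big)$, which is \emph{not} ``polynomial in $\|z\|_{2}$ and $\|A^{\top}z\|_{\infty}$'': it carries the term $\eta^{-2}\|x_{0}\|_{1}^{2}$. The bad event $\{\|A^{\top}z\|_{\infty}>\lambda/2\}$ is independent of $\eta$ and of $x_{0}$ and has strictly positive probability once $A$ is fixed, so your tail contribution contains $\eta^{-2}\|x_{0}\|_{1}^{2}\,\mathbb{P}(\mathrm{bad})$, which diverges as $\eta\to 0$ and grows with $\|x_{0}\|_{1}$, whereas the target bound $C\lambda^{2}s$ depends on neither. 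As written, your argument proves a high-probability loss bound but not the stated risk bound.

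The gap is fixable, but it needs a fallback free of $\eta$ and $x_{0}$. For instance, keep the support decomposition in the basic inequality and bound the stochastic term by Cauchy--Schwarz instead of H\"older, giving $\tfrac12\|Ah\|_{2}^{2}+\mu\|h_{S^{c}}\|_{1}\le\eta\|z\|_{2}\|Ah\|_{2}+\mu\sqrt{s}\,\|h\|_{2}$; this yields both $\|Ah\|_{2}\le 2\eta\|z\|_{2}+\sqrt{2\mu\sqrt{s}\,\|h\|_{2}}$ and the approximate cone bound $\|h\|_{1}\le 2\sqrt{s}\,\|h\|_{2}+\lambda^{-1}\|z\|_{2}\|Ah\|_{2}$. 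Combining with the deviation inequality applied to $B_{1}^{N}$ (which gives, for all $h$ simultaneously, $\|Ah\|_{2}\ge\|h\|_{2}-C\tilde K\sqrt{(\log N)/m}\,\|h\|_{1}$) and absorbing terms produces a bound of the form $\eta^{-1}\|h\|_{2}\le \mathrm{poly}(1+\|z\|_{2}/\sqrt{m})\,(\|z\|_{2}+\lambda\sqrt{s})$, independent of $\eta$ and $x_{0}$; with that in hand, your Cauchy--Schwarz tail-integration step does close the argument. Alternatively, do what the paper does and defer the entire statement to \cite{shen2015stable}.
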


In the initial work, $\lambda^{*} = 2 \sqrt{2\log N}$
\cite[Theorem~7.2]{bickel2009simultaneous}. In the present phrasing, we write
only that $\lambda^{*} = C\sqrt{\log N}$ for an absolute constant $C > 0$. Note
that when the data are Gaussian, right-sided stability of {\qp} has been
examined in~\cite{thrampoulidis2015asymptotically}; of {\gqp},
in~\cite{thrampoulidis2018precise}.

Finally, in~\autoref{sec:analysis-bp} we show that {\bp} is poorly behaved for
all $\sigma > 0$ when $x_{0}$ is very sparse. In particular, under mild
restrictions on the aspect ratio of the measurement matrix, we show that
$\tilde R(\sigma; x_{0}, N, \eta)$ is asymptotically suboptimal for \emph{any}
$\sigma > 0$ when $s / N$ is sufficiently small. Below, this theorem shows that
the minimax risk for {\bp}, relative to the benchmark risk $R^{*}$, converges in
probability to $\infty$.

\begin{thm}[{\bp} instability simplified]
  \label{thm:bp-instability-simplified}
  Fix $\eta > 0$, an integer $s \geq 1$, and suppose for $m : \nats \to \nats$
  that $m (N)/ N \to \gamma \in (0, 1)$. For each $N$, suppose
  $A = A(N) \in \reals^{m(N) \times N}$ is a normalized $K$-subgaussian
  matrix. Then, for all $M > 0$,
  \begin{align*}
    \lim_{N\to \infty} \mathbb{P}\left(%
    \inf_{\sigma > 0} \sup_{x \in \Sigma_{s}^{N}}     %
    \frac{ \tilde R(\sigma; x, A, \eta)}{R^{*}(s,A)} > M \right) %
    = 1.
  \end{align*}
\end{thm}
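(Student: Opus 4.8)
The plan is to condition on a realization of $A$ in a high-probability event that controls the column norms $\|Ae_i\|_2$, the restricted conditioning of $A$ on sparse supports, and the value of $R^{*}(s,A)$, and then to establish the \emph{deterministic} inequality $\inf_{\sigma>0}\sup_{x\in\Sigma_s^N}\tilde R(\sigma;x,A,\eta)/R^{*}(s,A)>M$ for all large $N$. By \autoref{prop:Rstar-minimax-optimal} the benchmark is of minimax order, so $R^{*}(s,A)$ grows only slowly (it is $\asymp s\log(N/s)$ up to the normalization of $A$, with $s$ fixed); the point is that this is $o(\sqrt m)$. It therefore suffices to produce, for \emph{every} fixed $\sigma>0$, an $s$-sparse vector whose BP risk beats $R^{*}$ by a diverging factor, with a bound uniform in $\sigma$. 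The governing observation is that the residual at the truth, $\|y-Ax_0\|_2=\eta\|z\|_2$, is independent of $x_0$ and concentrates at $\eta\sqrt m$, while its \emph{square} fluctuates at the larger scale $\eta^2\sqrt m$; a fixed budget $\sigma$ cannot track this fluctuation, and the resulting mismatch is what I will convert into error.

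I first dispatch the regime $\sigma\gtrsim\eta\sqrt m$ with a single large spike $x_0=re_1$, $r\to\infty$ (admissible since the supremum ranges over all of $\Sigma_s^N$). On the event $\{\eta\|z\|_2<\sigma\}$ the truth is strictly feasible, so the minimum-$\ell_1$ solution must bind the constraint and \emph{over-shrink} the spike. Writing $\tilde x(\sigma)=(r-\beta)e_1+v$ with $\supp(v)\not\ni 1$, the competitor $v=0$, $\beta=\beta_0$, where $\beta_0$ is the positive root of $\beta^2\|Ae_1\|_2^2+2\beta\eta\langle Ae_1,z\rangle=\sigma^2-\eta^2\|z\|_2^2$, is feasible with $\ell_1$-norm $r-\beta_0$, so minimality forces $(r-\beta)+\|v\|_1\le r-\beta_0$, i.e.\ $\beta\ge\beta_0$. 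Hence
\begin{equation*}
\|\tilde x(\sigma)-x_0\|_2^2=\beta^2+\|v\|_2^2\ge\beta_0^2=\frac{(\sigma^2-\eta^2\|z\|_2^2)_+}{\|Ae_1\|_2^2}+(\text{lower order}).
\end{equation*}
Taking $\E_z$ and $r$ large gives $\sup_x\tilde R(\sigma;x)\gtrsim\E_z[(\sigma^2/\eta^2-\|z\|_2^2)_+]/\|Ae_1\|_2^2$, which for $\sigma\approx\eta\sqrt m$ is driven by the $\Theta(\sqrt m)$ fluctuation of $\|z\|_2^2$ and forces $\tilde R(\sigma)/R^{*}(s,A)\gtrsim\sqrt m/\log N\to\infty$, with an even larger gap for $\sigma\gg\eta\sqrt m$.

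For $\sigma$ well below $\eta\sqrt m$ I switch to $x_0=0$, so that $y=\eta z$ and $\tilde x(\sigma)$ is the minimum-$\ell_1$ vector with $\|A\tilde x-\eta z\|_2\le\sigma$. On the typical event $\{\eta\|z\|_2>\sigma\}$ the constraint again binds, forcing $\tilde x$ to absorb energy $\eta\|z\|_2-\sigma$ of the noise; since the minimum-$\ell_1$ representation of a generic vector through an underdetermined $A$ has $\ell_2$-norm comparable to the captured energy, this yields $\tilde R(\sigma;0)\gtrsim(\|z\|_2-\sigma/\eta)^2$, which is $\gtrsim m$ once $\sigma\le(1-c)\eta\sqrt m$. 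Thus every $\sigma$ bounded away from $\eta\sqrt m$ is handled with room to spare, on both sides.

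\textbf{Main obstacle.} I expect the real difficulty to be the narrow transition band $\sigma\uparrow\eta\sqrt m$, within $\eta\,O(\sqrt{\log N})$ of the noise level. There the over-shrink event $\{\eta\|z\|_2<\sigma\}$ becomes rare, so the spike's contribution decays, while the $x_0=0$ bound only yields risk of order $\log N$; both \emph{elementary} constructions then merely match the poly-logarithmic benchmark rather than dominate it, so the $\inf_\sigma$ could a priori sit in this band. Closing it is the crux, and will require either the sharp asymptotics of $R^{*}(s,A)$ (certifying it is $o(\log N)$, so that the band's $\Theta(\log N)$ lower bound already wins) or a bespoke hard instance at the intermediate detection scale $r\asymp\eta\sqrt{\log N}$, where partial support recovery interacts with the $\Theta(\sqrt m)$ fluctuation of $\|z\|_2^2$ and where one needs a two-sided (not merely one-sided) description of the minimum-$\ell_1$ solution. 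The remaining steps are technical: promoting the per-$z$ inequalities to the risk $\E_z$, replacing the Gaussian facts about $\|z\|_2^2$, $\langle Ae_1,z\rangle$, and the restricted conditioning of $A$ by their subgaussian analogues (Hanson--Wright and Bernstein), and checking that the controlling event for $A$ has probability at least $1-\varepsilon$, so as to recover the stated convergence in probability as $N\to\infty$.
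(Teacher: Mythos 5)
Your overall reduction---condition on a good event for $A$, prove a risk lower bound that dwarfs $R^{*}(s,A)\asymp s\log(N/s)$, then convert to convergence in probability---is exactly the paper's, and your two constructions reproduce its easy halves. The spike argument for $\sigma\geq\eta\sqrt m$ (truth strictly feasible, constraint binds, over-shrinkage of order $\E_{z}\bigl[(\sigma^{2}/\eta^{2}-\|z\|_{2}^{2})_{+}\bigr]\gtrsim\sqrt m$) is in substance \autoref{lem:uc-bp-subgaus}, and your $x_{0}=0$ energy argument for $\sigma\leq(1-c)\eta\sqrt m$ is sound because $\|A\|=O(1)$ when $m\asymp N$. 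The genuine gap is the one you flag yourself: the band $\sigma/\eta\in[\sqrt m-O(\sqrt{\log N}),\,\sqrt m\,]$. Writing $t=\sqrt m-\sigma/\eta$, your two bounds there are $\Theta(t^{2})$ and $\Theta(\sqrt m\,e^{-ct^{2}})$, and their best uniform combination over $t$ is $\Theta(\log N)$---the same order as the benchmark, so the ratio stays bounded and can never exceed an arbitrary $M$. Moreover, the first escape route you propose is provably closed: $R^{*}(s,A)$ is \emph{not} $o(\log N)$, since \autoref{thm:ls-instability} (equivalently \autoref{prop:Rstar-minimax-optimal}, via the Cand\`es--Davenport minimax bound) gives $R^{*}(s,A)\geq c_{\delta}\,s\log(N/s)$, so it is $\Theta(\log N)$ for fixed $s$. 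The second route is undeveloped, and note a structural obstruction to anything in that spirit: any hard-instance argument that works against \emph{all} estimators is capped by the minimax rate, which is $\asymp R^{*}$; to win by a diverging factor you must exploit a deficiency specific to {\bp}.

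That deficiency is what the paper's geometric lemma (\autoref{lem:geometric-lemma}) isolates, and it is the missing idea. Take $x_{0}=0$ and the exactly tuned $\sigma=\eta\sqrt m$, and work in the measurement domain with the random polytope $B_{1,A}=AB_{1}^{N}$ and its gauge $\|\cdot\|_{1,A}$. With constant probability over $z$ (and high probability over $A$): (i) the feasible ball $F=\{q:\|q-\eta z\|_{2}\leq\eta\sqrt m\}$ contains a point of gauge at most $\lambda=L\sqrt{m/\log N}$, so by minimality the solution $\tilde q=A\tilde x$ satisfies $\|\tilde q\|_{1,A}\leq\lambda$; and (ii) every point with gauge at most $\lambda$ and $\ell_{2}$-norm at most $\alpha_{2}\sim N^{p}$ is \emph{infeasible}, whence $\|\tilde q\|_{2}\geq\alpha_{2}$ and thus $\|\tilde x\|_{2}\gtrsim N^{p}$ since $\|A\|=O(1)$. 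The engine behind (ii) is the localized width bound for hulls of $2N$ points (\autoref{coro:bellec-random-hulls}): $\w(\lambda B_{1,A}\cap\alpha B_{2}^{m})$ grows only like $\lambda\sqrt{\log(N\alpha^{2}/\lambda^{2})}$, so $\alpha$ may be taken polynomially large in $N$ before this width reaches the order $\sqrt m$ needed to touch $F$. Finally, the projection-lemma monotonicity (\autoref{prop:oc-solution-ordering}) transfers the bound $\|\tilde q(\sigma)\|_{2}\geq\|\tilde q(\eta\sqrt m)\|_{2}\geq\alpha_{2}$ to every $\sigma\leq\eta\sqrt m$ (\autoref{lem:bp-oc-zero}), covering your problem band with $N^{p}\gg\log N$ to spare; combining with the underconstrained bound gives \autoref{thm:bp-minimax-suboptimality} and then the stated limit. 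Without a mechanism of this type producing super-logarithmic risk at and just below $\sigma=\eta\sqrt m$, your argument cannot close.
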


Numerical results supporting~\autoref{sec:LS-instability}
and~\autoref{sec:analysis-bp} are discussed
in~\autoref{sec:numerical-results}. Proofs of most of the theoretical results
are deferred to~\autoref{sec:proofs}. Next, we add two clarifications. First,
the three programs are equivalent in a sense.

\begin{prop}[Program equivalence {\cite[Proposition 3.2]{foucart2013mathematical}}]
  \label{prop:foucart-program-equivalence}
  Let $0 \neq x_{0} \in \reals^{N}$ and $\lambda > 0$. Where
  $x^{\sharp}(\lambda)$ solves {\qp}, define
  $\tau := \|x^{\sharp}(\lambda)\|_{1}$ and
  $\sigma := \|y - Ax^{\sharp}(\lambda)\|_{2}$. Then $x^{\sharp}(\lambda)$
  solves {\ls} and {\bp}.
\end{prop}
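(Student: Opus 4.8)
The plan is to exploit the additive structure of the {\qp} objective. Write $F_{\lambda}(x) := \tfrac12\|y - Ax\|_{2}^{2} + \lambda\|x\|_{1}$, so that $x^{\sharp}(\lambda)$ is by hypothesis a global minimizer of $F_{\lambda}$. The parameters are defined precisely so that $x^{\sharp}(\lambda)$ is feasible, and in fact feasible \emph{with equality}, for {\ls} and {\bp}: indeed $\|x^{\sharp}(\lambda)\|_{1} = \tau$ and $\|y - Ax^{\sharp}(\lambda)\|_{2}^{2} = \sigma^{2}$. The two assertions then follow from two nearly identical arguments by contradiction, each leveraging optimality of $x^{\sharp}(\lambda)$ for {\qp}.

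First I would handle {\ls}. Since $\|x^{\sharp}(\lambda)\|_{1} = \tau$, the point $x^{\sharp}(\lambda)$ is feasible for {\ls}. Suppose it were not a minimizer; then there is a feasible $x'$ with $\|x'\|_{1} \leq \tau$ and $\|y - Ax'\|_{2}^{2} < \|y - Ax^{\sharp}(\lambda)\|_{2}^{2}$. Adding $\lambda\|x'\|_{1} \leq \lambda\tau = \lambda\|x^{\sharp}(\lambda)\|_{1}$ to the strictly smaller residual term yields $F_{\lambda}(x') < F_{\lambda}(x^{\sharp}(\lambda))$, contradicting optimality of $x^{\sharp}(\lambda)$ for {\qp}. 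Hence no feasible competitor attains a strictly smaller residual, so $x^{\sharp}(\lambda)$ solves {\ls}.

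Next I would treat {\bp}. Since $\|y - Ax^{\sharp}(\lambda)\|_{2}^{2} = \sigma^{2}$, the point $x^{\sharp}(\lambda)$ is feasible for {\bp}. If it were not $\ell_{1}$-minimal, there would be a feasible $x''$ with $\|y - Ax''\|_{2}^{2} \leq \sigma^{2}$ and $\|x''\|_{1} < \|x^{\sharp}(\lambda)\|_{1}$. Bounding the residual term of $F_{\lambda}(x'')$ by $\tfrac12\sigma^{2} = \tfrac12\|y - Ax^{\sharp}(\lambda)\|_{2}^{2}$ and using the strictly smaller penalty term again gives $F_{\lambda}(x'') < F_{\lambda}(x^{\sharp}(\lambda))$, once more contradicting optimality of $x^{\sharp}(\lambda)$ for {\qp}. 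Thus $x^{\sharp}(\lambda)$ solves {\bp}.

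The argument is entirely elementary and there is no genuinely hard step. The only point requiring care is the observation that each constrained program pins down exactly one of the two summands of $F_{\lambda}$ at the value attained by $x^{\sharp}(\lambda)$ -- the $\ell_{1}$ term for {\ls} and the residual for {\bp} -- so that a hypothetical improvement in the remaining summand translates directly into a strict decrease of $F_{\lambda}$, which is impossible. The hypothesis $\lambda > 0$ is what genuinely couples the penalized program to the two constrained ones; the assumption $x_{0} \neq 0$ serves only to exclude a degenerate case and plays no role in the swapping argument itself.
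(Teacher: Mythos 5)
Your proof is correct, and it is essentially the standard argument: the paper itself gives no proof but defers to the cited reference (Foucart--Rauhut, Proposition 3.2), whose proof exploits exactly the same additive structure of the {\qp} objective, merely phrased directly (optimality of $x^{\sharp}(\lambda)$ forces $\|y-Ax^{\sharp}\|_{2}\leq\|y-Ax'\|_{2}$ for $\ell_{1}$-feasible $x'$, and $\|x^{\sharp}\|_{1}\leq\|x''\|_{1}$ for residual-feasible $x''$) rather than by contradiction. Your observations that $\lambda>0$ is needed only to convert the strict penalty decrease into a strict decrease of $F_{\lambda}$ in the {\bp} case, and that $x_{0}\neq 0$ plays no role in the argument, are both accurate.
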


However, $\tau$ and $\sigma$ are functions of $z$, a random variable, and this
mapping may not be smooth. Thus, parameter stability of one program is not
implied by that of another. Second, $R^{*}(s, A)$ has the desirable property
that it is computable up to multiplicative constants~\cite{liaw2017simple}.

\begin{prop}[Risk equivalences]
  \label{prop:Rstar-minimax-optimal}
  
  Fix $\delta, \varepsilon > 0$, let $1 \leq s \leq m < \infty, N \geq 2$ be
  integers, let $\eta > 0$. Suppose $A \in \reals^{m \times N}$ is a normalized
  $K$-subgaussian matrix satisfying
  $m > C_{\varepsilon} \delta^{-2} K^{2} \log K s\log(N/s)$, and suppose that
  $y = A x_{0} + \eta z$ for $z \in \reals^{m}$ with
  $z_{i} \iid \mathcal{N}(0, 1)$. Let
  $M^{*}(s, N) :=  \inf_{x^{*}} \sup_{x_{0} \in \Sigma_{s}^{N}}
  \eta^{-2} \|x_{*} - x_{0}\|_{2}^{2}$ be the minimax risk over arbitrary
  estimators $x^{*} = x^{*}(y)$. With probability at least $1 - \varepsilon$ on
  the realization of $A$, there is $c, C_{1}, C_{\delta} > 0$ such that
  \begin{align*}
    cs \log (N /s ) %
    & \leq M^{*}(s, N) %
    \leq  \inf_{\lambda > 0} \sup_{x_{0} \in \Sigma_{s}^{N}} %
      R^{\sharp}(\lambda; x_{0}, A, \eta) %
    \\
    & \leq C_{\delta} R^{*}(s, A) %
    \leq C_{\delta} s \log (N /s). 
  \end{align*}

\end{prop}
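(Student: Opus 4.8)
The plan is to establish the displayed chain one link at a time, since the four inequalities have quite different characters. The middle inequality $M^{*}(s,N) \leq \inf_{\lambda > 0} \sup_{x_{0} \in \Sigma_{s}^{N}} R^{\sharp}(\lambda; x_{0}, A, \eta)$ is essentially definitional: for each fixed deterministic $\lambda > 0$ the map $y \mapsto x^{\sharp}(\lambda)$ is a valid measurable estimator $x^{*} = x^{*}(y)$ of $x_{0}$, so its worst-case risk dominates the minimax risk $M^{*}$, and taking the infimum over $\lambda$ preserves the bound. The rightmost inequality $C_{\delta} R^{*}(s,A) \leq C_{\delta} s\log(N/s)$ reduces to showing $R^{*}(s,A) \lesssim s\log(N/s)$, which I would obtain from the computability of $R^{*}$ up to multiplicative constants~\cite{liaw2017simple}: in the low-noise limit the {\ls} error with $\tau = 1$ is governed, up to constants, by the squared mean width of the descent cone of $B_{1}^{N}$ at an $s$-sparse point intersected with the sphere, and this squared width is at most $C s\log(eN/s)$ for every $s$-sparse $x$.

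For the leftmost inequality $cs\log(N/s) \leq M^{*}(s,N)$ I would run a standard information-theoretic (Fano) argument. First condition on the event—of probability at least $1-\varepsilon$ under the stated sample complexity $m > C_{\varepsilon}\delta^{-2}K^{2}\log K\, s\log(N/s)$—on which $A$ restricted to $2s$-sparse vectors satisfies a two-sided isometry, so that $\|A(x - x')\|_{2} \asymp \|x - x'\|_{2}$ for all $x, x' \in \Sigma_{s}^{N}$. On this event the Kullback--Leibler divergence between the Gaussian observation laws indexed by two sparse signals is controlled by $\eta^{-2}\|A(x-x')\|_{2}^{2} \asymp \eta^{-2}\|x-x'\|_{2}^{2}$, and a $\delta$-packing of $\Sigma_{s}^{N}$ of cardinality $\exp(cs\log(N/s))$ then yields the claimed $cs\log(N/s)$ lower bound on $\eta^{-2}\|x^{*} - x_{0}\|_{2}^{2}$.

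The remaining link $\inf_{\lambda > 0}\sup_{x_{0}} R^{\sharp}(\lambda) \leq C_{\delta} R^{*}(s,A)$ is the substantive one, and the step I expect to be the main obstacle. The clean route is to prove the two matching bounds $\inf_{\lambda > 0}\sup_{x_{0}} R^{\sharp}(\lambda) \leq C s\log(N/s)$ and $R^{*}(s,A) \geq c' s\log(N/s)$ separately, and then compose them. The upper bound is the well-known optimal-tuning guarantee for unconstrained \textsc{Lasso}: choosing $\lambda \asymp \eta\sqrt{\log N}$, and invoking the near-monotonicity established in \autoref{sec:rhat-nearly-monotone} together with \autoref{prop:foucart-program-equivalence} to transfer the optimally tuned constrained-\textsc{Lasso} guarantee to {\qp}, gives worst-case {\qp} risk of order $s\log(N/s)$ on the same high-probability event. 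The lower bound $R^{*}(s,A) \gtrsim s\log(N/s)$ comes from the same width characterization as above, now exhibiting a single worst-case $s$-sparse direction (e.g.\ $s$ equal-magnitude nonzero entries) whose descent-cone width squared is at least $c' s\log(N/s)$.

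The principal difficulty, and the reason I isolate this link as the crux, is that \autoref{prop:foucart-program-equivalence} only supplies a data-dependent correspondence $\lambda \mapsto (\tau,\sigma)$: the equivalence parameters are random functions of $z$, so one cannot directly read off a deterministic $\lambda$ achieving the {\ls} benchmark uniformly over $x_{0} \in \Sigma_{s}^{N}$. Circumventing this—either by proving the optimal-tuning {\qp} bound intrinsically rather than through the equivalence, or by arguing that a single deterministic $\lambda$ suffices across the whole sparse class—is where the real work lies. A secondary bookkeeping obstacle is ensuring that all the high-probability events (the restricted isometry used for Fano, the width concentration underlying computability of $R^{*}$, and the {\qp} tuning guarantee) hold simultaneously on one event of probability at least $1 - \varepsilon$, with the stated dependence on $\delta$ and $K$.
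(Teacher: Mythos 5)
Most of your decomposition tracks the paper's proof of \autoref{prop:Rstar-minimax-optimal}: the middle inequality is dispatched as definitional in both cases; the leftmost bound $cs\log(N/s)\leq M^{*}(s,N)$ is exactly the content of \cite[Theorem~1]{candes2013well} (your Fano-plus-RIP sketch is essentially a self-contained reproof of that result and is fine); the rightmost bound $R^{*}(s,A)\lesssim s\log(N/s)$ follows in the paper from \autoref{lem:ls-instability-ec} (via \autoref{prop:unif-noise-scale}), and the paper explicitly notes that a width-based argument in the spirit of \cite{liaw2017simple,oymak2013squared} also works, so your route there is acceptable. Your diagnosis of the crux is also correct and matches the paper's own remark following \autoref{prop:foucart-program-equivalence}: the $\lambda\mapsto(\tau,\sigma)$ correspondence is data-dependent, so the {\qp} optimal-tuning bound must be proved intrinsically; the paper does this by citing \cite[Theorem~6]{bickel2009simultaneous} (see also \autoref{thm:qp-r-stability}).

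The genuine gap is in your lower bound $R^{*}(s,A)\gtrsim s\log(N/s)$, which you need to close the third link. You propose to get it from a ``width characterization'' of the {\ls} risk, exhibiting a flat $s$-sparse direction whose descent-cone width squared is $\gtrsim s\log(N/s)$. But the tools available for $K$-subgaussian matrices --- the deviation inequalities of \cite{liaw2017simple} --- only give \emph{upper} bounds on the {\ls} error in terms of Gaussian width; they do not show the error \emph{realizes} the width. The matching lower bound (risk bounded below by the squared width of the descent cone) is a Gaussian-specific result, proved via Gordon's min-max theory in \cite{oymak2013squared}, and no subgaussian analogue is established in this paper or its cited toolkit. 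So, as written, the claim ``$R^{*}$ is governed up to constants by the squared width'' is only half-supported, and precisely the half you rely on is missing. The paper closes this link by an entirely different route: \autoref{coro:opt-risk-near-equiv} (a consequence of the near-monotonicity \autoref{prop:Rhat-nearly-monotone}, itself built on the projection lemma) gives
\begin{align*}
  \sup_{x \in \Sigma_{s}^{N}} \hat R(\|x\|_{1}; x, A, \eta) \leq C\, R^{*}(s,A),
\end{align*}
and the oracle-tuned worst-case risk on the left is in turn lower bounded by the minimax risk $M^{*}(s,N)$ and hence by $cs\log(N/s)$ via \cite[Theorem~1]{candes2013well}, with Bernstein concentration of $\|A\|_{F}^{2}$ supplying the high-probability normalization (this is carried out in \autoref{lem:ls-instability-ec}). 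In other words, the lower bound on $R^{*}$ is information-theoretic, not geometric. Your own Fano machinery would suffice here if you combined it with the near-equivalence corollary (noting that the packing can be chosen with constant $\ell_{1}$ norm, so oracle knowledge of $\|x_{0}\|_{1}$ does not help the estimator), but the width-based argument you actually propose would fail for general $K$-subgaussian $A$.
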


In this work, we focus primarily on the versions of \textsc{Lasso} for which
$\|\cdot\|_{1}$ is the structural proxy, and $\Sigma_{s}^{N}$ the structure set
for the data $x_{0}$. In addition, we discuss the pertinence of our results to
the Generalized \textsc{Lasso} setting. For instance,
in~\autoref{lem:nuc-norm-recovery}, we show how~\autoref{thm:ls-instability}
adapts to setting of parameter sensitivity for low-rank matrix recovery using
nuclear norm. Further, we connect our discussion on {\qp} in~\autoref{sec:qp} to
results for more general gauges~\cite{thrampoulidis2015asymptotically,
  thrampoulidis2018precise}, which works have developed tools suitable for
analyzing parameter sensitivity of {\gqp} when the data are Gaussian. While it
remains an open question to determine how our results
in~\autoref{sec:analysis-bp} may be extended to analyze parameter sensitivity of
{\gbp}, we conjecture that a suboptimality result like that exemplified
in~\autoref{thm:bp-instability-simplified} exists under analogous assumptions
for {\gbp}.

\section{Related Work}
\label{sec:related-work}

Several versions of the \textsc{Lasso} program are well-studied in the context
of solving CS problems~\cite{foucart2013mathematical}. The program {\ls} was
first posed in~\cite{tibshirani1996regression}. An analysis of its risk when
$\tau = \|x_{0}\|_{1}$ and the noise $z$ is deterministic may be found
in~\cite{foucart2013mathematical}. A sharp non-asymptotic analysis for the
generalized constrained \textsc{Lasso} may be found
in~\cite{oymak2013squared}. There, the risk was shown to depend on specific
geometric properties of the regularizer. When the measurement matrix has
independent isotropic subgassian rows, it has been demonstrated how a geometric
quantity may unify the quantification of generalized constrained \textsc{Lasso}
risk~\cite{liaw2017simple}. Risk bounds for generalized constrained
\textsc{Lasso} with nonlinear observations were characterized
in~\cite{plan2016generalized}. Recent work has shown how dimensional parameters
governing signal recovery problems in ridgeless least squares regression affect
the average out-of-sample risk in some settings~\cite{hastie2019surprises,
  mei2019generalization}.

Non-asymptotic bounds for the unconstrained \textsc{Lasso} were developed
in~\cite{bickel2009simultaneous}, which also determines an order-optimal choice
for the program's governing parameter. The asymptotic risk for the unconstrained
\textsc{Lasso} is determined analytically in~\cite{bayati2011dynamics,
  bayati2012lasso}. Sharp, non-asymptotic risk bounds for the generalized
unconstrained \textsc{Lasso} are developed
in~\cite{thrampoulidis2015regularized,
  thrampoulidis2015asymptotically}. In~\cite{thrampoulidis2015asymptotically},
$R^{\sharp}(\lambda)$ is examined for $\lambda$ about $\lambda_{\mathrm{opt}}$,
while~\cite{thrampoulidis2018precise} examines the risk as a function of its
governing parameter for other kinds of $M$-estimators. Both assume Gaussianity
of the data, and neither considers sensitivity with respect to parameter choice.

Basis pursuit is a third popular phrasing of the \textsc{Lasso} program, first
proposed in~\cite{shaobing1994basis, chen2001atomic}. For a theoretical
treatment of basis pursuit, we refer to~\cite{foucart2013mathematical}. Analytic
connections between basis pursuit and other \textsc{Lasso} programs are
exploited for fast computation of solutions in~\cite{van2008probing}.

Other modifications of the standard \textsc{Lasso} have also been examined. For
example, sharp non-asymptotic risk bounds for the so-called square-root
\textsc{Lasso} were obtained in~\cite{oymak2013squared}. Related to basis
pursuit, instance optimality of an exact $\ell_{1}$ decoder is analyzed
in~\cite{wojtaszczyk2010stability}.

Sensitivity to parameter choice was analyzed for three proximal denoising (PD)
programs that are analogues of the ones considered in this
work~\cite{berk2020sl1mpc, berk2019pdparmsens}. PD is a simplification of CS, in
which $A$ is the identity matrix. There, the authors prove an asymptotic
cusp-like behaviour for constrained PD risk in the low-noise regime, an
asymptotic phase transition for unconstrained PD risk in the low-noise regime,
and asymptotic suboptimality of the basis pursuit PD risk in the very sparse
regime. The current work develops non-trivial generalizations of the results
in~\cite{berk2020sl1mpc}, proving asymptotic results about the sensitivity of
$\ell_{1}$ minimization for the generalized constrained \textsc{Lasso}, and
generalized basis pursuit.

\section{Main theoretical tools}
\label{sec:main-theor-tools}

\subsection{Notation}
\label{sec:notation}

Let $\cvx (\mathcal{C})$ denote the convex hull of the set
$\mathcal{C}\subseteq \reals^{N}$:
\begin{align*}
  \cvx (\mathcal{C}) %
  & := \{ \sum_{j = 1}^{J} \alpha_{j} x_{j} : x_{j} \in \mathcal{C}, \alpha_{j} \geq 0, \sum \alpha_{j} = 1, J < \infty \} \\
  & = \bigcap_{\mathclap{\substack{\mathcal{C}' \supseteq \mathcal{C}\\\mathcal{C}' \text{is convex}}}} \mathcal{C'}.
\end{align*}
Let $\cone (\mathcal{C})$ denote the cone of $\mathcal{C}$:
\begin{align*}
  \cone(\mathcal{C}) := \{ \lambda x: \lambda \geq 0, x \in \mathcal{C}\}.
\end{align*}
Define the descent cone of a convex function $f : \reals^{N} \to \reals$ at a
point $x \in \reals^{N}$ by
\begin{align*}
  T_{f}(x) := \cone \{ z - x : z\in \reals^{N}, f(z) \leq f(x) \}.
\end{align*}
When $f = \|\cdot\|_{1}$ we write $T(x) := T_{\|\cdot\|_{1}}(x)$. By abuse of
notation, we write $T_{\mathcal{C}}(x) := T_{\|\cdot\|_{\mathcal{C}}} (x)$ to
refer to the descent cone of the gauge $\|\cdot\|_{\mathcal{C}}$ at a point $x$,
where $\mathcal{C} \subseteq \reals^{N}$ is a convex set.

Given $x \in \reals^{N}$, denote the $0$-norm of $x$ by
$\|x\|_{0} = \# \{ j \in [N] : x_{j} \neq 0\}$, where
$[N] := \{ 1, 2, \ldots, N\}$. Note that $\|\cdot \|_{0}$ is not a norm. For
$0 \leq p \leq \infty$, denote the $\ell_{p}$ ball by
$B_{p}^{N} := \{ x \in \reals^{N} : \|x\|_{p} \leq 1\}$. For $s, N \in \nats$
with $0 \leq s \leq N$, denote the set of at most $s$-sparse vectors by
\begin{align*}
  \Sigma_{s}^{N} := \{ x\in \reals^{N} : \|x\|_{0} \leq s\},
\end{align*}
and define $\Sigma_{-1}^{N} := \emptyset$.  Define the following sets:
\begin{align*}
  \mathcal{L}_{s}(r) %
  & := r\cdot \cvx (\Sigma_{s}^{N} \cap \sph^{N-1}),
  \\
    \mathcal{L}_{s} %
  & := \mathcal{L}_{s}(2),
  \\
    \mathcal{L}^{*}_{s} %
  & := \mathcal{L}_{2s}(4). %
\end{align*}
Additionally, define the sets:
\begin{align*}
  \mathcal{J}_{s}^{N} %
  &:= \left\{ x \in \reals^{N} : \|x\|_{1} \leq \sqrt s \|x\|_{2}\right\},
  \\
  \mathcal{K}_{s}^{N} %
  &:= \left\{ x \in \reals^{N} : \|x\|_{2} \leq 1 \, %
  \And \, \|x\|_{1} \leq \sqrt s\right\}. %
\end{align*}
Observe that $\mathcal{J}_{s}^{N}$ is a cone, and that
$\mathcal{K}_{s}^{N} = B_{2}^{N} \cap \sqrt s B_{1}^{N} =
\cvx(\mathcal{J}_{s}^{N} \cap \sph^{N-1})$.

\subsection{Tools from probability theory}
\label{sec:geometric-tools}

We start by introducing subgaussian random variables, which generalize Gaussian
random variables, but retain certain desirable properties, such as Gaussian-like
tail decay and moment bounds.

\begin{defn}[Subgaussian random variable]
  A random variable $X$ is called subgaussian if there exists a constant $K > 0$
  such that the moment generating function of $X^{2}$ satisfies, for all
  $\lambda$ such that $|\lambda| \leq K^{-1}$,
  \begin{align*}
    \E \exp(\lambda^{2} X^{2}) \leq \exp(K^{2} \lambda^{2}).
  \end{align*}
  The subgaussian norm of $X$ is defined by
  \begin{align*}
    \|X\|_{\psi_{2}} := \inf \{ t > 0 : \E \exp(X^{2} / t^{2}) \leq 2 \}. 
  \end{align*}
\end{defn}

We similarly define subexponential random variables.

\begin{defn}[Subexponential random variable]
  A random variable $X$ is called subexponential if there exists a constant
  $K > 0$ such that the moment generating function of $|X|$ satisfies, for all
  $\lambda$ such that $0 \leq \lambda \leq K^{-1}$,
  \begin{align*}
    \E \exp ( \lambda |X|) \leq \exp(K \lambda ).
  \end{align*}
  The subexponential norm of $X$ is defined by
  \begin{align*}
    \|X\|_{\psi_{1}} := \inf \{ t > 0 : \E \exp(|X| / t) \leq 2 \}. 
  \end{align*}
\end{defn}

Additionally, we call $X \in \reals^{N}$ a $K$-subgaussian random vector if
$\|X\|_{\psi_{2}} := \sup_{a \in \reals^{N}} \|\ip{a, X}\|_{\psi_{2}} \leq K$;
analogously so for $K$-subexponential random vectors. Where it is either clear
or irrelevant, we may omit observing the norm parameter and refer to a
$K$-subgaussian random vector simply as a subgaussian random vector; likewise
with a subexponential random vector. For properties and equivalent definitions
of subgaussian and subexponential random variables and vectors,
see~\cite[Chapter~2]{vershynin2018high}. Next, we introduce a piece of jargon
for the sake of concision.

\begin{defn}[$K$-subgaussian matrix]
  \label{def:isgrm}
  Given $m, N \in \nats$, call $A \in \reals^{m \times N}$ a $K$-subgaussian
  matrix if $A$ has rows $A_{i}^{T}$ that are independent, isotropic
  $K$-subgaussian random vectors:
  \begin{align*}
    \E A_{i} A_{i}^{T} = I, \qquad %
    \| A_{i}\|_{\Psi_{2}} \leq K, \qquad %
    i \in [m]. %
  \end{align*}
  Further, call $\tfrac{1}{\sqrt m} A$ a normalized $K$-subgaussian matrix.
\end{defn}

In this work, we crucially leverage the fact that $A$ satisfies a restricted
isometry property (RIP). An exposition on RIP and restricted isometry constants
may be found in~\cite{foucart2013mathematical}. As the results of this work
concern $K$-subgaussian matrices, we state a classical version of RIP for such
matrices restricted to the set of $s$-sparse vectors.




\begin{thm}[RIP for subgaussian matrices {\cite[Theorem 9.2]{foucart2013mathematical}}]
  \label{thm:rip-subgaus}
  Let $A \in \reals^{m \times N}$ be a normalized $K$-subgaussian matrix. There
  exists a constant $C = C_{K} > 0$ such that the restricted isometry constant
  of $A$ satisfies $\delta_{s} \leq \delta$ with probability at least
  $1 - \varepsilon$ provided
  \begin{align*}
    m \geq C \delta^{-2} ( s \ln (eN/s) + \ln (2 \varepsilon^{-1})).
  \end{align*}
\end{thm}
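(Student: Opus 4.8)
The plan is to follow the classical $\varepsilon$-net argument: first establish a sub-exponential concentration inequality for $\|Ax\|_2^2$ at a single sparse vector $x$, then promote it to a statement uniform over all $s$-sparse unit vectors by combining a union bound over supports with a net on each coordinate subspace, and finally pass from the net to the full sphere. Writing $A = \tfrac{1}{\sqrt m}\Phi$, where $\Phi$ has independent isotropic $K$-subgaussian rows $\varphi_1^T, \dots, \varphi_m^T$, isotropy gives $\E\|Ax\|_2^2 = \|x\|_2^2$, so the RIP amounts to showing that $\|Ax\|_2^2$ concentrates about its mean uniformly over $\Sigma_s^N \cap \sph^{N-1}$.

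First I would fix a unit vector $x$ and write
\[
  \|Ax\|_2^2 - \|x\|_2^2 = \frac1m \sum_{i=1}^m \big( \ip{\varphi_i, x}^2 - 1\big).
\]
Since $\varphi_i$ is isotropic and $K$-subgaussian, each $\ip{\varphi_i, x}$ is a mean-zero, unit-variance subgaussian scalar with $\|\ip{\varphi_i, x}\|_{\psi_2} \lesssim K$, so the centered squares $\ip{\varphi_i, x}^2 - 1$ are independent mean-zero sub-exponential variables with $\psi_1$-norm $\lesssim K^2$. Bernstein's inequality for sums of independent sub-exponential variables then yields
\[
  \mathbb{P}\big( \big| \|Ax\|_2^2 - \|x\|_2^2 \big| \ge t \big)
  \le 2 \exp\Big( - c\, \tfrac{m}{K^4} \min(t^2, t)\Big)
\]
for an absolute constant $c > 0$.

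Next I would discretize. For a fixed support $S \subseteq [N]$ with $|S| = s$, choose a $\tfrac14$-net $\mathcal{N}_S$ of the unit sphere of $\reals^S$; a volumetric bound gives $|\mathcal{N}_S| \le 9^s$. Applying the concentration inequality with $t = \delta/2$ (so that $\min(t^2,t) = t^2$ for $\delta \le 1$) and taking a union bound over all net points and over all $\binom{N}{s} \le (eN/s)^s$ supports, the total number of points is at most $(9eN/s)^s$, so the event that $\big|\|Ax\|_2^2 - 1\big| \le \delta/2$ fails at some net point has probability at most $2\exp\big( s\ln(9eN/s) - c m \delta^2/(4K^4)\big)$. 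Requiring $m \ge C_K \delta^{-2}\big( s\ln(eN/s) + \ln(2\varepsilon^{-1})\big)$ makes this at most $\varepsilon$, which is precisely the stated sample-complexity condition.

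Finally I would pass from the net to the full sparse sphere. Restricted to $\reals^S$, the deviation equals the quadratic form $\big|\ip{x, (A_S^T A_S - I)x}\big|$, whose supremum over the sphere is the operator norm of the symmetric matrix $A_S^T A_S - I$; a standard net lemma bounds this operator norm by $(1-2\rho)^{-1}$ times its maximum over a $\rho$-net. With $\rho = \tfrac14$ this factor is $2$, so the net estimate $\delta/2$ upgrades to $\delta$ on the full $s$-sparse sphere, and taking the supremum over $S$ gives $\delta_s \le \delta$ on the good event. The only genuinely delicate point is bookkeeping the sub-exponential Bernstein constants so that all $K$-dependence is cleanly absorbed into $C_K$ (the true dependence being polynomial, of order $K^4$); the covering and net-to-sphere steps are routine once the single-vector tail bound is in hand.
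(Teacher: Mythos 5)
The paper offers no proof of this theorem at all --- it is quoted verbatim, with citation, from \cite[Theorem 9.2]{foucart2013mathematical}, so there is nothing internal to compare against. Your argument (Bernstein concentration of $\|Ax\|_2^2$ at a fixed sparse unit vector via the sub-exponential norm bound $\|\ip{\varphi_i,x}^2-1\|_{\psi_1}\lesssim K^2$, a union bound over the $\binom{N}{s}$ supports and $9^s$-point quarter-nets, and the symmetric quadratic-form net lemma upgrading $\delta/2$ on the net to $\delta_s\le\delta$ with the factor $(1-2\rho)^{-1}=2$) is exactly the standard covering-argument proof given in that cited source, and it is correct as sketched, with the $K$-dependence absorbed into $C_K$ as you note.
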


\begin{rmk}
  Setting $\varepsilon = 2 \exp( -\delta^{2} m / (2C))$ yields the condition
  \begin{align*}
    m \geq 2 C \delta^{-2} s \ln ( eN/s)
  \end{align*}
  which guarantees that $\delta_{s} \leq \delta$ with probability at least $1 - 2 \exp ( - \delta^{2} m / (2C))$.

\end{rmk}

A tool necessary to the development of the results in
\autoref{sec:LS-instability} and \autoref{sec:analysis-bp} (specifically
\autoref{prop:unif-noise-scale}, \ref{prop:gw-lb-2}, and \ref{prop:gw-ub-2})
characterizes the variance and tail decay of the supremum of a Gaussian
process. For an introduction to random processes, we refer the reader
to~\cite{vershynin2018high, adler2009random}.

\begin{thm}[Borell-TIS inequality {\cite[Theorem 2.1.1]{adler2009random}}]
  \label{thm:borell-tis}
  Let $T$ be a topological space and let $\{f_{t}\}_{t \in T}$ be a centred (\ie
  mean-zero) Gaussian process almost surely bounded on $T$ with
  \begin{align}
    \label{eq:borell-norm}
    \vertiii{f}_{T} %
    &:= \sup_{t \in T} f_{t}, %
    &
      \sigma_{T}^{2} %
    &:= \sup_{t\in T} \E \big[ f_{t}^{2} \big]
  \end{align}
  such that $\vertiii{f}_{T}$ is almost surely finite. Then $\E \vertiii{f}_{T}$ and
  $\sigma_{T}$ are both finite and for each $u > 0$,
  \begin{align*}
    \mathbb{P}\big( \vertiii{f}_{T} > \E \vertiii{f}_{T} + u\big) %
    \leq \exp\big( - \frac{u^{2}}{2 \sigma_{T}^{2}} \big).
  \end{align*}
  Observe that $\vertiii{f}_{T}$ is notation; $\vertiii{\cdot}_{T}$ is not a
  norm. By symmetry, one may derive an analogous lower-tail
  inequality. Consequently, one also has for each $u > 0$, 
  \begin{align*}
    \mathbb{P}\big( \left|\vertiii{f}_{T} - \E \vertiii{f}_{T}\right| > u\big) %
    \leq 2 \exp\big( - \frac{u^{2}}{2 \sigma_{T}^{2}} \big).
  \end{align*}

\end{thm}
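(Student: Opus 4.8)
The plan is to reduce to a finite index set, where the supremum becomes a Lipschitz function of a standard Gaussian vector, dispatch that case with the Gaussian concentration inequality, and then recover the general statement by a countable approximation.

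First I would reduce to finite $T$. Invoking separability of the process (implicit in the hypotheses), I would realize $\vertiii{f}_{T}$ as $\sup_{t \in T_{0}} f_{t}$ for a countable $T_{0} \subseteq T$, and exhaust $T_{0}$ by finite sets $T_{1} \subseteq T_{2} \subseteq \cdots$ so that $\sup_{t \in T_{n}} f_{t} \uparrow \vertiii{f}_{T_{0}}$ monotonically. Since $\sigma_{T_{n}} \le \sigma_{T}$ for every $n$, a tail bound that is uniform in $n$ with variance proxy $\sigma_{T}^{2}$ will survive the passage to the limit by monotone convergence.

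For a fixed finite $T = \{t_{1}, \ldots, t_{n}\}$, the vector $(f_{t_{1}}, \ldots, f_{t_{n}})$ is centred Gaussian with covariance $\Gamma$; via a factorization $\Gamma = B B^{T}$ it has the same law as $B g$ for $g \sim \mathcal{N}(0, I_{n})$, so writing $b_{i}$ for the $i$-th row of $B$ gives $f_{t_{i}} = \langle b_{i}, g\rangle$ with $\|b_{i}\|_{2}^{2} = \Gamma_{ii} = \E f_{t_{i}}^{2} \le \sigma_{T}^{2}$. The map $F(g) := \max_{i} \langle b_{i}, g\rangle$ then satisfies
\[
|F(g) - F(g')| \le \max_{i} |\langle b_{i}, g - g'\rangle| \le \max_{i} \|b_{i}\|_{2} \, \|g - g'\|_{2} \le \sigma_{T} \|g - g'\|_{2},
\]
so $F$ is $\sigma_{T}$-Lipschitz on $\reals^{n}$. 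Applying the Gaussian concentration inequality for Lipschitz functions---$\mathbb{P}(F(g) > \E F(g) + u) \le \exp(-u^{2} / (2 L^{2}))$ for an $L$-Lipschitz $F$, which I would take as the key external ingredient (itself a consequence of the Gaussian isoperimetric inequality)---with $L = \sigma_{T}$ yields the upper tail on each $T_{n}$. The lower tail follows identically applied to $-F$, and a union bound gives the two-sided estimate.

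The main obstacle I anticipate is the self-referential finiteness of $\E \vertiii{f}_{T}$: the clean bound concentrates about the mean, but a priori one does not know the mean is finite, so I would first bound each $F$ about its \emph{median} (which is finite because it is dominated by the a.s.\ finite $\vertiii{f}_{T_{0}}$), integrate the resulting subgaussian tail to conclude both that $\E F$ is finite and that $|\E F - \mathrm{med}(F)| \le C \sigma_{T}$, and only then upgrade to concentration about the mean. Finiteness of $\sigma_{T}$ would be handled separately: a divergent sequence $\E f_{t_{k}}^{2} \to \infty$ forces $\mathbb{P}(f_{t_{k}} > a) \to \tfrac12$ for every fixed $a$, hence $\vertiii{f}_{T} = \infty$ with positive probability, contradicting a.s.\ boundedness. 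The separability hypothesis and measurability of $\sup_{t \in T} f_{t}$ are precisely the technical conditions that legitimize the reduction to a countable, and then finite, index set.
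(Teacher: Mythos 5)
The paper contains no proof of this statement to compare against: Borell--TIS is imported as an external tool, quoted verbatim with a citation to Adler and Taylor, and is never proved in the text. Judged on its own, your argument is correct, and it is essentially the standard proof --- the same route as in the cited reference: reduce by separability to countable and then finite index sets; for finite $T$ factor the covariance as $\Gamma = BB^{T}$ so that the supremum becomes a $\sigma_{T}$-Lipschitz function of a standard Gaussian vector; apply Gaussian concentration for Lipschitz functions; and pass to the limit, using that $\sup_{t \in T_{n}} f_{t} \uparrow \vertiii{f}_{T_{0}}$ and that the bound is uniform in $n$. Your handling of the two finiteness issues is also the standard (and necessary) one: uniform control of the medians (each is dominated by the a.s.\ finite full supremum) plus integration of the subgaussian tail gives $\E \vertiii{f}_{T} < \infty$, and unbounded variances would force $\mathbb{P}(f_{t_{k}} > a) \to \tfrac12$ for every fixed $a$, contradicting almost sure boundedness, so $\sigma_{T} < \infty$. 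One attribution to tighten: the Gaussian isoperimetric inequality directly yields concentration about the \emph{median}; the sharp constant $\exp\big(-u^{2}/(2\sigma_{T}^{2})\big)$ about the \emph{mean}, which you invoke, is the Cirelson--Ibragimov--Sudakov theorem (proved via It\^{o} calculus, or via log-Sobolev and Herbst's argument), and transferring naively from median to mean via $|\E F - \mathrm{med}(F)| \leq C\sigma_{T}$ would degrade the constant. Since you take mean concentration as a black box and it is a true, standard result, this is an imprecision of provenance rather than a gap in the proof.
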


\subsection{Geometric tools from probability}
\label{sec:geometric-tools-from}

Analysis of structured signals benefits from the ability to characterize their
\emph{effective dimension}. In this work, we capture this notion of effective
dimension with the Gaussian complexity and Gaussian width (gw), which
are closely related.

\begin{defn}[Gaussian complexity]
  Let $T \subseteq \reals^{N}$. Define the Gaussian complexity of $T$ by
  \begin{align*}
    \gamma (T) %
    := \E \sup_{x \in T} \left|\ip{x, g}\right|, %
    \qquad g \sim \mathcal{N}(0, I_{N}).
  \end{align*}
\end{defn}

\begin{defn}[Gaussian width]
  Let $T \subseteq \reals^{N}$. Define the Gaussian width of $T$, by:
  \begin{align*}
    \w(T) %
    := \E \sup_{x \in T} \ip{x, g}, %
    \qquad g \sim \mathcal{N}(0, I_{N}).
  \end{align*}
\end{defn}

\begin{rmk*}
  If $T$ is symmetric, then the gw of $T$ satisfies
  $\w(T) = \frac12 \E \sup_{x \in T - T}\ip{x, g}$.
\end{rmk*}

Next we state two results controlling the deviation of a $K$-subgaussian matrix
on a bounded set, which generalize the idea of RIP introduced
in~\autoref{thm:rip-subgaus}. These results were first proved
in~\cite{liaw2017simple}, and an improved dependence on the constant $K$ was
then obtained in~\cite{xiaowei2019taildep}. The results are stated using the
improved constant $\tilde K := K \sqrt{\log K}$; we refer the reader
to~\cite[Theorem 2.1]{xiaowei2019taildep} for further details.

\begin{thm}[{\cite[Theorem 1.1]{liaw2017simple}}]
  \label{thm:liaw-11}
  Let $A \in \reals^{m\times N}$ be a $K$-subgaussian matrix and
  $T \subseteq \reals^{N}$ bounded. Then
  \begin{align*}
    \E \sup_{x \in T} \left| \|Ax\|_{2} - \sqrt m \|x\|_{2} \right| %
    \leq C \tilde K \gamma (T).
  \end{align*}
\end{thm}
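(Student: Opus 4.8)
The plan is to linearize the Euclidean norm so as to reduce the bound to a comparison against a Gaussian process. Writing $\|Ax\|_{2} = \sup_{u \in \sph^{m-1}} \ip{u, Ax}$, I would study the two-parameter random process
\begin{align*}
  W_{x,u} := \ip{u, Ax} = \sum_{i=1}^{m} u_{i} \ip{A_{i}, x},
  \qquad (x,u) \in T \times \sph^{m-1},
\end{align*}
whose supremum over $u$ recovers $\|Ax\|_{2}$. Because the deviation $\|Ax\|_{2} - \sqrt m\|x\|_{2}$ is positively homogeneous in $x$, and both sides of the claimed inequality scale linearly under $T \mapsto \lambda T$, I may assume without loss of generality that $T \subseteq B_{2}^{N}$. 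The advantage of $W$ is that it is \emph{linear} in the entries of $A$, so its increments are transparent and amenable to Gaussian comparison.

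The first, routine, step is the increment estimate. Since the rows $A_{i}$ are independent, isotropic, and $K$-subgaussian,
\begin{align*}
  W_{x,u} - W_{x',u'} = \sum_{i=1}^{m} \ip{A_{i}, \, u_{i} x - u_{i}' x'}
\end{align*}
is a sum of independent subgaussian terms, so that
\begin{align*}
  \big\| W_{x,u} - W_{x',u'} \big\|_{\psi_2}
  \leq C K \Big( \textstyle\sum_{i} \| u_{i} x - u_{i}' x' \|_{2}^{2} \Big)^{1/2},
\end{align*}
and a direct computation using $\|u\|_{2} = \|u'\|_{2} = 1$ gives $\sum_{i} \| u_{i} x - u_{i}' x' \|_{2}^{2} = \|x\|_{2}^{2} - 2 \ip{u,u'}\ip{x,x'} + \|x'\|_{2}^{2}$. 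On $T \subseteq B_{2}^{N}$ this is dominated by $\|x - x'\|_{2}^{2} + \|u - u'\|_{2}^{2}$, which is the canonical variance of the Gaussian process $Y_{x,u} := \ip{g, x} + \ip{h, u}$ for independent standard Gaussians $g \in \reals^{N}$, $h \in \reals^{m}$. Talagrand's comparison inequality then yields
\begin{align*}
  \E \sup_{x \in T} \|Ax\|_{2}
  &= \E \sup_{x,u} W_{x,u}
  \leq C K \, \E \sup_{x,u} Y_{x,u} \\
  &= C K \big( \w(T) + \E \|h\|_{2} \big),
\end{align*}
with $\E \|h\|_{2} \approx \sqrt m$.

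The remaining, and genuinely delicate, step is \emph{centering}. The comparison above controls only the uncentered supremum, and does so with a lossy factor $CK$ multiplying the $\sqrt m$ term; it therefore cannot by itself produce the deviation $\|Ax\|_{2} - \sqrt m\|x\|_{2}$ with the correct unit coefficient on $\sqrt m \|x\|_{2}$. To fix the centering I would invoke the single-point concentration of the norm: for fixed $v$, writing $\|Av\|_{2}^{2} = \sum_{i} \ip{A_{i}, v}^{2}$ as a sum of independent subexponential variables and applying Bernstein's inequality gives
\begin{align*}
  \big\| \, \|Av\|_{2} - \sqrt m \|v\|_{2} \, \big\|_{\psi_2} \leq C \tilde K \|v\|_{2}.
\end{align*}
Applying this at a reference point $x_{0} \in T$ controls $\E | \, \|Ax_{0}\|_{2} - \sqrt m\|x_{0}\|_{2} \, |$ by $C\tilde K \rad(T) \leq C\tilde K \gamma(T)$, while the chaining/comparison argument controls the fluctuation of the deviation \emph{around} this reference. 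Assembling the reference-point bound, the chaining bound, and a symmetrization to handle the absolute value yields $\E\sup_{x\in T}|\,\|Ax\|_{2} - \sqrt m\|x\|_{2}\,| \leq C \tilde K \gamma(T)$, using $\w(T) \leq \gamma(T)$ and $\rad(T) \leq C\gamma(T)$.

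I expect the main obstacle to be exactly this centering-and-two-sidedness step, rather than the increment computation: one must isolate the deterministic scale $\sqrt m\|x\|_{2}$ (with coefficient precisely one) from the $CK\sqrt m$-lossy comparison, combine it cleanly with the chaining estimate, and pass from the one-sided supremum to the absolute value so that the final bound is expressed through the Gaussian complexity $\gamma(T)$ rather than merely the width $\w(T)$. The sharp constant $\tilde K = K\sqrt{\log K}$, improving the $K^{2}$ that the elementary subexponential estimate would give in the single-point step, enters precisely here, and is obtained from the refined tail analysis of~\cite{xiaowei2019taildep}.
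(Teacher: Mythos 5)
The paper does not prove this statement --- it is quoted verbatim from \cite[Theorem 1.1]{liaw2017simple} (with the improved constant $\tilde K = K\sqrt{\log K}$ from \cite{xiaowei2019taildep}) --- so your attempt must be measured against the proof in that reference. Your first two steps are fine as far as they go: the reduction to $T \subseteq B_{2}^{N}$ by homogeneity is valid, and the Chevet-type computation for the two-parameter process $W_{x,u} = \ip{u, Ax}$ together with Talagrand's comparison correctly yields the \emph{uncentered} bound $\E \sup_{x \in T}\|Ax\|_{2} \leq CK\left(\w(T) + \rad(T)\sqrt m\right)$. The genuine gap is in your third step: you assert that ``the chaining/comparison argument controls the fluctuation of the deviation around this reference,'' but nothing you have set up delivers this. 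Controlling $\sup_{x \in T}\left(Z_{x} - Z_{x_{0}}\right)$ for the deviation process $Z_{x} := \|Ax\|_{2} - \sqrt m \|x\|_{2}$ by chaining requires a subgaussian increment bound on $Z$ itself, namely $\|Z_{x} - Z_{y}\|_{\psi_{2}} \leq C\tilde K \|x-y\|_{2}$, and this cannot be read off from the increments of $W_{x,u}$: the partial supremum over $u$ and the subtraction of $\sqrt m\|x\|_{2}$ do not pass through the comparison inequality, and the naive route via $\left|\|Ax\|_{2} - \|Ay\|_{2}\right| \leq \|A(x-y)\|_{2}$ leaves a deterministic error $\sqrt m\left(\|x-y\|_{2} - \|x\|_{2} + \|y\|_{2}\right)$, which is of order $\sqrt m \|x-y\|_{2}$ whenever $\|x\|_{2} = \|y\|_{2}$ and hence is not $O(\|x-y\|_{2})$. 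A single-point Bernstein estimate at $x_{0}$ plus a $K$-lossy uniform bound cannot be assembled into a uniform deviation bound with unit coefficient on $\sqrt m\|x\|_{2}$.

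Establishing precisely that increment inequality is the technical core of \cite{liaw2017simple}: their main lemma proves $\|Z_{x} - Z_{y}\|_{\psi_{2}} \leq CK^{2}\|x-y\|_{2}$ by first treating $\|x\|_{2} = \|y\|_{2} = 1$ (writing $\|Ax\|_{2} - \|Ay\|_{2}$ as $\left(\|Ax\|_{2}^{2} - \|Ay\|_{2}^{2}\right)/\left(\|Ax\|_{2} + \|Ay\|_{2}\right)$ and running a Bernstein-type case analysis), then extending by homogeneity; \cite{xiaowei2019taildep} sharpens the constant to $K\sqrt{\log K}$ via a refined tail analysis. Once the increment bound is available, generic chaining (the majorizing measure theorem), the fact that $Z_{0} = 0$, and the radius term give $\E \sup_{x \in T} |Z_{x}| \leq C \tilde K \gamma(T)$. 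You correctly diagnosed where the difficulty lies, but your proposal defers rather than resolves it, so this is a gap, not an alternative proof. It is also worth knowing that a comparison-only strategy is genuinely unavailable here: for Gaussian $A$ one can get the centered bound from Gordon's minimax theorem, but no Gordon-type comparison holds for general subgaussian matrices --- which is exactly why the increment-based argument of \cite{liaw2017simple} was needed in the first place.
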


Another version of this result holds, where the deviation is instead controlled
by the gw and radius, rather than the Gaussian complexity.

\begin{thm}[{\cite[Theorem 1.4]{liaw2017simple}}]
  \label{thm:liaw-14}
  Let $A \in \reals^{m \times N}$ be a $K$-subgaussian matrix and
  $T \subseteq \reals^{N}$ bounded. For any $u \geq 0$ the event
  \begin{align}
    \label{eq:thm-1-4}
    \sup_{x\in T} \big|\|Ax\|_{2} &- \sqrt m \|x\|_{2}\big| %
    \nonumber\\
    &\leq C\tilde K \left[ \w(T) + u \cdot \rad(T)\right]
  \end{align}
  holds with probability at least $1 - 3\exp(-u^{2})$. Here,
  $\rad(T) := \sup_{x \in T}\|x\|_{2}$ denotes the radius of $T$.
\end{thm}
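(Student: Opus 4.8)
The plan is to recognize the left-hand side of \eqref{eq:thm-1-4} as the supremum of the absolute value of the random process $X_{x} := \|Ax\|_{2} - \sqrt m\,\|x\|_{2}$ indexed by $x \in T$, and to control it entirely through its \emph{increments}. The two ingredients I would isolate are (i) a sub-gaussian increments estimate $\|X_{x} - X_{y}\|_{\psi_{2}} \le C\tilde K\|x-y\|_{2}$, and (ii) the high-probability form of Talagrand's comparison inequality for processes with sub-gaussian increments (\cite[Thm.~8.5.5]{vershynin2018high}), which for such a process yields, for any $u \ge 0$,
\begin{align*}
  \sup_{x,y \in S} |X_{x} - X_{y}| \le C\tilde K\bigl[ \w(S) + u \cdot \diam(S) \bigr]
\end{align*}
with probability at least $1 - 2\exp(-u^{2})$. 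Granting (i), I would apply (ii) to $S := T \cup \{0\}$ and use $X_{0} = 0$ to write $\sup_{x\in T}|X_{x}| = \sup_{x\in T}|X_{x} - X_{0}| \le \sup_{x,y\in S}|X_{x}-X_{y}|$; the standard relations $\diam(T\cup\{0\}) \le 2\rad(T)$ and $\w(T\cup\{0\}) \lesssim \w(T) + \rad(T)$ then put the right-hand side into the advertised form $C\tilde K[\w(T) + u\cdot\rad(T)]$ after adjusting absolute constants, and a single extra exceptional event (below) bumps $2\exp(-u^{2})$ to $3\exp(-u^{2})$.

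The crux, and the step I expect to be the main obstacle, is the increments estimate (i), because it is precisely here that one must extract cancellation rather than bound crudely: the naive triangle inequality $|X_{x}-X_{y}| \le \|A(x-y)\|_{2} + \sqrt m\,\bigl|\|x\|_{2}-\|y\|_{2}\bigr|$ only produces the useless scale $\sqrt m\,\|x-y\|_{2}$. After reducing to unit vectors $u,v$ by homogeneity, I would exploit the factorization
\begin{align*}
  \|Au\|_{2}^{2} - \|Av\|_{2}^{2} = \sum_{i=1}^{m} \ip{A_{i}, u-v}\,\ip{A_{i}, u+v},
\end{align*}
a sum of independent terms each of which is \emph{mean-zero} (since isotropy gives $\E\ip{A_{i},u-v}\ip{A_{i},u+v} = \ip{u-v,u+v} = \|u\|_{2}^{2}-\|v\|_{2}^{2} = 0$) and sub-exponential with $\|\cdot\|_{\psi_{1}} \lesssim K^{2}\|u-v\|_{2}$, being a product of two sub-gaussians. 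A Bernstein estimate then shows $\|Au\|_{2}^{2}-\|Av\|_{2}^{2}$ concentrates about $0$ at scale $\sqrt m\,K^{2}\|u-v\|_{2}$; dividing by $\|Au\|_{2}+\|Av\|_{2}$, which concentrates near $2\sqrt m$ and is bounded below on a high-probability event, turns the identity $\|Au\|_{2}-\|Av\|_{2} = (\|Au\|_{2}^{2}-\|Av\|_{2}^{2})/(\|Au\|_{2}+\|Av\|_{2})$ into $\|X_{u}-X_{v}\|_{\psi_{2}} \le C\tilde K\|u-v\|_{2}$. Control of this denominator is exactly the source of the third exceptional-probability term, and the sharpened dependence $\tilde K = K\sqrt{\log K}$, in place of the $K^{2}$ of the first-generation argument, is obtained by the more careful product/truncation analysis of \cite{xiaowei2019taildep}.

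Finally, the radial base case $y = 0$ of (i), namely $\bigl\|\,\|Ax\|_{2} - \sqrt m\,\|x\|_{2}\,\bigr\|_{\psi_{2}} \le C\tilde K\|x\|_{2}$, is just concentration of the Euclidean norm of the vector $Ax$, whose coordinates $\ip{A_{i}, x}$ are independent, mean-zero, variance-$\|x\|_{2}^{2}$, and $K\|x\|_{2}$-subgaussian; this simultaneously pins the centring constant at $\sqrt m\,\|x\|_{2}$ and confirms that $\E X_{x}$ is negligible. Combining the angular increment bound with this radial estimate gives the full increment bound for arbitrary $x,y$, and feeding it into the generic tail bound together with the $T \mapsto T\cup\{0\}$ conversion yields \eqref{eq:thm-1-4}. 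As a sanity check, taking $u$ of constant order and invoking the comparison $\gamma(T) \asymp \w(T) + \rad(T)$ recovers the expectation statement of \autoref{thm:liaw-11} as a corollary.
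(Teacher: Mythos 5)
The paper does not actually prove this statement---it is imported verbatim from \cite[Theorem 1.4]{liaw2017simple} (with the sharpened constant $\tilde K = K\sqrt{\log K}$ from \cite{xiaowei2019taildep})---so the relevant comparison is with the source's proof, and your plan is essentially that proof: establish sub-gaussian increments $\|X_{x}-X_{y}\|_{\psi_{2}} \le C\tilde K\|x-y\|_{2}$ for $X_{x} := \|Ax\|_{2} - \sqrt m\,\|x\|_{2}$ via the identity $\|Au\|_{2}^{2}-\|Av\|_{2}^{2} = \sum_{i}\ip{A_{i},u-v}\ip{A_{i},u+v}$ together with Bernstein, then feed the increments into the high-probability form of Talagrand's comparison inequality. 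Your reductions are all sound, including anchoring at $S = T\cup\{0\}$ with $X_{0}=0$, the bounds $\diam(T\cup\{0\})\le 2\rad(T)$ and $\w(T\cup\{0\})\le \w(T)+\rad(T)$, and the closing sanity check $\gamma(T)\asymp \w(T)+\rad(T)$. The one place you are (admittedly) schematic is the technical heart of the argument: Bernstein gives the quadratic form a mixed sub-gaussian/sub-exponential tail, and upgrading this to a genuinely sub-gaussian increment for $X$ --- handling the regime where the denominator $\|Au\|_{2}+\|Av\|_{2}$ is not close to $2\sqrt m$ --- is exactly the content of \cite[Theorem 1.3]{liaw2017simple} and of the refinement in \cite{xiaowei2019taildep}; also, your attribution of the third exceptional term in $1-3\exp(-u^{2})$ to that denominator event is a bookkeeping guess rather than how the source accounts for it, but this is immaterial since your route would in fact deliver the (stronger) probability $1-2\exp(-u^{2})$.
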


\begin{rmk}
  If $u \geq 1$ the bound in~\eqref{eq:thm-1-4} can be loosened to the
  following simpler one:
  \begin{align*}
    \sup_{x \in T} \left|\|Ax\|_{2} - \sqrt m \|x\|_{2}\right| %
    \leq C\tilde K u \gamma(T).
  \end{align*}
\end{rmk}

Setting $T := \sph^{N-1}$, and using the improved constant obtained
in~\cite[Theorem 2.1]{xiaowei2019taildep} gives the following corollary.

\begin{coro}[Largest singular value of $K$-subgaussian matrices]
  \label{coro:largest-subgaus-singval}
  Let $A \in \reals^{m \times N}$ be a $K$-subgaussian matrix. For all
  $t \geq 0$, with probability at least $1 - 3 \exp(-t^{2})$,
  \begin{align*}
    \left|\|A\| - \sqrt m\right| %
    \leq C \tilde K \left[\sqrt N + t\right].
  \end{align*}
\end{coro}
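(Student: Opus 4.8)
The plan is to apply \autoref{thm:liaw-14} with $T = \sph^{N-1}$ and $u = t$, then convert the resulting uniform deviation bound into a bound on the operator norm $\|A\|$. First I would compute the two geometric quantities attached to the sphere. Since every $x \in \sph^{N-1}$ satisfies $\|x\|_{2} = 1$, the radius is $\rad(\sph^{N-1}) = 1$. For the Gaussian width, the supremum $\sup_{x \in \sph^{N-1}} \ip{x, g}$ is attained at $x = g/\|g\|_{2}$ and equals $\|g\|_{2}$, so $\w(\sph^{N-1}) = \E\|g\|_{2} \leq \sqrt{\E \|g\|_{2}^{2}} = \sqrt N$ by Jensen's inequality.

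Substituting these into \autoref{thm:liaw-14}, and using $\|x\|_{2} = 1$ on the sphere to drop that factor from the left-hand side, I would obtain that with probability at least $1 - 3\exp(-t^{2})$,
\begin{align*}
  \sup_{x \in \sph^{N-1}} \big| \|Ax\|_{2} - \sqrt m \big|
  \leq C\tilde K \big[ \w(\sph^{N-1}) + t \cdot \rad(\sph^{N-1}) \big]
  \leq C\tilde K \big[ \sqrt N + t \big].
\end{align*}

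The remaining step is an elementary squeeze argument linking this uniform deviation to the operator norm. Writing $\delta$ for the supremum on the left, the two-sided bound $\sqrt m - \delta \leq \|Ax\|_{2} \leq \sqrt m + \delta$ holds for every $x \in \sph^{N-1}$; taking the supremum over the sphere gives $\sqrt m - \delta \leq \|A\| \leq \sqrt m + \delta$, that is, $\big|\|A\| - \sqrt m\big| \leq \delta$. Chaining this with the displayed bound yields the claim. I do not expect a genuine obstacle: all the analytic content lives in \autoref{thm:liaw-14}, and this corollary is merely its specialization to $\sph^{N-1}$ combined with the observation that the operator-norm deviation is dominated by the uniform deviation of $\|Ax\|_{2}$. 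The only point I would state with care is the estimate $\E\|g\|_{2} \leq \sqrt N$, which is precisely what produces the $\sqrt N$ term in the conclusion.
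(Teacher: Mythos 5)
Your proposal is correct and follows exactly the route the paper intends: the corollary is obtained by specializing \autoref{thm:liaw-14} to $T = \sph^{N-1}$ (with $\rad(\sph^{N-1}) = 1$, $\w(\sph^{N-1}) \leq \sqrt N$) and noting that the uniform deviation of $\|Ax\|_{2}$ over the sphere dominates $\left|\|A\| - \sqrt m\right|$. The squeeze argument and the width estimate you spell out are precisely the details the paper leaves implicit.
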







Finally, we state the following comparison inequality for two centred Gaussian
processes.

\begin{thm}[Sudakov-Fernique inequality {\cite[Theorem
    7.2.11]{vershynin2018high}}]
  \label{thm:sudakov-fernique}
  Let $(X_{t})_{t \in T}, (Y_{t})_{t \in T}$ be two mean-zero Gaussian
  processes. Assume that, for all $s,t \in T$, we have
  \begin{align*}
    \E (X_{t} - X_{s})^{2} \leq \E (Y_{t} - Y_{s})^{2}.
  \end{align*}
  Then
  \begin{align*}
    \E \sup_{t \in T} X_{t} \leq \E \sup_{t \in T} Y_{t}.
  \end{align*}
\end{thm}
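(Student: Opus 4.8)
The plan is to reproduce the classical Gaussian interpolation (``smart path'') argument. First I would reduce to a finite index set: since for a separable centred Gaussian process one has $\E\sup_{t\in T}X_{t} = \sup\{\E\max_{t\in F}X_{t} : F\subseteq T \text{ finite}\}$ (a monotone limit over finite subsets), and the increment hypothesis $\E(X_{t}-X_{s})^{2}\le \E(Y_{t}-Y_{s})^{2}$ is inherited by every finite subset, it suffices to prove the inequality when $T = \{1,\dots,n\}$. I would then realize $X = (X_{1},\dots,X_{n})$ and $Y = (Y_{1},\dots,Y_{n})$ as \emph{independent} centred Gaussian vectors on a common probability space, which changes neither side of the claim. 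Writing $\Sigma^{X}_{ik} := \E X_{i}X_{k}$ and $\Sigma^{Y}_{ik} := \E Y_{i}Y_{k}$, the hypothesis becomes $\Sigma^{X}_{ii}+\Sigma^{X}_{kk}-2\Sigma^{X}_{ik}\le \Sigma^{Y}_{ii}+\Sigma^{Y}_{kk}-2\Sigma^{Y}_{ik}$ for all $i,k$.

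Next I would replace the nonsmooth maximum by the soft-max $f_{\beta}(x) := \beta^{-1}\log\sum_{i=1}^{n} e^{\beta x_{i}}$, which satisfies the sandwich bound $\max_{i} x_{i}\le f_{\beta}(x)\le \max_{i} x_{i} + \beta^{-1}\log n$, and which is smooth with gradient $\partial_{i}f_{\beta}(x) = p_{i}(x)$ and Hessian $\partial_{ik}f_{\beta}(x) = \beta\,(p_{i}\delta_{ik} - p_{i}p_{k})$, where $p_{i}(x) := e^{\beta x_{i}}/\sum_{j}e^{\beta x_{j}}$ is a probability vector. Introducing the interpolating vector $Z(u) := \sqrt{u}\,X + \sqrt{1-u}\,Y$ for $u\in[0,1]$, I set $\psi(u) := \E f_{\beta}(Z(u))$, so that $\psi(1) = \E f_{\beta}(X)$ and $\psi(0) = \E f_{\beta}(Y)$.

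The crux is to show $\psi'(u)\le 0$. Differentiating under the expectation (justified since $\nabla f_{\beta}$ is bounded) gives $\psi'(u) = \E\sum_{i}\partial_{i}f_{\beta}(Z(u))\big(\tfrac{1}{2\sqrt{u}}X_{i} - \tfrac{1}{2\sqrt{1-u}}Y_{i}\big)$. Applying Gaussian integration by parts (Stein's identity) to the $X$- and $Y$-terms separately, using the independence of $X$ and $Y$ together with the chain rule $\partial_{X_{k}}[\partial_{i}f_{\beta}(Z)] = \sqrt{u}\,\partial_{ik}f_{\beta}(Z)$, the $\sqrt{u}$ and $\sqrt{1-u}$ factors cancel and I obtain the clean expression $\psi'(u) = \tfrac12\sum_{i,k}(\Sigma^{X}_{ik}-\Sigma^{Y}_{ik})\,\E\,\partial_{ik}f_{\beta}(Z(u))$. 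Substituting the Hessian, using $\sum_{i}p_{i}=1$ and the symmetry of the covariance differences, this rearranges to $\psi'(u) = \tfrac{\beta}{4}\,\E\sum_{i,k}\big[(\Sigma^{X}_{ii}+\Sigma^{X}_{kk}-2\Sigma^{X}_{ik})-(\Sigma^{Y}_{ii}+\Sigma^{Y}_{kk}-2\Sigma^{Y}_{ik})\big]p_{i}p_{k}$, which is $\le 0$ by the increment hypothesis since $p_{i}p_{k}\ge 0$.

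Finally, $\psi'\le 0$ yields $\E f_{\beta}(X) = \psi(1)\le \psi(0) = \E f_{\beta}(Y)$; letting $\beta\to\infty$ and invoking the uniform sandwich bound on $f_{\beta}$ gives $\E\max_{i}X_{i}\le \E\max_{i}Y_{i}$, and undoing the finite reduction completes the proof. I expect the main obstacle to be the Gaussian integration-by-parts computation together with recognizing that the soft-max Hessian $\beta(p_{i}\delta_{ik}-p_{i}p_{k})$ is precisely what converts the difference of covariances into the difference of increment variances; by comparison, the analytic justifications (differentiation under the integral and the finite-to-arbitrary $T$ reduction via separability) are routine.
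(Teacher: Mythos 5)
Your proposal is correct: the soft-max Gaussian-interpolation ("smart path") argument, including the Hessian identity $\partial_{ik}f_{\beta}=\beta(p_{i}\delta_{ik}-p_{i}p_{k})$ and the rearrangement that turns covariance differences into increment-variance differences, is carried out accurately. Note that the paper itself gives no proof of this statement --- it is quoted verbatim from Vershynin's book (Theorem 7.2.11) as an external tool --- and your argument is essentially the standard proof appearing in that cited reference, so there is nothing to contrast.
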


\subsection{Geometric tools}
\label{sec:geometric-tools-1}

In this section, we introduce tools primarily relevant to obtaining recovery
bounds for compressed sensing in the classical setting where
$\mathcal{K} = B_{1}^{N}$. We start by recalling that sparse vectors have low
effective dimension, as does their difference set.

\begin{lem}[gw of the sparse signal set {\cite[Lemma 2.3]{plan2013robust}}]
  \label{lem:mean-width-sparse-signals}
  There exist absolute constants $c, C > 0$ such that
  \begin{align*}
    cs \log (2N /s) %
    & \leq \w^{2} \left((\Sigma_{s}^{N}\cap B_{2}^{N}) %
      - (\Sigma_{s}^{N} \cap B_{2}^{N})\right) %
    \\
    & \leq C s \log (2N /s)
  \end{align*}
  For possibly different absolute constants $c, C >0$, one also has
  \begin{align*}
    cs \log (2N /s) %
    \leq \w^{2} \left(\Sigma_{s}^{N}\cap B_{2}^{N}\right) %
    \leq C s \log (2N /s).
  \end{align*}
\end{lem}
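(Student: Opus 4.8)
The plan is to establish the two-sided bounds for the single set $\Sigma_{s}^{N}\cap B_{2}^{N}$ first, and then deduce the difference-set bounds by elementary set-inclusion arguments. The starting observation is that this Gaussian width has an explicit form: for fixed $g \sim \mathcal{N}(0, I_{N})$, maximizing $\ip{x, g}$ over $x \in \Sigma_{s}^{N}\cap B_{2}^{N}$ selects the $s$ coordinates of $g$ of largest magnitude and aligns $x$ with them, so that
\[
  \sup_{x \in \Sigma_{s}^{N}\cap B_{2}^{N}} \ip{x, g} = \max_{|S| = s} \|g_{S}\|_{2},
\]
where $g_{S}$ denotes the restriction of $g$ to the coordinates indexed by $S \subseteq [N]$. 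Thus $\w(\Sigma_{s}^{N}\cap B_{2}^{N}) = \E \max_{|S|=s}\|g_{S}\|_{2}$, and the task reduces to estimating the expected norm of the largest $s$-subvector of a standard Gaussian vector.

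For the upper bound I would combine a union bound over the $\binom{N}{s} \le (eN/s)^{s}$ possible supports with Gaussian concentration of the norm. For each fixed $S$ the map $g \mapsto \|g_{S}\|_{2}$ is $1$-Lipschitz, so $\|g_{S}\|_{2} \le \sqrt{s} + t$ with probability at least $1 - e^{-t^{2}/2}$. A union bound over all supports, followed by integrating the resulting tail (taking $t \sim \sqrt{s\log(eN/s)}$), yields $\E \max_{|S|=s}\|g_{S}\|_{2} \le C\sqrt{s\log(2N/s)}$, which is the claimed upper bound on $\w^{2}$. For the lower bound I would instead exhibit a favourable random feasible point. Partition $[N]$ into $s$ disjoint blocks each of size $\lfloor N/s \rfloor$, let $M_{j}$ be the largest magnitude of $g$ within block $j$ attained at index $i_{j}$, and let $x$ place mass $\sgn(g_{i_{j}})/\sqrt{s}$ on each $i_{j}$. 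Then $x \in \Sigma_{s}^{N}\cap B_{2}^{N}$ and $\ip{x, g} = s^{-1/2}\sum_{j=1}^{s} M_{j}$, whence
\[
  \w(\Sigma_{s}^{N}\cap B_{2}^{N}) \ge \E \ip{x, g} = s^{-1/2}\sum_{j=1}^{s} \E M_{j}.
\]
Since each $M_{j}$ is the maximum of $\lfloor N/s\rfloor \ge 1$ i.i.d.\ half-Gaussians, one has $\E M_{j} \ge c\sqrt{\log(2N/s)}$ for an absolute constant $c > 0$, giving $\w(\Sigma_{s}^{N}\cap B_{2}^{N}) \ge c\sqrt{s \log(2N/s)}$.

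The difference-set bounds then follow immediately. Since $0 \in \Sigma_{s}^{N}\cap B_{2}^{N}$, the inclusion $\Sigma_{s}^{N}\cap B_{2}^{N} \subseteq (\Sigma_{s}^{N}\cap B_{2}^{N}) - (\Sigma_{s}^{N}\cap B_{2}^{N})$ holds, and monotonicity of $\w$ under set inclusion yields the lower bound. For the upper bound, the difference of two $s$-sparse unit-ball vectors is $2s$-sparse with $\ell_{2}$-norm at most $2$, so $(\Sigma_{s}^{N}\cap B_{2}^{N}) - (\Sigma_{s}^{N}\cap B_{2}^{N}) \subseteq 2\,(\Sigma_{2s}^{N}\cap B_{2}^{N})$; applying the single-set upper bound with $2s$ in place of $s$ and absorbing constants completes the argument.

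The step I expect to be the main obstacle is the lower bound, specifically securing the per-block estimate $\E M_{j} \ge c\sqrt{\log(2N/s)}$ \emph{uniformly} over the entire range $1 \le s \le N$ with a single absolute constant. This requires gluing together the two regimes: when $N/s = O(1)$ one falls back on $\E M_{j} \ge \E |g_{1}| = \sqrt{2/\pi}$ together with $\log(2N/s) = O(1)$, whereas when $N/s \to \infty$ one uses the standard lower bound $c\sqrt{\log \lfloor N/s\rfloor}$ on the expected maximum of $\lfloor N/s\rfloor$ Gaussians and the comparison $\log\lfloor N/s\rfloor \gtrsim \log(2N/s)$. Reconciling these into one clean absolute constant, and matching the $\log(2N/s)$ form stated in the lemma (rather than $\log(eN/s)$), is the only genuinely delicate bookkeeping; the upper bound is a routine union-bound-plus-concentration computation.
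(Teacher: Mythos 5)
The paper never proves this lemma: it is imported verbatim, with citation, from Plan and Vershynin \cite[Lemma 2.3]{plan2013robust}, so there is no internal proof to compare yours against. Judged on its own merits, your argument is correct and is essentially the standard one, close in spirit to the proof in the cited source. The identity $\sup_{x \in \Sigma_{s}^{N}\cap B_{2}^{N}} \ip{x,g} = \max_{|S|=s}\|g_{S}\|_{2}$, the union bound over $\binom{N}{s} \le (eN/s)^{s}$ supports combined with $1$-Lipschitz Gaussian concentration for the upper bound, and the difference-set reductions via $0 \in \Sigma_{s}^{N}\cap B_{2}^{N}$ (lower bound) and $(\Sigma_{s}^{N}\cap B_{2}^{N}) - (\Sigma_{s}^{N}\cap B_{2}^{N}) \subseteq 2(\Sigma_{2s}^{N}\cap B_{2}^{N})$ (upper bound) are all sound. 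Your block-maxima construction for the lower bound is a legitimate alternative to the order-statistics argument (restricting $g$ to its $s$ largest-magnitude coordinates) that one more often sees; both yield $c\sqrt{s\log(2N/s)}$. The regime-gluing you flag as delicate does go through cleanly: since $\lfloor N/s\rfloor \ge 1$ and $N/s \le 2\lfloor N/s\rfloor$, one has $\log(2N/s) \le 2\log\left(2\lfloor N/s\rfloor\right)$, so the per-block estimate $\E M_{j} \ge c\sqrt{\log\left(2\lfloor N/s\rfloor\right)}$ (valid for all block sizes, with size one handled by $\E|g_{1}| = \sqrt{2/\pi}$) converts to the stated form with a single absolute constant. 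Two small bookkeeping points you should make explicit: (i) in the difference-set upper bound, the case $2s > N$ needs a separate sentence, since there $\Sigma_{2s}^{N}\cap B_{2}^{N} = B_{2}^{N}$ and one instead uses $\w(B_{2}^{N}) \le \sqrt{N} \le C\sqrt{s \log(2N/s)}$, which holds because $s > N/2$; (ii) the equivalence $\log(eN/s) \asymp \log(2N/s)$, invoked implicitly in your upper bound, holds because both quantities are bounded below by positive absolute constants whenever $s \le N$.
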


In addition, we also recall that the descent cone of the $\ell_{1}$ ball has
comparable effective dimension to the set of sparse vectors. For example,
see~\cite[Proposition 9.24]{foucart2013mathematical} for a related result
showing
$\w^{2}\left(T_{B_{1}^{N}}(x) \cap \sph^{N-1}\right) \leq 2s \log(eN/s)$
when $x$ is $s$-sparse. To clarify the connection between the gw of $B_{1}^{N}$
and that of $\Sigma_{s}^{N}$, we recall the following result.

\begin{lem}[Convexification {\cite[Lemma 3.1]{plan2013one}}]
  \label{lem:convexification}
  One has
  \begin{align*}
    \cvx(\Sigma_{s}^{N} \cap B_{2}^{N}) %
    \subseteq \mathcal{K}_{s}^{N} %
    \subseteq 2 \cvx(\Sigma_{s}^{N} \cap B_{2}^{N}).
  \end{align*}

\end{lem}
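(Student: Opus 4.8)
The plan is to prove the two inclusions separately, the first being essentially immediate and the second requiring the standard compressed-sensing ``shelling'' argument. For the left inclusion, I would first show $\Sigma_s^N \cap B_2^N \subseteq \mathcal{K}_s^N$: any $x \in \Sigma_s^N \cap B_2^N$ has $\|x\|_2 \le 1$ and support of size at most $s$, so Cauchy--Schwarz applied on $\supp(x)$ gives $\|x\|_1 \le \sqrt{s}\,\|x\|_2 \le \sqrt s$, whence $x \in \mathcal{K}_s^N$. Since $\mathcal{K}_s^N = B_2^N \cap \sqrt s B_1^N$ is an intersection of convex sets and therefore convex, taking convex hulls preserves the inclusion, giving $\cvx(\Sigma_s^N \cap B_2^N) \subseteq \mathcal{K}_s^N$.

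For the right inclusion, fix $x \in \mathcal{K}_s^N$, so $\|x\|_2 \le 1$ and $\|x\|_1 \le \sqrt s$. I would sort the coordinates of $x$ in nonincreasing order of magnitude and partition $[N]$ into consecutive blocks $T_1, T_2, \dots$ each of size $s$ (the last possibly smaller), where $T_1$ indexes the $s$ largest-magnitude entries, $T_2$ the next $s$, and so on. Writing $x_{(j)}$ for the restriction of $x$ to $T_j$, each $x_{(j)}$ is $s$-sparse and $x = \sum_j x_{(j)}$. The aim is to bound $\sum_j \|x_{(j)}\|_2$ by $2$, so that $x$ becomes $2$ times a convex combination of normalized $s$-sparse vectors.

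The crux is the shelling estimate. By the sorting, every entry of block $T_j$ (for $j \ge 2$) has magnitude at most the average magnitude in $T_{j-1}$, so $\|x_{(j)}\|_\infty \le \|x_{(j-1)}\|_1 / s$ and therefore $\|x_{(j)}\|_2 \le \sqrt s\, \|x_{(j)}\|_\infty \le \|x_{(j-1)}\|_1 / \sqrt s$. Summing over $j \ge 2$ telescopes into $\sum_{j \ge 2}\|x_{(j)}\|_2 \le s^{-1/2}\sum_{j\ge 1}\|x_{(j)}\|_1 = s^{-1/2}\|x\|_1 \le 1$, while $\|x_{(1)}\|_2 \le \|x\|_2 \le 1$; hence $S := \sum_j \|x_{(j)}\|_2 \le 2$. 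Setting $u_{(j)} := x_{(j)} / \|x_{(j)}\|_2 \in \Sigma_s^N \cap \sph^{N-1}$ for the nonzero blocks, I would write $x = \sum_j \|x_{(j)}\|_2\, u_{(j)} = S \sum_j (\|x_{(j)}\|_2 / S)\, u_{(j)}$, where the coefficients $\|x_{(j)}\|_2 / S$ are nonnegative and sum to one, so $x / S \in \cvx(\Sigma_s^N \cap B_2^N)$. Since $0 \in \Sigma_s^N \cap B_2^N$, this convex hull contains the origin and is star-shaped about it, so scaling by $S / 2 \le 1$ gives $x / 2 \in \cvx(\Sigma_s^N \cap B_2^N)$, i.e.\ $x \in 2\,\cvx(\Sigma_s^N \cap B_2^N)$.

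\textbf{Main obstacle:} the second inclusion---specifically, verifying the telescoping shelling bound with care (handling the final short block and blocks that are identically zero), and then correctly invoking the origin-containing star-shapedness of the convex hull to absorb the factor $S \le 2$ into the target set $2\,\cvx(\Sigma_s^N \cap B_2^N)$.
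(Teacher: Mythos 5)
Your proof is correct. The paper itself does not prove this lemma --- it quotes it directly from \cite{plan2013one} --- and your argument (Cauchy--Schwarz plus convexity of $\mathcal{K}_{s}^{N}$ for the first inclusion; the sorted-block shelling bound $\sum_{j}\|x_{(j)}\|_{2} \leq \|x\|_{2} + \|x\|_{1}/\sqrt{s} \leq 2$ together with star-shapedness about the origin for the second) is essentially the same proof as in that cited source, with the edge cases (short final block, zero blocks, $x=0$) handled correctly.
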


Next, it will be useful to leverage the following equivalent characterization
for the $\ell_{1}$ descent cone. Recall that $\sgn(\alpha) := 1$ if
$\alpha > 0$, $\sgn(\alpha) = -1$ if $\alpha < 0$ and $\sgn(0) = 0$. 

\begin{lem}[Equivalent descent cone characterization]
  \label{lem:equivalent-descent-cone}
  Let $x \in \Sigma_{s}^{N}$ with non-empty support set
  $\mathcal{T} \subseteq [N]$ and define $\mathcal{C} := \|x\|_{1}
  B_{1}^{N}$. If
  $K(x) := \{ h \in \reals^{N} : \|h_{\mathcal{T}^{C}}\|_{1} \leq - \ip{\mathrm{sgn}(x),
    h}\}$, then $T_{\mathcal{C}}(x) = K(x)$.
\end{lem}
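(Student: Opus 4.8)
The plan is to first reduce the claim to the descent cone of the plain $\ell_{1}$ norm, and then to exploit the piecewise-linearity of $\|\cdot\|_{1}$ to obtain an \emph{exact} (rather than merely asymptotic) formula for $\|x + th\|_{1}$ at small $t > 0$. First I would observe that the Minkowski gauge of $\mathcal{C} = \|x\|_{1} B_{1}^{N}$ is $\|\cdot\|_{\mathcal{C}} = \|\cdot\|_{1}/\|x\|_{1}$, which is well-defined since $\mathcal{T}\neq\emptyset$ forces $\|x\|_{1} > 0$. Because the sublevel set $\{z : \|z\|_{\mathcal{C}} \leq \|x\|_{\mathcal{C}}\}$ is identical to $\{z : \|z\|_{1} \leq \|x\|_{1}\}$, the two descent cones coincide, so $T_{\mathcal{C}}(x) = T_{\|\cdot\|_{1}}(x) = T(x)$ and it suffices to prove $T(x) = K(x)$. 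I would also record at the outset that $K(x)$ is a closed convex cone, since $h \mapsto \|h_{\mathcal{T}^{C}}\|_{1} + \ip{\sgn(x), h}$ is convex and positively homogeneous, so $K(x)$ is the $0$-sublevel set of a continuous, positively homogeneous convex function.

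The crux is the identity that for each fixed $h$ there is a threshold $t_{0} = t_{0}(x,h) > 0$ with
\[
  \|x + th\|_{1} = \|x\|_{1} + t\big(\ip{\sgn(x), h} + \|h_{\mathcal{T}^{C}}\|_{1}\big), \qquad 0 < t < t_{0}.
\]
This holds because on the support $\mathcal{T}$ the sign of $x_{i} + t h_{i}$ agrees with $\sgn(x_{i})$ as soon as $t < \min\{|x_{i}|/|h_{i}| : i \in \mathcal{T}, h_{i} \neq 0\}$, so $|x_{i} + t h_{i}| = |x_{i}| + t\,\sgn(x_{i})h_{i}$ there, while off the support $|0 + t h_{i}| = t|h_{i}|$; summing the two contributions gives the displayed line. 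Finiteness of $\mathcal{T}$ guarantees $t_{0} > 0$.

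With this identity in hand, both inclusions are short. For $K(x) \subseteq T(x)$: if $h \in K(x)$ the parenthesized quantity is $\leq 0$, so the identity gives $\|x + th\|_{1} \leq \|x\|_{1}$ for $0 < t < t_{0}$; hence $z := x + th$ is feasible and $h = t^{-1}(z - x) \in \cone\{z - x : \|z\|_{1} \leq \|x\|_{1}\} = T(x)$. For $T(x) \subseteq K(x)$: a generator $z - x$ of $T(x)$ with $\|z\|_{1} \leq \|x\|_{1}$ satisfies, by convexity of $\|\cdot\|_{1}$, the bound $\|x + s(z - x)\|_{1} \leq (1-s)\|x\|_{1} + s\|z\|_{1} \leq \|x\|_{1}$ for $s \in [0,1]$; comparing against the identity forces $\ip{\sgn(x), z - x} + \|(z - x)_{\mathcal{T}^{C}}\|_{1} \leq 0$, i.e. $z - x \in K(x)$, and since $K(x)$ is a cone this extends to every element of $T(x)$.

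The step I expect to be the main obstacle — or at least the one most prone to a sloppy ``up to closure'' argument — is the direction $K(x) \subseteq T(x)$ in the boundary case $\ip{\sgn(x), h} + \|h_{\mathcal{T}^{C}}\|_{1} = 0$. A purely differential argument (vanishing directional derivative) would only place $h$ in the \emph{closure} of the descent cone. The reason equality holds on the nose, and what I would stress, is precisely the exact piecewise-linear identity above: when the directional derivative vanishes, $\|x + th\|_{1}$ equals $\|x\|_{1}$ exactly for all small $t > 0$, not merely to first order, so $x + th$ is genuinely feasible and $h$ is a genuine generator of $T(x)$ with no closure required.
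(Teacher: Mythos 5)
Your proof is correct. There is, however, nothing in the paper to compare it against: the paper states \autoref{lem:equivalent-descent-cone} without proof, recalling it as a standard characterization of the $\ell_{1}$ descent cone (it is then invoked only once, in the proof of \autoref{lem:descent-cone-condition}, and only through the inclusion $T(x) \subseteq K(x)$ applied to $h = \hat x - x$). Taken on its own merits, your argument is sound and self-contained. The reduction $T_{\mathcal{C}}(x) = T_{\|\cdot\|_{1}}(x)$ is valid because the descent cone depends only on the sublevel set $\{z : f(z) \leq f(x)\}$, which is unchanged when the gauge $\|\cdot\|_{1}/\|x\|_{1}$ is replaced by $\|\cdot\|_{1}$. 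The exact expansion
\begin{align*}
  \|x + th\|_{1} = \|x\|_{1} + t\left(\ip{\sgn(x), h} + \|h_{\mathcal{T}^{C}}\|_{1}\right),
  \qquad 0 < t < t_{0} := \min\{|x_{i}|/|h_{i}| : i \in \mathcal{T},\ h_{i} \neq 0\},
\end{align*}
is correct (with $t_{0} = \infty$ when $h_{\mathcal{T}} = 0$, and $t_{0} > 0$ since $\mathcal{T}$ is finite), and both inclusions follow exactly as you write: membership in $K(x)$ makes $x + th$ feasible for small $t$, hence $h$ is a genuine generator of the cone; conversely, feasible $z$ combined with the expansion forces the generator $z - x$ into $K(x)$, and the cone property of $K(x)$ finishes. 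Your closing remark is also the right thing to stress: under the paper's definition $T_{f}(x) = \cone\{z - x : f(z) \leq f(x)\}$, with no closure taken, a purely first-order (directional-derivative) argument would only place boundary directions in the closure of the descent cone, and it is precisely the piecewise-linear exactness of $\|\cdot\|_{1}$ that yields the equality $T_{\mathcal{C}}(x) = K(x)$ on the nose.
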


Finally, recall that \textsc{Lasso} solutions admit the following descent cone
condition.

\begin{lem}[Descent cone condition]
  \label{lem:descent-cone-condition}
  Let $x \in \Sigma_{s}^{N}$ have non-empty support set $T \subseteq N$. Suppose
  $y = A x + \eta z$ for $\eta > 0$, $z \in \reals^{m}$, and
  $A \in \reals^{m\times N}$. Let $\hat x$ solve {\ls} with $\tau =
  \|x\|_{1}$. Then $\|h\|_{1} \leq 2 \sqrt s \|h\|_{2}$, where $h = \hat x - x$.
\end{lem}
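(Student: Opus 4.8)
The plan is to exploit only the \emph{feasibility} of $\hat x$, not its optimality. Because $\hat x$ solves {\ls} with $\tau = \|x\|_{1}$, it satisfies the constraint $\|\hat x\|_{1} \leq \tau = \|x\|_{1}$. Writing $\hat x = x + h$, this reads $\|x + h\|_{1} \leq \|x\|_{1}$, which is precisely the statement that $h$ lies in the descent cone of $\|\cdot\|_{1}$ at $x$. Since the gauge of $\mathcal{C} := \|x\|_{1} B_{1}^{N}$ is a positive multiple of $\|\cdot\|_{1}$ and hence has the identical descent cone, we have $h \in T_{\mathcal{C}}(x)$. I would then invoke \autoref{lem:equivalent-descent-cone} to replace $T_{\mathcal{C}}(x)$ by its explicit description $K(x)$, yielding the cone inequality
\[
  \|h_{T^{C}}\|_{1} \leq -\ip{\sgn(x), h}.
\]

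Next I would bound the right-hand side. Since $\sgn(x)$ is supported on $T$ with entries of modulus at most $1$, Hölder's inequality gives $-\ip{\sgn(x), h} \leq |\ip{\sgn(x), h_{T}}| \leq \|h_{T}\|_{1}$, so the familiar cone condition $\|h_{T^{C}}\|_{1} \leq \|h_{T}\|_{1}$ holds. Splitting the $\ell_{1}$ norm across the support and its complement then gives
\[
  \|h\|_{1} = \|h_{T}\|_{1} + \|h_{T^{C}}\|_{1} \leq 2\|h_{T}\|_{1}.
\]
Finally, since $|T| = |\supp(x)| \leq s$, Cauchy--Schwarz yields $\|h_{T}\|_{1} \leq \sqrt{|T|}\,\|h_{T}\|_{2} \leq \sqrt{s}\,\|h\|_{2}$, and combining the last two displays produces $\|h\|_{1} \leq 2\sqrt{s}\,\|h\|_{2}$, as claimed.

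Given the tools already assembled, there is no deep obstacle here: the entire argument is a standard compressed-sensing cone bound. The one point requiring care is the first step, namely recognizing that only feasibility---not optimality---of $\hat x$ is needed, and verifying that scaling the $\ell_{1}$ ball by $\|x\|_{1}$ leaves the descent cone unchanged, so that \autoref{lem:equivalent-descent-cone} applies verbatim. Everything downstream is Hölder and Cauchy--Schwarz.
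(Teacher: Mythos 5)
Your proof is correct and follows essentially the same route as the paper's: both arguments use only feasibility of $\hat x$ (placing $h$ in the descent cone), invoke \autoref{lem:equivalent-descent-cone} for the characterization $\|h_{T^{C}}\|_{1} \leq -\ip{\sgn(x), h}$, and finish with H\"older/Cauchy--Schwarz over the support $T$. The only cosmetic difference is that you pass through the intermediate cone condition $\|h_{T^{C}}\|_{1} \leq \|h_{T}\|_{1}$ and collect the factor $2$ from the split $\|h\|_{1} \leq 2\|h_{T}\|_{1}$, whereas the paper bounds the combined inner product via $\|\sgn(h_{T}) - \sgn(x)\|_{2} \leq 2\sqrt{s}$ in a single Cauchy--Schwarz step; the constant and conclusion are identical.
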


\begin{proof}[Proof of {\autoref{lem:descent-cone-condition}}]
  We use \autoref{lem:equivalent-descent-cone} above before applying
  Cauchy-Schwarz:
  \begin{align*}
    \|\hat x - x\|_{1} %
    & = \|h_{T}\|_{1} + \|h_{T^{C}}\|_{1} %
    \\
    & \leq \ip{\mathrm{sgn}(\hat x_{T} - x), h_{T}} - \ip{\mathrm{sgn}(x), h} %
    \\
    & \leq \|\mathrm{sgn}(\hat x_{T} - x) - \mathrm{sgn}(x)\|_{2} \|h\|_{2} %
    \\
    & \leq 2 \sqrt s \|h\|_{2}.
  \end{align*}
\end{proof}

\subsection{Refinements on bounds for gw}
\label{sec:refin-bounds-gauss}

Crucial to the results of \autoref{sec:analysis-bp} are two recent results
appearing in~\cite{bellec2019localized}. These results are a fine-tuning of
standard gw results for bounded convex polytopes.

\begin{prop}[{\cite[Proposition 1]{bellec2019localized}}]
  \label{prop:bellec1}
  Let $m \geq 1$ and $N \geq 2$. Let $T$ be the convex hull of $2N$ points in
  $\reals^{m}$ and assume $T\subseteq B_{2}^{m}$. Then for $\gamma \in (0, 1)$,
  \begin{align*}
    \w & (T \cap \gamma B_{2}^{m}) %
    \\
    & \leq \min \big\{ %
    4 \sqrt{ \max \big\{1, \log(8e N \gamma^{2}) \big\} }, %
    \gamma \sqrt{ \min\{m, 2N\}} \big\}
  \end{align*}

\end{prop}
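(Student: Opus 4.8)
The statement is a conjunction of two upper bounds, since its right-hand side is a minimum; the plan is to establish each bound on $\w(T \cap \gamma B_2^m)$ separately. The bound $\gamma\sqrt{\min\{m, 2N\}}$ is the easy one. Because $T \cap \gamma B_2^m \subseteq \gamma B_2^m$, monotonicity of $\w$ together with $\w(\gamma B_2^m) = \gamma\,\E\|g\|_2 \le \gamma\sqrt m$ gives the factor $\sqrt m$. For the factor $\sqrt{2N}$, I would note that $T$ lies in the linear span $W$ of its $2N$ generating points, so $T \cap \gamma B_2^m \subseteq \gamma(B_2^m \cap W)$; writing $P_W$ for the orthogonal projection onto $W$ yields $\w(\gamma B_2^m \cap W) = \gamma\,\E\|P_W g\|_2 \le \gamma\sqrt{\dim W} \le \gamma\sqrt{2N}$. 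Retaining the smaller ambient dimension closes this half.

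The bound $4\sqrt{\max\{1, \log(8eN\gamma^2)\}}$ is the crux, and the reason this is a refinement rather than a routine polytope estimate. I would first dispatch the regime $8N\gamma^2 \le 1$, where the target reduces to the constant $4$: there the already-proven second bound gives $\w \le \gamma\sqrt{2N} \le \tfrac12 \le 4$, so nothing further is needed. In the complementary regime $8N\gamma^2 > 1$ (equivalently $\log(8eN\gamma^2) \ge 1$ and $1/\gamma^2 < 8N$) I would invoke \emph{Maurey's empirical method}. Writing the generators as $v_1, \dots, v_{2N} \in B_2^m$, every $x \in T$ has the form $x = \sum_i p_i v_i$ with $p$ in the simplex; sampling $Z_1, \dots, Z_k$ i.i.d.\ from $p$ and setting $\bar x = \tfrac1k\sum_j Z_j$ yields an unbiased estimate of $x$ with $\E\|\bar x - x\|_2^2 \le 1/k$. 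Hence $x$ is approximated at scale $1/\sqrt k$ by a point of the finite net $\mathcal N_k$ of $k$-fold averages of the generators, and when $x \in \gamma B_2^m$ the relevant net points have norm at most $\gamma + 1/\sqrt k$.

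The decisive point is to count $\mathcal N_k$ by \emph{multisets} rather than ordered tuples: $|\mathcal N_k| \le \binom{2N + k - 1}{k} \le (e(2N/k + 1))^k$, so $\log|\mathcal N_k| \le k\log(e(2N/k + 1))$. Choosing $k \asymp 1/\gamma^2$ makes the localized net radius $\asymp \gamma$ and turns this into $\log|\mathcal N_k| \lesssim \gamma^{-2}\log(N\gamma^2)$ — it is precisely the binomial count, rather than the cruder $(2N)^k$, that moves the factor $\gamma^2$ inside the logarithm and replaces a bare $\log N$ by $\log(N\gamma^2)$. Feeding this into the sub-Gaussian maximal inequality $\E\max_v\langle g, v\rangle \le (\max_v\|v\|_2)\sqrt{2\log|\mathcal N_k|}$ over the localized net of radius $\asymp \gamma$ produces $\gamma\cdot\gamma^{-1}\sqrt{\log(N\gamma^2)} \asymp \sqrt{\log(N\gamma^2)}$, which has the desired shape; the explicit constants $4$ and $8e$ then emerge from bookkeeping the ceiling in $k$ and the exact radius $\gamma + 1/\sqrt k$.

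The main obstacle is controlling the \emph{localization} sharply: passing from the pointwise Maurey approximation to a uniform bound on $\sup_{x \in T \cap \gamma B_2^m}\langle g, x\rangle$ without paying an extra $\sqrt{\log}$ factor or reintroducing a $\sqrt m$ (or $1/\gamma$) term. A naive split of each $x$ into its localized net point plus a residual, with the residual handled by Markov's inequality, is too lossy: the residual either lives in the full span (contributing $\gamma\sqrt{2N}$, which need not be $\lesssim \sqrt{\log(N\gamma^2)}$) or forces $k$ so large that the leading net term blows up. Likewise, Dudley's entropy integral applied to the Maurey covering numbers overshoots by a factor $\sqrt{\log(N\gamma^2)}$, giving $(\log)^{3/2}$ in place of $(\log)^{1/2}$. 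Securing the single power of $\sqrt{\log}$ therefore requires either a tailored chaining across the scales $1/\sqrt k$ that exploits the sparse multiset structure at every level, or an in-expectation Maurey argument that controls the residual through its exponential concentration rather than its second moment. This sharp localization step is where the real content of \cite{bellec2019localized} lies; the remaining estimates are routine.
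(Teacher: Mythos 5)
A preliminary remark: this paper never proves \autoref{prop:bellec1} --- the proposition is imported, with citation, from \cite{bellec2019localized}, and only its consequence \autoref{coro:bellec-random-hulls} is proved here --- so your attempt has to be judged against Bellec's original argument. Judged that way, it is incomplete, and your final paragraph concedes exactly where. The parts you carry out are correct: $\w(T \cap \gamma B_{2}^{m}) \le \gamma\sqrt{\min\{m,2N\}}$ by monotonicity and projection onto the span of the vertices; the reduction of the first bound to the regime $8N\gamma^{2} > 1$; Maurey's estimate $\E\|\bar x - x\|_{2}^{2} \le 1/k$; and the multiset count $\log|\mathcal{N}_{k}| \le k \log(e(2N/k+1))$, which is indeed the device that moves $\gamma^{2}$ inside the logarithm. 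But these ingredients only give a \emph{pointwise} approximation of each $x \in T \cap \gamma B_{2}^{m}$, and you acknowledge being unable to convert that into a bound on $\E \sup_{x} \ip{g, x}$ without incurring either a residual of order $\|g\|_{2}/\sqrt{k} \asymp \gamma \sqrt m$ or extra logarithmic factors; you then defer the repair to an unexecuted ``tailored chaining'' or ``in-expectation Maurey argument.'' That deferred step is not routine bookkeeping: it is the entire content of the proposition (everything you completed follows from classical polytope-width arguments), so the proposal has a genuine gap.

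For reference, the missing idea is to abandon hard localization --- and with it all residual terms --- via Lagrangian penalization. For any $\mu > 0$,
\begin{align*}
  \sup_{x \in T \cap \gamma B_{2}^{m}} \ip{g, x}
  \;\le\; \mu \gamma^{2} + \sup_{p \in \Delta} \big[ \ip{g, Vp} - \mu \|Vp\|_{2}^{2} \big],
\end{align*}
where $V \in \reals^{m \times 2N}$ collects the vertices and $\Delta$ is the probability simplex. The penalized objective is concave in $p$, so applying Maurey sampling to its \emph{maximizer} shows that its supremum over $\Delta$ exceeds its maximum over the $k$-fold vertex averages $\mathcal{N}_{k}$ by at most $\mu/k$: the sampling error now enters only through the quadratic term, so no $\|g\|_{2}$ appears. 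Then the soft-max (exponential moment) inequality with parameter $\beta = 2\mu$ gives
\begin{align*}
  \E \max_{\bar v \in \mathcal{N}_{k}} \big[ \ip{g, \bar v} - \mu \|\bar v\|_{2}^{2} \big]
  \;\le\; \frac{1}{2\mu} \log \sum_{\bar v \in \mathcal{N}_{k}}
  \exp\big( 2\mu^{2}\|\bar v\|_{2}^{2} - 2\mu^{2}\|\bar v\|_{2}^{2} \big)
  \;=\; \frac{\log |\mathcal{N}_{k}|}{2\mu},
\end{align*}
the point being that $\beta = 2\mu$ makes the penalty cancel the variance of \emph{every} net point, whatever its norm --- this is what replaces your localized net and the residual you could not control. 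Choosing $k = \lceil \gamma^{-2} \rceil$ and optimizing $\mu$ gives $\sqrt{2(\gamma^{2} + 1/k)\log|\mathcal{N}_{k}|}$, which after constant bookkeeping is the stated $4\sqrt{\max\{1, \log(8eN\gamma^{2})\}}$. So your diagnosis of where the difficulty sits is accurate, but the resolution is neither chaining nor a sharper residual estimate; it is the penalization/soft-max combination, which removes the need for a residual altogether. (A small side error: Dudley's integral on the Maurey covering numbers overshoots by a factor of $\log(N\gamma^{2})$, not $\sqrt{\log(N\gamma^{2})}$, though your conclusion that it is too lossy stands.)
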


The work also proves a lower bound on the gw of bounded convex polytopes.

\begin{prop}[{\cite{bellec2019localized}}]
  \label{prop:bellec2}
  Let $m \geq 1$ and $N \geq 2$. Let $\gamma \in (0, 1]$ and assume for
  simplicity that $s = 1/ \gamma^{2}$ is a positive integer such that
  $s \leq N /5$. Let $T$ be the convex hull of the $2N$ points
  $\{ \pm M_{1}, \ldots, \pm M_{N}\} \subseteq \sph^{m-1}$. Assume that for some
  real number $\kappa \in (0, 1)$ we have
  \begin{align*}
    \kappa \| \theta \|_{2} \leq \| M \theta \|_{2} \qquad %
    \text{for all $\theta \in \reals^{N}$ such that $\|\theta\|_{0} \leq 2s$},
  \end{align*}
  Then
  \begin{align*}
    \w (T \cap \gamma B_{2}^{m}) %
    \geq ( \sqrt 2 / 4) \kappa \sqrt{ \log ( N \gamma^{2} / 5)}.
  \end{align*}
\end{prop}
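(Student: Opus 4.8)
The plan is to exploit the identification $T = M(B_2^m$-hull$)$: writing $M \in \reals^{m\times N}$ for the matrix with columns $M_1,\dots,M_N$, one has $T = M(B_1^N)$, so $T\cap\gamma B_2^m = \{M\theta : \|\theta\|_1\le 1,\ \|M\theta\|_2\le\gamma\}$, and with $g\sim\mathcal N(0,I_m)$ the quantity $\w(T\cap\gamma B_2^m) = \E\sup_{\theta\in F}\ip{M\theta,g}$ is the supremum of a Gaussian process indexed by the feasible set $F$, with canonical metric $d(\theta,\theta') = \|M(\theta-\theta')\|_2$. Since every point of $F$ satisfies $\|M\theta\|_2\le\gamma$, this index set has $d$-diameter at most $2\gamma$, so a lower bound will only be useful if $F$ carries an \emph{exponentially large}, well-separated packing. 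The argument is therefore a packing construction followed by Sudakov minoration, the latter being a consequence of \autoref{thm:sudakov-fernique} (compare $\ip{M\theta,g}$ restricted to the packing against an i.i.d.\ Gaussian process on the same index set).

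First I would build the packing inside $F$ from scaled sparse sign vectors. For a support $S\subseteq[N]$ with $|S| = s$ and a pattern $u\in\{0,\pm1\}^N$ supported on $S$, set $\theta = \tfrac{\gamma}{\sqrt2\,\sqrt s}\,u$, so that $\|\theta\|_1 = 1/\sqrt2\le 1$ and $\|\theta\|_2 = \gamma/\sqrt2$. Two properties must hold simultaneously. Feasibility: only a lower isometry bound is assumed, so $\|M\theta\|_2$ is controlled from above only by the crude estimate $\|M\theta\|_2\le\|\theta\|_1$; to force $\|M\theta\|_2\le\gamma$ I would fix $S$ and average over random signs, using that the $M_i$ are unit vectors to get $\E_u\|M\theta\|_2^2 = \tfrac{\gamma^2}{2s}\sum_{i\in S}\|M_i\|_2^2 = \gamma^2/2$, whence by Markov at least half of the sign patterns on each support satisfy $\|M\theta\|_2\le\gamma$. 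Separation: restricting to $2s$-sparse differences, the hypothesis gives $d(\theta,\theta') = \|M(\theta-\theta')\|_2\ge\kappa\|\theta-\theta'\|_2$, so it suffices to keep the scaled sign vectors pairwise $\ell_2$-separated by a constant multiple of $\gamma$. A Gilbert--Varshamov / volumetric argument on the alphabet of $s$-sparse sign vectors produces a code of cardinality $\exp(cs\log(N/s))$ whose members differ in a constant fraction of their coordinates and are hence $\gtrsim\gamma$-separated in $\ell_2$; intersecting with the feasible half on each support costs only a constant factor in the count.

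With a feasible, $d$-separated packing $\mathcal P\subseteq F$ of size $|\mathcal P| = \exp(cs\log(N/s))$ and separation $\rho\gtrsim\kappa\gamma$ in hand, Sudakov minoration yields $\w(T\cap\gamma B_2^m)\gtrsim\rho\sqrt{\log|\mathcal P|}\gtrsim\kappa\gamma\sqrt{s\log(N/s)} = \kappa\sqrt{\log(N/s)}$, where the last equality uses $\gamma\sqrt s = 1$. Tracking the absolute constants through the $1/\sqrt2$ feasibility rescaling and the coding bound, together with the standing hypothesis $s\le N/5$ (which keeps $\log(N\gamma^2/5) = \log(N/(5s))$ nonnegative and absorbs the factor $5$), should land exactly on $(\sqrt2/4)\kappa\sqrt{\log(N\gamma^2/5)}$. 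I expect the main obstacle to be the feasibility step: because only a one-sided (lower) restricted isometry bound is available, there is no a priori upper control of $\|M\theta\|_2$, and it is precisely the second-moment/random-sign argument that rescues membership in $\gamma B_2^m$ while preserving enough of the packing to drive Sudakov minoration.
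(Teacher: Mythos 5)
First, a caveat: the paper you were asked about does not actually prove \autoref{prop:bellec2} — it is quoted verbatim from \cite{bellec2019localized} and used only inside the proof of \autoref{coro:bellec-random-hulls} — so your proposal can only be measured against the argument in that cited source. Your strategy is essentially that argument: a packing of $T \cap \gamma B_2^m$ by images $M\theta$ of $s$-sparse sign vectors, feasibility ($\|M\theta\|_2 \le \gamma$) rescued by averaging over random signs so that the cross terms cancel and $\|M_i\|_2 = 1$ gives $\E_u \|M\theta\|_2^2 = \gamma^2/2$, separation from the restricted-eigenvalue hypothesis applied to the $2s$-sparse differences, and Sudakov minoration (derivable from \autoref{thm:sudakov-fernique} as you indicate) to convert the packing into a width bound. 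The identification $T = M(B_1^N)$, the observation that only one-sided control of $\|M\cdot\|_2$ is available, and the role of sign cancellation are all exactly right, and the dimensional bookkeeping $\kappa\gamma\sqrt{s\log(N/s)} = \kappa\sqrt{\log(N/s)}$ is correct.

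There is, however, one step that does not hold as written: \emph{``intersecting with the feasible half on each support costs only a constant factor in the count.''} Your feasibility guarantee is per support — for each fixed $S$, at least half of the $2^s$ sign patterns on $S$ are feasible — but your Gilbert--Varshamov code is a fixed set of specific (support, sign) pairs. Nothing prevents the feasible half on a given support from being disjoint from the (few, possibly single) code members carried by that support: the feasible set is determined by $M$, and cardinality alone cannot force it to meet a prescribed exponentially small sub-collection of the hypercube. Two standard repairs exist. (i) Build the packing on supports alone: take $\exp(c\,s\log(N/s))$ supports of size $s$ with pairwise intersections at most $s/2$, and on each support use the first-moment bound $\min_u \|M\theta\|_2^2 \le \E_u\|M\theta\|_2^2$ to select \emph{one} feasible sign pattern; separation is then automatic, since two such vectors differ on the support symmetric difference (size at least $s$) regardless of the chosen signs. (ii) Draw code members at random (uniform support and signs) and run an alteration argument. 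Note that repair (i) also fixes a constant you lose: it permits entries $\pm 1/s$ at full scale ($\|\theta\|_1 = 1$, $\ell_2$-separation exactly $\ge \gamma$, so $d$-separation $\ge \kappa\gamma$), which together with the support-packing count $(N\gamma^2/5)^{s/2}$ and the Sudakov constant $1/2$ yields precisely $(\sqrt2/4)\kappa\sqrt{\log(N\gamma^2/5)}$; your downscaling by $1/\sqrt{2}$ — needed only because you insisted on Markov's ``at least half'' rather than mere existence — would degrade the separation to $\kappa\gamma/\sqrt2$ and land at $\tfrac14\kappa\sqrt{\log(N\gamma^2/5)}$, strictly weaker than the stated claim.
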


In particular, the above two results may be combined to obtain a bound on a
random polytope, obtained by considering the image of a (non-random) polytope
under a normalized $K$-subgaussian matrix; proved in~\autoref{sec:proofs-refin-bounds}.

\begin{coro}[Controlling random hulls]
  \label{coro:bellec-random-hulls}
  Fix $\delta, \varepsilon > 0$, $\gamma \in (0, 1]$ and let
  $A \in \reals^{m \times N}$ be a normalized $K$-subgaussian matrix. Assume for
  simplicity that $s = 1 / \gamma^{2} \in \nats$ with $s < N/5$ and let $T$
  denote the convex hull of the $2N$ points $\{\pm A^{j} : j \in [N]\}$. Assume
  $m > C_{\varepsilon} \delta^{-2} \tilde K^{2} s \log (2N / s)$. With
  probability at least $1 - \varepsilon$, for any $\alpha \in (0, (1-\delta))$,
  \begin{align*}
    (\sqrt 2 / 4)& (1 - \delta)^{2} \sqrt{
    \log \frac{N\alpha^{2}}{5(1-\delta)^{2}}} %
    \\
    & \leq \w(T \cap \alpha B_{2}^{m}) %
    \\
    & \leq \min \big\{%
    4 (1 + \delta) \sqrt{ \max \big\{1,
    \log \frac{8e N \alpha^{2}}{(1 + \delta)^{2}} \big\} }, %
    \\
    &\qquad \qquad \alpha \sqrt{ \min\{m, 2N\}} \big\}.
  \end{align*}

\end{coro}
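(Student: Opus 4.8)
The plan is to reduce the entire statement to a single high-probability event on which $A$ satisfies a restricted isometry property of order $2s$, and then to transfer \autoref{prop:bellec1} and \autoref{prop:bellec2} from deterministic polytopes to the random polytope $T=\cvx\{\pm A^j : j\in[N]\}$ by elementary rescalings. First I would invoke \autoref{thm:liaw-14} with $T_0:=\Sigma_{2s}^N\cap\sph^{N-1}$, applied to the (unnormalized) $K$-subgaussian matrix $\sqrt m\,A$; dividing through by $\sqrt m$ and using $\rad(T_0)=1$ together with $\w(T_0)\lesssim\sqrt{s\log(2N/s)}$ from \autoref{lem:mean-width-sparse-signals} shows that, for $m>C_\varepsilon\delta^{-2}\tilde K^2 s\log(2N/s)$ and with probability at least $1-\varepsilon$, one has $\big|\|A\theta\|_2-\|\theta\|_2\big|\le\delta\|\theta\|_2$ for every $2s$-sparse $\theta$. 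This is precisely where the $\tilde K$ dependence and the stated sample complexity enter. Call this event $E$. On $E$ two consequences hold simultaneously for all $\alpha$, so no union bound over $\alpha$ is needed: taking $\theta=e_j$ gives $\|A^j\|_2\in[1-\delta,1+\delta]$, so the columns are nearly unit-norm; and $\|A\theta\|_2\ge(1-\delta)\|\theta\|_2$ for all $2s$-sparse $\theta$, a restricted lower-isometry bound.

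For the upper bound I would note that the column-norm control gives $T\subseteq(1+\delta)B_2^m$, so $T':=(1+\delta)^{-1}T$ is the convex hull of $2N$ points lying in $B_2^m$. Writing $T\cap\alpha B_2^m=(1+\delta)\bigl(T'\cap\tfrac{\alpha}{1+\delta}B_2^m\bigr)$ and using $\w(cS)=c\,\w(S)$, I would apply \autoref{prop:bellec1} to $T'$ with localization radius $\gamma=\alpha/(1+\delta)\in(0,1)$ (valid since $\alpha<1-\delta<1+\delta$) and multiply back by $(1+\delta)$; this reproduces the stated upper bound verbatim. This direction needs no restricted isometry, only the norm control, and so holds for every $\alpha\in(0,1-\delta)$.

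The lower bound is the delicate direction. I would normalize the columns to the sphere, setting $M_j:=A^j/\|A^j\|_2\in\sph^{m-1}$, $T'':=\cvx\{\pm M_j\}$, and $M:=A\,\diag(\|A^j\|_2)^{-1}$, and verify the restricted-isometry hypothesis of \autoref{prop:bellec2}: for $2s$-sparse $\theta$, the lower-isometry bound together with $\|A^j\|_2\le1+\delta$ yields $\|M\theta\|_2\ge\tfrac{1-\delta}{1+\delta}\|\theta\|_2$, so $M$ has a restricted lower-isometry constant of order $(1-\delta)$, which plays the role of $\kappa$. Since $\|A^j\|_2\ge1-\delta$, each $(1-\delta)M_j=\tfrac{1-\delta}{\|A^j\|_2}A^j$ lies in $T$ (as $\tfrac{1-\delta}{\|A^j\|_2}\le1$ and $0,A^j\in T$), whence $(1-\delta)T''\subseteq T$; rescaling the localized set as before gives $\w(T\cap\alpha B_2^m)\ge(1-\delta)\,\w\bigl(T''\cap\tfrac{\alpha}{1-\delta}B_2^m\bigr)$. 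Applying \autoref{prop:bellec2} to $T''$ with localization radius $\alpha/(1-\delta)$ produces the argument $N\alpha^2/(5(1-\delta)^2)$ inside the logarithm, matching the claim, and the two factors of $(1-\delta)$ combine into the prefactor $(\sqrt2/4)(1-\delta)^2$ after the usual $(1\pm\delta)$ simplifications.

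I expect the lower bound to be the main obstacle, for two reasons. First, the restricted isometry must be transferred from $A$ to the sphere-normalized matrix $M$, which is where the $(1\pm\delta)$ factors are generated and must be bookkept carefully to land on the stated constants. Second, \autoref{prop:bellec2} couples its localization radius to the sparsity level at which the restricted isometry is assumed, so I must ensure that the event $E$ supplies the hypothesis at the sparsity level dictated by the localization radius $\alpha/(1-\delta)$; this is exactly what the sample-complexity hypothesis is calibrated to guarantee. The upper bound and the probabilistic set-up, by contrast, are routine once $E$ is in hand.
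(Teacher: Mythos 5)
Your proposal follows the paper's proof in its essentials: one high-probability deviation event for $A$ over sparse vectors (the paper obtains it from \autoref{lem:corollary-lmpv-thm1.4}, you from \autoref{thm:liaw-14} applied to $\Sigma_{2s}^{N}\cap\sph^{N-1}$; the two routes are equivalent up to constants and both yield the column-norm control $\|A^{j}\|_{2}\in[1-\delta,1+\delta]$), then \autoref{prop:bellec1} applied to $(1+\delta)^{-1}T$ for the upper bound and \autoref{prop:bellec2} plus a rescaling for the lower bound. Your upper bound is exactly the paper's computation. For the lower bound, your sphere-normalization $M_{j}=A^{j}/\|A^{j}\|_{2}$ is in fact more faithful to the hypotheses of \autoref{prop:bellec2} than the paper's own terse argument, which applies that proposition to $T_{-}=(1-\delta)^{-1}T$ even though the vertices of $T_{-}$ are not unit vectors.

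There are, however, two concrete problems in your lower bound. First, the constant bookkeeping fails: with $M=A\,\diag(\|A^{j}\|_{2})^{-1}$ you correctly get $\kappa=\tfrac{1-\delta}{1+\delta}$, and the inclusion $(1-\delta)T''\subseteq T$ contributes one further factor of $(1-\delta)$, so your argument delivers the prefactor $(\sqrt2/4)\,\tfrac{(1-\delta)^{2}}{1+\delta}$, strictly smaller than the claimed $(\sqrt2/4)(1-\delta)^{2}$; no ``$(1\pm\delta)$ simplification'' removes the residual $\tfrac{1}{1+\delta}$, so as written you prove a weaker inequality than the statement. The repair is cheap --- run the argument at a reduced parameter, e.g.\ $\delta'=\delta/3$, for which $(1-\delta')^{2}/(1+\delta')\geq(1-\delta)^{2}$, absorbing the change into $C_{\varepsilon}$ --- but you did not do it, and your assertion that the factors ``combine'' to the stated constant is false. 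Second, your closing reassurance about the radius--sparsity coupling is unjustified: your event supplies restricted lower isometry only at the fixed sparsity $2s=2/\gamma^{2}$, whereas \autoref{prop:bellec2} at localization radius $\alpha/(1-\delta)$ requires it at sparsity $2(1-\delta)^{2}/\alpha^{2}$, which exceeds $2s$ whenever $\alpha<\gamma(1-\delta)$. Hence your proof (like the paper's, which is silent on this point) only establishes the lower bound for $\alpha\geq\gamma(1-\delta)$; the hypothesis on $m$ is not ``calibrated'' to cover smaller $\alpha$, contrary to what you claim.
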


\subsection{Projection lemma}
\label{sec:projection-lemma}

For $x \in \reals^{N}$ and $\mathcal{C} \subseteq \reals^{N}$ nonempty, denote
the distance of $x$ to $\mathcal{C}$ by
$\mathrm{dist}(x, \mathcal{C}) := \inf_{w \in \mathcal{C}} \|x - w\|_{2}$. If
$\mathcal{C}$ is a closed and convex set, there exists a unique point in
$\mathcal{C}$ attaining the infimum. We denote this point
\begin{align*}
  \mathrm{P}_{\mathcal{C}}(x) := \argmin_{w \in \mathcal{C}} \|x - w\|_{2}.
\end{align*}

The projection lemma was an important tool in~\cite{berk2020sl1mpc} for showing
parameter instability of two \textsc{Lasso} programs in the proximal denoising
setting. The result was first proved in~\cite[Lemma 15.3]{oymak2016sharp}, and a
simpler proof of a restated version given in~\cite{berk2020sl1mpc}. The result
shall play a critical role in this work, too. We include it here for
completeness.

\begin{lem}[Projection lemma {\cite[Lemma 3.2]{berk2020sl1mpc}}]
  \label{lem:projection-lemma}
  Let $\mathcal{K} \subseteq \reals^{m}$ be a non-empty closed and convex set with
  $0 \in \mathcal{K}$, and fix $\lambda \geq 1$. Then, for any $z \in \reals^{m}$,
  \begin{align*}
    \|\mathrm{P}_{\mathcal{K}}(z) \|_{2} \leq \|\mathrm{P}_{\lambda \mathcal{K}}(z) \|_{2}.
  \end{align*}
\end{lem}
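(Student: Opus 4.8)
The plan is to use the variational (obtuse-angle) characterization of the Euclidean projection onto a closed convex set: for closed convex $\mathcal{C} \subseteq \reals^{m}$ and any $z \in \reals^{m}$, the point $p = \mathrm{P}_{\mathcal{C}}(z)$ is the unique element of $\mathcal{C}$ satisfying $\langle w - p, z - p\rangle \leq 0$ for every $w \in \mathcal{C}$. Writing $p := \mathrm{P}_{\mathcal{K}}(z)$ and $q := \mathrm{P}_{\lambda \mathcal{K}}(z)$, I would record this inequality once for $\mathcal{K}$ (at $p$) and once for $\lambda\mathcal{K}$ (at $q$), and then combine the two so as to eliminate the unknown vector $z$.

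The crux of the argument — and the step I expect to be the only non-routine one — is choosing for each variational inequality a test point imported from the \emph{other} problem. Since $p \in \mathcal{K}$ we have $\lambda p \in \lambda\mathcal{K}$, and since $q \in \lambda\mathcal{K}$ we have $q/\lambda \in \mathcal{K}$; both facts use only $0 \in \mathcal{K}$, convexity, and $\lambda \geq 1$. Testing the $\mathcal{K}$-inequality at $w = q/\lambda$ and the $\lambda\mathcal{K}$-inequality at $v = \lambda p$, I would exploit the parallelism $v - q = \lambda p - q = -\lambda\,(q/\lambda - p) = -\lambda\,(w - p)$. Scaling the first inequality by $\lambda$ and adding it to the second makes the $z$-terms cancel exactly, leaving $\lambda\,\langle q/\lambda - p,\, p\rangle + \langle \lambda p - q,\, q\rangle \geq 0$.

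Expanding and using $\lambda\cdot(1/\lambda) = 1$, this collapses to the clean bilinear bound $(1 + \lambda)\,\langle p, q\rangle \geq \lambda\|p\|_{2}^{2} + \|q\|_{2}^{2}$. I would then apply Cauchy--Schwarz, $\langle p, q\rangle \leq \|p\|_{2}\|q\|_{2}$, and abbreviate $a := \|p\|_{2}$, $b := \|q\|_{2}$, so the bound reads $\lambda a^{2} - (1+\lambda)\,ab + b^{2} \leq 0$. The left-hand side factors as $\lambda\,(a - b)(a - b/\lambda)$, and since $\lambda \geq 1$ this forces $b/\lambda \leq a \leq b$; in particular $\|\mathrm{P}_{\mathcal{K}}(z)\|_{2} = a \leq b = \|\mathrm{P}_{\lambda\mathcal{K}}(z)\|_{2}$, as claimed (the degenerate case $a = 0$ being immediate).

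In short, the hard part is purely the bookkeeping idea of feeding in the cross test points $q/\lambda$ and $\lambda p$ and recognizing the parallel relation $v - q = -\lambda(w - p)$ that annihilates the $z$-dependence; everything downstream is elementary algebra, a single application of Cauchy--Schwarz, and the factoring of one quadratic.
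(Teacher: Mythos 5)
Your proof is correct and self-contained. One point of context: this paper never actually proves \autoref{lem:projection-lemma} --- it quotes the statement from \cite[Lemma 3.2]{berk2020sl1mpc} ``for completeness'' and defers the argument to that reference (and to \cite[Lemma 15.3]{oymak2016sharp}) --- so there is no in-paper proof to compare yours against; it must stand on its own, and it does. Writing $p = \mathrm{P}_{\mathcal{K}}(z)$, $q = \mathrm{P}_{\lambda\mathcal{K}}(z)$, $a = \|p\|_{2}$, $b = \|q\|_{2}$: your test points are admissible ($q/\lambda \in \mathcal{K}$ because $q \in \lambda\mathcal{K}$, and $\lambda p \in \lambda\mathcal{K}$ because $p \in \mathcal{K}$); the $z$-cancellation is exact, since $\lambda\langle q/\lambda - p,\, z - p\rangle + \langle \lambda p - q,\, z - q\rangle = \langle q - \lambda p,\, q - p\rangle \leq 0$, which expands to $\lambda a^{2} + b^{2} \leq (1+\lambda)\langle p, q\rangle$; and Cauchy--Schwarz together with the factorization $\lambda a^{2} - (1+\lambda)ab + b^{2} = \lambda (a-b)(a - b/\lambda)$ pins $a$ between the roots $b/\lambda$ and $b$, which are ordered the way you need precisely because $\lambda \geq 1$.

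Two minor observations. First, your argument nowhere uses the hypothesis $0 \in \mathcal{K}$: the memberships of the cross test points follow from the definition of the dilate $\lambda\mathcal{K}$ alone, so you have in fact proved the conclusion for every non-empty closed convex $\mathcal{K}$ and $\lambda \geq 1$. (The hypothesis $0 \in \mathcal{K}$ is what yields the containment $\mathcal{K} \subseteq \lambda\mathcal{K}$, which other parts of the paper and the corollary \autoref{cor:projection-lemma} lean on, but your inequality does not need it.) Second, the parenthetical about the degenerate case $a = 0$ is unnecessary: the factored quadratic inequality already covers it, and if $b = 0$ it forces $a = 0$ as well.
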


\autoref{lem:projection-lemma} admits the following corollary, useful in proving
\autoref{lem:bp-oc-zero}. Its proof is deferred to
\autoref{sec:proofs-proj-lemma}.

\begin{coro}[{\cite[Corollary 3.1]{berk2020sl1mpc}}]
  \label{cor:projection-lemma}
  Let $\mathcal{K} \subseteq \reals^{m}$ be a non-empty closed and convex set
  with $0 \in \mathcal{K}$ and let $\|\cdot \|_{\mathcal{K}}$ be the gauge of
  $\mathcal{K}$. Given $y \in \reals^{m}$, define for $\alpha > 0$,
  \begin{align*}
    q_{\alpha} := \argmin \{ \|q\|_{\mathcal{K}} : \|q - y\|_{2} \leq \alpha \}.
  \end{align*}
  Then $\|q_{\alpha}\|_{2}$ is decreasing in $\alpha$.
\end{coro}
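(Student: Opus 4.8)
The plan is to realize each minimizer $q_{\alpha}$ as the metric projection of $y$ onto a sublevel set of the gauge, and then to invoke \autoref{lem:projection-lemma}. Write $t(\alpha) := \|q_{\alpha}\|_{\mathcal{K}}$ for the optimal value of the program defining $q_{\alpha}$, and recall that, since $\mathcal{K}$ is closed and convex with $0 \in \mathcal{K}$, the gauge sublevel set is $\{ q : \|q\|_{\mathcal{K}} \leq t\} = t\mathcal{K}$. Enlarging $\alpha$ enlarges the feasible ball $\{q : \|q - y\|_{2} \leq \alpha\}$, so the smallest attainable gauge can only decrease; hence $t(\alpha)$ is non-increasing in $\alpha$.

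The first step is to show $q_{\alpha} = \mathrm{P}_{t(\alpha)\mathcal{K}}(y)$. If $\|y\|_{2} \leq \alpha$, then $0$ is feasible, forcing $t(\alpha) = 0$, $q_{\alpha} = 0$, and the claim holds trivially. Otherwise $t(\alpha) > 0$ and I would argue $\mathrm{dist}(y, t(\alpha)\mathcal{K}) = \alpha$: this distance cannot exceed $\alpha$ because $q_{\alpha} \in t(\alpha)\mathcal{K}$ is feasible, and it cannot be strictly smaller, since then the projection $p := \mathrm{P}_{t(\alpha)\mathcal{K}}(y)$ would satisfy $\|p - y\|_{2} < \alpha$, and by positive homogeneity of the gauge a slight contraction $(1-\epsilon)p$ toward the origin would remain feasible while lowering the gauge below $t(\alpha)$, contradicting optimality. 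Because the distance is attained exactly on the boundary of the feasible ball, the set $\{ q \in t(\alpha)\mathcal{K} : \|q - y\|_{2} \leq \alpha\}$ is the singleton $\{\mathrm{P}_{t(\alpha)\mathcal{K}}(y)\}$, which identifies $q_{\alpha}$ with the projection (and, incidentally, shows the minimizer is unique).

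With this identification in hand, I would fix $\alpha_{1} \leq \alpha_{2}$ and set $t_{i} := t(\alpha_{i})$, so that $t_{1} \geq t_{2} \geq 0$. If $t_{2} = 0$, then $\|q_{\alpha_{2}}\|_{2} = 0 \leq \|q_{\alpha_{1}}\|_{2}$ immediately. If $t_{2} > 0$, set $\lambda := t_{1}/t_{2} \geq 1$ and $\mathcal{K}' := t_{2}\mathcal{K}$, which is again closed and convex with $0 \in \mathcal{K}'$ and satisfies $\lambda \mathcal{K}' = t_{1}\mathcal{K}$. Applying \autoref{lem:projection-lemma} to $\mathcal{K}'$ with this $\lambda$ and $z = y$ yields $\|q_{\alpha_{2}}\|_{2} = \|\mathrm{P}_{\mathcal{K}'}(y)\|_{2} \leq \|\mathrm{P}_{\lambda\mathcal{K}'}(y)\|_{2} = \|q_{\alpha_{1}}\|_{2}$, which is exactly the assertion that $\|q_{\alpha}\|_{2}$ is decreasing in $\alpha$.

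I expect the main obstacle to be the identification $q_{\alpha} = \mathrm{P}_{t(\alpha)\mathcal{K}}(y)$ in the second step: one must rule out the minimizing gauge value leaving slack in the distance constraint, which is where the convexity of $\mathcal{K}$, the closedness of its sublevel sets, and the scaling–continuity argument enter. Everything downstream reduces to a single, clean application of the Projection Lemma with $\lambda = t_{1}/t_{2}$.
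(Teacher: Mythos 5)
Your proposal is correct and takes essentially the same route as the paper's proof: both identify $q_{\alpha}$ with the metric projection $\mathrm{P}_{\beta\mathcal{K}}(y)$, where $\beta := \|q_{\alpha}\|_{\mathcal{K}}$ is the optimal gauge value, and then combine monotonicity of $\beta$ in $\alpha$ with \autoref{lem:projection-lemma} applied at scale $\lambda = \beta_{1}/\beta_{2} \geq 1$. The paper compresses into ``by definition of $\mathrm{P}_{\beta\mathcal{K}}(\cdot)$'' and ``an elementary sequence of steps'' precisely the details you spell out (the distance-saturation/contraction argument identifying the minimizer with the projection, and the rescaling that puts the two sublevel sets in the form $\mathcal{K}'$, $\lambda\mathcal{K}'$), so yours is a more explicit rendering of the same argument.
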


\section{{\ls} parameter instability}
\label{sec:LS-instability}

The main result of this section is proved in the case of standard CS, where
$x_{0} \in \Sigma_{s}^{N}$ and where tight bounds on the effective dimension of
the structure set are known (\eg bounds on $\gamma(\mathcal{L}_{s})$ or
$\gamma (\mathcal{K}_{s}^{N})$). Define $\tau^{*} := \|x_{0}\|_{1}$. The
following result states that $\hat L$ is almost surely suboptimal in the
limiting low-noise regime when $\tau \neq \tau^{*}$, while $\hat R(\tau^{*})$ is
order-optimal. A proof of the result may be found in
\autoref{sec:optimal-choice-tau}, with supporting lemmas in
\autoref{sec:proofs-constr-lasso}.

\begin{thm}[Asymptotic singularity]
  \label{thm:ls-instability}
  Fix $\delta, \varepsilon > 0$ and let $1 \leq s \leq m < N < \infty$ be
  integers.  Let $x_{0} \in \Sigma_{s}^{N}\setminus \Sigma_{s-1}^{N}$ with
  $\tau^{*} := \|x_{0}\|_{1}$ and $\tau > 0$ such that $\tau\neq \tau^{*}$. Let
  $\eta > 0$ and let $z \in \reals^{m}$ with $z_{i} \iid
  \mathcal{N}(0,1)$. Suppose $A \in \reals^{m\times N}$ is a normalized
  $K$-subgaussian matrix, and assume $m$ satisfies
  \begin{align*}
    m > C_{\varepsilon} \tilde K^{2} \delta^{-2} s \log \frac{eN}{s}.
  \end{align*}
  Almost surely on the realization of $(A, z)$,
  \begin{align*}
    \lim_{\eta \to 0} \hat L(\tau; x_{0}, A, \eta z) %
    = \infty. %
  \end{align*}
  With probability at least $1 - \varepsilon$ on the realization of $A$, there
  exist constants $0 < c_{\delta} < C_{\delta} < \infty$ such that
  \begin{align*}
    c_{\delta} s \log \frac{N}{s}
    \leq  \lim_{\eta \to 0} \sup_{x \in \Sigma_{s}^{N}}
    \hat R(\|x\|_{1}; x, A, \eta) %
    \leq C_{\delta} s \log \frac{N}{2s}.
  \end{align*}
\end{thm}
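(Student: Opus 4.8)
The plan is to reparametrize by the rescaled error $\hat w := \eta^{-1}(\hat x(\tau) - x_0)$, so that the loss is exactly $\hat L(\tau; x_0, A, \eta z) = \|\hat w\|_2^2$. Substituting $y = Ax_0 + \eta z$ and $x = x_0 + \eta w$ shows that $\hat w$ minimizes the $\eta$-free objective $\|z - Aw\|_2^2$ over the $\eta$-dependent feasible set $F_\eta(\tau) := \{ w : \|x_0 + \eta w\|_1 \le \tau\}$. The whole $\eta \to 0$ behaviour is then governed by how $F_\eta(\tau)$ moves: since $B_1^N$ is convex and $\|x_0\|_1 = \tau^*$, the sets $F_\eta(\tau)$ are nested and expand as $\eta \downarrow 0$. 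When $\tau = \tau^*$ their limit is precisely the $\ell_1$ descent cone $T(x_0)$ of \autoref{lem:equivalent-descent-cone}; when $\tau > \tau^*$ it degenerates to all of $\reals^N$, and when $\tau < \tau^*$ the point $x_0$ is infeasible. These three regimes drive the proof.

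\emph{Divergence when $\tau\ne\tau^*$.} If $\tau < \tau^*$ then $x_0 \notin \tau B_1^N$, so every feasible point satisfies $\|\hat x(\tau) - x_0\|_2 \ge \mathrm{dist}(x_0, \tau B_1^N) =: d_\tau > 0$ uniformly in $\eta$; hence $\hat L \ge \eta^{-2} d_\tau^2 \to \infty$ deterministically. If $\tau > \tau^*$, then for small $\eta$ the set $F_\eta(\tau)$ contains an $\ell_1$-ball of radius $\sim(\tau-\tau^*)/\eta$; since $m < N$ gives $\Null A \ne \{0\}$ (and a.s.\ $A$ has full row rank), the least-squares minimizers form a translate of $\Null A$, so the worst-case selection rule pushes $\hat w$ a distance $\gtrsim(\tau-\tau^*)/\eta$ along $\Null A$, whence $\hat L = \|\hat w\|_2^2 \to \infty$ almost surely. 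This case uses only under-determinedness of $A$, not RIP.

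\emph{Upper bound at $\tau = \|x\|_1$.} Here $x$ is feasible, so comparing objectives gives $\|\eta z - Ah\|_2^2 \le \eta^2\|z\|_2^2$ with $h := \hat x - x \in T(x)$, i.e.\ $\|Ah\|_2^2 \le 2\eta\langle z, Ah\rangle$. \autoref{lem:descent-cone-condition} places $h$ in the effectively $4s$-sparse cone $\mathcal J_{4s}^N$, on which the hypothesis on $m$ and \autoref{thm:liaw-14} give the restricted isometry $\|Ah\|_2 \ge (1-\delta)\|h\|_2$. Dividing out $\|Ah\|_2$ and rearranging yields the $\eta$-free bound $\hat L \le 4(1-\delta)^{-4} G^2$, where $G := \sup_{h \in T(x)\cap\sph^{N-1}}\langle z, Ah\rangle$. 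Taking $\E_z$, I bound $\E_z G$ by Sudakov--Fernique (\autoref{thm:sudakov-fernique}): the Gaussian process $h\mapsto\langle z, Ah\rangle$ has increments dominated, via RIP on the difference cone, by $(1+\delta)$ times those of the canonical process, so $\E_z G \le (1+\delta)\w(T(x)\cap\sph^{N-1})$, while Borell--TIS (\autoref{thm:borell-tis}) controls the fluctuation with variance proxy $\sup_h\|Ah\|_2^2 \le (1+\delta)^2$, giving $\E_z G^2 \lesssim_\delta \w(T(x)\cap\sph^{N-1})^2 + 1$. Since $T(x)\cap\sph^{N-1} \subseteq \mathcal K_{4s}^N$, \autoref{lem:convexification} and \autoref{lem:mean-width-sparse-signals} bound this width by $Cs\log(N/s)$, uniformly over $x \in \Sigma_s^N$ and over $\eta$; after a routine comparison of $\log(N/s)$ with $\log(N/2s)$ this gives the claimed $C_\delta s\log(N/2s)$.

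\emph{Lower bound and main obstacle.} Fixing a maximally ($s$-)sparse $x_0$, the nested-feasible-set picture of the first paragraph, together with RIP-injectivity of $A$ on $T(x_0)$ (which renders the limiting minimizer unique and hence resolves the worst-case selection), shows $\lim_{\eta\to0}\hat L = \|\hat w_0\|_2^2$ with $\hat w_0 = \argmin_{w\in T(x_0)}\|z - Aw\|_2^2$. Writing $A\hat w_0 = \mathrm{P}_{AT(x_0)}(z)$ as a projection onto the polyhedral cone $AT(x_0)$ gives $\|A\hat w_0\|_2 = \sup_{v\in AT(x_0)\cap B_2^m}\langle z, v\rangle \ge (1+\delta)^{-1}G$, and RIP again gives $\|\hat w_0\|_2 \ge (1+\delta)^{-1}\|A\hat w_0\|_2$; so $\E_z\lim_\eta\hat L \gtrsim_\delta (\E_z G)^2 \gtrsim_\delta \w(T(x_0)\cap\sph^{N-1})^2$ by the reverse Sudakov--Fernique estimate. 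It remains to lower bound the descent-cone width by $c\,s\log(N/s)$, which is the standard statistical-dimension estimate for the $\ell_1$ descent cone at an $s$-sparse point (obtained from the expected squared distance of a Gaussian vector to the normal cone $\cone(\partial\|x_0\|_1)$) and is order-matched to the upper bound, consistent with \autoref{prop:Rstar-minimax-optimal}. Since $\sup_{x}\hat R \ge \hat R(\|x_0\|_1; x_0, A, \eta)$ for every $\eta$, taking $\liminf_{\eta\to0}$ delivers $c_\delta s\log(N/s)$. I expect the genuinely delicate step to be this low-noise limit for the lower bound --- the epi-convergence argument showing the $\eta$-problem minimizers converge to the unique cone-constrained $\hat w_0$ (relying on the monotone expansion $F_\eta(\tau^*)\uparrow T(x_0)$ and on RIP for uniqueness and coercivity), and the interchange of this limit with $\E_z$ and with $\sup_x$; the remaining compressed-sensing machinery is comparatively routine.
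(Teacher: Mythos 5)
Your treatment of the divergence when $\tau\neq\tau^{*}$ and of the upper bound is essentially the paper's own argument: the overconstrained case is the deterministic distance bound of \autoref{lem:ls-instability-oc}, the underconstrained case is the null-space-plus-worst-case-selection construction of \autoref{lem:ls-instability-uc}, and the upper bound reproduces \autoref{prop:unif-noise-scale} (descent-cone condition, restricted isometry via \autoref{thm:liaw-14}, Sudakov--Fernique plus Borell--TIS for the conditionally Gaussian supremum). One small remark: your parenthetical ``a.s.\ $A$ has full row rank'' is neither needed nor guaranteed for general $K$-subgaussian matrices; the paper handles rank deficiency by projecting $y$ onto $\Rng(A)$, and your argument survives unchanged since the unconstrained minimizers are $\{w: Aw = \mathrm{P}_{\Rng(A)}z\}$ in any case.

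The lower bound is where you genuinely depart from the paper, and it is where your proof has a gap. The paper does not touch descent-cone geometry here: in \autoref{lem:ls-instability-ec} it bounds $\sup_{x}\hat R(\|x\|_{1}; x, A, \eta)$ below, \emph{uniformly in $\eta$}, by the minimax risk over all estimators, invokes \cite[Theorem~1]{candes2013well} to get $C_{1}\,(N/\|A\|_{F}^{2})\, s\log(N/s)$, and then applies Bernstein concentration of $\|A\|_{F}^{2}$ about $N$; the low-noise limit is then immediate. Your route instead hinges on the inequality $\E_{z} G \gtrsim_{\delta} \w\left(T(x_{0})\cap\sph^{N-1}\right)$ for $G = \sup_{h\in T(x_{0})\cap\sph^{N-1}}\ip{z, Ah}$, attributed to a ``reverse Sudakov--Fernique estimate.'' This step fails as stated. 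Sudakov--Fernique in the direction you need requires $\|A(h-h')\|_{2}\geq c\|h-h'\|_{2}$ for \emph{every} pair in the index set, and no such bound can hold: since $s\geq 1$, the point $-\sgn(x_{0})$ lies in the interior of $T(x_{0})$ (by \autoref{lem:equivalent-descent-cone} the constraint is strictly satisfied there), so the cone is full-dimensional, while $\Null(A)\neq\{0\}$ because $m<N$. Concretely, whenever $\dim\Null(A)\geq 2$, pick a unit interior ray $h_{0}$ and $0\neq\nu\in\Null(A)$ with $\nu\perp h_{0}$; for large $\lambda$ both $h_{0}$ and $h' := (\lambda h_{0}+\nu)/\|\lambda h_{0}+\nu\|_{2}$ lie in $T(x_{0})\cap\sph^{N-1}$, yet $A(h_{0}-h')$ is parallel to $Ah_{0}$ with $\|A(h_{0}-h')\|_{2}/\|h_{0}-h'\|_{2} = O(1/\lambda)\to 0$. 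The deviation bounds available to you (\autoref{thm:liaw-14}, \autoref{lem:corollary-lmpv-thm1.4}) are \emph{additive} on bounded sets and cannot be upgraded to the multiplicative increment control Sudakov--Fernique needs at small scales; this asymmetry is why the analogous comparison is harmless in your upper bound (there one may pass to the extreme, sparse points of the index set, whose differences again live in a sparse cone) but fatal in the lower bound.

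The statement you need is still true, but it requires a real argument rather than a named theorem: for instance, restrict to a constant-separated family of \emph{sparse} unit vectors inside the descent cone (say $h\propto u - c\,\sgn(x_{0})$ with $u$ ranging over $s$-sparse sign patterns on the off-support, of cardinality $e^{cs\log(N/s)}$), on which differences are $O(s)$-sparse so multiplicative RIP holds and separations survive the map $A$, and then apply Sudakov minoration to the image set to conclude $\E_{z}G\gtrsim\sqrt{s\log(N/s)}$. Alternatively, adopt the paper's route, which sidesteps the geometry at the cost of citing an external minimax theorem. Your remaining worries (epi-convergence, interchange of limits) are comparatively benign: for a lower bound one needs only $\liminf_{\eta\to 0}$, Fatou's lemma, and the observation that non-uniqueness of solutions only increases the worst-case loss.
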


For clarity, observe the similarity to the definition of $\tau^{*}$ of the
precise parameter, $\|x\|_{1}$, appearing in the lower bound. Importantly, the
spirit of this result extends to the situation where $x_{0}$ belongs, more
generally, to some convex proxy set $\mathcal{K} \subseteq \reals^{N}$. In
particular, the blow-up of $\hat L$ in the limiting low-noise regime holds
independent of the assumptions on $\mathcal{K}$, except that $\mathcal{K}$ be
bounded. For instance,~\autoref{lem:nuc-norm-recovery} addresses the an
analogous result in the case where the signal is a $d\times d$ matrix and
$\mathcal{K}$ is the nuclear norm ball. Further, worst-case bounds on $\hat R$
are well-known in the case where $\mathcal{K}$ is a convex polytope, and are
useful when $\mathcal{K}$ has small gw~\cite{bellec2019localized,
  liaw2017simple}.

\section{Analysis of {\qp}}
\label{sec:qp}

\subsection{Right-sided parameter stability}
\label{sec:right-sided-param}

In this section we present a contrast to the type of sensitivity observed in
\autoref{sec:LS-instability}. Specifically, the result serves to demonstrate
that {\qp} is not sensitive to its parameter choice if the chosen parameter is
too large. This so-called right-sided parameter stability is important in
practical settings, as it suggests that recovery will not be penalized ``too
heavily'' if the parameter is chosen incorrectly to be too large. Having such a
leniency is reassuring, since knowing the exact choice of $\lambda$ in an
experimental setting is unlikely at best.

The right-sided parameter stability for {\qp} was first proved
in~\cite[Theorem~7.2]{bickel2009simultaneous}. When the data are Gaussian,
right-sided stability of {\qp} has been examined
in~\cite{thrampoulidis2015asymptotically}; of {\gqp},
in~\cite{thrampoulidis2018precise}. Here, we state a specialized rephrasing of a
version more suitably adapted to the present
work~\cite[Theorem~3]{shen2015stable}. This version is the same as that stated
in~\autoref{sec:summary-results}.

\begin{thm}[Specialized {\cite[Theorem~3]{shen2015stable}}]
  \label{thm:qp-r-stability}
  For integers $1 \leq s \leq m < N < \infty$, suppose
  $x_{0} \in \Sigma_{s}^{N}$ and let $A \in \reals^{m \times N}$ be a normalized
  $K$-subgaussian matrix. There is an absolute constant $C> 0$ such that if
  $\lambda \geq C \sqrt{\log N}$ and
  $m \geq C_{\varepsilon} \tilde K^{2} \delta^{-2} s \log \tfrac{N}{s}$, then
  with probability at least $1 - \varepsilon$ on the realization of $A$,
  \begin{align*}
    R^{\sharp}(\lambda; x_{0}, A, \eta) \leq C s \lambda^{2}.
  \end{align*}
\end{thm}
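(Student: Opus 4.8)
The plan is to prove a \textsc{Lasso} oracle inequality on a favourable realization of $A$ and then to integrate it against the Gaussian noise. Write $T := \supp(x_0)$, so $|T| \le s$; recall from the definition of $L^{\sharp}$ that the effective regularization parameter is $\mu := \eta\lambda$, let $x^{\sharp} := x^{\sharp}(\mu)$ solve \qp{} with data $y = Ax_0 + \eta z$, and set $h := x^{\sharp} - x_0$. Since $R^{\sharp}(\lambda; x_0, A, \eta) = \eta^{-2}\E_z\|h\|_2^2$, it suffices to show $\E_z\|h\|_2^2 \le C s \mu^2$. First I would fix a favourable event for $A$, of probability at least $1-\varepsilon$, on which two things hold: a restricted eigenvalue bound $\|Ah\|_2 \ge (1-\delta)\|h\|_2$ for every $h$ in a cone $\mathcal{J}_{c_1 s}^{N}$ (absolute $c_1$), obtained by applying \autoref{thm:liaw-14} to $\mathcal{J}_{c_1 s}^{N}\cap\sph^{N-1}$, whose radius is $1$ and whose Gaussian width is $\lesssim \sqrt{s\log(N/s)}$ by \autoref{lem:mean-width-sparse-signals} and \autoref{lem:convexification}, valid under the hypothesis $m \gtrsim \tilde K^2\delta^{-2} s\log(N/s)$; and column-norm control $\max_j\|Ae_j\|_2 \le \sqrt 2$, which is immediate from \autoref{thm:rip-subgaus} since each $e_j$ is $1$-sparse.

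Next I would run the basic-inequality argument conditionally on this favourable $A$. Optimality of $x^{\sharp}$ against $x_0$, after substituting $y = Ax_0 + \eta z$, yields
\[
  \tfrac12\|Ah\|_2^2 \;\le\; \eta\ip{A^{T}z,\,h} + \mu\big(\|x_0\|_1 - \|x^{\sharp}\|_1\big)
  \;\le\; \rho\|h\|_1 + \mu\big(\|h_T\|_1 - \|h_{T^C}\|_1\big),
\]
where $\rho := \eta\|A^{T}z\|_\infty$, the first bound is H\"older's inequality, and the second uses the triangle inequality on $T$ and $T^{C}$. On the noise event $\mathcal{E} := \{\rho \le \mu/2\}$ this gives the cone condition $\|h_{T^C}\|_1 \le 3\|h_T\|_1$, whence $h \in \mathcal{J}_{16s}^{N} \subseteq \mathcal{J}_{c_1 s}^N$ and $\|h_T\|_1 \le \sqrt s\|h\|_2$; feeding the restricted eigenvalue bound back in then produces $\|h\|_2 \le C_\delta\,\mu\sqrt s$, i.e.\ $\|h\|_2^2 \le C_\delta\, s\mu^2$ on $\mathcal{E}$. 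Because $(A^{T}z)_j \sim \mathcal{N}\big(0,\|Ae_j\|_2^2\big)$ with $\|Ae_j\|_2 \le \sqrt2$, a Gaussian union bound gives $\mathbb{P}_z(\mathcal{E}^{c}) = \mathbb{P}_z\big(\|A^{T}z\|_\infty > \lambda/2\big) \le 2N\exp(-\lambda^2/16)$, which is $\le 2N^{1-C^2/16}$ once $\lambda \ge C\sqrt{\log N}$.

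The hard part will be upgrading this event-wise estimate to a bound on the \emph{risk} $\E_z\|h\|_2^2$ that is uniform in $\eta$. The oracle inequality in fact extends with the same $s\mu^2$ scaling to the larger event $\{\rho \le (1-c_0)\mu\}$ for any fixed $c_0\in(0,1)$, since there the basic inequality still yields a cone condition $\|h_{T^C}\|_1 \le \tfrac{2-c_0}{c_0}\|h_T\|_1$ and the restricted eigenvalue on the correspondingly enlarged cone $\mathcal{J}_{c(c_0)s}^{N}$ persists at the cost of a larger absolute constant in the sample complexity; as $\mathbb{P}_z(\rho > (1-c_0)\mu)$ remains super-polynomially small for $\lambda \ge C\sqrt{\log N}$, it only remains to control $\E_z\big[\eta^{-2}\|h\|_2^2\,\1_{\{\rho > (1-c_0)\mu\}}\big]$, where the cone condition and hence the restricted eigenvalue genuinely fail. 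The naive fallback $\mu\|x^{\sharp}\|_1 \le \tfrac12\|y\|_2^2$, from comparing the \qp{} objective at $x^{\sharp}$ and at $0$, is too lossy here: after normalization it scales like $\eta^{-4}$ and is not killed by the $\eta$-free tail probability. The correct device is the scale-covariance of the \qp{} solution in $\eta$: writing $h = \eta w$ and dividing the objective by $\eta^{2}$ exhibits $\eta^{-2}\|h\|_2^2 = \|w\|_2^2$ as governed by an (asymptotically) $\eta$-free program, so the normalized loss on rare large-noise realizations is itself $\eta$-free, with finite expectation controlled by the Gaussian tail of $\|A^{T}z\|_\infty$. Combining the two regimes gives $\E_z\|h\|_2^2 \le C s\mu^2 = C s\eta^2\lambda^2$, and dividing by $\eta^{2}$ yields the claim. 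The genuinely delicate points are matching constants so that the enlarged-cone restricted eigenvalue and the Gaussian tail together leave the leading order at $C s\lambda^2$, and making the scale-covariance control of the large-noise regime rigorous and uniform in $\eta$.
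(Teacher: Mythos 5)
A preliminary remark: the paper never proves this statement. \autoref{thm:qp-r-stability} is imported verbatim as a ``specialized rephrasing'' of \cite[Theorem~3]{shen2015stable}, and no argument for it appears in \autoref{sec:proofs}. Your attempt can therefore only be judged against the standard oracle-inequality argument that the citation stands for, and on its own internal completeness.

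The first half of your proposal is that standard argument, correctly executed: the basic inequality, the H\"older/triangle split giving $\tfrac12\|Ah\|_{2}^{2} \leq \rho\|h\|_{1} + \mu(\|h_{T}\|_{1} - \|h_{T^{C}}\|_{1})$, the cone condition $\|h_{T^{C}}\|_{1}\leq 3\|h_{T}\|_{1}$ on $\{\rho\leq\mu/2\}$, hence $h\in\mathcal{J}_{16s}^{N}$, the restricted-eigenvalue bound from \autoref{thm:liaw-14} combined with \autoref{lem:convexification} and \autoref{lem:mean-width-sparse-signals}, and the union bound $\mathbb{P}_{z}(\|A^{T}z\|_{\infty}>\lambda/2)\leq 2N\exp(-\lambda^{2}/16)$ are all sound, and they give $\|h\|_{2}^{2}\leq C_{\delta}s\mu^{2}$ with high probability over $z$ on the favourable $A$-event.

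The genuine gap is in the second half, and you flag it yourself: the theorem bounds the \emph{risk} $R^{\sharp}=\E_{z}L^{\sharp}$, so the contribution $\E_{z}\bigl[\eta^{-2}\|h\|_{2}^{2}\,\1_{\mathcal{E}^{c}}\bigr]$ of the bad noise event must be bounded uniformly in $\eta$, and your proposal never actually does this. ``Scale-covariance'' as stated is not a proof device: dividing the objective by $\eta^{2}$ leaves the penalty increment $\lambda\eta^{-1}\bigl(\|x_{0}+\eta w\|_{1}-\|x_{0}\|_{1}\bigr)$, which is not $\eta$-free, and even granting an $\eta$-uniformly comparable program, knowing that $w=h/\eta$ minimizes it gives no bound on $\|w\|_{2}$ by itself --- on $\mathcal{E}^{c}$ there is no cone condition, hence no restricted eigenvalue, and $A$ has a large kernel; the integrability of $\|w\|_{2}^{2}$ against the Gaussian tail is precisely what must be established. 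The missing mechanism is concrete and sits inside the comparison you already use: comparing the objective at $x^{\sharp}$ and at $x_{0}$ and discarding the quadratic term gives the $\eta$-free approximate cone condition $\|w_{T^{C}}\|_{1}\leq\|w_{T}\|_{1}+\|z\|_{2}^{2}/(2\lambda)$. Consequently either $\|w\|_{1}\leq 4\sqrt{s}\|w\|_{2}$, in which case the restricted eigenvalue applies and the rescaled basic inequality (with $\ip{z, Aw}\leq\tfrac14\|Aw\|_{2}^{2}+\|z\|_{2}^{2}$) yields $\|w\|_{2}\leq C_{\delta}(\|z\|_{2}+\lambda\sqrt{s})$; or else $\|w\|_{2}\leq\|z\|_{2}^{2}/(4\sqrt{s}\lambda)$. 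Either way $\|w\|_{2}\leq C_{\delta}\bigl(\|z\|_{2}+\lambda\sqrt{s}+\|z\|_{2}^{2}/\lambda\bigr)$ pointwise in $z$, an $\eta$-free bound with polynomial Gaussian moments, and Cauchy--Schwarz against $\mathbb{P}_{z}(\mathcal{E}^{c})^{1/2}\leq CN^{(1-C^{2}/16)/2}$ makes the bad-event contribution negligible relative to $Cs\lambda^{2}$ once the absolute constant in $\lambda\geq C\sqrt{\log N}$ is large enough. With such a lemma inserted your outline closes; without it, what you have proved is a high-probability loss bound, not the risk bound claimed by the theorem.
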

In particular, over-guessing $\lambda$ results in no more than a quadratic
penalty on the bound for the recovery error. Consequently, {\qp} is right-sided
parameter stable --- it is not sensitive to variation of its governing parameter
when the parameter is sufficiently large. We note, thereby, that there exist
regimes in which \textsc{Lasso} programs are not sensitive to their parameter
choice. Perhaps more importantly: there are regimes in which one program may be
sensitive to its parameter choice, and another program is not. Namely, we see
that {\ls} can be sensitive to its parameter choice in the low-noise regime (so
the correct choice of $\tau$ is an imperative), while recovery for the very same
data using {\qp} is not sensitive to $\lambda$ (if $\lambda$ is sufficiently
large enough).

\section{Analysis of {\bp}}
\label{sec:analysis-bp}

The final program that we subject to scrutiny is {\bp}. It is well-known under
standard assumptions that an optimal choice of $\sigma$ yields order-optimal
risk $\tilde R(\sigma^{*}; x_{0}, A, \eta)$ with high probability on the
realization of $A$. In this section, we demonstrate the existence of a regime in
which any choice of $\sigma$ fails to yield order-optimal recovery for {\bp}. A
key message of this section is that {\bp} performs poorly if the signal is too
sparse and the number of measurements is too large. We demonstrate this
behaviour for two regimes: the underconstrained setting, where $\sigma$ is ``too
large'' and the overconstrained setting where $\sigma$ is ``too small''. Each of
these settings covers the case where $\sigma$ is chosen ``just right''; we will
see how {\bp} risk fails to achieve order optimality in this case, as well.

For the duration of this section, we will consider $x_{0} \in \Sigma_{s}^{N}$
where $s$ may or may not be allowed to be $0$. We will clarify this explicitly
in each instance. The main result of the section will be stated in the case
where $A \in \reals^{m \times N}$ is a normalized $K$-subgaussian matrix whose
rows $m = m(N)$ satisfying a particular growth condition. Thus, the measurement
vector will be given by $y = Ax_{0} + \eta z$ where $\eta > 0$ is the noise
scale and $z_{i} \iid \mathcal{N}(0, 1)$ as before. For the sake of analytical
and notational simplicity, we assume that $\eta$ is independent of $N$. However,
we eventually make clear how $\sigma$ may be allowed to depend on the ambient
dimension, and that our result holds irrespective of this dependence.

\subsection{Underconstrained parameter instability}
\label{sec:underc-param-inst}

As a ``warm-up'' for the main result, we start by demonstrating that there is a
regime in which $\tilde R(\sigma; x_{0}, A, \eta)$ fails to achieve minimax
order-optimality when restricted to $\sigma \geq \eta \sqrt m$. Specifically, if
$m$ is too large, then there is a (sufficiently sparse) vector
$x_{0} \in \Sigma_{s}^{N}$ such that, with high probability on the realization
of $A$, the risk $\tilde R(\sigma; x_{0}, A, \eta)$ is large regardless of the
choice of $\sigma \in [\eta \sqrt m, \infty)$. We defer the proof of this result
to \autoref{sec:subopt-regime-underc}.

\begin{lem}[Underconstrained maximin {\bp}]
  \label{lem:uc-bp-subgaus}
  Fix $\delta, \varepsilon, \eta > 0$, let $1 \leq s < m \leq N$ be integers,
  and suppose $A \in \reals^{m \times N}$ is a normalized $K$-subgaussian
  matrix. If
  \begin{align*}
    m %
    > C_{\varepsilon} \delta^{-2} \tilde K^{2} s^{2} \log^{2} \left(\frac{N}{s}\right),
  \end{align*}
  then with probability at least $1 - \varepsilon$ on the realization of $A$,
  \begin{align*}
    \inf_{\sigma \geq \eta \sqrt m} \sup_{x \in \Sigma_{s}^{N}} %
    \tilde R(\sigma; x, A, \eta) %
    \geq C \sqrt m.
  \end{align*}
\end{lem}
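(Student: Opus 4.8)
The plan is to defeat the infimum with a single, fixed sparse signal, using $\inf_{\sigma\ge\eta\sqrt m}\sup_{x\in\Sigma_s^N}\tilde R(\sigma;x,A,\eta)\ge\inf_{\sigma\ge\eta\sqrt m}\tilde R(\sigma;x_0,A,\eta)$, which holds for any fixed $x_0\in\Sigma_s^N$. I would take the one-spike signal $x_0=\rho e_1$ with $\rho:=4\eta\sqrt m$ (note $e_1\in\Sigma_s^N$ for $s\ge1$), so $y=\rho Ae_1+\eta z$; the regime $\sigma\ge\eta\sqrt m$ is ``underconstrained'' precisely because $\eta\sqrt m\approx\eta\|z\|_2$ is the residual of the truth, so the feasible ball is typically large enough to contain $x_0$ and hence to permit over-shrinkage. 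Since only the first column is used, I would first condition on the high-probability event that $\bigl|\|Ae_1\|_2^2-1\bigr|\le\delta$; this follows from \autoref{thm:rip-subgaus} with $s=1$ (a single subexponential concentration bound), and the hypothesis on $m$ is far more than sufficient. On this event the risk becomes an expectation over $z$ alone.

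The analytic core is a dual-certificate-free lower bound on the per-realization error that is uniform in $\sigma$. Put $\sigma_0:=\eta\sqrt m$ and work on $E_0:=\{\|z\|_2\le\sqrt m\}$, on which $x_0$ is feasible for {\bp} at every $\sigma\ge\sigma_0$. The residual of the shrunk spike $(\rho-\beta)e_1$ equals $\|\beta Ae_1+\eta z\|_2$, so the largest shrinkage keeping this at most $\sigma_0$ is the nonnegative root $\beta^\ast$ of $\|Ae_1\|_2^2\beta^2+2\eta\ip{Ae_1,z}\beta+\eta^2\|z\|_2^2=\sigma_0^2$ (real and $\ge0$ on $E_0$). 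The point $(\rho-\beta^\ast)e_1$ has residual exactly $\sigma_0\le\sigma$, hence is feasible for every $\sigma\ge\sigma_0$, so $\|\tilde x(\sigma)\|_1\le\rho-\beta^\ast$ for all such $\sigma$. Conversely $|\tilde x(\sigma)_1-\rho|\le\|\tilde x(\sigma)-x_0\|_2=:\epsilon$, so $\tilde x(\sigma)_1\ge\rho-\epsilon$; since $\rho$ is large enough that $\beta^\ast<\rho$, either $\epsilon>\rho>\beta^\ast$ already, or $\tilde x(\sigma)_1>0$ and $\|\tilde x(\sigma)\|_1\ge\tilde x(\sigma)_1\ge\rho-\epsilon$. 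Comparing the two bounds on $\|\tilde x(\sigma)\|_1$ gives $\epsilon\ge\beta^\ast$ in either case, uniformly over $\sigma\ge\sigma_0$.

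It then remains to average. Passing the expectation through the infimum, $\inf_{\sigma\ge\sigma_0}\E_z\|\tilde x(\sigma)-x_0\|_2^2\ge\E_z\inf_{\sigma\ge\sigma_0}\|\tilde x(\sigma)-x_0\|_2^2\ge\E_z[(\beta^\ast)^2\mathbf{1}_{E_0}]$, so the claim reduces to $\E_z[(\beta^\ast)^2\mathbf{1}_{E_0}]\gtrsim\eta^2\sqrt m$. Writing $D:=\sigma_0^2-\eta^2\|z\|_2^2=\eta^2(m-\|z\|_2^2)$ and $g:=\ip{Ae_1,z}$, the quadratic formula together with $\|Ae_1\|_2^2\in[1-\delta,1+\delta]$ yields $\beta^\ast\ge(c_\delta\sqrt D-C_\delta\,\eta|g|)_+$ on $E_0$; squaring and using $(a-b)_+^2\ge a^2-2ab$ gives $\E_z[(\beta^\ast)^2\mathbf{1}_{E_0}]\gtrsim\E_z[D\,\mathbf{1}_{E_0}]-O(\eta^2 m^{1/4})=\eta^2\E[(m-\|z\|_2^2)_+]-o(\eta^2\sqrt m)$, the cross term being controlled by Cauchy--Schwarz via $\E g^2\le1+\delta$ and $\E[D\,\mathbf{1}_{E_0}]\asymp\eta^2\sqrt m$. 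A standard first-absolute-moment estimate for $\|z\|_2^2\sim\chi^2_m$ gives $\E[(m-\|z\|_2^2)_+]\asymp\sqrt m$, so the risk is $\gtrsim\sqrt m$ for every admissible $\sigma$, as claimed.

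I expect the final averaging step to be the delicate part. The two things needing care are the $\chi^2$ first-absolute-moment computation near the mean---where $\E[(m-\|z\|_2^2)_+]$ must be shown to be of exact order $\sqrt m$ rather than $o(\sqrt m)$---and the cross term $\eta|g|\sqrt D$, which is only lower order ($O(\eta^2 m^{1/4})$) once $m$ is large; this is where the hypothesis on $m$ is actually spent. The structural ingredients that let one fixed signal defeat the infimum are the inequality $\inf\E\ge\E\inf$ and the observation that the single shrunk spike $(\rho-\beta^\ast)e_1$ is simultaneously feasible for all $\sigma\ge\eta\sqrt m$; once these are in place, neither the Gaussian-width bounds nor the projection lemma are needed for this warm-up, in contrast to the overconstrained and ``just right'' cases.
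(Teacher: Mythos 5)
Your proof is correct in substance, but it takes a genuinely different route from the paper's. The paper defeats the infimum with a $\sigma$-dependent signal whose nonzero entries all have magnitude $\gtrsim \eta\sqrt m$, works on the event that the constraint is active ($\|y\|_{2} > \sigma$), and uses the KKT conditions to pin the residual at $\sigma$; the resulting inequality $\|Ah\|_{2}^{2} \geq \eta^{2}(m - \|z\|_{2}^{2}) + 2\eta\ip{Ah,z}$ is then converted into an error bound using the descent-cone condition together with the cone-restricted isometry bound (\autoref{lem:corollary-lmpv-thm1.4}) and the conditionally Gaussian process estimate $\E[f^{2}(A,z)\mid A] \lesssim s\log(N/s)$ (\autoref{lem:conditional-gp-expectation}), yielding $\tilde R \gtrsim \sqrt m - s\log(N/s)$; the hypothesis $m \gtrsim s^{2}\log^{2}(N/s)$ is spent precisely on absorbing the $s\log(N/s)$ term. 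You instead fix a single spike, observe that the shrunk spike $(\rho-\beta^{*})e_{1}$ is simultaneously feasible for all $\sigma \geq \eta\sqrt m$ on the event $\{\|z\|_{2}\leq\sqrt m\}$, and let $\ell_{1}$-minimality convert that feasibility into the pointwise, uniform-in-$\sigma$ bound $\|\tilde x(\sigma)-x_{0}\|_{2} \geq \beta^{*}$; everything then reduces to scalar computations (a quadratic root, the $\chi^{2}$ mean absolute deviation, Cauchy--Schwarz). Your approach is more elementary — it needs only single-column concentration of $A$, no Gaussian-width or cone machinery — and, because the bound holds pointwise before taking any supremum over signals, it actually lower-bounds the stronger maximin quantity $\sup_{x}\inf_{\sigma\geq\eta\sqrt m}\tilde R(\sigma;x,A,\eta)$; it also reveals that for this warm-up the full strength of the sparsity hypothesis is not needed (only that $m$ exceed an absolute constant). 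What the paper's heavier argument buys is uniformity over a rich class of well-separated $s$-sparse signals, an intermediate bound that explains exactly where the condition $m\gtrsim s^{2}\log^{2}(N/s)$ enters, and reuse of the same technical toolkit (\autoref{lem:conditional-gp-expectation}, the events $\mathcal{Z}_{\leq}$, $\mathcal{Z}_{>}$) that drives the overconstrained analysis in \autoref{sec:overc-param-inst}. Three small repairs to your write-up: take $\rho = C_{\delta}\eta\sqrt m$ rather than $4\eta\sqrt m$, since for $\delta$ not small the crude bound $\beta^{*} \leq 3\sqrt{1+\delta}(1-\delta)^{-1}\eta\sqrt m$ can exceed $4\eta\sqrt m$; cite \autoref{thm:liaw-14} with $T = \{e_{1}\}$ (or \autoref{lem:corollary-lmpv-thm1.4}) rather than \autoref{thm:rip-subgaus} for the column concentration, so the $\tilde K^{2}$ dependence matches the stated hypothesis on $m$; and the estimate $\E\bigl[(m-\|z\|_{2}^{2})_{+}\bigr] \geq c\sqrt m$ does require an actual argument (the exact $\chi^{2}$ mean-absolute-deviation formula or a Berry--Esseen/Paley--Zygmund step), exactly as you flagged.
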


\begin{rmk}
  In some settings, it may be not be appropriate for $m$ to depend
  logarithmically on $N$. If $m$ and $N$ satisfy the power law relation
  $m = N^{k}$ for some $k > 0$, then under the assumptions of
  \autoref{lem:uc-bp-subgaus},
  \begin{align*}
    \E_{z} \|\tilde x (\sigma) - x_{0} \|_{2}^{2} = \Omega(N^{k/2}).
  \end{align*}

\end{rmk}

\subsection{Minimax suboptimality}
\label{sec:bp-minimax-suboptimality}

Just as observed in \autoref{sec:underc-param-inst}, the results of this section
hold in the regime where the aspect ratio approaches a constant:
$m / N \to \gamma \in (0, 1)$.  Our simulations in \autoref{sec:bp-numerics}
support suboptimality of $\tilde R$, and sensitivity of {\bp} to its parameter
choice for aspect ratios ranging from $\gamma = 0.1$ to $\gamma = 0.45$.

Our result is of an asymptotic nature in one additional sense. We have stated
that $\tilde R$ may be suboptimal for ``very sparse'' signals $x_{0}$. This is
specified in the sense that, while $m$ and $N$ may be allowed to grow, $s$
remains fixed. Our numeric simulations demonstrate how this assumption may be
interpreted as the inability of {\bp} to effectively recover the off-support of
the signal $x_{0}$ (\ie the all $0$ sub-vector $x_{T^{C}}$ where
$T \subseteq [N]$ denotes the support of $x_{0}$).
Thus, it is in this setting, where the number of measurements is sufficiently
large, and the sparsity sufficiently small, that we show
$\tilde R(\sigma; x_{0}, A, \eta)$ is asymptotically suboptimal, regardless of
the choice of $\sigma$.

\begin{thm}[{\bp} minimax suboptimality]
  \label{thm:bp-minimax-suboptimality}
  Fix $\varepsilon, \eta > 0$ and $m: \nats \to \nats$ with
  $m(N)/ N \to \gamma \in (0, 1)$. There is $N_{0} \geq 2$ and $p > 0$ so that
  for any $N \geq N_{0}$ and any $1 \leq s < m(N_{0})$, if
  $A \in \reals^{m \times N}$ is a normalized $K$-subgaussian matrix, then with
  probability at least $1 - \varepsilon$ on the realization of $A$,
  \begin{align}
    \label{eq:bp-minimax-suboptimality}
    \inf_{\sigma > 0} \sup_{x \in \Sigma_{s}^{N}}
    \tilde R(\sigma; x, A, \eta) %
    \geq C_{\gamma, K} N^{p}.
  \end{align}

\end{thm}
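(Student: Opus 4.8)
The plan is to pass to the image space of $A$ and to split the range of the governing parameter into an underconstrained regime ($\sigma$ large) and an overconstrained regime ($\sigma$ small), arranged so as to overlap around the ``just right'' value $\sigma\approx\eta\sqrt m$; the point is that every $\sigma>0$ should then fall into at least one regime in which the worst-case risk is forced up to a positive power of $N$. The common first step is that, writing $h=\tilde x(\sigma)-x_0$, one has $\|h\|_2\geq\|Ah\|_2/\|A\|$, where on a high-probability event for $A$ the operator norm is controlled, $\|A\|\leq C_{\gamma,K}$, by \autoref{coro:largest-subgaus-singval} together with $m(N)/N\to\gamma$. It therefore suffices to lower bound $\|A\tilde x(\sigma)-Ax_0\|_2$ uniformly in $\sigma$, with the maximizing $x_0\in\Sigma_s^N$ chosen per value of $\sigma$.

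For the underconstrained regime $\sigma\geq\eta\sqrt m$ I would invoke \autoref{lem:uc-bp-subgaus} directly: since $1\leq s<m(N_0)$ is fixed while $m(N)\sim\gamma N\to\infty$, the hypothesis $m>C_\varepsilon\delta^{-2}\tilde K^2 s^2\log^2(N/s)$ holds for all $N\geq N_0$, and the lemma gives $\sup_{x\in\Sigma_s^N}\tilde R(\sigma;x,A,\eta)\geq C\sqrt m\geq C_{\gamma,K}N^{1/2}$. I would also want to record that the mechanism behind that lemma — a fixed, moderately scaled sparse signal that \bp\ is forced to over-shrink, indeed to zero, on a constant-probability event of the noise — persists for $\sigma$ slightly below $\eta\sqrt m$, so that the underconstrained bound in fact covers a left neighbourhood $[\eta\sqrt m-c\eta,\infty)$ of the critical value.

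The overconstrained regime $0<\sigma\leq\eta\sqrt m$ is where the substance lies, and here I would take $x_0=0\in\Sigma_s^N$, so that $\tilde R(\sigma;0,A,\eta)=\eta^{-2}\E_z\|\tilde x(\sigma)\|_2^2$ and $y=\eta z$. Writing $q_\sigma:=A\tilde x(\sigma)$, the feasibility constraint $\|q_\sigma-\eta z\|_2\leq\sigma$ and the identity $\|q_\sigma\|_2^2=\sigma^2-\eta^2\|z\|_2^2+2\eta\ip{q_\sigma,z}$ show that whenever $\sigma<\eta\|z\|_2$ the solution must achieve a definite positive correlation $\ip{q_\sigma,z}\geq(\eta^2\|z\|_2^2-\sigma^2)/(2\eta)$ with the noise — this is precisely \bp\ overfitting $z$. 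The crux is to turn this required correlation into a lower bound on $\|q_\sigma\|_2$. Since $q_\sigma\in g(q_\sigma)\,T$ for the gauge $g$ of $T=\cvx\{\pm A^j\}=AB_1^N$, the normalized solution lies in $T\cap\alpha B_2^m$ with $\alpha=\|q_\sigma\|_2/g(q_\sigma)$, whence $\ip{q_\sigma,z}\leq g(q_\sigma)\sup_{u\in T\cap\alpha B_2^m}\ip{u,z}$; controlling that supremum — in expectation by the width $\w(T\cap\alpha B_2^m)\lesssim\sqrt{\log(N\alpha^2)}$ from \autoref{coro:bellec-random-hulls}, and in fluctuation by Borell–TIS (\autoref{thm:borell-tis}) — caps the correlation attainable per unit gauge and so forces $\|q_\sigma\|_2$, hence the estimation error, to be large. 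Balancing the forced correlation, of order $\eta^2\|z\|_2^2-\sigma^2\asymp\eta\sqrt m\,(\eta\sqrt m-\sigma)$, against the logarithmic width budget yields $\eta^{-2}\E_z\|\tilde x(\sigma)\|_2^2\gtrsim\sqrt m\,(\eta\sqrt m-\sigma)/(\eta\sqrt{\log N})$, a positive power of $N$ whenever $\eta\sqrt m-\sigma$ exceeds a constant multiple of $\eta$; the monotonicity of $\|q_\sigma\|_2$ in $\sigma$ from \autoref{cor:projection-lemma} would let me reduce the uniform-in-$\sigma$ claim to the endpoint of this window.

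The main obstacle is exactly the just-right window $\sigma\in(\eta\sqrt m-c\eta,\eta\sqrt m)$: there the crude estimate $\|q_\sigma\|_2\geq\eta\|z\|_2-\sigma$ degenerates, the zeroing mechanism of the underconstrained lemma loses its constant-probability event, and one genuinely needs the sharp two-sided control of $\w(T\cap\alpha B_2^m)$ to certify that even \emph{partial} overfitting of the noise inflates the risk to a positive power of $N$. Making the two regimes overlap — choosing the constant $c$ and the exponent $p$ so that the underconstrained bound on $[\eta\sqrt m-c\eta,\infty)$ and the overconstrained bound on $(0,\eta\sqrt m-c\eta]$ both exceed $C_{\gamma,K}N^{p}$ — is the decisive step, after which taking $\inf_{\sigma>0}$ and intersecting the finitely many high-probability events for $A$ (RIP from \autoref{thm:rip-subgaus}, the singular-value bound, and the random-hull bounds) finishes the argument.
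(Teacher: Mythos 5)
Your top-level decomposition coincides with the paper's: split the infimum at $\sigma = \eta\sqrt m$, dispatch the underconstrained half with \autoref{lem:uc-bp-subgaus}, and for the overconstrained half take $x_{0}=0$, pass to $q_{\sigma} := A\tilde x(\sigma)$, and run a forced-correlation versus localized-width argument, using \autoref{cor:projection-lemma} to reduce to an endpoint. However, your overconstrained argument has a genuine gap: you never obtain an upper bound on the gauge $g(q_{\sigma}) = \|q_{\sigma}\|_{1,A}$ of the minimizer, and without one the ``balance'' step does not close. The cap $\ip{q_{\sigma},z} \leq g(q_{\sigma})\sup_{u \in T \cap \alpha B_{2}^{m}} \ip{u,z}$ with $\alpha = \|q_{\sigma}\|_{2}/g(q_{\sigma})$ forces $\|q_{\sigma}\|_{2}$ to be large only if $g(q_{\sigma}) \lesssim \sqrt{m/\log N}$; if the gauge may be of order $\sqrt m$, feasible points of tiny norm can meet the required correlation. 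Concretely, the radial point $q = \eta z\left(1 - \sigma/(\eta\|z\|_{2})\right)$ is feasible, has norm $\eta\|z\|_{2}-\sigma$, and attains correlation $\|z\|_{2}(\eta\|z\|_{2}-\sigma)$, which meets the forced level; it fails to be the minimizer only because its gauge is too large, and certifying that the minimizer avoids all such small-norm, large-gauge points requires exhibiting a \emph{competing feasible point of gauge} $\lesssim \sqrt{m/\log N}$, so that optimality transfers this gauge bound to $q_{\sigma}$. That is exactly the ``$K_{1}$ half'' of \autoref{lem:geometric-lemma} (\autoref{prop:gw-lb-1}--\ref{prop:gw-lb-3}): a width \emph{lower} bound from \autoref{coro:bellec-random-hulls} plus a Borell--TIS lower-deviation estimate, on the event $\{\|z\|_{2}^{2} \leq m + C_{1}\sqrt m\}$, produce a point of $K_{1}\cap F$, whence $\|\tilde q\|_{1,A} \leq \lambda \asymp \sqrt{m/\log N}$. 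Only then does the $K_{2}$-side width upper bound---the part you do sketch---yield $\|q_{\sigma}\|_{2} \geq \alpha_{2} \gtrsim N^{p}$; the factor $\sqrt{\log N}$ saved in the gauge bound is precisely what makes $p>0$, so this half of the argument cannot be omitted.

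Second, the ``main obstacle'' you identify, the window $\sigma \in (\eta\sqrt m - c\eta,\, \eta\sqrt m)$, is an artifact of replacing $\|z\|_{2}^{2}$ by its mean $m$. The paper's \autoref{lem:bp-oc-tuned-zero} works at $\sigma = \eta\sqrt m$ exactly by conditioning on the constant-probability event $\{\|z\|_{2}^{2} \geq m + C_{2}\sqrt m\}$, on which the forced correlation remains $\gtrsim \eta\sqrt m$ no matter how close $\sigma$ is to $\eta\sqrt m$; since the risk is an expectation over $z$, a constant-probability event suffices for a lower bound. Hence the overconstrained bound (\autoref{lem:bp-oc-zero}) covers all of $(0,\eta\sqrt m]$, the two regimes meet exactly at $\eta\sqrt m$, and no leftward extension of \autoref{lem:uc-bp-subgaus}---which you assert but do not verify---is needed. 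Relatedly, your claimed intermediate bound $\eta^{-2}\E_{z}\|\tilde x(\sigma)\|_{2}^{2} \gtrsim \sqrt m\,(\eta\sqrt m - \sigma)/(\eta\sqrt{\log N})$ is not what this machinery produces: the correct balance (cf.\ \autoref{prop:gw-ub-1}) gives a bound of the form $C N^{p}$ with $p$ proportional to the \emph{square} of the deficit-to-noise ratio divided by $L^{2}$, up to logarithmic factors, which is what the theorem requires.
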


A minor modification of the above result allows one to show that the minimax
risk for {\bp}, relative to the benchmark risk $R^{*}(s, A)$, converges in
probability to $\infty$.

\begin{coro}[{\bp} suboptimal in probability]
  Fix $\eta > 0$, an integer $s \geq 1$, and suppose for $m : \nats \to \nats$
  that $m (N)/ N \to \gamma \in (0, 1)$. For each $N$, suppose
  $A = A(N) \in \reals^{m(N) \times N}$ is a normalized $K$-subgaussian
  matrix. Then, for all $M > 0$,
  \begin{align*}
    \lim_{N\to \infty} \mathbb{P}\left(%
    \inf_{\sigma > 0} \sup_{x \in \Sigma_{s}^{N}}     %
    \frac{ \tilde R(\sigma; x, A, \eta)}{R^{*}(s,A)} > M \right) %
    = 1.
  \end{align*}
\end{coro}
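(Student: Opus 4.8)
The plan is to read the corollary off from \autoref{thm:bp-minimax-suboptimality} together with the benchmark upper bound contained in \autoref{prop:Rstar-minimax-optimal}, reducing the claim to the elementary fact that a positive power of $N$ outgrows a logarithm. The numerator of the quotient is bounded below by \autoref{thm:bp-minimax-suboptimality}, which supplies an exponent $p > 0$ and a threshold $N_0$ such that $\inf_{\sigma > 0}\sup_{x \in \Sigma_s^N}\tilde R(\sigma; x, A, \eta) \geq C_{\gamma, K} N^p$ on a high-probability event. The denominator is bounded above by the final link of the inequality chain in \autoref{prop:Rstar-minimax-optimal}, namely $R^*(s, A) \leq C_\delta\, s\log(N/s)$, on its own high-probability event. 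For fixed $s$ the quotient of these two bounds grows like $N^p / \log N$ and therefore diverges.

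Concretely, I would fix $M > 0$ and an arbitrary tolerance $\varepsilon > 0$, and also fix a convenient $\delta$ (say $\delta = 1/2$) so that the constants drawn from \autoref{prop:Rstar-minimax-optimal} are absolute. Applying \autoref{thm:bp-minimax-suboptimality} with confidence parameter $\varepsilon/2$ yields $p$, $N_0$, and an event $E_1$ of probability at least $1 - \varepsilon/2$ carrying the numerator bound; since $s$ is fixed and $m(N) \to \infty$, enlarging $N_0$ if needed guarantees the hypothesis $s < m(N_0)$. Applying \autoref{prop:Rstar-minimax-optimal} with confidence parameter $\varepsilon/2$ yields an event $E_2$ of probability at least $1 - \varepsilon/2$ carrying the denominator bound. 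Its sample-complexity hypothesis $m > C_{\varepsilon/2}\delta^{-2}\tilde K^2 s\log(N/s)$ holds for all large $N$, because for fixed $s$ its right-hand side is $O(\log N)$ whereas $m(N) \sim \gamma N$ grows linearly.

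On $E_1 \cap E_2$, writing $Q_N$ for the quotient appearing in the statement, I would estimate
\begin{align*}
  Q_N \;\geq\; \frac{C_{\gamma, K}\, N^p}{C_\delta\, s\log(N/s)} \;\xrightarrow[N \to \infty]{}\; \infty,
\end{align*}
so there is $N_1 \geq N_0$, depending on $M$, $p$, $s$ and the constants, past which the right-hand side --- and hence $Q_N$ on $E_1 \cap E_2$ --- exceeds $M$. A union bound gives $\mathbb{P}(E_1 \cap E_2) \geq 1 - \varepsilon$, whence $\mathbb{P}(Q_N > M) \geq 1 - \varepsilon$ for every $N \geq N_1$. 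As $\varepsilon > 0$ was arbitrary, $\liminf_{N \to \infty}\mathbb{P}(Q_N > M) = 1$; and as $M > 0$ was arbitrary, this is precisely the asserted convergence in probability to $\infty$.

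The only genuine care required is the bookkeeping between two results phrased with a prescribed confidence $1 - \varepsilon$ and a conclusion about the $N \to \infty$ limit: I would fix $\varepsilon$ first, split it across $E_1$ and $E_2$, send $N \to \infty$, and only afterward let $\varepsilon \to 0$ through the definition of the limit. The secondary point---that the sample-complexity hypothesis of \autoref{prop:Rstar-minimax-optimal} is eventually met under $m(N)/N \to \gamma$ for fixed $s$---is immediate from the linear-versus-logarithmic comparison above. I anticipate no substantive difficulty, since all of the analytic work is already carried by \autoref{thm:bp-minimax-suboptimality}.
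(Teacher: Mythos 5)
Your proof is correct and follows exactly the route the paper intends: the paper leaves this corollary as a ``minor modification'' of \autoref{thm:bp-minimax-suboptimality}, namely combining that theorem's polynomial lower bound on the {\bp} minimax risk with the logarithmic upper bound $R^{*}(s,A) \lesssim s\log(N/s)$ (from \autoref{prop:Rstar-minimax-optimal}, whose sample-complexity hypothesis is eventually satisfied since $m(N)\sim\gamma N$), a union bound over the two high-probability events on $A$, and the observation that $N^{p}/\log N \to \infty$. Your handling of the quantifier order (fix $\varepsilon$, send $N\to\infty$, then let $\varepsilon\to 0$) and of the hypothesis $s < m(N_{0})$ is sound, so there is nothing to add.
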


\section{Numerical results}
\label{sec:numerical-results}

Let $\mathfrak{P} \in \{ \text{\ls}, \text{\qp}, \text{\bp}\}$ be a CS program
with solution $x^{*}(\upsilon)$, where $\upsilon \in \{ \tau, \lambda, \sigma\}$
is the associated parameter. Given a signal $x_{0} \in \reals^{N}$, matrix
$A \in \reals^{m\times N}$, and noise $\eta z \in \reals^{m}$, denote by
$\mathscr{L}(\upsilon; x_{0}, A, \eta z)$ the loss associated to
$\mathfrak{P}$. For instance, if $\mathfrak{P} = \text{{\ls}}$, then
$\mathscr{L} = \hat L$. In most cases, the signal $x_{0}$ for our numerical
simulations will be $s$-sparse, and $s$ will be ``small''. For simplicity, and
to ensure adequate separation of the ``signal'' from the ``noise'', each
non-zero entry of $x_{0}$ will be equal to $N$, except where otherwise
noted. Unless otherwise noted the measurement matrix $A$ will have entries
$A_{ij} \iid \mathcal{N}(0, m^{-1})$.

Define $\upsilon^{*} := \upsilon^{*}(x_{0}, A, \eta) > 0$ to be the value of
$\upsilon$ yielding best risk (\ie where
$\E_{z} \mathscr{L} (\cdot; x_{0}, A, \eta z)$ is minimal) and let the
normalized parameter $\rho$ for the problem $\mathfrak{P}$ be given by
$\rho := \upsilon / \upsilon^{*}$. Note that $\rho = 1$ is a population estimate
of the argmin of $\mathscr{L} (\rho \upsilon^{*}; x_{0}, A, \eta z)$; by the law
of large numbers, this risk estimates well an average of such losses over many
realizations $\hat z$. Let $L(\rho) := \mathscr{L}(\rho \upsilon^{*})$ denote
the loss for $\mathfrak{P}$ as a function of the normalized parameter, let
$\{ \rho_{i}\}_{i=1}^{n}$ denote a sequence of points in the normalized
parameter space, and define the average loss for $\mathfrak P$ at any point
$\rho_{i}$ by
\begin{align*}
  \bar L ( \rho_{i}; x_{0}, A, \eta, k) %
  := k^{-1} \sum_{j = 1}^{k} L(\rho_{i} \upsilon^{*};
  x_{0}, A, \eta \hat z_{ij}),
\end{align*}
where $\hat z_{ij}$ is the $(i,j)$-th realization of noise;
$\hat z_{ij} \sim \mathcal{N}(0, I_{m})$ for all $(i, j) \in [n] \times [k]$. We
may also refer to $\bar L$ as the empirical risk or noise-normalized squared
error (nnse). Note that $\bar L$ depends on
$(\hat z_{ij} : i \in [n], j \in [k])$ and that notating this dependence is
omitted for simplicity. Below, $\hat z_{ij}$ are not necessarily sampled
independently. In fact, to obtain tractable computational simulations, we will
frequently have $\hat z_{ij} = \hat z_{i' j}$ for $i, i' \in [n]$. Where
necessary, we disambiguate the average losses with a subscript:
$\bar{L}_{\text{\ls}}$, $\bar{L}_{\text{\qp}}$ and $\bar{L}_{\text{\bp}}$ for
the programs {\ls}, {\qp} and {\bp}, respectively. 

In this section, we include plots representing average loss of a program with
respect to that program's normalized governing parameter. Both the optimal
parameter $\upsilon^{*}$ and the average loss $\bar L$ are approximated by
$\upsilon^{\dagger}$ and $L^{\dagger}$, respectively, using RBF interpolation as
described in~\autoref{sec:rbf-approximation}. Parameter settings for the
interpolations are provided in~\autoref{sec:interp-param-sett}. To this end, the
average loss $\E_{z} L(\rho; x_{0}, A, \eta z)$ is approximated from $k$
realizations of the true loss on a logarithmically spaced grid of $n$ points
centered about $\rho = 1$. The approximation is computed using multiquadric RBF
interpolation and the values of the parameters of the interpolation,
$(k, n, \varepsilon_{\text{rbf}}, \mu_{\text{rbf}}, n_{\text{rbf}})$, are stated
in each instance where the computation was performed. In every case, due to
concentration effects, for a given program and given parameter value the
realizations cluster very closely about the average loss. Therefore, RBF
interpolation is very close to the true approximated average loss curve computed
from the loss realizations; and has the added advantage of facility to account
for nonuniformly spaced data points. In the main graphics, we omit the original
data point cloud in favour of presenting clean, interpretable plots. However, we
include auxiliary plots of the average loss approximant and the point cloud to
visualize goodness of fit. In addition, these latter visualizations serve to
support how a program's order-optimality for a single realization may be
impacted by averaging over noise.

There is one final caveat to note in how the plots were generated. Computational
methods available to the authors for computing solutions to {\bp} and {\ls} were
much slower than those available for {\qp}. Consequently, ensuring computational
tractability of our numerical simulations required solving {\qp} and obtaining
corresponding parameter values from those problem instances. Namely, given
$\{\rho_{i}\}_{i=1}^{n}$ and solutions $x^{\sharp}(\lambda_{i})$ where
$\lambda_{i} = \rho_{i} \lambda^{*}, i \in [n]$, we use
\autoref{prop:foucart-program-equivalence} to obtain
$\tau_{i} := \|x^{\sharp}(\lambda_{i})\|_{1}$ and
$\sigma_{i} := \|y - Ax^{\sharp}(\lambda_{i})\|_{2}$. Thus, we obtain loss
curves $\hat L(\tau_{i}; x_{0}, A, \eta z)$ and
$\tilde L(\sigma_{i}; x_{0}, A, \eta z)$ by solving {\qp} on a sufficiently fine
grid, yielding $(\lambda_{i}, x^{\sharp}(\lambda_{i}))$. This allows us to
approximate the optimal parameter choices for {\ls} and {\bp} and therefore
determine within some numerical tolerance all of $\bar L_{\text{\ls}}$,
$\bar L_{\text{\qp}}$ and $\bar L_{\text{\bp}}$. Further details for
approximating the average loss of a program are given
in~\autoref{sec:rbf-approximation} where we describe radial basis function (RBF)
interpolation~\cite{buhmann2003radial, 2020SciPy-NMeth}.

\subsection{{\ls} numerics}
\label{sec:ls-numerics}

The data generating process for the numerics in this section is as follows. Fix
$A \in \reals^{m\times N}$, $\eta > 0$ and $x_{0} \in \Sigma_{s}^{N}$. Fix a
logarithmically spaced grid of $n$ points for the normalized parameter,
$\{\rho_{i}\}_{i=1}^{n}$, centered about $1$. Generate $k$ realizations
$\{z_{j}\}_{j=1}^{k}$ of the noise. Obtain
$\lambda_{i} := \rho_{i} \lambda^{*}(A, x_{0}, \eta)$ and obtain
$\ell_{ij} := L^{\sharp}(\lambda_{i}; x_{0}, A, \eta z_{j})$ after computing
$x^{\sharp}(\lambda_{i}; z_{j})$. Observe that
$\hat L(\tau_{ij}; x_{0}, A, \eta z_{j}) = \ell_{ij}$ where
$\tau_{ij} := \|x^{\sharp}(\lambda_{i}; z_{j})\|_{1}$. Similarly, for
$\sigma_{ij} := \|y_{j} - Ax^{\sharp}(\lambda_{i}; z_{j})\|_{2}$, one has
$\tilde L(\sigma_{ij}; x_{0}, A, \eta z_{j}) = \ell_{ij}$. Finally, for a
sufficiently fine and wide numerical grid, one may approximate the normalized
parameter grids $\{ \tau_{ij}\}$ and $\{ \sigma_{ij}\}$ using the values
$\{ \ell_{ij}\}$. Consequently, for each program we are able to approximate the
average loss $\bar L$ by obtaining a clever approximate interpolant of
$\{(\tau_{ij}, \ell_{ij}) : (i,j) \in [n] \times [k]\}$,
$\{(\sigma_{ij}, \ell_{ij}) : (i,j) \in [n] \times [k]\}$ or
$\cup_{j \in [k]}\{(\rho_{i}, \ell_{ij}): i \in [n]\}$, respectively, while only
having to solve {\qp}. This particular bit of good fortune is guaranteed to us
by \autoref{prop:foucart-program-equivalence}. As stated, the clever approximant
is obtained using multiquadric RBF
interpolation~\cite{buhmann2003radial,2020SciPy-NMeth}. For more background on
kernel methods for function approximation and radial basis functions in
particular, we refer the reader to~\cite{buhmann2003radial, hastie2009elements,
  murphy2012machine}. Some additional detail to this end is provided in \autoref{sec:rbf-approximation}.

The numerics for {\ls}, appearing in \autoref{fig:low-noise-numerics}, concern
the case in which $\eta$ is small. When the ambient dimension is modest
($N = 10^{4}$) and the noise scale only moderately small
($\eta = 2\cdot 10^{-3}$), parameter instability of {\ls} is readily
observed. Minute changes in $\tau$ lead to blow-up in the nnse and the peak
signal-to-noise ratio (psnr) (left and right plot, respectively). Indeed, for
the range plotted, it is difficult to visually segment the left half of the
{\ls} average loss curve from the right half. These observations support the
asymptotic theory of \autoref{sec:LS-instability}. Moreover, the simulations
suggest that the other two programs, {\bp} and {\qp} are relatively much less
sensitive to the choice of their governing parameter.

\begin{figure}
  \centering
  \includegraphics[width=.45\textwidth]{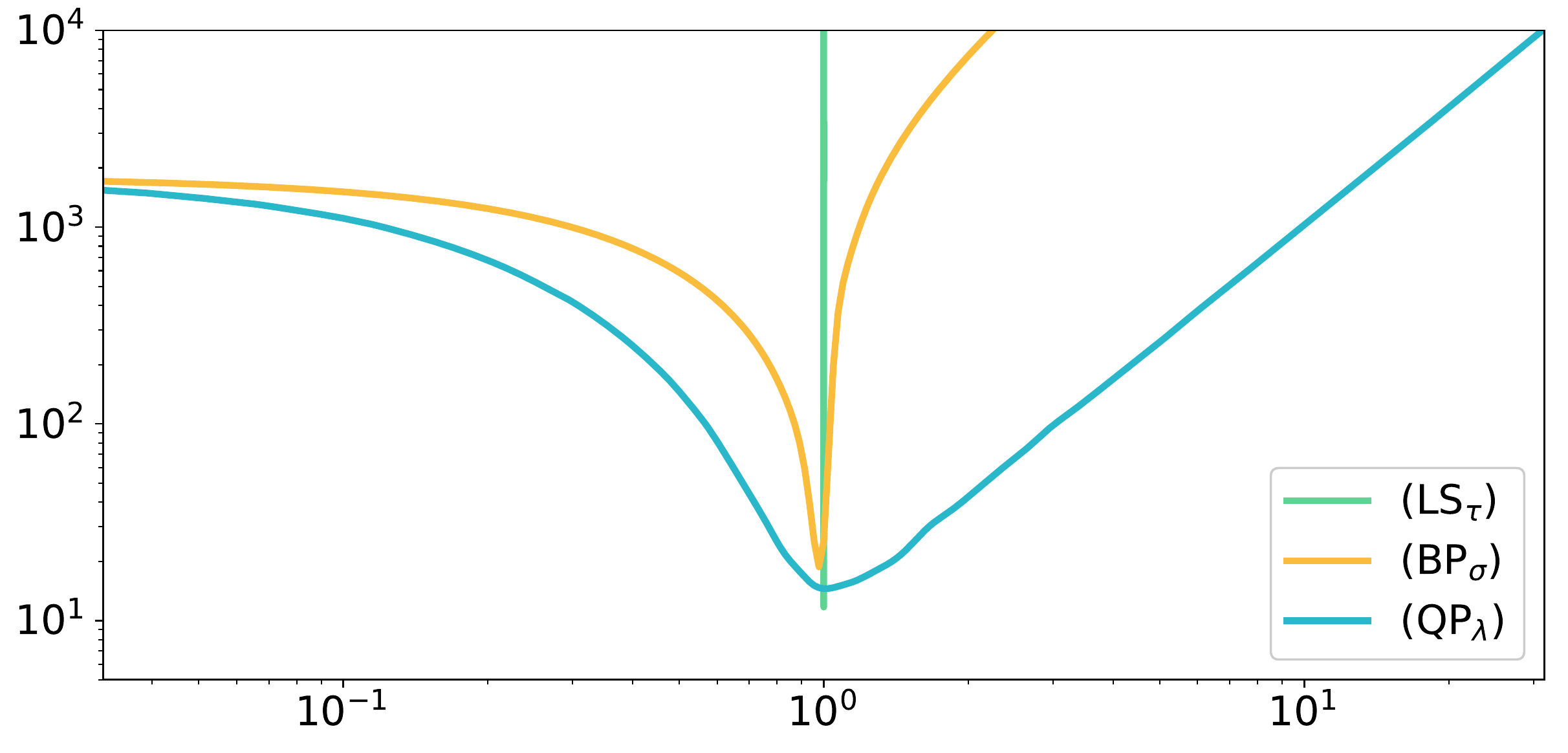}\quad
\includegraphics[width=.45\textwidth]{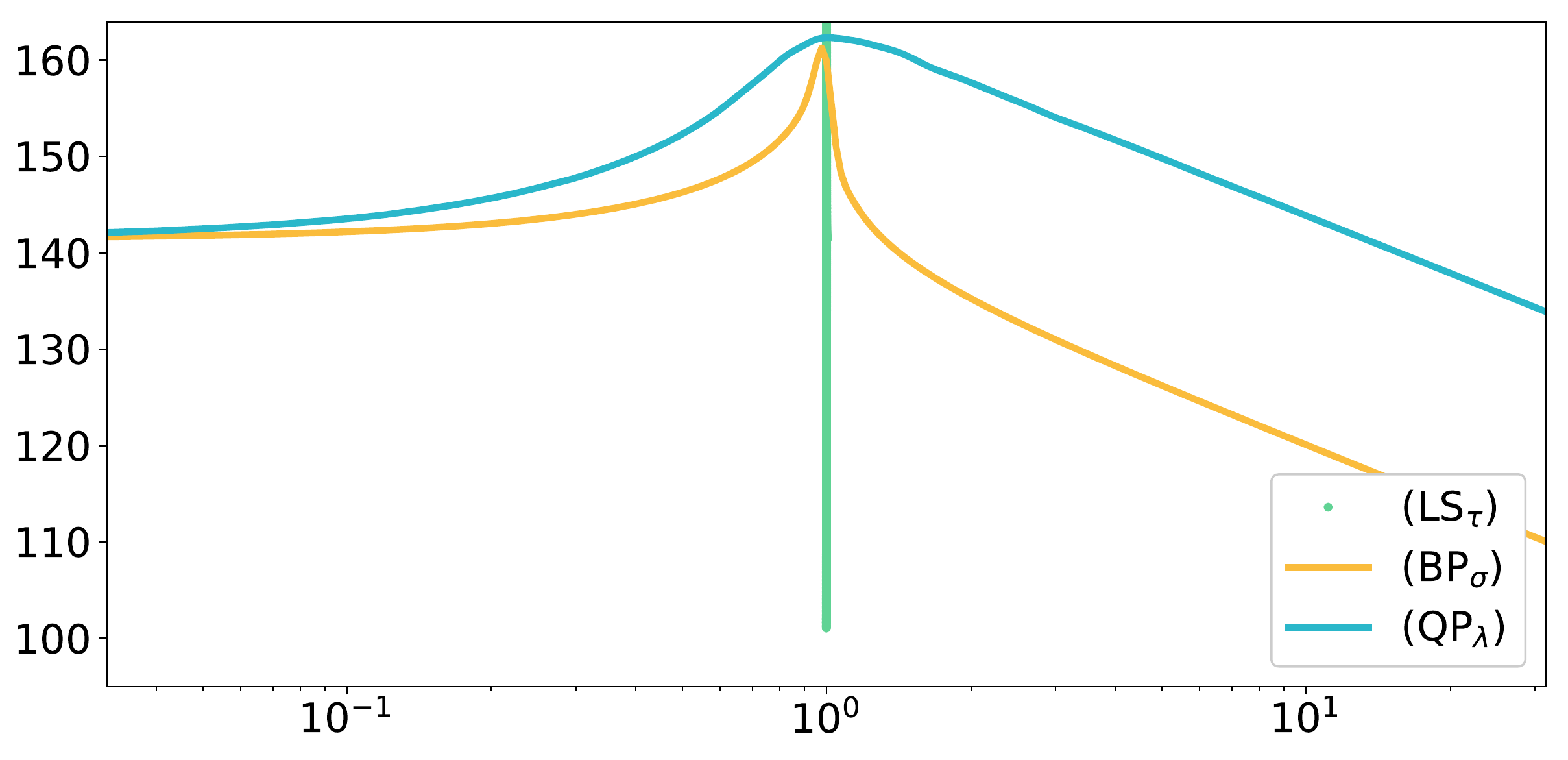}

\caption{{\ls} parameter instability in the low-noise regime. Average loss
  (left) plotted on a log-log scale with respect to the normalized parameter;
  average psnr (right) plotted on a log-linear scale with respect to the
  normalized parameter. The data parameters are
  $(s, m, N, \eta, k, n) = (1, 2500, 10^{4}, 2\cdot 10^{-3}, 15, 301)$.}
\label{fig:low-noise-numerics}
\end{figure}

\subsection{{\qp} numerics}
\label{sec:qp-numerics}

In this section we visualize the average loss of {\qp} as a function of its
normalized parameter $\rho = \lambda / \lambda^{*}$. In
\autoref{fig:qp-instability}, the average loss for {\qp} is plotted with respect
to the $\rho$ for an aspect ratio $\delta$ ranging between $0.25$ and $4$. As
suggested by \autoref{thm:qp-r-stability}, the average loss appears to scale
quadratically with respect to the normalized parameter for values $\rho >
1$. For $\rho \in (0.5, 0.9)$, the average loss appears to scale
super-quadratically with respect to the normalized parameter, with the rate of
growth increasing as a function of $\delta$. This behaviour suggests that {\qp}
can be sensitive to its parameter choice if $\rho$ is too small. The intuition
for the observed behaviour is that {\qp} increasingly behaves like ordinary
least squares when $\lambda \to 0$. Each average loss was approximated from 15
realizations of the loss using multiquadric RBF interpolation. Due to
concentration effects, the realizations for each parameter value clustered very
closely to the approximated average loss. The left-hand plot
\autoref{fig:qp-instability} is plotted on a log-log scale, while the right-hand
plot is plotted on a linear-linear scale. The linear-linear plot readily
demonstrates how over-guessing $\lambda$ by a factor of $2$ is more robust to
error than under-guessing $\lambda$ by a factor of $2$.

\begin{figure}[h]
  \centering
  \includegraphics[width=.45\textwidth]{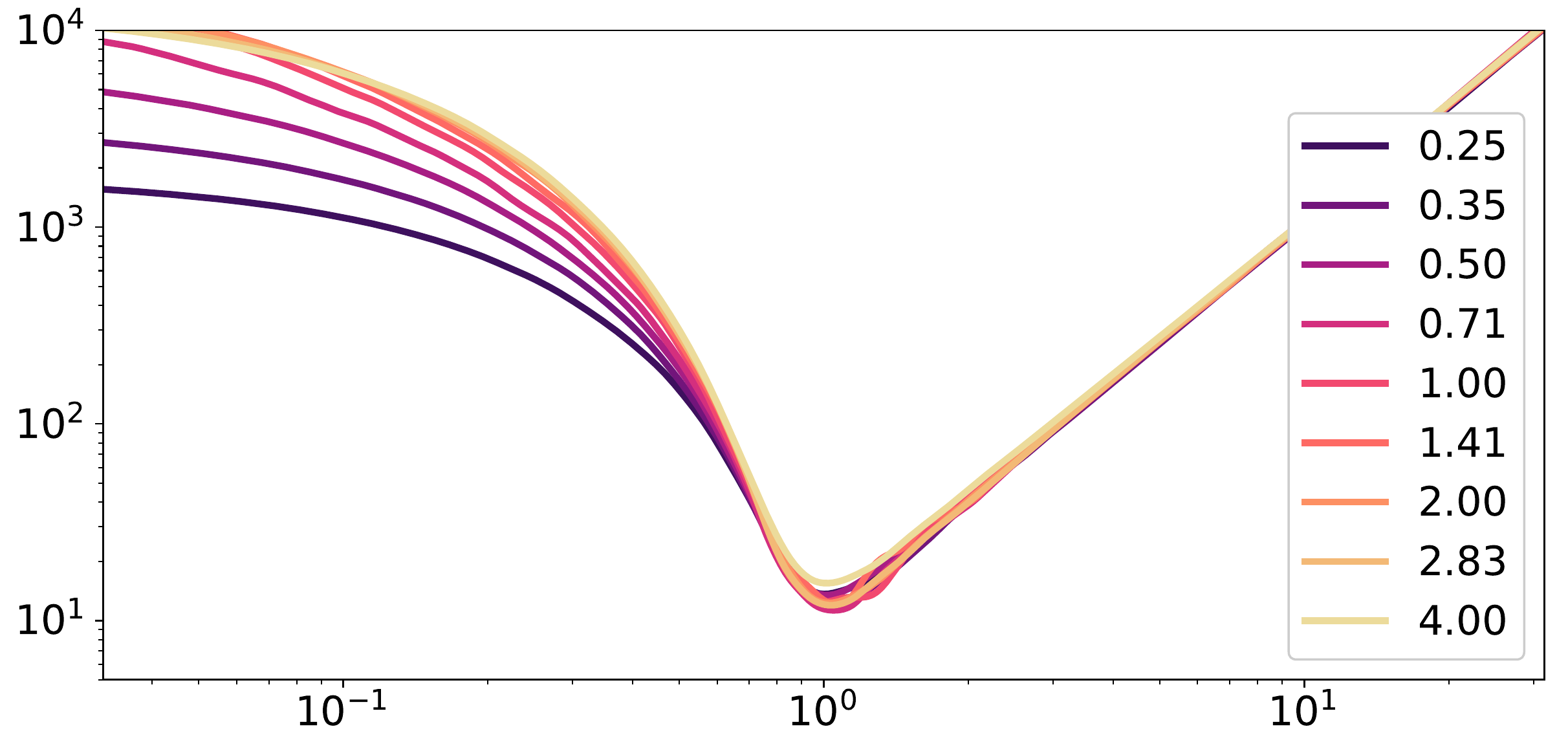}\quad %
\includegraphics[width=.45\textwidth]{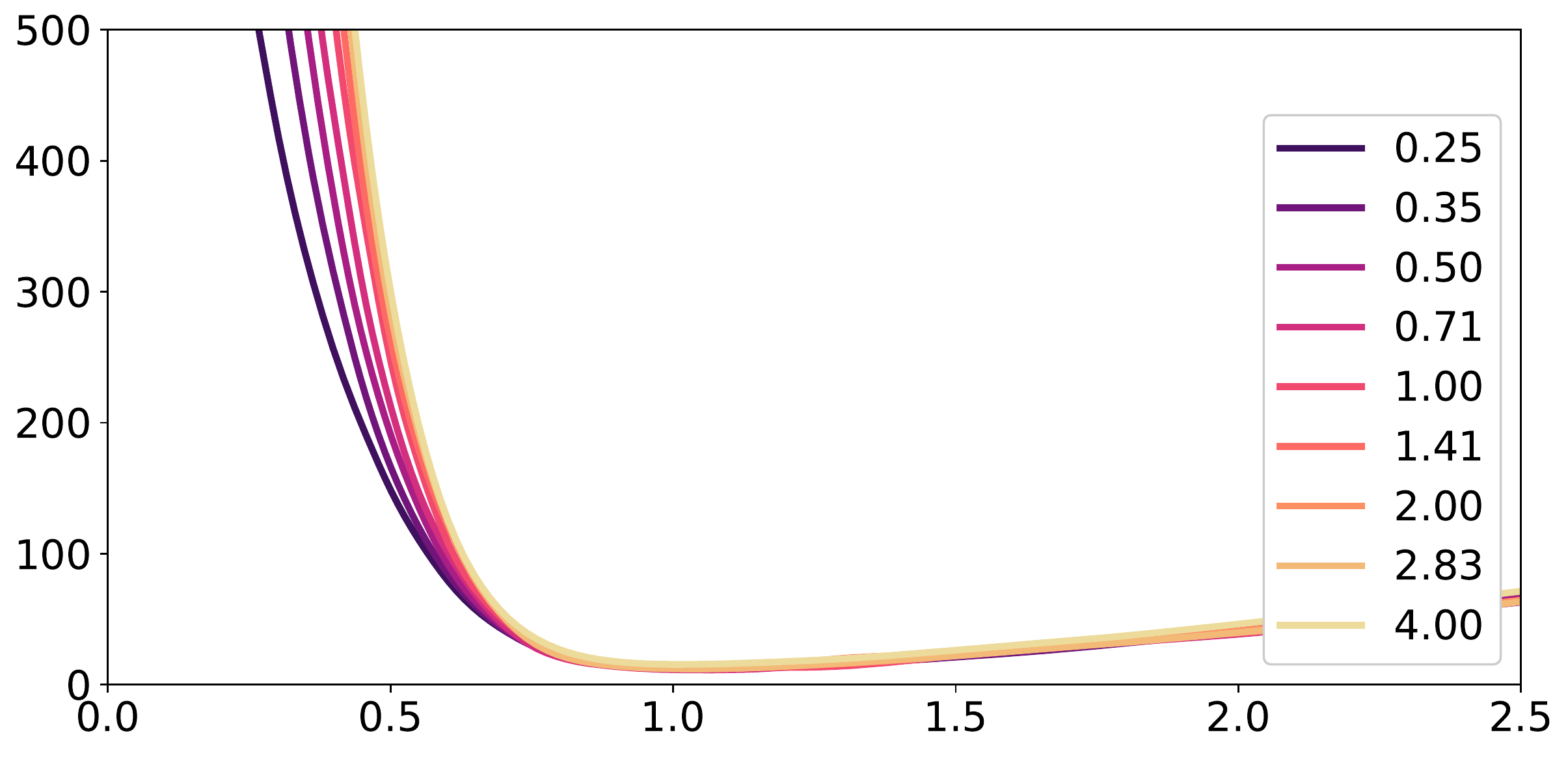}
\caption{Average loss of {\qp} plotted with respect its normalized parameter in
  the low-noise, high sparsity regime. Parameters for the simulation are
  $(s, N, \eta, k, n) = (1, 10^{4}, 10^{-5}, 15, 301)$. The aspect ratio of the
  matrix $A \in \reals^{m \times N}$ with $A_{ij} \iid \mathcal{N}(0, m^{-1})$
  takes values $\delta \in \{4^{-1}, \ldots, 4\}$, as shown in the legend. The
  data are visualized on a log-log scale (left) and linear scale (right).}
  \label{fig:qp-instability}
\end{figure}

In \autoref{fig:qp-instability-2}, we visualize {\qp} average loss with respect
to its normalized parameter. In the top row, we include two plots similar to
\autoref{fig:qp-instability}, but for $\delta = 0.25, 0.45$ only. Again, the
left-hand plot is on a log-log scale while the right-hand plot is on a
linear-linear scale. The bottom row depicts the goodness of fit of the RBF
approximation for the average loss.

\begin{figure*}[h]
  \centering
\includegraphics[width=.45\textwidth]{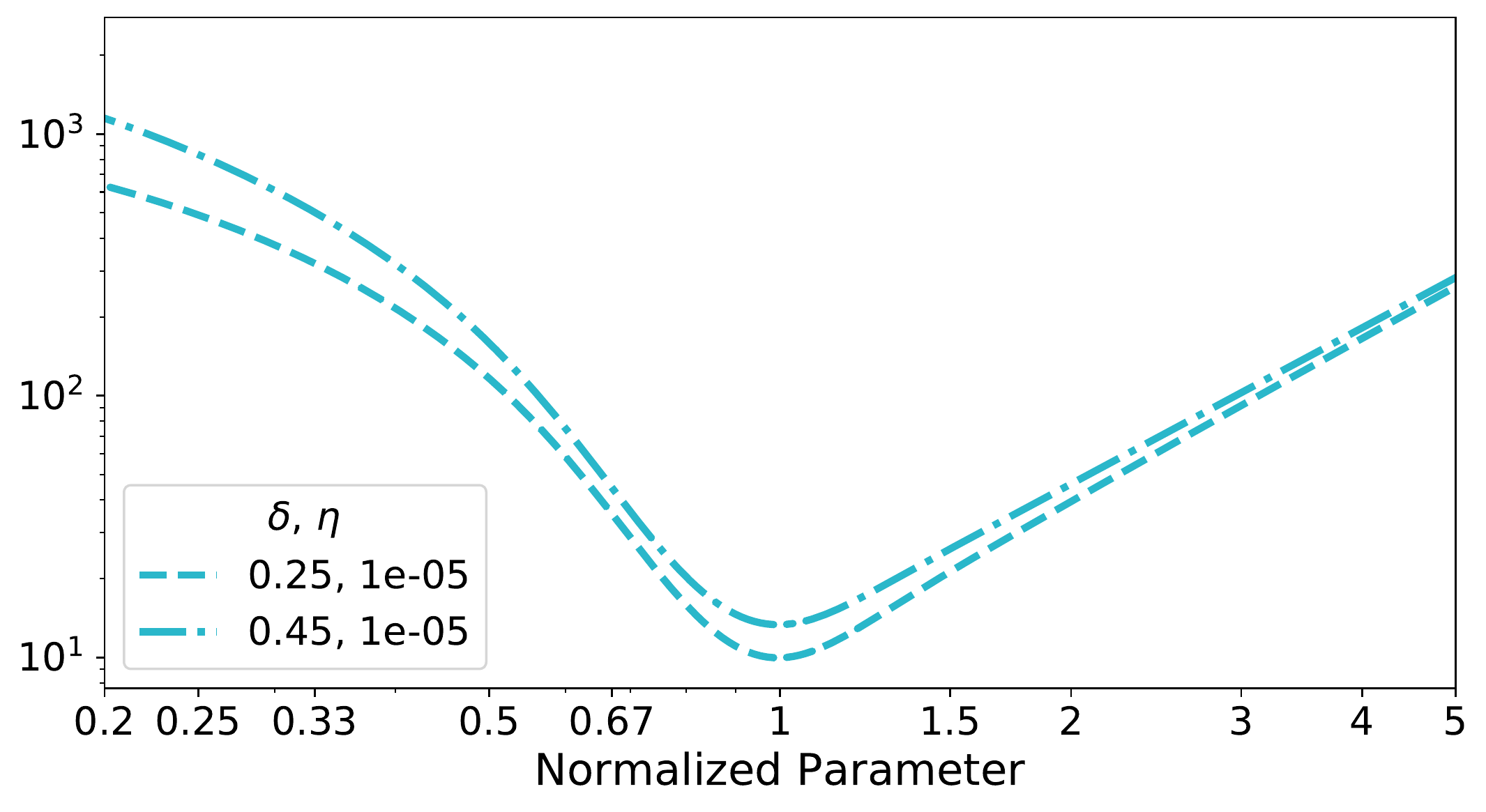}\quad
\includegraphics[width=.45\textwidth]{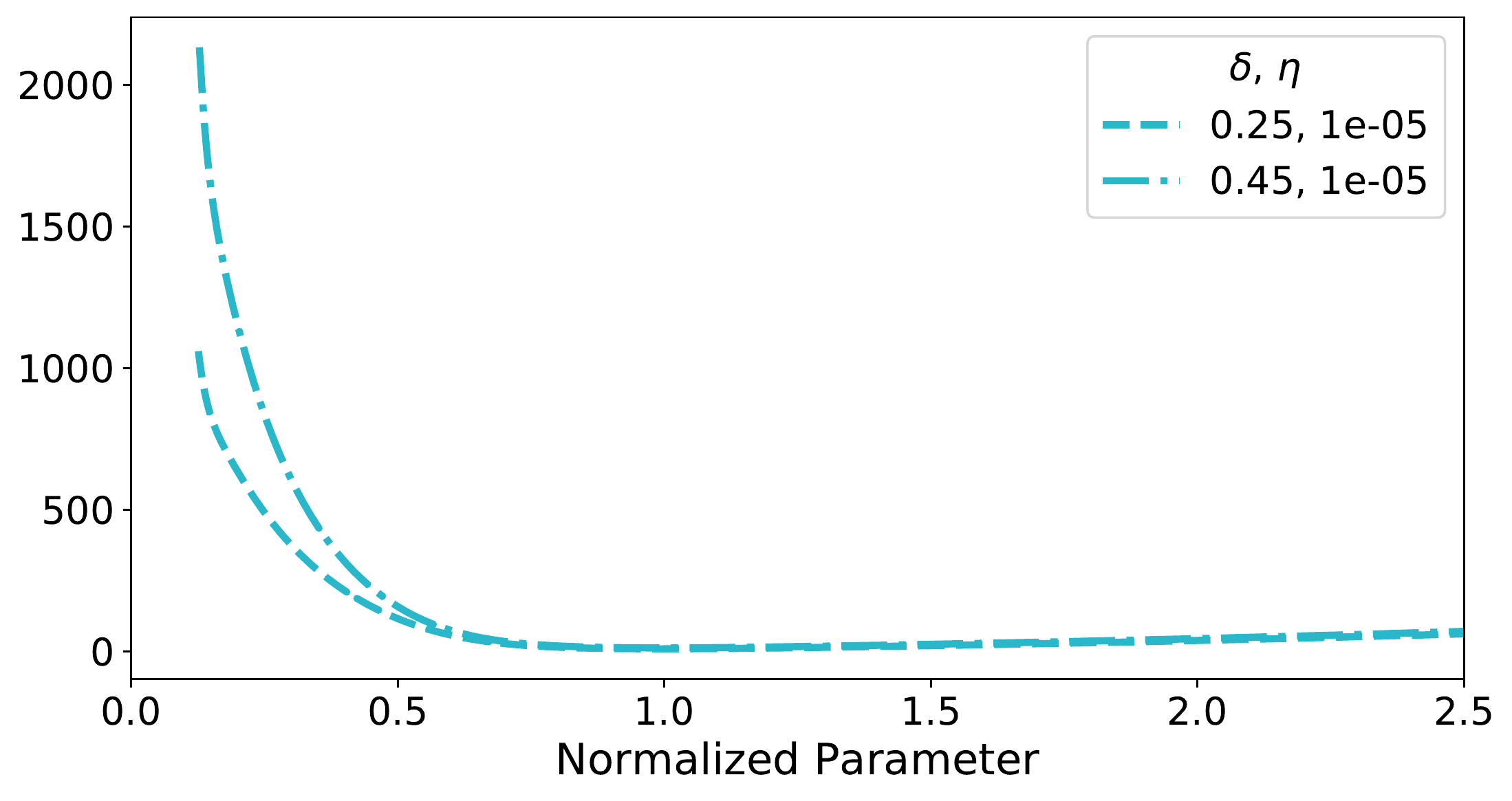}
  \includegraphics[width=.9\textwidth]{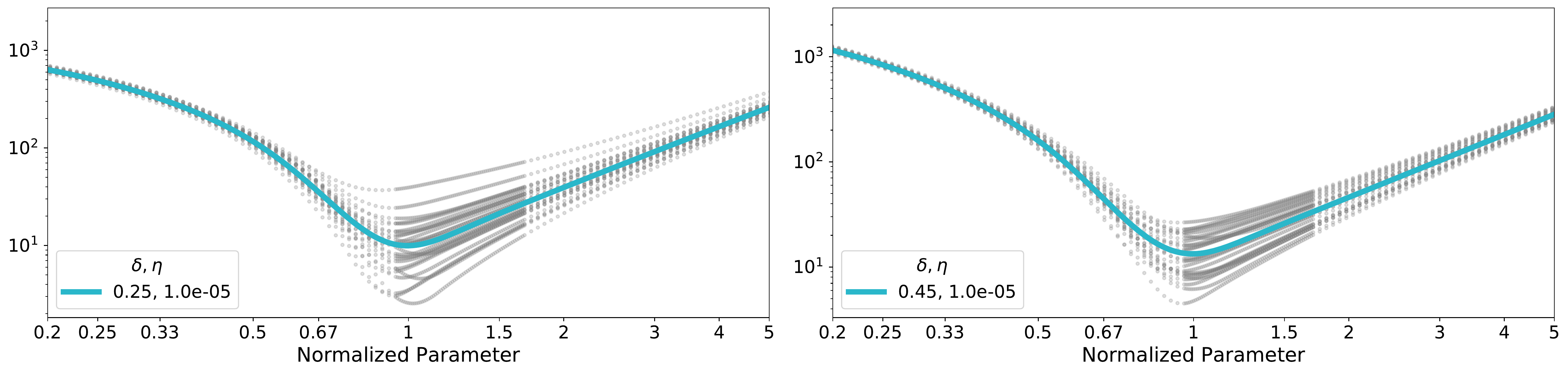}
  \caption{Average loss of {\qp} plotted with respect to its normalized
    parameter in the low-noise, high-sparsity regime. Parameters for the
    simulation are $(s, N, \eta, k, n) = (1, 10^{4}, 10^{-5}, 25, 201)$. The
    aspect ratio of the matrix $A \in \reals^{m \times N}$ with
    $A_{ij} \iid \mathcal{N}(0, m^{-1})$ takes values $\delta = 0.25, 0.45$, as
    shown in the legend. The data in the top row are visualized on a
    $\log$-$\log$ scale (left) and linear scale (right). The bottom row depicts
    the quality of fit for the RBF approximation of the average loss for
    $\delta = 0.25$ (left) and $\delta = 0.45$ (right). }
  \label{fig:qp-instability-2}
\end{figure*}

\subsection{{\bp} numerics}
\label{sec:bp-numerics}

This section includes numerical simulations depicting the sensitivity of {\bp}
to its parameter choice. These numerics serve to support the asymptotic theory
developed in \autoref{sec:analysis-bp}.

The graphics in \autoref{fig:bp-numerics-1} serve as an initial depiction of
{\bp} parameter instability, depicting the average loss for each program and for
$N \in \{4000, 7000\}$, $\delta \in \{ 0.1, 0.25, 0.45\}$. Each plot depicts the
average loss as a function of the normalized parameter for {\ls} (green), {\bp}
(orange) and {\qp} (blue). The domain of the normalized parameter in each plot
is $(0.2, 5)$. A single realization of $A$ was fixed and the average loss was
computed from $k = 50$ realizations of the noise by constructing a function
approximator using radial basis function approximation with a multiquadric
kernel. The RBF approximator was evaluated on a logarithmically spaced grid of
$n_{\text{rbf}} = 301$ points centered about $1$. The loss values for {\ls} and
{\bp} were computed by using the program equivalence described
by~\autoref{prop:foucart-program-equivalence}. In particular, for computational
expediency, once $A$ and $z$ were fixed, the \textsc{Lasso} program was solved
only using {\qp}, for all $\lambda$ in a specified range. For each
$x^{\sharp}(\lambda)$, we obtained $\hat x (\tau)$ and $\tilde x (\sigma)$ using
that $\hat x(\tau) = \tilde x (\sigma) = x^{\sharp}(\lambda)$ for
$\tau := \|x^{\sharp}(\lambda) \|_{1}$ and
$\sigma := \|y - A x^{\sharp}(\lambda)\|_{2}$.

For convenience, we refer to each plot in \autoref{fig:bp-numerics-1} \emph{via}
its (row, column) position in the figure. The collection of plots serves to
depict how the average loss changes as a function of $N$ and $\delta = m/N$ when
$\eta = 1$. Namely, as $N$ increases, the average loss for {\bp} becomes sharper
about the optimal parameter choice. In addition, as $\delta$ increase, we
observe the same phenomenon. In \autoref{fig:bp-numerics-2}, similar content is
depicted, but for $\eta = 100$. In this case, $n_{\text{rbf}} = 501$ was used.

Specific paramter settings for the RBF approximation for each set of problem
parameters and program are detailed in
\autoref{tab:rbf-parameter-settings}. Because {\bp} parameter instability is not
easily visualized in small dimensions (\eg for $N < 10^{6}$), we supply several
plots visualizing the quality of the RBF approximation. Namely, approximation
quality plots corresponding with \autoref{fig:bp-numerics-1} may be found in
\autoref{fig:bp-numerics-1-approx-quality}; and approximation quality plots are
included in \autoref{fig:bp-numerics-2}. Each row of these plots is a triptych;
each column corresponds to a program: {\ls} for the left-most, {\bp} in the
centre; and {\qp} on the right. These plots depict a single line and a
collection of points. The points correspond to individual loss values for each
realization of the noise and each normalize parameter value computed. The line
corresponds to the RBF approximation of the average loss for that program. The
domain for the {\ls} plots is $(0.95, 0.95^{-1})$ in the normalized parameter
space. For {\bp} it is $(0.9, 0.9^{-1})$, and for {\qp} $(0.75, 0.75^{-1})$.

In \autoref{fig:bp-numerics-1-approx-quality}, one may observe by inspection
that the loss realizations for {\ls} and {\qp} typically are achieved very close
to $1$. In contrast, there is a relatively wider range in the domain for where
the {\bp} loss achieves its optimum. This is integral to how {\bp} risk is
sensitive to the choice of $\sigma$.

There appear to be two competing factors that impact sensitivity to parameter
choice and program optimality. The first is stability of the parameter with
respect to variation due to the noise realization. This has already been
described: because $\sigma (\lambda^{*}, z)$ varies greatly as a function of
$z$, program optimality is destroyed by suboptimal loss values near
$\sigma = \sigma^{*}$. On the other hand, this also tends to have somewhat of a
smooth effect about the optimal normalized parameter. Thus, as
$\tau(\lambda^{*}, z)$ does not vary as greatly in this manner, its sensitivity
to parameter choice is not smoothed due to local averaging effects.

\begin{figure*}[h]
  \centering
  \includegraphics[width=.45\textwidth]{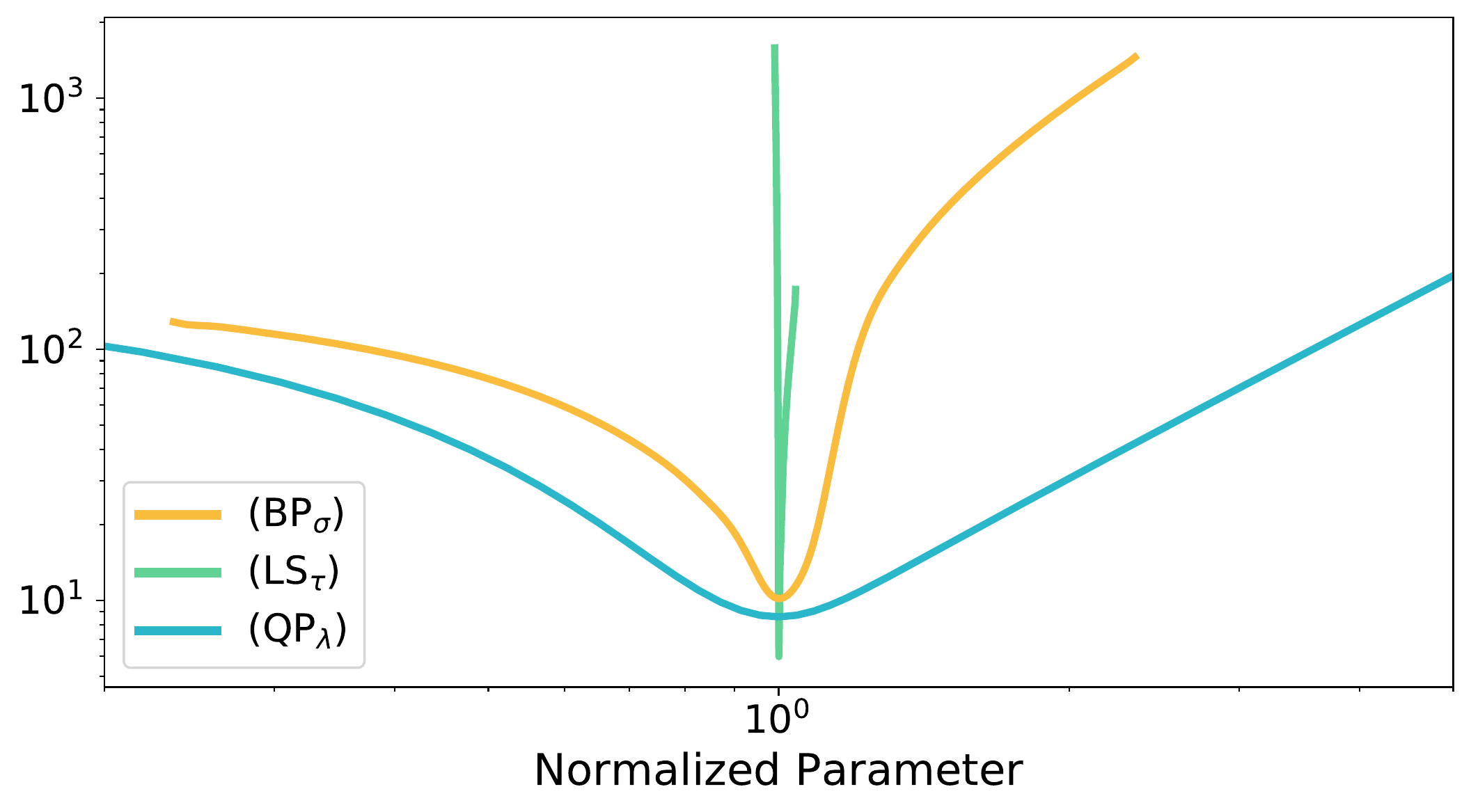}\hfill
\includegraphics[width=.45\textwidth]{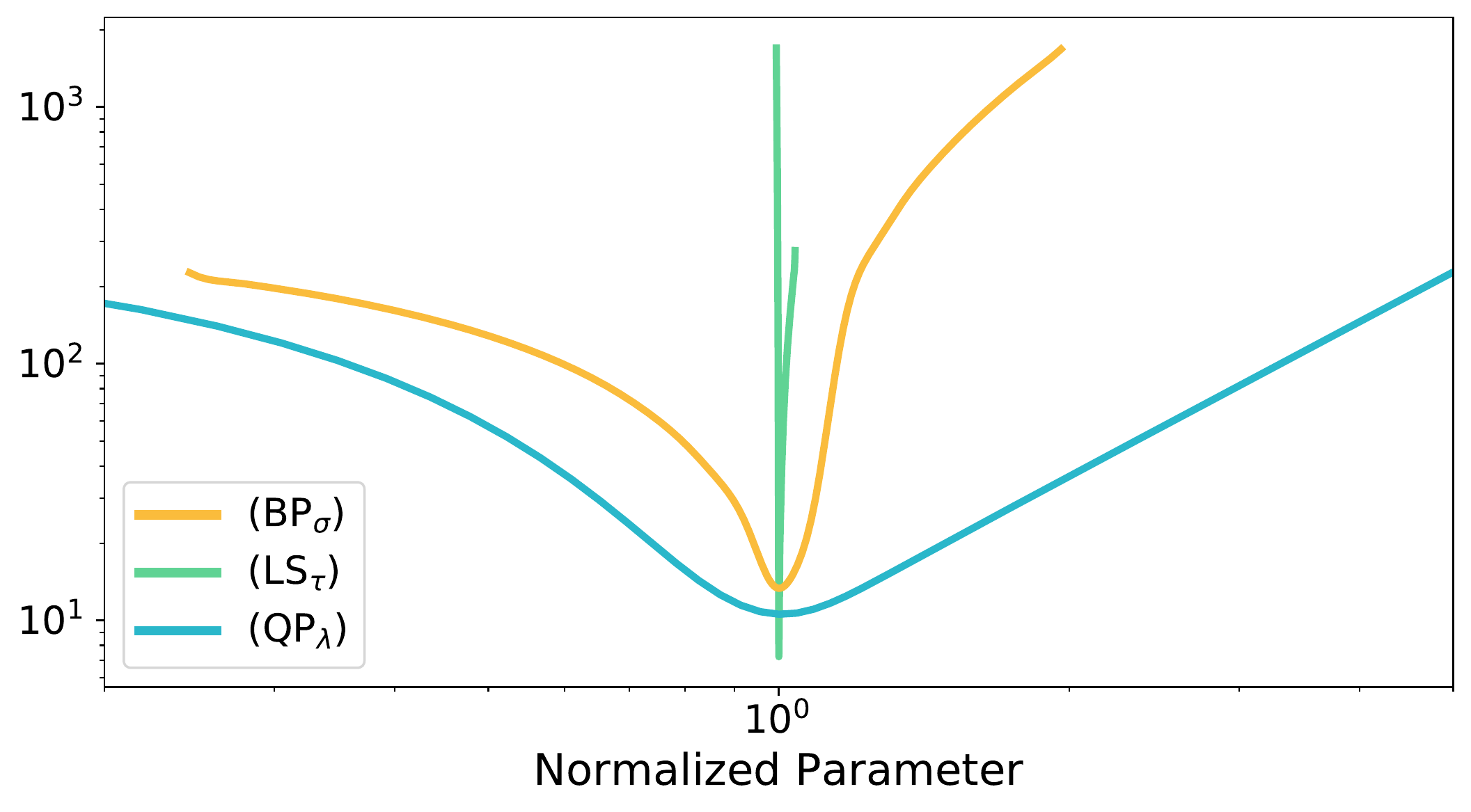}

  \includegraphics[width=.45\textwidth]{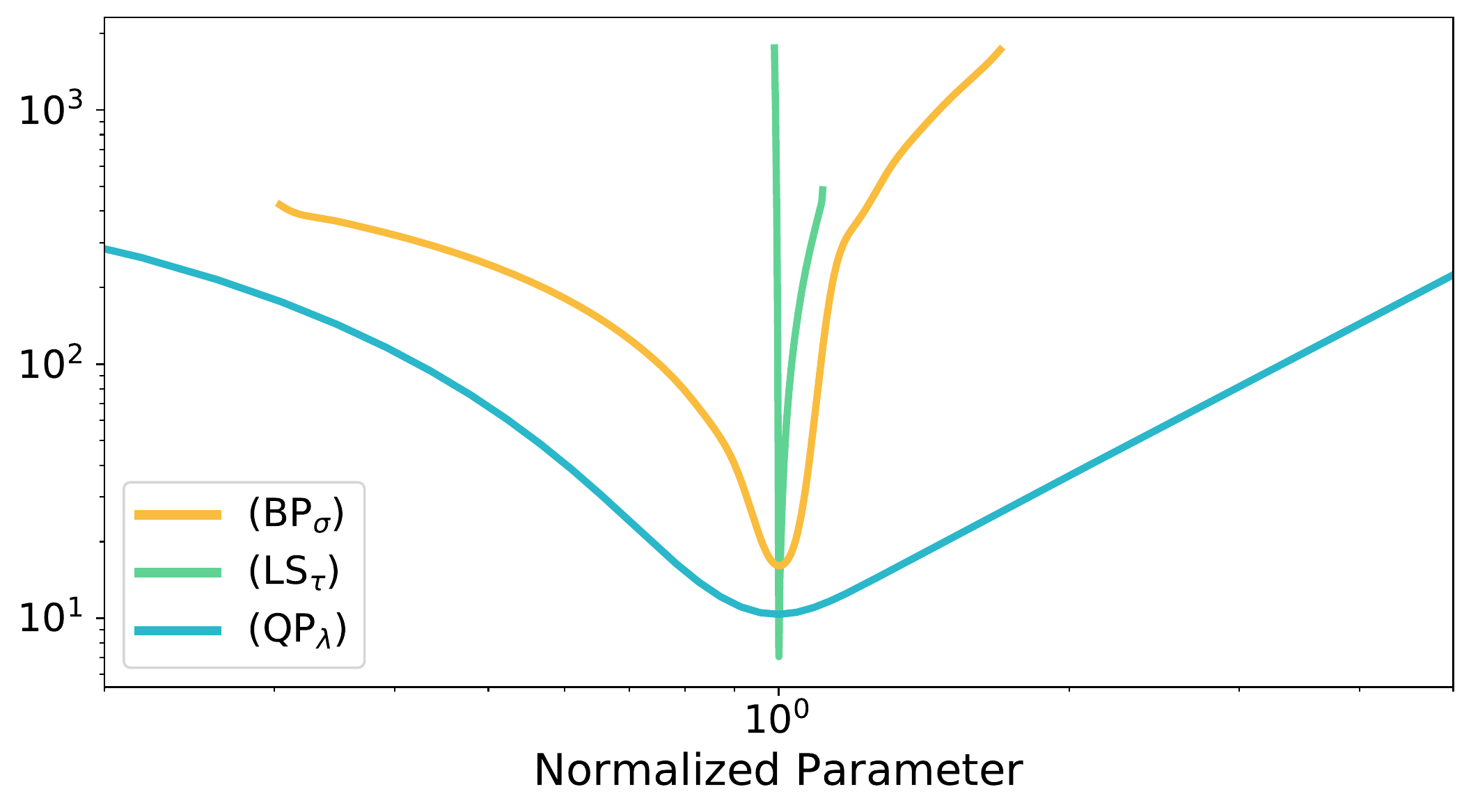}\hfill
\includegraphics[width=.45\textwidth]{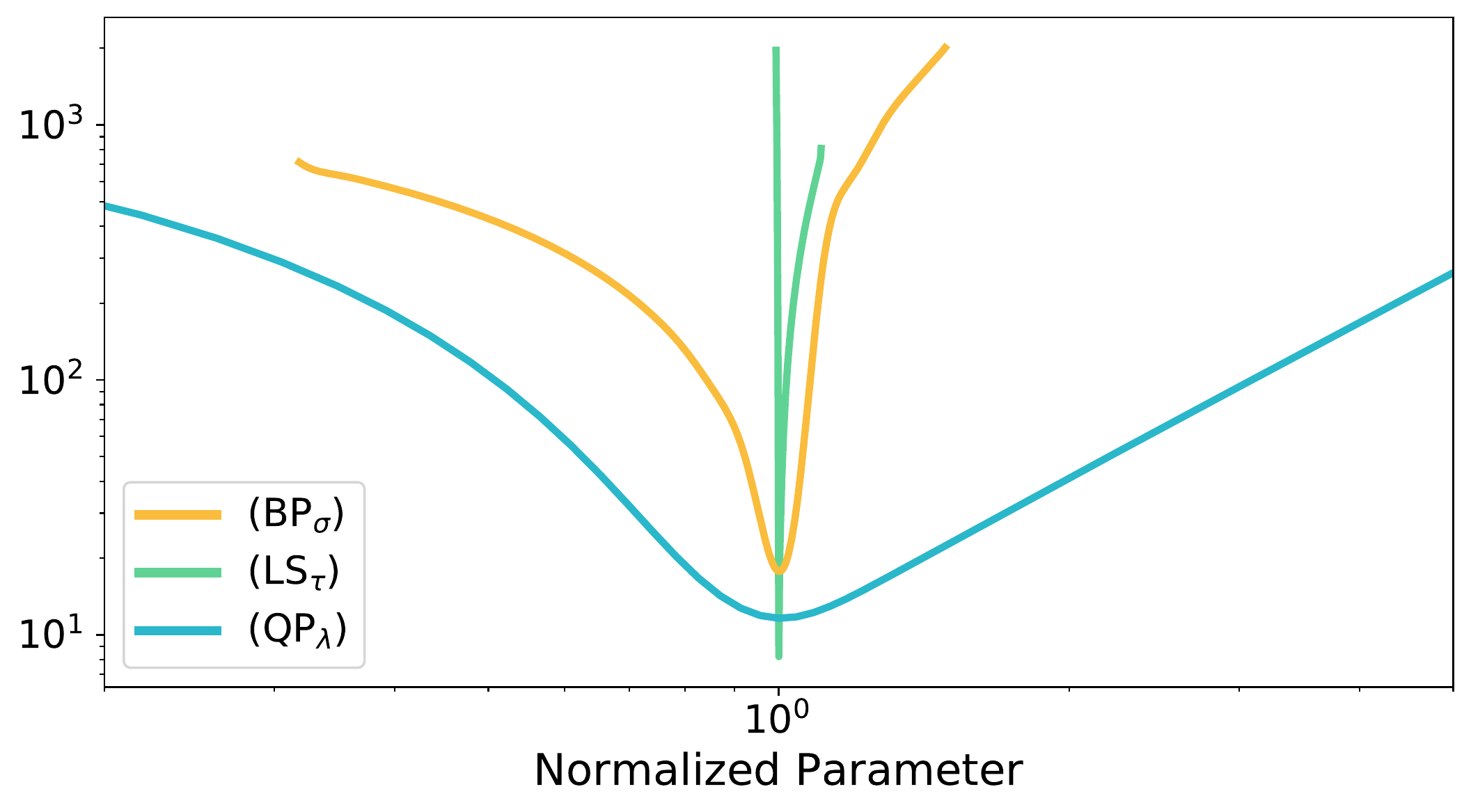}

  \includegraphics[width=.45\textwidth]{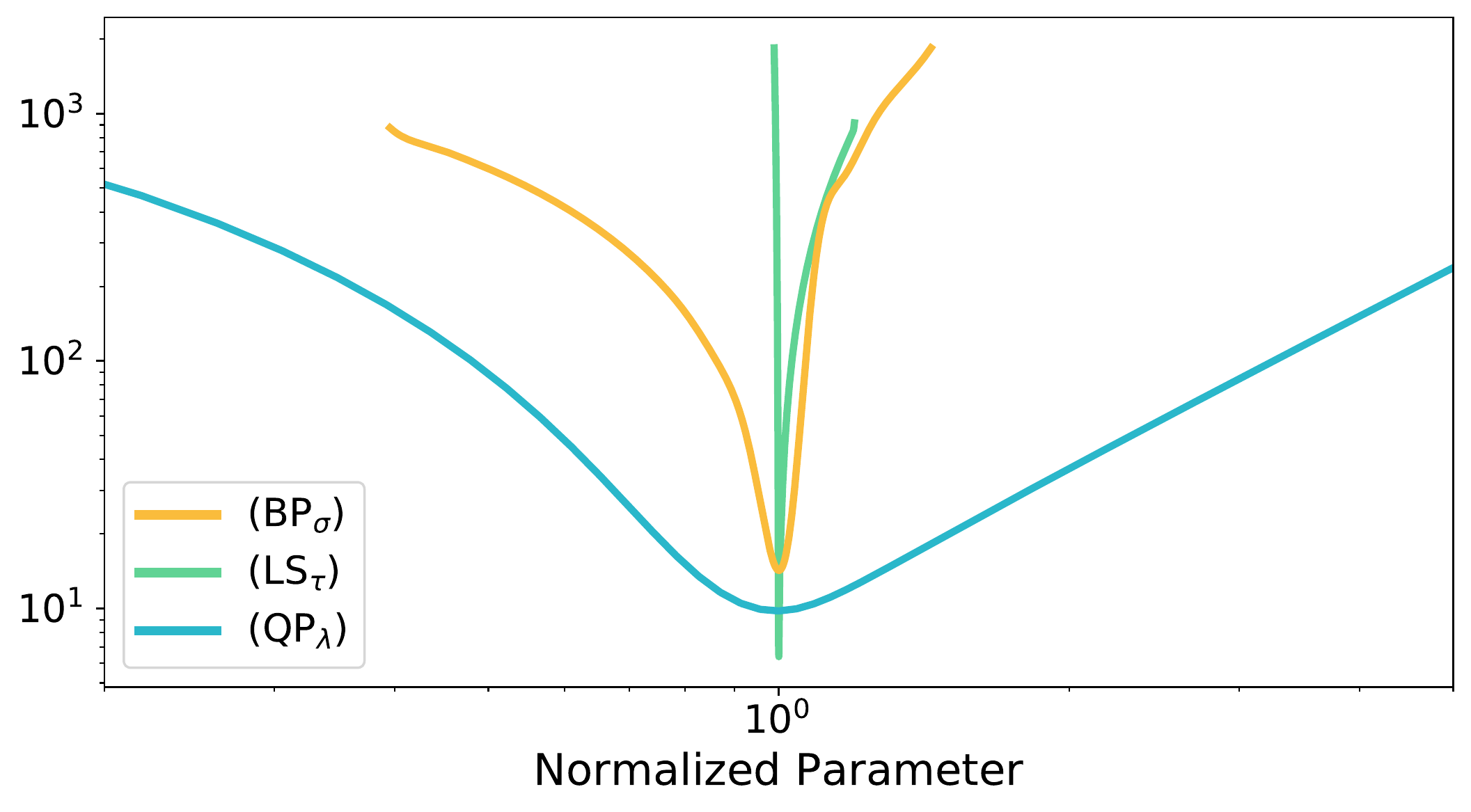}\hfill\hphantom{a}

  \caption{Each plot depicts the average loss as a function of the normalized
    parameter for each of the three programs under consideration. The collection
    of plots depicts how the average loss changes as a function of $N$ and
    $\delta = m/N$. Details for each plot will be given by referencing the (row,
    column) position of the plot in this figure. The domain of the normalized
    parameter in each plot is $(0.2, 5)$. A single realization of $A$ was fixed
    and the average loss was computed from $k = 50$ realizations of the noise by
    constructing a function approximator using radial basis function
    approximation with a multiquadric kernel. The RBF approximator was evaluated
    on a logarithmically spaced grid of $n = 301$ points centered about $1$. %
    \textbf{(1,1):} $(s, N, \delta, \eta) = (1, 4000, 0.1, 1)$;
    \textbf{(1,2):} $(s, N, \delta, \eta) = (1, 7000, 0.1, 1)$;
    \textbf{(2,1):} $(s, N, \delta, \eta) = (1, 4000, 0.25, 1)$;
    \textbf{(2,2):} $(s, N, \delta, \eta) = (1, 7000, 0.25, 1)$;
    \textbf{(3,1):} $(s, N, \delta, \eta) = (1, 4000, 0.45, 1)$.}
  \label{fig:bp-numerics-1}
\end{figure*}

\begin{figure*}[h]
  \centering

  \begin{minipage}[h]{.7\linewidth}
      \large $N = 4000, \eta = 1$
      \hfill
      \large $N = 7000, \eta = 1$
  \end{minipage}

  \includegraphics[width=.45\textwidth]{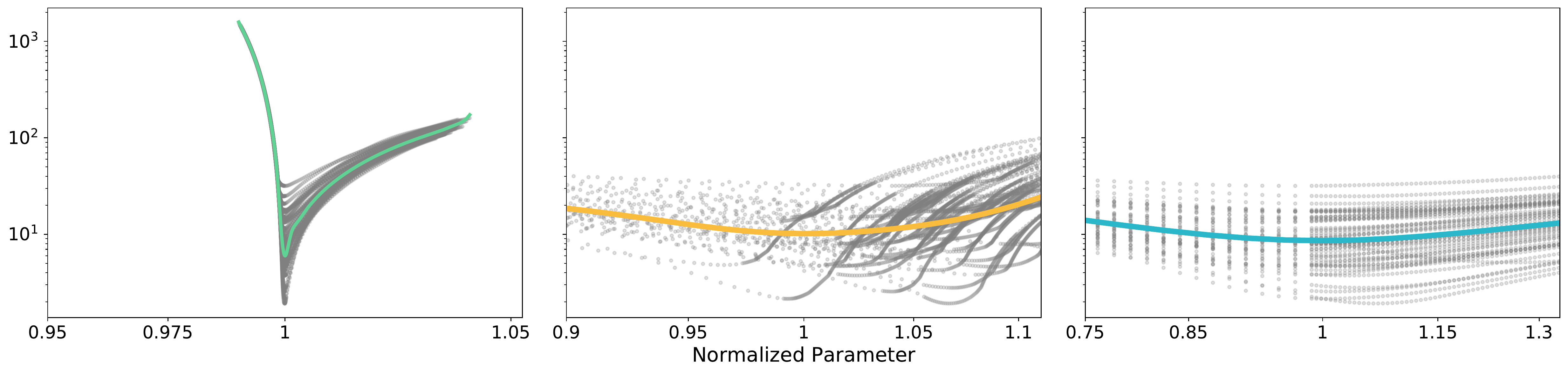}
  \hfill
\includegraphics[width=.45\textwidth]{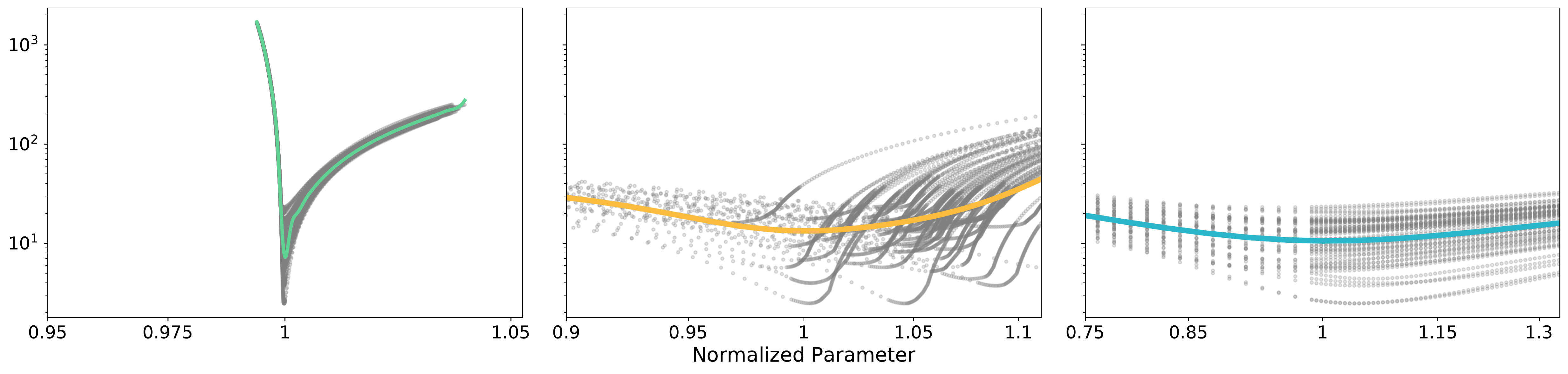}

\includegraphics[width=.45\textwidth]{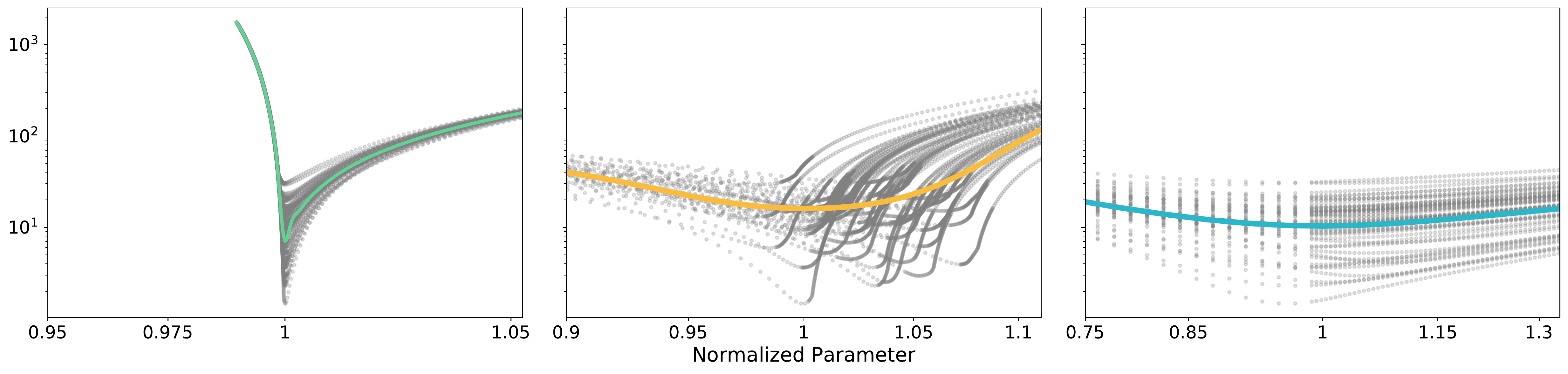}
\hfill
\includegraphics[width=.45\textwidth]{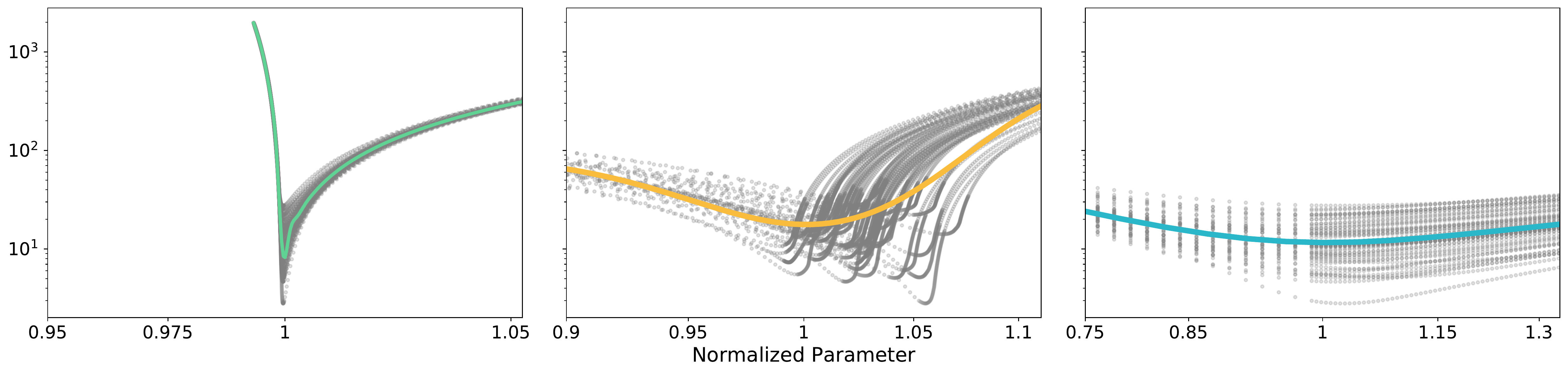}

\includegraphics[width=.45\textwidth]{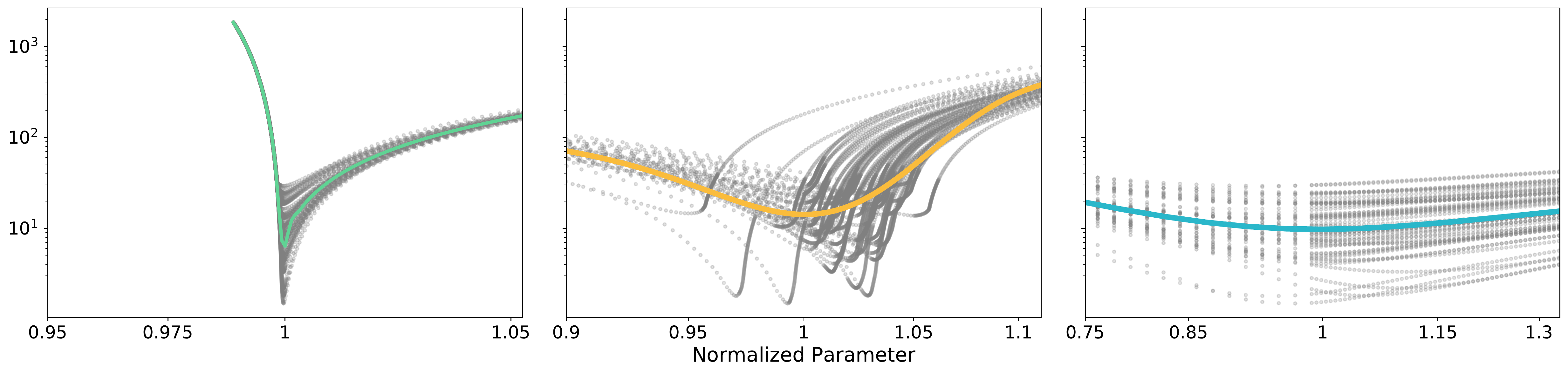}
\hfill\hphantom{a}

  

  \caption{Each plot depicts the quality of the RBF approximation about the
    optimal normalized parmaeter. The left-most plot is in every case depicting
    the loss and (approximate) average loss of {\ls}; the middle that for {\bp};
    and the right that for {\qp}. Top-to-bottom:%
    $(s, N, \delta, \eta) = (1, 4000, 0.1, 1)$; 
    $(s, N, \delta, \eta) = (1, 4000, 0.25, 1)$;
    $(s, N, \delta, \eta) = (1, 4000, 0.45, 1)$;
    $(s, N, \delta, \eta) = (1, 7000, 0.1, 1)$; 
    $(s, N, \delta, \eta) = (1, 7000, 0.25, 1)$.
    }
  \label{fig:bp-numerics-1-approx-quality}
\end{figure*}

\begin{figure*}[h]
  \centering
  
\includegraphics[width=.45\textwidth]{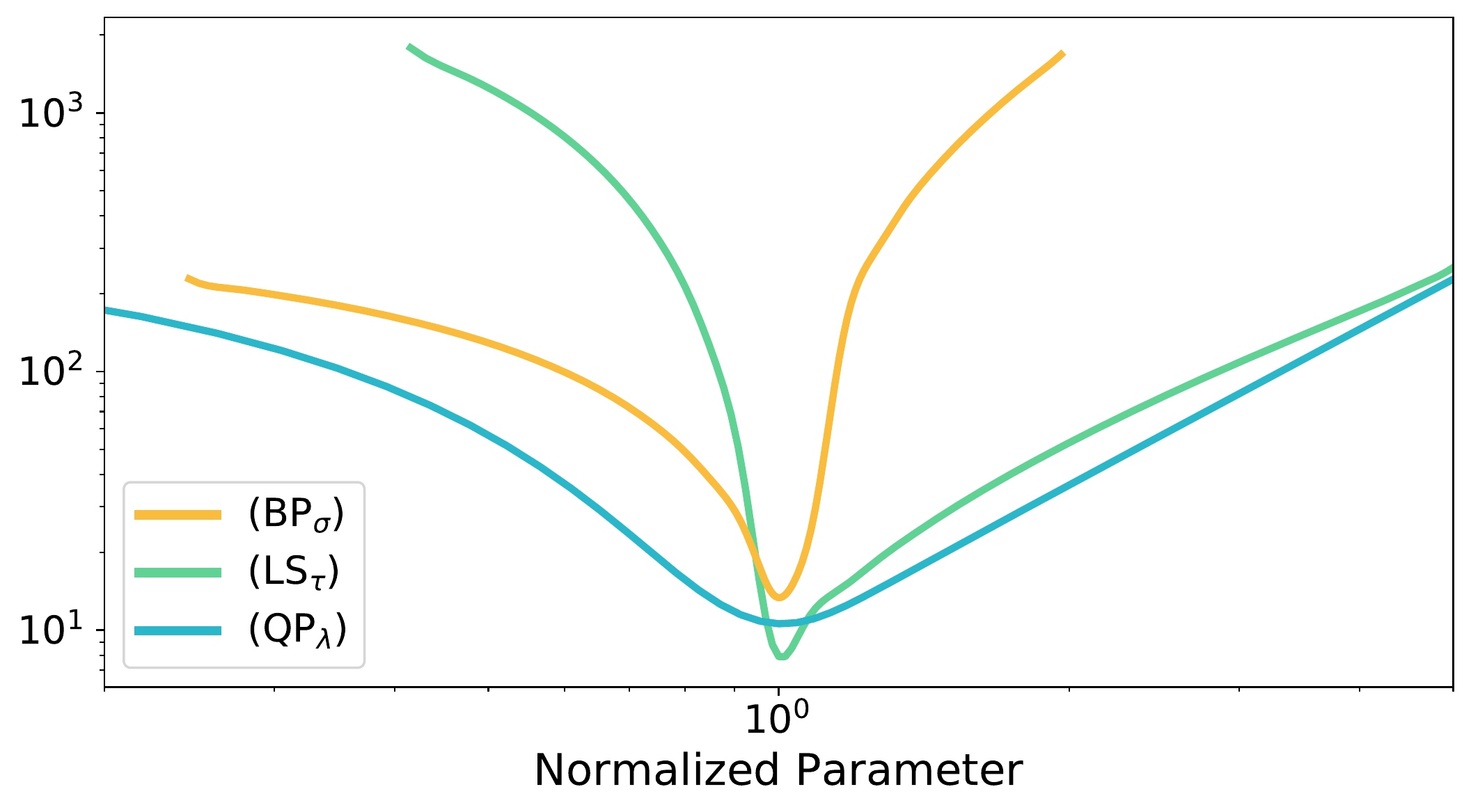}
\hfill \includegraphics[width=.45\textwidth]{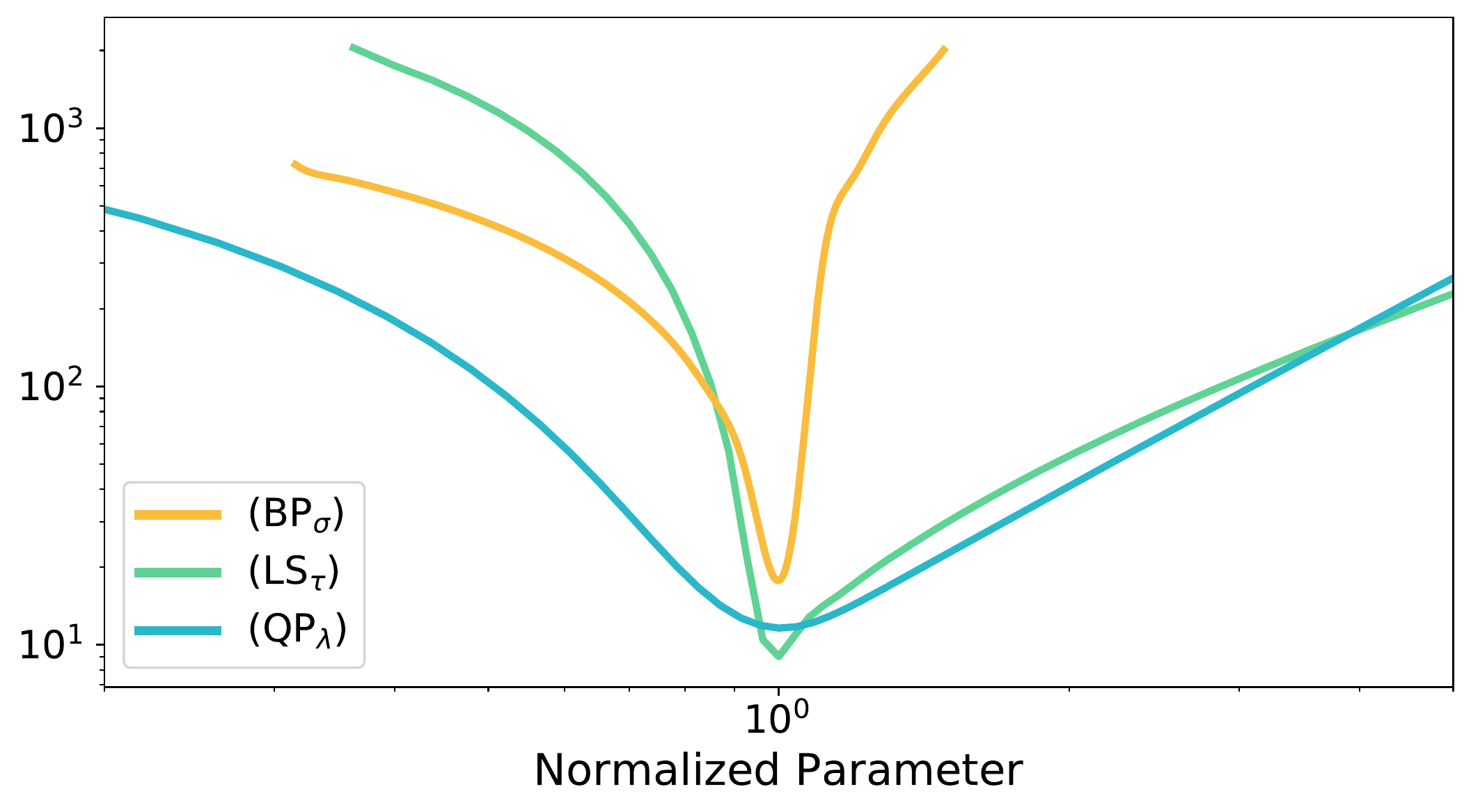}

  \vspace{-6pt}
  \textcolor{lightgrey}{\rule{.45\textwidth}{.75pt}}\hfill
  \textcolor{lightgrey}{\rule{.45\textwidth}{.75pt}}
  \vspace{6pt}

\includegraphics[width=.45\textwidth]{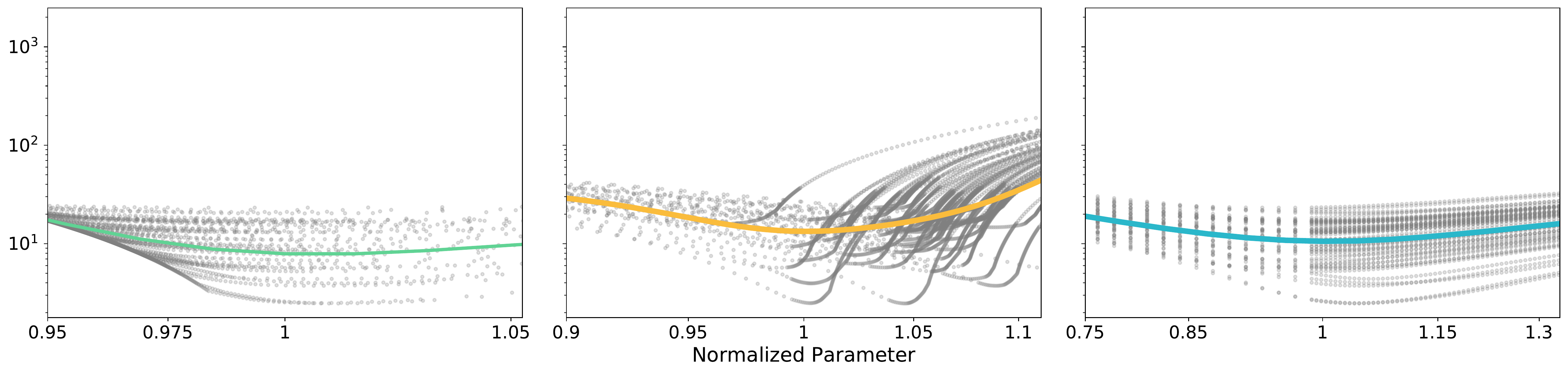}
\hfill
\includegraphics[width=.45\textwidth]{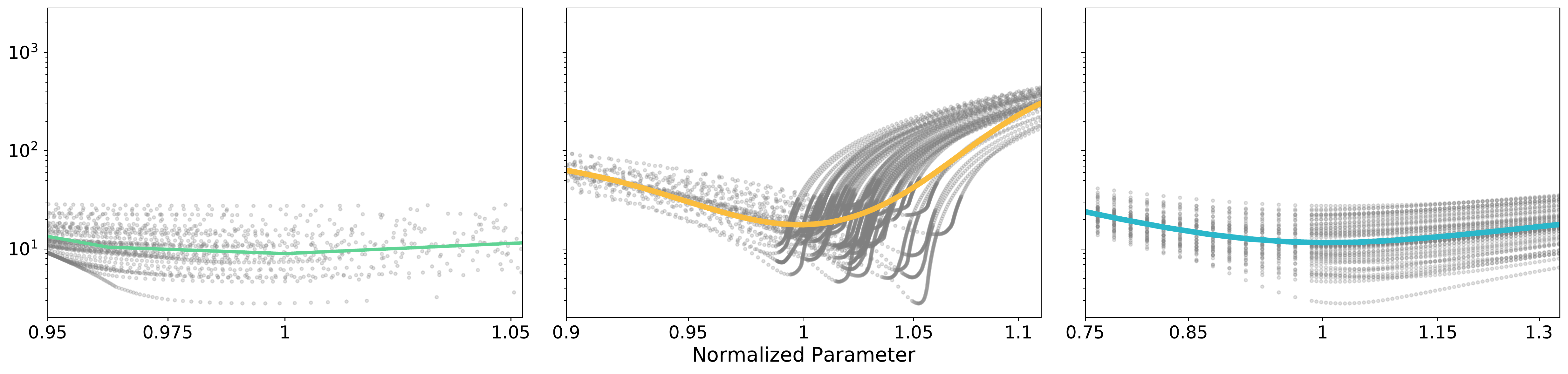}

\caption{\textbf{Top row:} Each plot depicts the average loss as a function of
  the normalized parameter for each of the three programs under
  consideration. The collection of plots depicts how the average loss changes as
  a function of $\delta = m/N$. The domain of the normalized parameter in each
  plot is $(0.2, 5)$. A single realization of $A$ was fixed and the average loss
  was computed from $k = 50$ realizations of the noise by constructing a
  function approximator using radial basis function approximation with a
  multiquadric kernel. The RBF approximator was evaluated on a logarithmically
  spaced grid of $n_{\text{rbf}} = 501$ points centered about
  $1$. \textbf{Bottom row:} Each plot depicts the quality of the RBF
  approximation about the optimal normalized parmaeter. In each triptych, the
  left plot depicts the loss and (approximate) average loss of {\ls}; the middle
  that for {\bp}; and the right that for {\qp}. \textbf{Left column:}
  $(s, N, \delta, \eta) = (1, 7000, 0.1, 100)$; %
  \textbf{Right column:} $(s, N, \delta, \eta) = (1, 7000, 0.25, 100)$.}
  \label{fig:bp-numerics-2}
\end{figure*}

\subsection{More synthetic examples}
\label{sec:more-synth-exampl}

In this section, we display three synthetic examples where only $s$ and $\eta$
were changed. Thus, the effect of sparsity and noise scale is readily
observed. In each of these figures, the aspect ratio of the measurement matrix
was $\delta = 0.25, 0.45$ for the left- and right-hand plots respectively
(except \autoref{fig:synthetic-example-1} where $\delta = 0.25$ was too small to
achieve recovery). The average loss curves for each program were computed from
$k = 25$ realizations of loss curves that were, themselves, generated on a
logarthmically spaced grid of $n = 201$ points centered about the optimal choice
of the normalized parameter, $\rho = 1$. The loss realizations were again
computed by solving {\qp} and using the correspondence between \textsc{Lasso}
programs to compute the loss curves for {\ls} and {\bp}.

The bottom row of \autoref{fig:synthetic-example-0} and
\autoref{fig:synthetic-example-2}, and the right half of
\autoref{fig:synthetic-example-1} depict the quality of the approximation of the
average loss curve for each program. Specifically, each program appears with its
own facet, in which are displayed the individual loss realizations
$L(\rho_{i}; x_{0}, A, \eta z_{j}), i \in [n], j \in [k]$ as grey points, and
the average loss $\bar L(\rho; x_{0}, A, \eta)$ as a coloured line. The top row
of \autoref{fig:synthetic-example-0} and \autoref{fig:synthetic-example-2}, and
the left half of \autoref{fig:synthetic-example-1} compare the the average loss
curves for each program, where the average losses are plotted on a log-log scale
with respect to the normalized parameter.

The first figure, \autoref{fig:synthetic-example-0}, displays a setting similar
to \autoref{fig:low-noise-numerics}. The noise scale was $\eta = 10^{-5}$ and
$s = 1$. Thus, the setting depicts the low-noise high-sparsity regime. The
second figure, \autoref{fig:synthetic-example-1}, depicts a moderately low-noise
regime, with a large value of $s$ (so large that $\delta = 0.25$ did not yield
adequate recovery). Thus, this figure depicts the regime in which $x_{0}$ is
near the limit of acceptable sparsity for the CS regime. Finally, the parameter
settings for \autoref{fig:synthetic-example-2} were $s = 100$ and $\eta =
100$. In particular, sparsity is modest, and the noise scale is large (the
variance equals the ambient dimension, $\eta^{2} = N = 10^{4}$).

It is readily observed that {\ls} is highly sensitive to its parameter choice in
both low-noise regimes. We observe that {\bp} becomes more sensitive to its
parameter choice as sparsity decreases from $750$ to $100$ to $1$. Finally, we
observe that {\qp} is most sensitive to its parameter choice in the low-noise
high-sparsity regime. This left-sided sensitivity is consistent with the theory
and numerical simulations for the corresponding proximal denoising setting
in~\cite{berk2019pdparmsens, berk2020sl1mpc}.

\begin{figure*}[h]
  \centering
  \includegraphics[width=.45\textwidth]{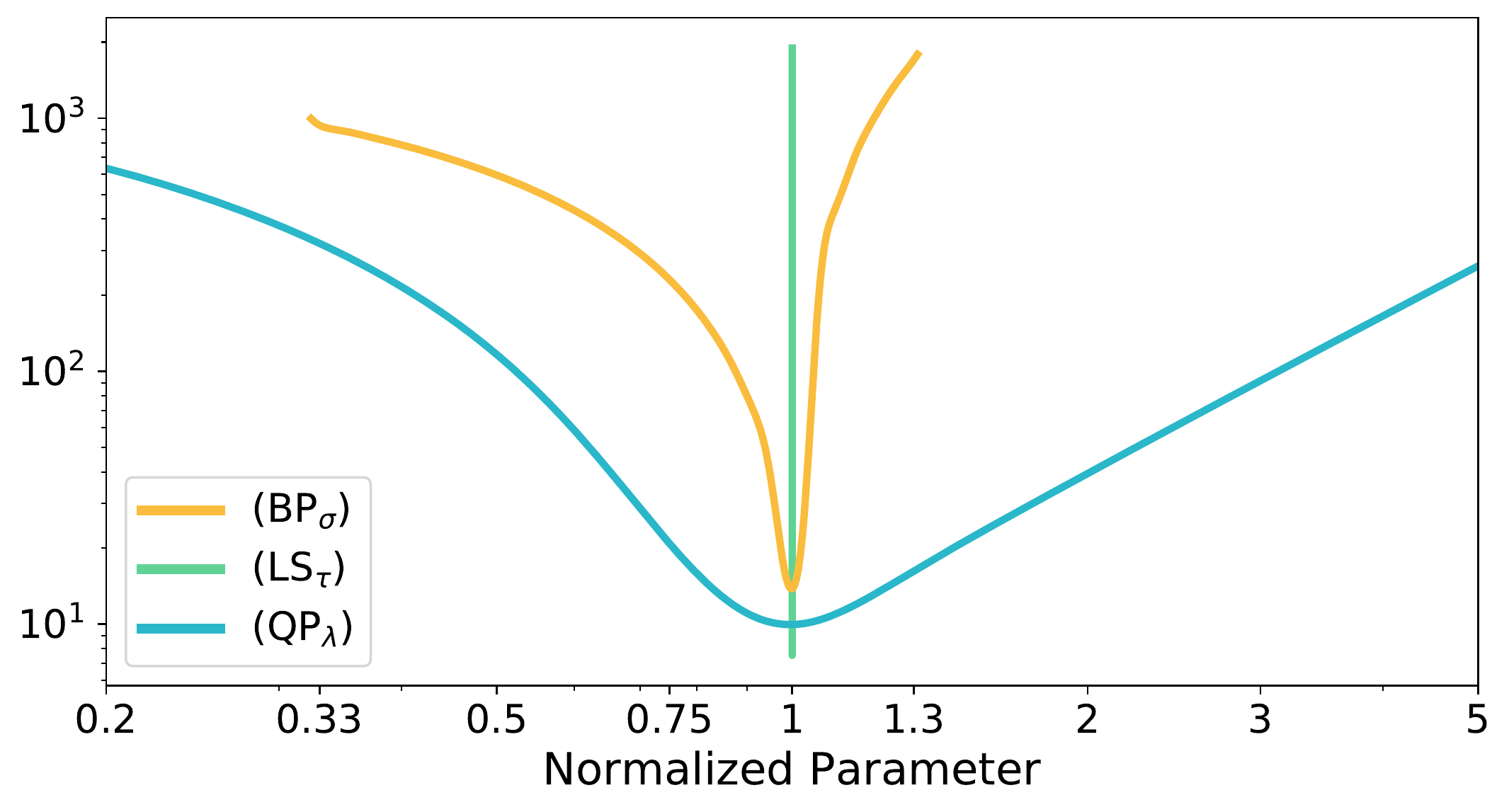}
  \hfill
\includegraphics[width=.45\textwidth]{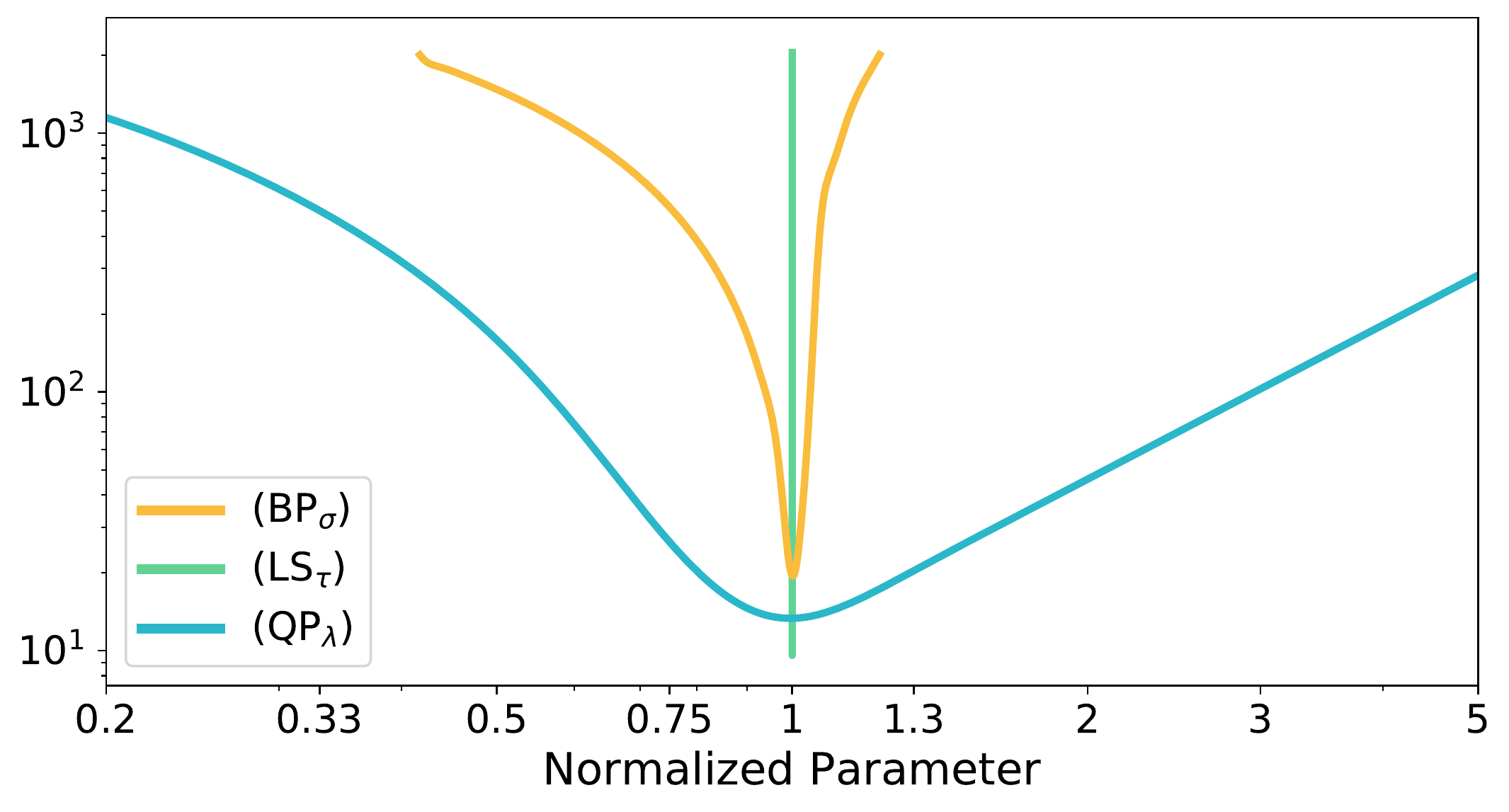}

  \includegraphics[width=.45\textwidth]{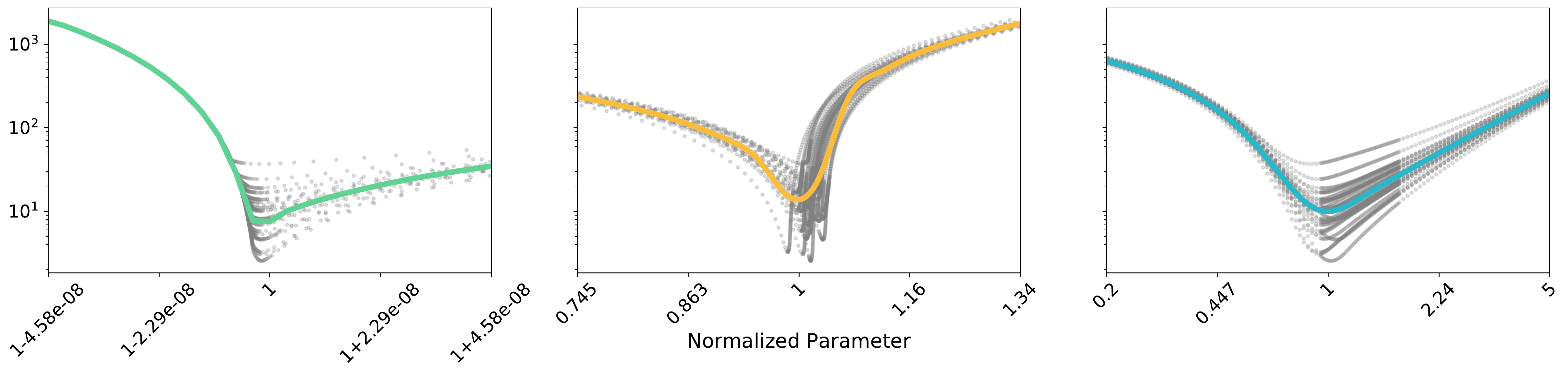}
  \hfill
\includegraphics[width=.45\textwidth]{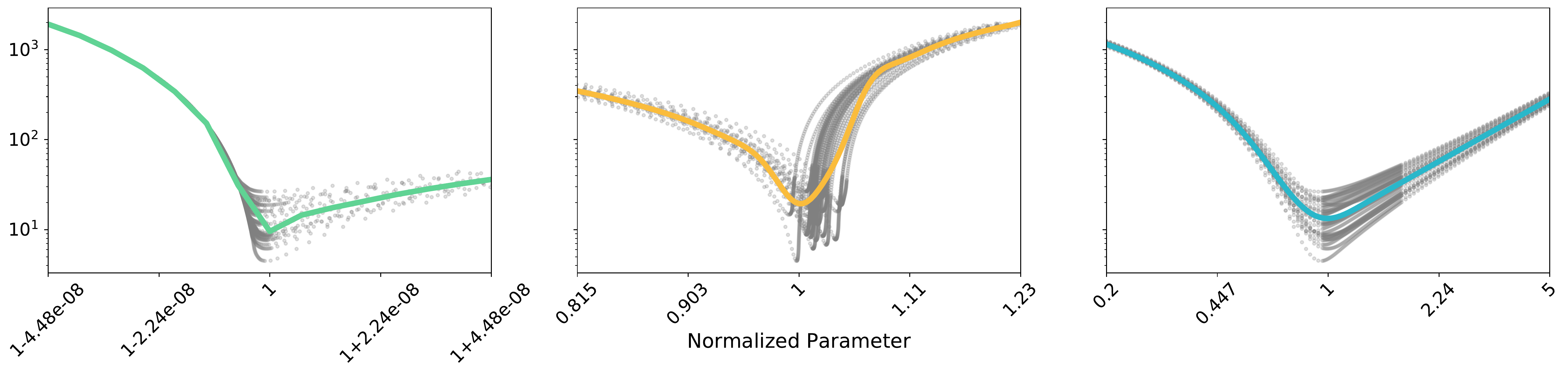}
\caption{Parameter instability numerics in the low-noise, high-sparsity
  regime. \textbf{Top row:} Average loss is plotted with respect to the
  normalized parameter for each program. \textbf{Bottom row:} Visualizations of
  RBF approximation quality for average loss (best seen on a
  computer). \textbf{Left:} %
  $(s, N, m, \eta, k, n) = (1, 10^{4}, 2500, 10^{-5}, 25, 201)$; \textbf{Right:}
  $(s, N, m, \eta, k, n) = (1, 10^{4}, 4500, 10^{-5}, 25, 201)$.}
  \label{fig:synthetic-example-0}
\end{figure*}

\begin{figure}[h]
  \centering
  \includegraphics[width=.44\textwidth]{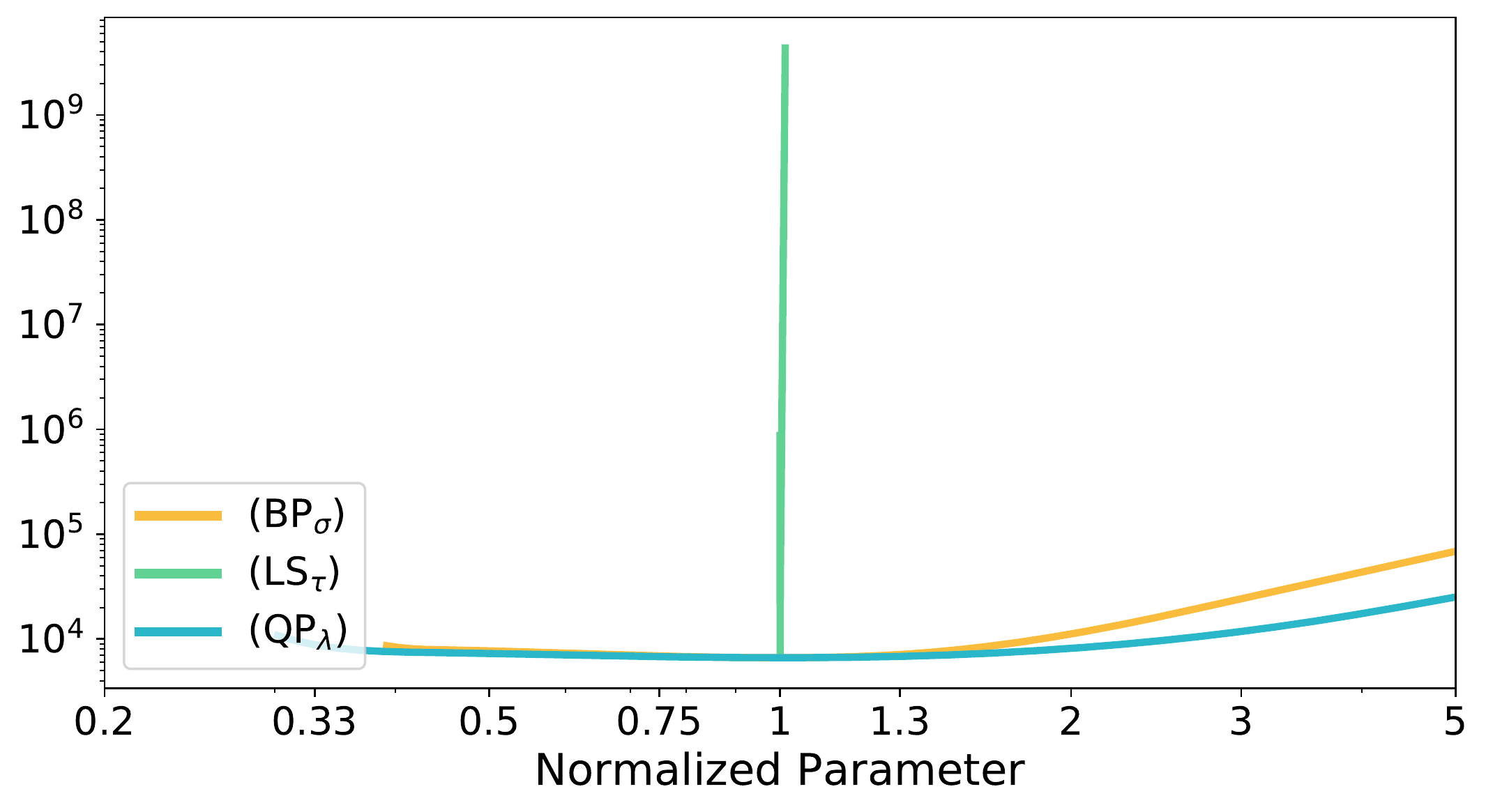}
\hfill  
\includegraphics[width=.46\textwidth]{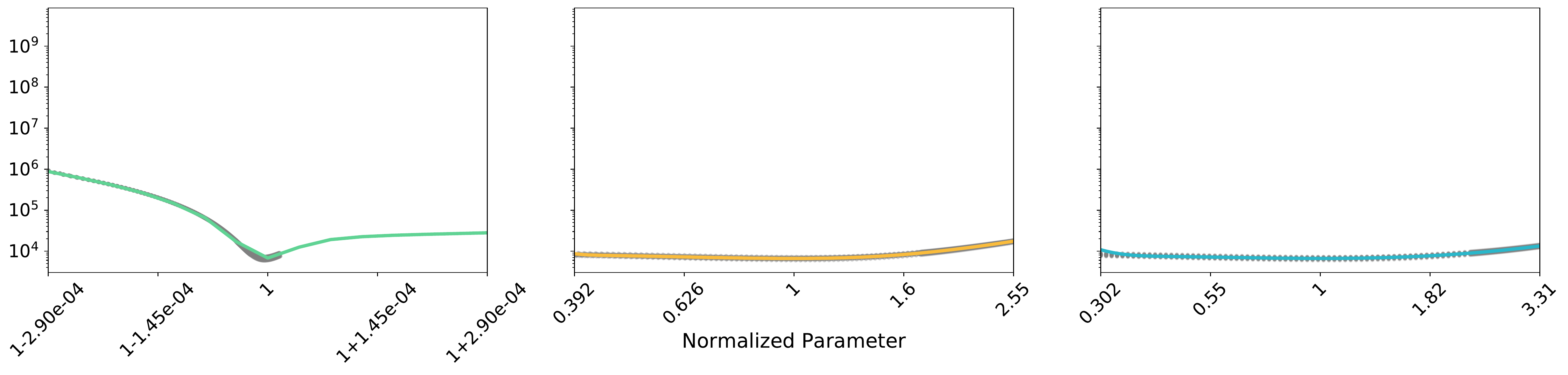}
\caption{Parameter instability numerics in the low-sparsity regime with
  parameters $(s, N, m, \eta, k, n) = (750, 10^{4}, 4500, 10^{-1}, 25,
  201)$. \textbf{Left:} Average loss is plotted with respect to the normalized
  parameter for each program. \textbf{Right:} Visualizations of the RBF
  approximation quality for average loss.}
  \label{fig:synthetic-example-1}
\end{figure}

\begin{figure*}[h]
  \centering
  \includegraphics[width=.45\textwidth]{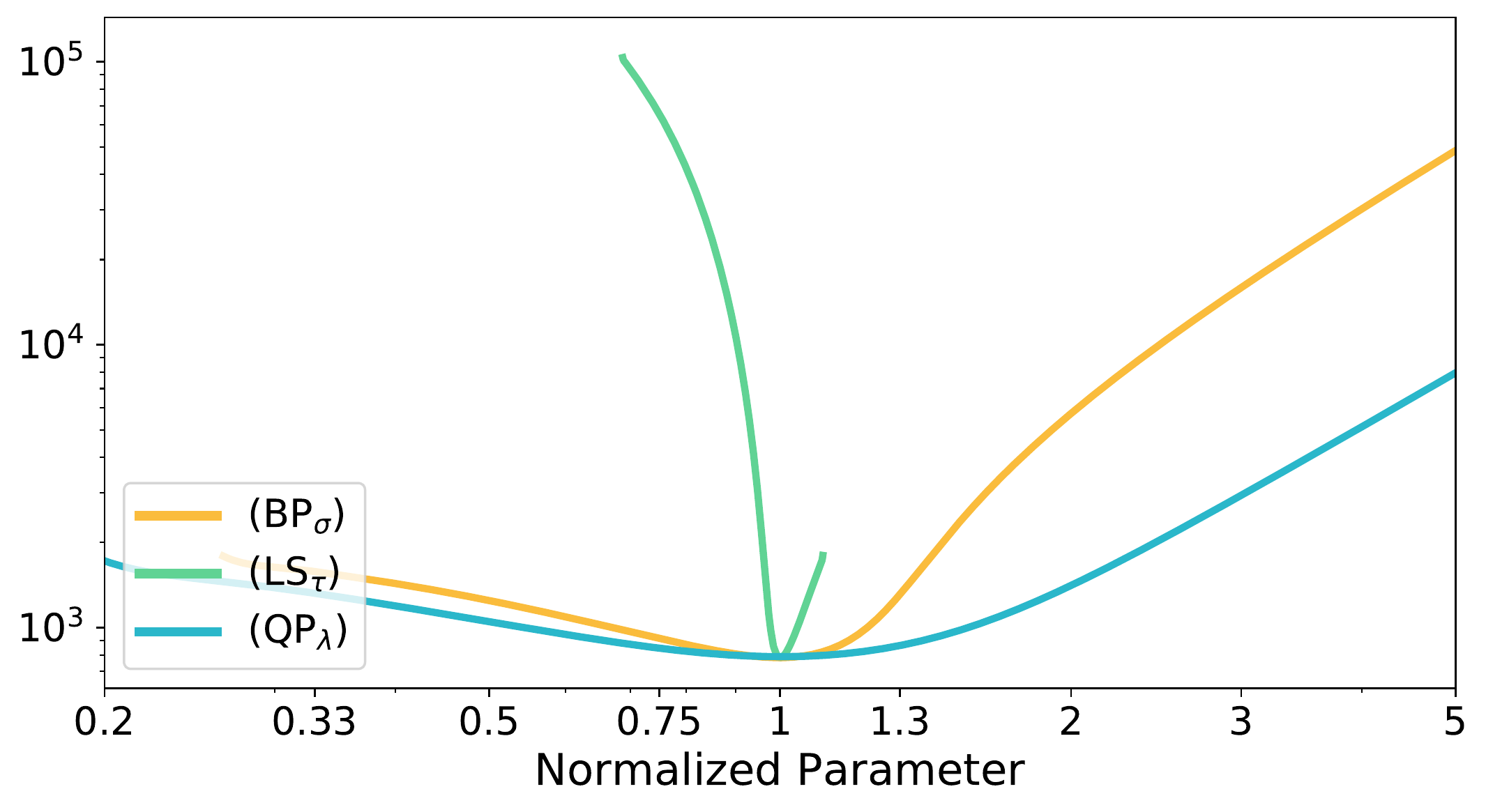}\hfill
\includegraphics[width=.45\textwidth]{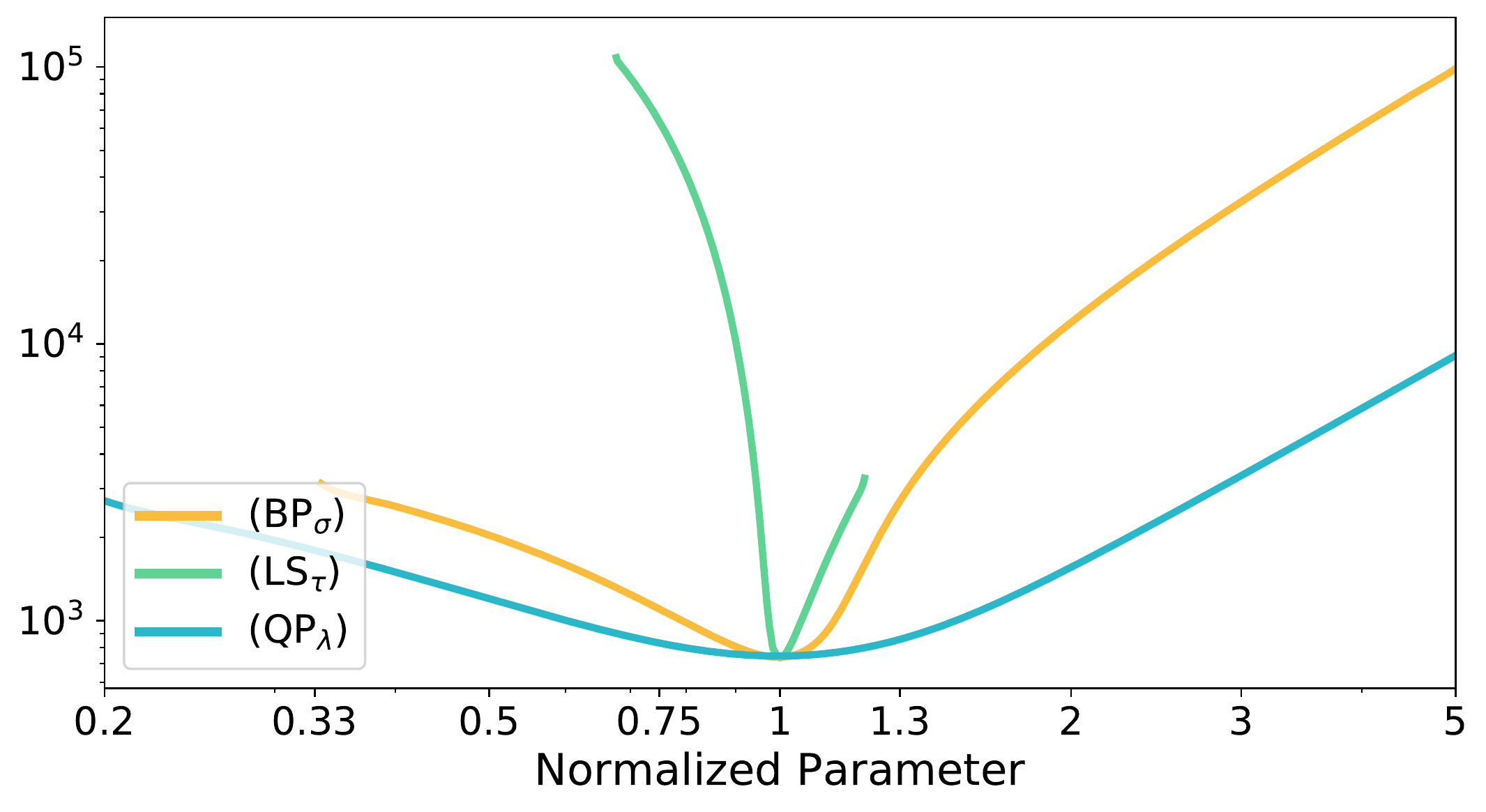}

\includegraphics[width=.45\textwidth]{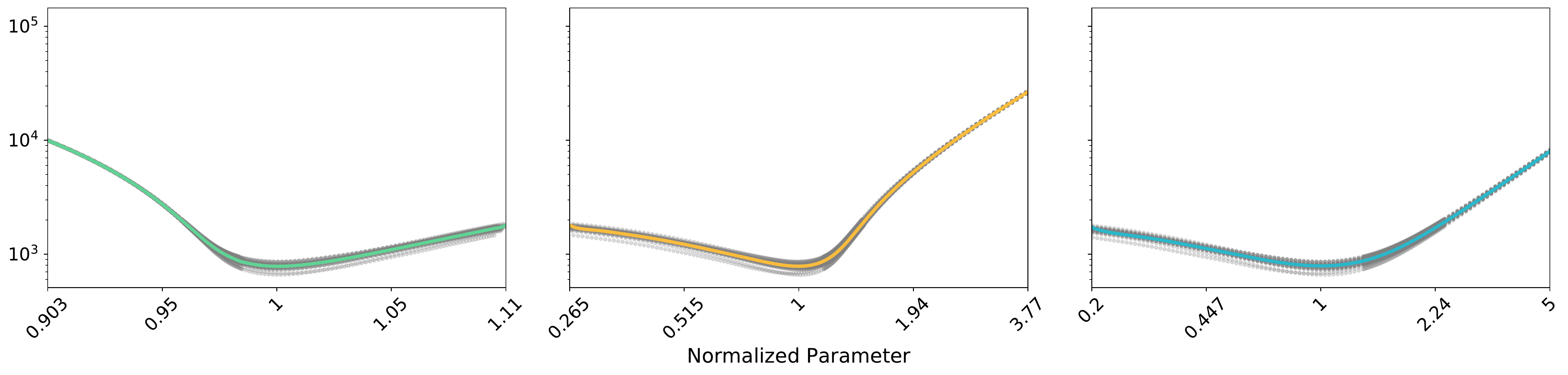}\hfill
\includegraphics[width=.45\textwidth]{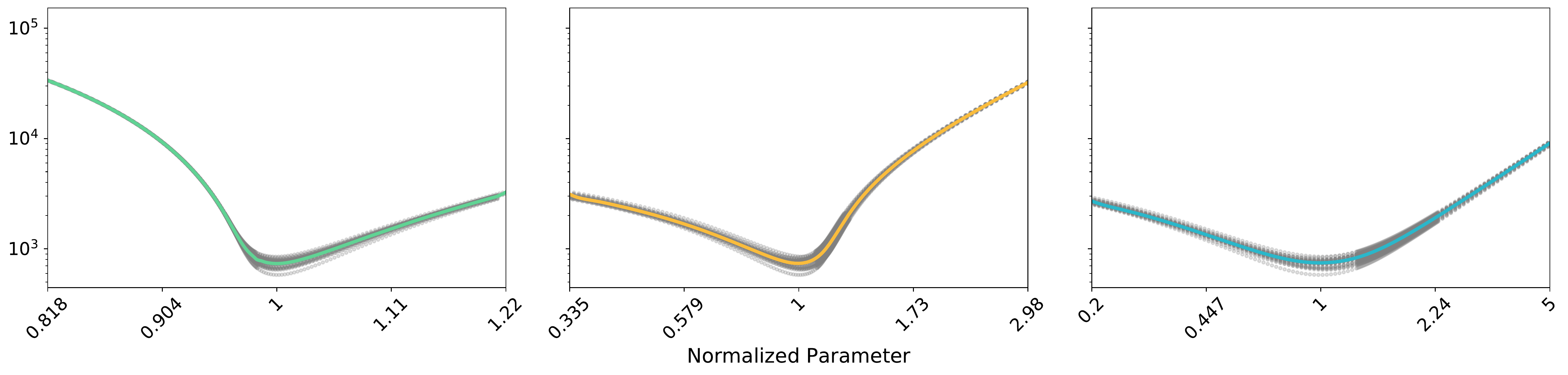}

\caption{Parameter instability numerics for intermediate parameter values:
  $(s, N, \eta, k, n) = (10^{2}, 10^{4}, 10^{-1}, 25, 201)$. \textbf{Left:}
  $m = 2500$; \textbf{Right:} $m = 4500$. \textbf{Top:} Average loss is plotted
  with respect to the normalized parameter for each program. \textbf{Bottom:}
  Visualizations of average loss approximation quality.}
  \label{fig:synthetic-example-2}
\end{figure*}

\subsection{Realistic Examples}
\label{sec:realistic-examples}

We next include two realistic examples in addition to the synethetic ones of the
previous sections. We show how CS programs may exhibit sensitivity as a function
of their governing parameter for a 1D and 2D wavelet problem. In each example,
we will include plots similar to those appearing above; however there will be
some key differences. As above, the average loss is computed from several
realizations of the loss, which depend in turn on realizations of the
noise. However, we will plot the loss corresponding to a single realization as a
function of the normalized parameter. Computing the normalized parameter is what
requires computing the average loss. As before, we approximate the average loss
and the normalized parameter using RBF interpolation, described in
\autoref{sec:rbf-approximation}. The figures of this section contain three main
pieces. We will plot psnr, as a function of the normalized parameter; loss,
equal to the nnse, as a function of the normalized parameter; and we will
include a grid of plots that allows for comparison of CS recovery by visualizing
the recovery in the signal domain. The latter grid of plots shall be referred to
as ``grid plots'' while the psnr and nnse plots shall be referred to as
``reference plots'', as they contain annotations that relate them to the grid
plots.

We now include a brief description of the so-called grid plots and associated
reference plots that appear in this section. Other than plotting loss, rather
than average loss, a key difference of the reference plots to the plots of
\autoref{sec:ls-numerics}--\ref{sec:bp-numerics} is that they have been
annotated with vertical black dashed lines, and coloured dots. Where the loss
for a program intersects the black dashed line, we show a representative
solution for that program where the normalized parameter for the problem is
given by the x intercept of the vertical line (approximately). Because the
programs were solved on a grid, the true value of the normalized program is
given by the coloured dot appearing nearest the black dashed line. The $x$ axis
for the reference plots is the normalized parameter (plotted on a log
scale). The $y$ axis for the reference plots is either the psnr (plotted on a
linear scale) or the nnse (plotted on a log scale). The representatives for each
chosen normalized parameter value and each program are plotted as a faceted grid
below the reference plot. The chosen normalized parameter value is given at the
top of each column, while the program used to recover the noisy ground truth
signal is described in the legend.

\subsubsection{1D wavelet compressed sensing}
\label{sec:1d-wavel-compr}

The signal $\xi_{0} \in \reals^{N}$, $N = 4096$, was constructed in the
Haar-wavelet domain. In particular, $x_{0} \in \reals^{N}$ has $10$ non-zero
coefficients, each equal to $N$. Let $\mathcal{W}_{1}$ denote the $1$D Haar
wavelet transform. Thus, $\xi_{0} = \mathcal{W}_{1}x_{0}$ where
$\|x_{0}\|_{0} = 10$. Next, for $A \in \reals^{m \times N}$ where $m = 1843$,
define $y = A x_{0} + \eta z$ where $z_{i} \iid \mathcal{N}(0, 1)$ and
$\eta = 50$. The signal's wavelet coefficients were recovered using {\ls}, {\qp}
and {\bp} for several realizations of the noise $z$ and over a grid of
normalized parameter values. For example $\hat x(\tau_{i}; x_{0}, A, z^{(j)})$
is the {\ls} recovery of the wavelet coefficients $x_{0}$ from $(y^{(j)}, A)$
with $\tau = \tau_{i}$, where $y^{(j)} := Ax_{0} + \eta z^{(j)}$. The recovered
signal is thus given by
$\hat \xi(\tau_{i}) := \mathcal{W}^{-1} \hat x(\tau_{i})$ and the loss given by
$\eta^{-2} \| \hat \xi(\tau_{i}) - \xi_{0}\|_{2}^{2}$. The loss modified
similarly for the other programs. Specifically, the loss is measured in the
signal domain and not the wavelet domain. The average loss was approximated from
$k = 50$ loss realizations using RBF interpolation, as described in
\autoref{sec:rbf-approximation}, on a grid of $n = 501$ points logarithmically
spaced and centered about $\rho = 1$.

Results of this simulation are depicted in \autoref{fig:lasso-realistic-1d} with
RBF interpolation parameter settings given in
\autoref{tab:lasso-realistic-1d}. The results shown in
\autoref{fig:lasso-realistic-1d} depict data from only a single noise
realization: the top-most graphic shows psnr as a function of the normalized
parameter; the middle graphic plots the loss as a function of the normalized
parameter; and the bottom group compares the ground-truth signal and recovered
signals in the signal domain. While psnr and loss are plotted instead of average
psnr and average loss, the normalized parameter was computed from the average
loss, as usual (\emph{cf.} \autoref{sec:rbf-approximation}). Correspondingly,
observe that the optimal parameter choice for each program may not appear at
$\rho = 1$, since the optimal normalized parameter for a particular loss
realization is not necessarily equal to the optimal normalized parameter for the
expected loss. In the bottom group of 15 plots, each row corresponds with a
particular program --- {\ls}, {\qp} and {\bp}, from top to bottom --- while each
column corresponds with a particular vaue of the normalized parameter --- 0.5,
0.75, 1, 1.3 and 2, from left to right. The recovered signal for that program
and normalized parameter value is shown as a coloured line, while the ground
truth signal is shown as a black line.

As $\eta = 50$, the problem lies outside of the small-noise regime. As such,
{\bp} is more sensitive to its parameter choice than {\qp}, and more sensitive
than {\ls} for $\rho > 1$ due to the relatively high sparsity of the
signal. Since suboptimality of {\bp} is observed for risk or average loss rather
than for individual loss realizations, we do not observe suboptimality of {\bp}
loss in these graphics. As expected, {\ls} is sensitive to its parameter choice
for $\rho < 1$, as the ground-truth solution lies outside the feasible set in
this setting. It appears that the loss is mildly more sensitive to
under-guessing $\tau$ in this regime, than it is to over-guessing $\sigma$. This
is readily observed from all of the plots in the figure, especially by comparing
those in the bottom group of 15.

For comparison with the middle plot of \autoref{fig:lasso-realistic-1d}, we
include a plot of the average loss for each program as a function of the
normalized parameter in \autoref{fig:lasso-realistic-1d-avg-loss} (left
plot). Beside it is a triptych visualizing the RBF approximation quality for the
average loss.

\begin{figure*}[h]
  \centering
\includegraphics[width=.8\textwidth]{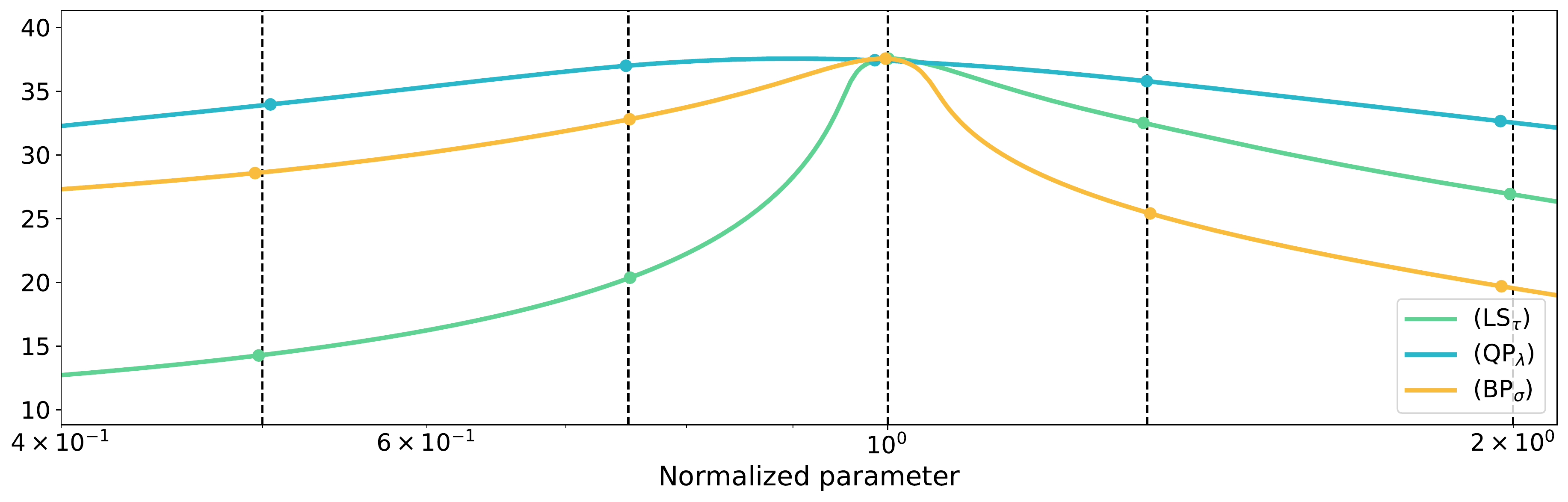}
\includegraphics[width=.8\textwidth]{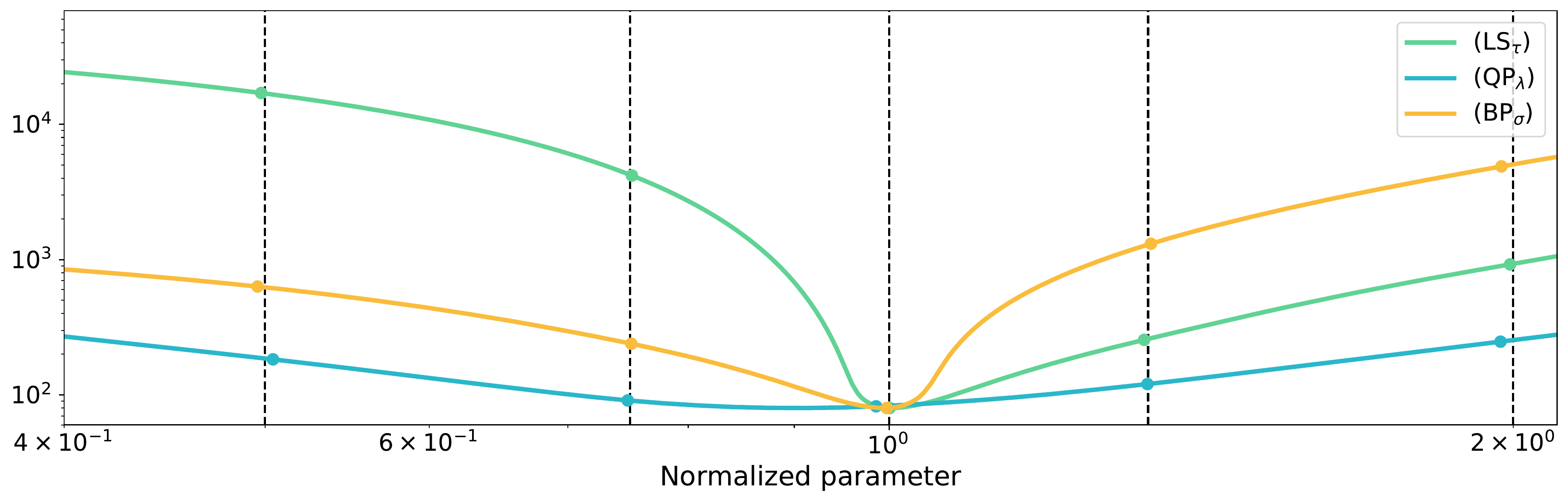}
\includegraphics[width=.8\textwidth]{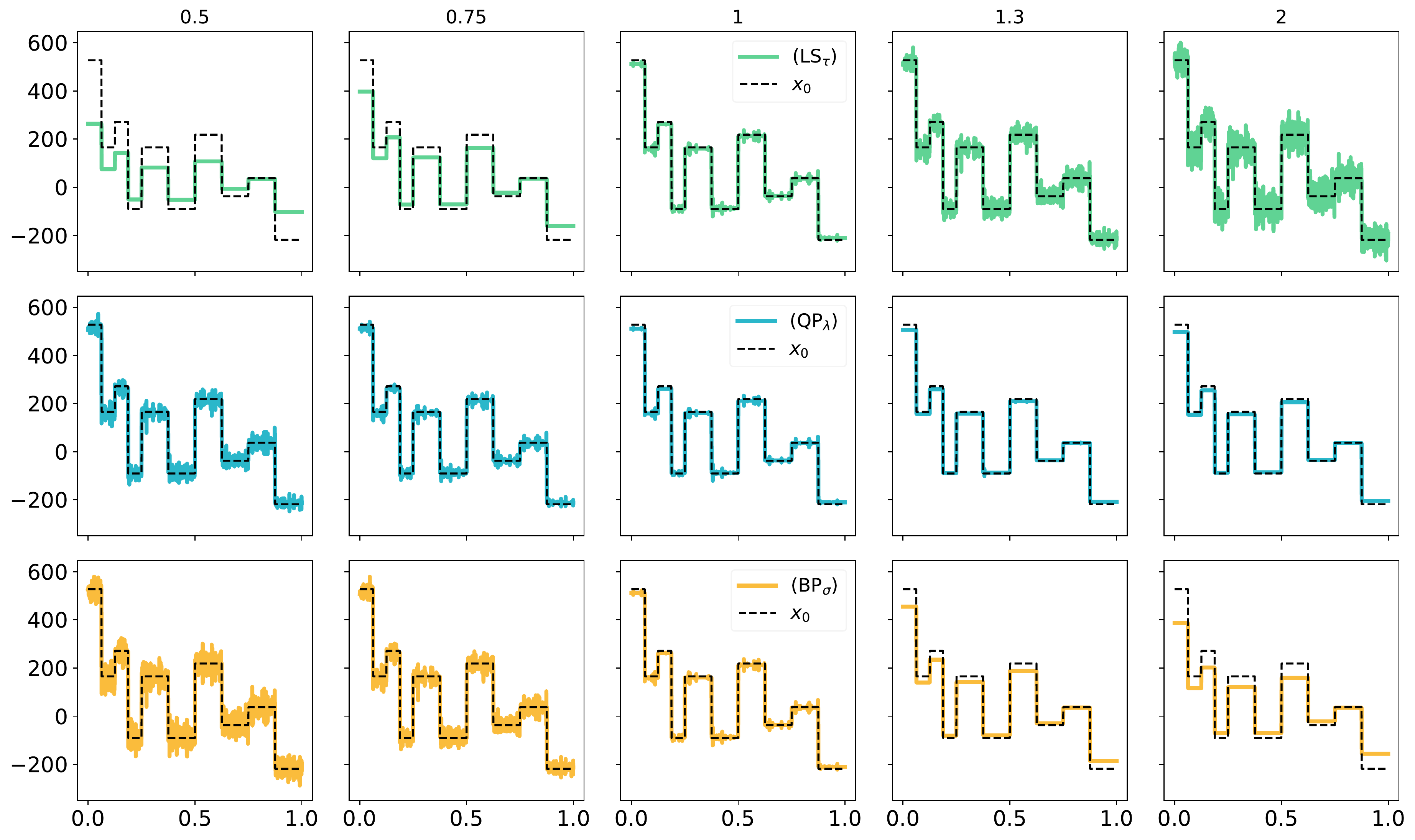}
\caption{Realistic Example in 1D for
  $(s, N, m, \eta, k, n) = (10, 4096, 1843, 50, 50, 501)$. Ground truth signal
  $x_{0}$ defined in the Haar wavelet domain with first $s$ coefficients equal
  to $N$. Noise added in the Haar wavelet domain; recovery error measured in the
  signal domain. \textbf{Top:} Average psnr as a function of the normalized
  parmaeter for each parameter. \textbf{Middle:} Average nnse as a function of
  the normalized parmaeter for each parameter. \textbf{Bottom:} The ground truth
  and recovered signal for a single realization of the noise, faceted by the
  approximate normalized parameter value (given in the title) and by program (as
  depicted in the legend).  }
  \label{fig:lasso-realistic-1d}
\end{figure*}

\begin{figure*}[h]
  \centering
  \includegraphics[width=.3\textwidth]{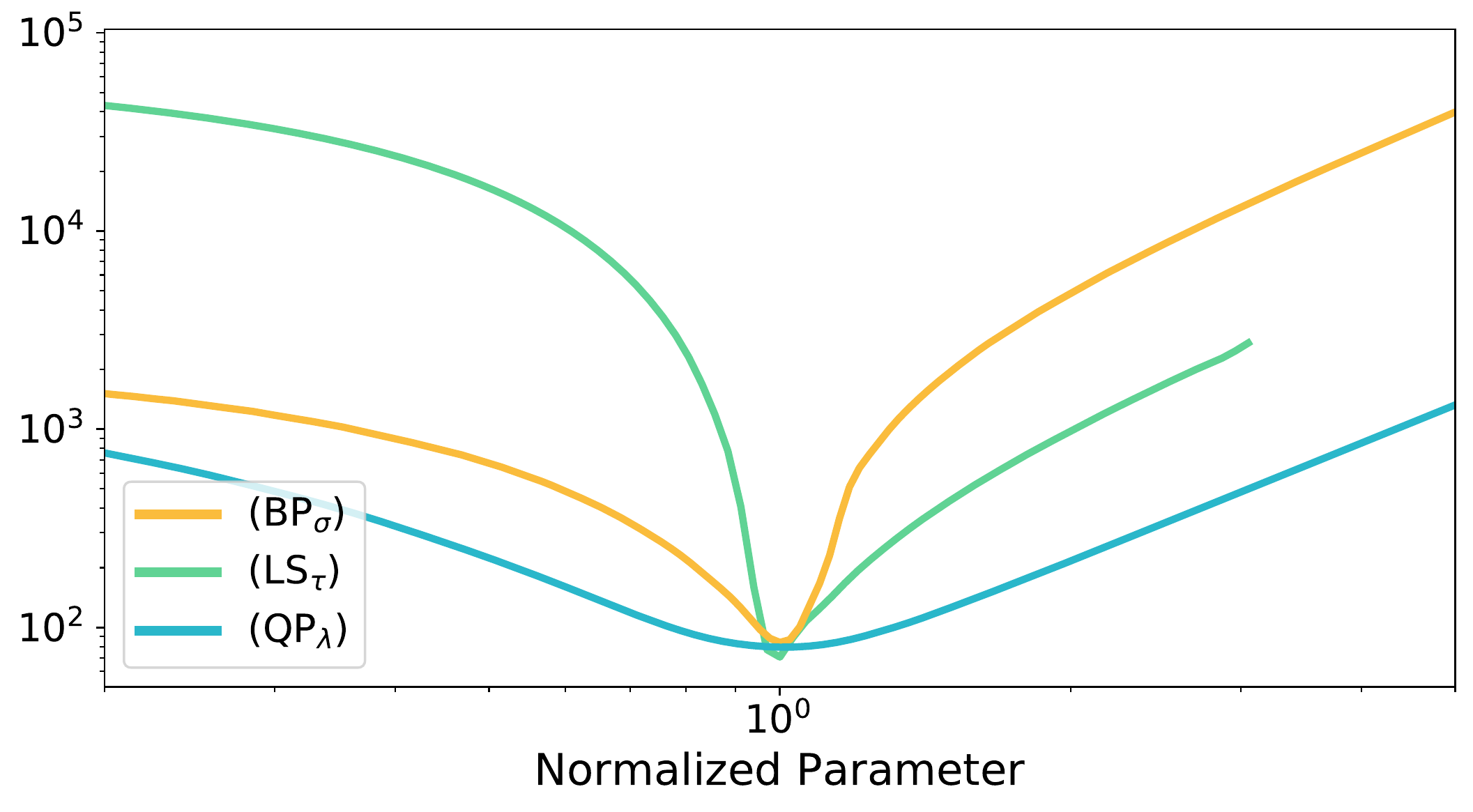}\hfill
  \includegraphics[width=.6\textwidth]{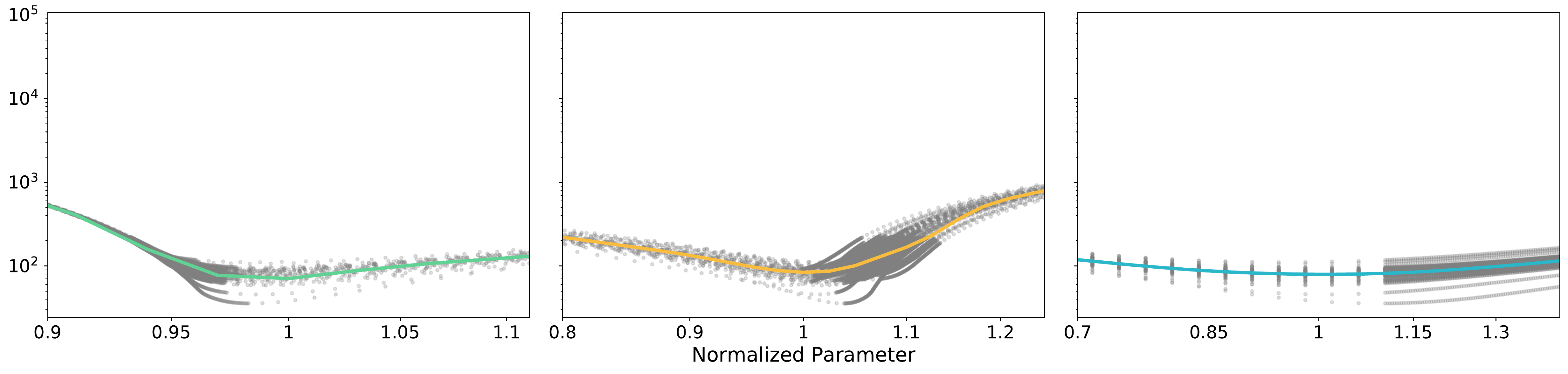}
  
  \caption{Parameter sensitivity numerics for $1$D wavelet CS example with
    parameter settings $(s, N, m, \eta, k n) = (10, 4096, 1843, 50, 50,
    501)$. \textbf{Left:} Average loss for each program plotted with respect to
    the normalized parameter. \textbf{Right:} A visualization of approximation
    quality for the average loss: {\ls}, {\bp} and {\qp}, from left to right.}
  \label{fig:lasso-realistic-1d-avg-loss}
\end{figure*}

\subsubsection{2D Wavelet Compressed Sensing}
\label{sec:2d-wavel-compr}

\begin{figure}[h]
  \centering
  \includegraphics[width=.3\textwidth]{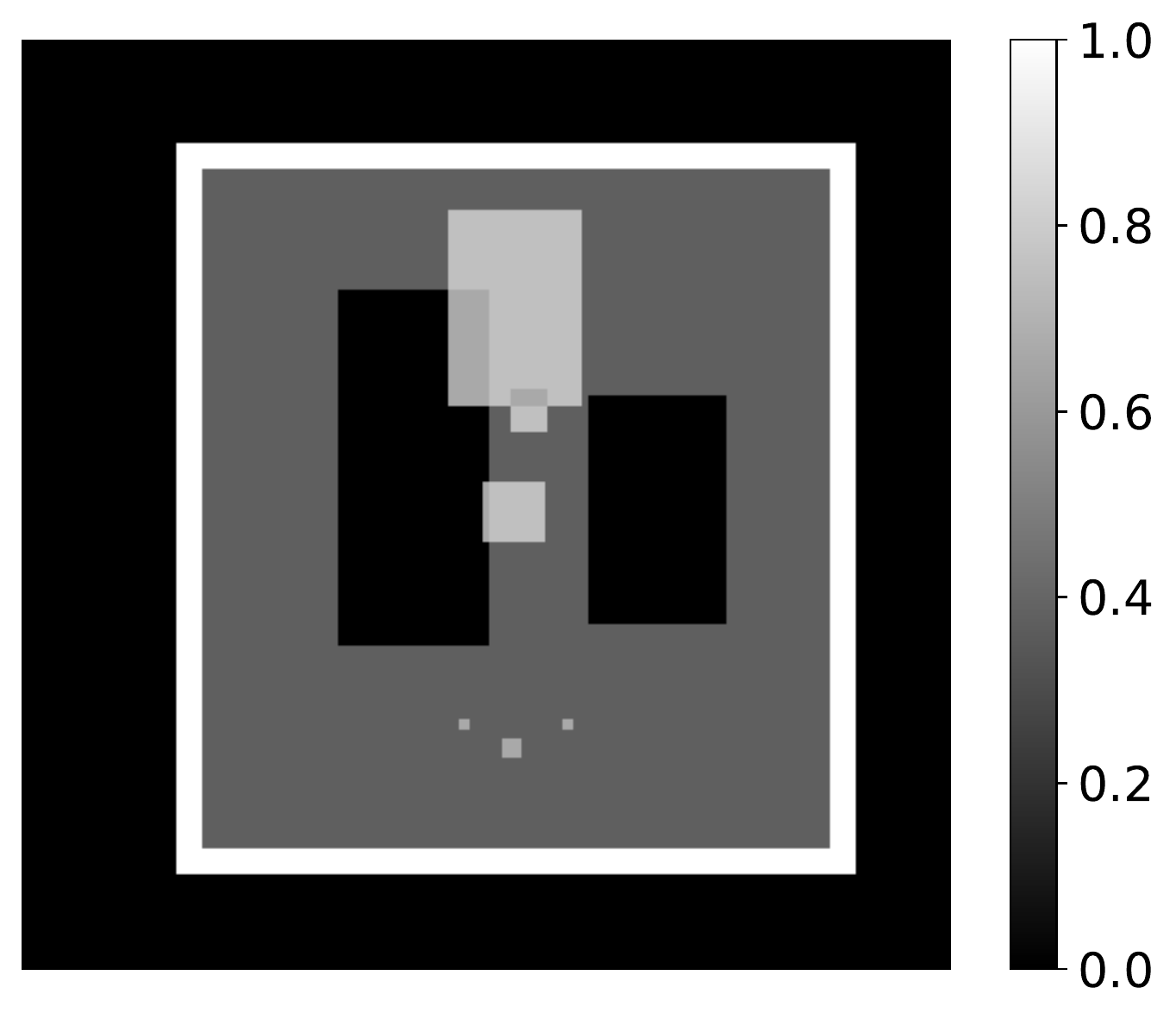}
  \caption{The square Shepp-Logan phantom (sslp).}
  \label{fig:sslp-base}
\end{figure}

In this section, we describe numerical simulations for a 2D wavelet compressed
sensing problem. The signal, $\xi_{0}$, is an $80\times 80$ image of the
so-called square Shepp-Logan phantom (sslp), visualized
in~\autoref{fig:sslp-base}. The sslp was first used
in~\cite{berk2020sl1mpc}. Let $\mathcal{W}$ denote the Haar wavelet transform
and define $x_{0} := \mathcal{W}\xi_{0} \in \reals^{6400}$ to be the vector of
Haar wavelet coefficients for the signal $\xi_{0}$. The linear measurements are
taken as
\begin{align*}
  y = A x_{0} + \eta z, \qquad %
  A_{ij} \iid \mathcal{N}\left(0, m^{-1}\right), z_{i} \iid \mathcal{N}(0, 1), \eta > 0. 
\end{align*}
The signal's wavelet coefficients were recovered using {\qp} to obtain
$x^{\sharp}(\lambda_{i})$, where $i \in [n]$ enumerates the grid of parameter
values. The recovered image is then given by
$\xi^{\sharp}(\lambda_{i}) := \mathcal{W}^{-1}(x^{\sharp}(\lambda_{i}))$. By
using the method described previously at the beginning of
\autoref{sec:numerical-results}, the corresponding solutions for {\ls} and {\bp}
were computed, obtaining $\hat x(\tau_{i})$ and $\tilde x(\sigma_{i})$,
respectively, in addition to the corresponding images
$\hat \xi(\tau_{i}) := \mathcal{W}^{-1}(\hat x (\tau_{i}))$ and
$\tilde \xi(\sigma_{i}) := \mathcal{W}^{-1}(\tilde x (\sigma_{i}))$,
$i \in [n]$. As in \autoref{sec:1d-wavel-compr}, the loss has been modified to
measure the nnse in the image domain. For example, the {\ls} loss is given as
$\eta^{-2}\|\hat\xi(\tau_{i}) - \xi_{0}\|_{2}^{2}$; similarly for the other two
programs.

Average loss as a function of the normalized parameter $\rho$ is shown in
\autoref{fig:2d-wavelet-nnse-1} for $\eta = 10^{-2}, 1/2$ with $m = 2888$ (\ie
$m / N \approx 0.45$). The average loss was approximated using RBF interpolation
from $k=50$ realizations along a logarithmically spaced grid of $501$ points
centered about $\rho=1$ using the method described in
\autoref{sec:rbf-approximation}. The parameter settings for the RBF
interpolation are provided in \autoref{tab:2d-wvlt-cs-params}. Plots showing the
approximation quality of the RBF interpolation quality are given in the bottom
row of \autoref{fig:2d-wavelet-nnse-1}. In these plots, individual realizations
of the nnse for the recovery are shown as grey points. The RBF interpolant is
given by the coloured line in each plot. The approximation quality is only
visualized for a narrow region about $\rho = 1$. Indeed, the approximation
quality of the RBF interpolant was observed, in every case, to be better away
from $\rho = 1$ than about $\rho = 1$: ensuring good interpolation of the loss
realizations about $\rho = 1$ was observed to be sufficient for ensuring good
interpolation of the average loss over the region of interest,
$\rho \in [10^{-1}, 10^{1}]$.

In the left column of the figure, where $\eta = 10^{-2}$, we observe that {\ls}
is relatively more sensitive to its parameter choice than either {\bp} or
{\qp}. In particular, for this problem, we observe that $\eta = 10^{-2}$ is
sufficient to lie within the low-noise regime. Due to how the solutions for
{\ls} were computed from those for {\qp}, the average loss curve for {\ls} is
not resolved over the full domain for the normalized parameter. This reinforces
how small changes in the normalized parameter value for {\ls} correspond to
relatively much larger changes in the normalized parameter value for {\qp}.

In the right column of the figure, where $\eta = 1/2$, we observe that {\bp} is
relatively more sensitive to its parameter choice than either {\ls} or {\qp}. We
expect this is due to the relatively high sparsity of the signal. Again, the
average loss curve for {\bp} is not resolved over the full plotted domain of the
normalized parameter. This underscores how changes in the governing parmeter for
{\qp} correspond with relatively smaller changes in the governing parameter for
{\bp}. In particular {\bp} is more sensitive to its governing parameter than
{\qp} in the present problem. This observation is supported by the theory of
\autoref{sec:bp-minimax-suboptimality}.

As in previous numerical simulations, the bottom row of
\autoref{fig:2d-wavelet-nnse-1} includes triptyches depicting the average loss
approximation quality for the RBF interpolation of the loss realizations
(\emph{cf.} \autoref{sec:rbf-approximation}).

\begin{figure*}[h]
  \centering
  \includegraphics[width=.45\textwidth]{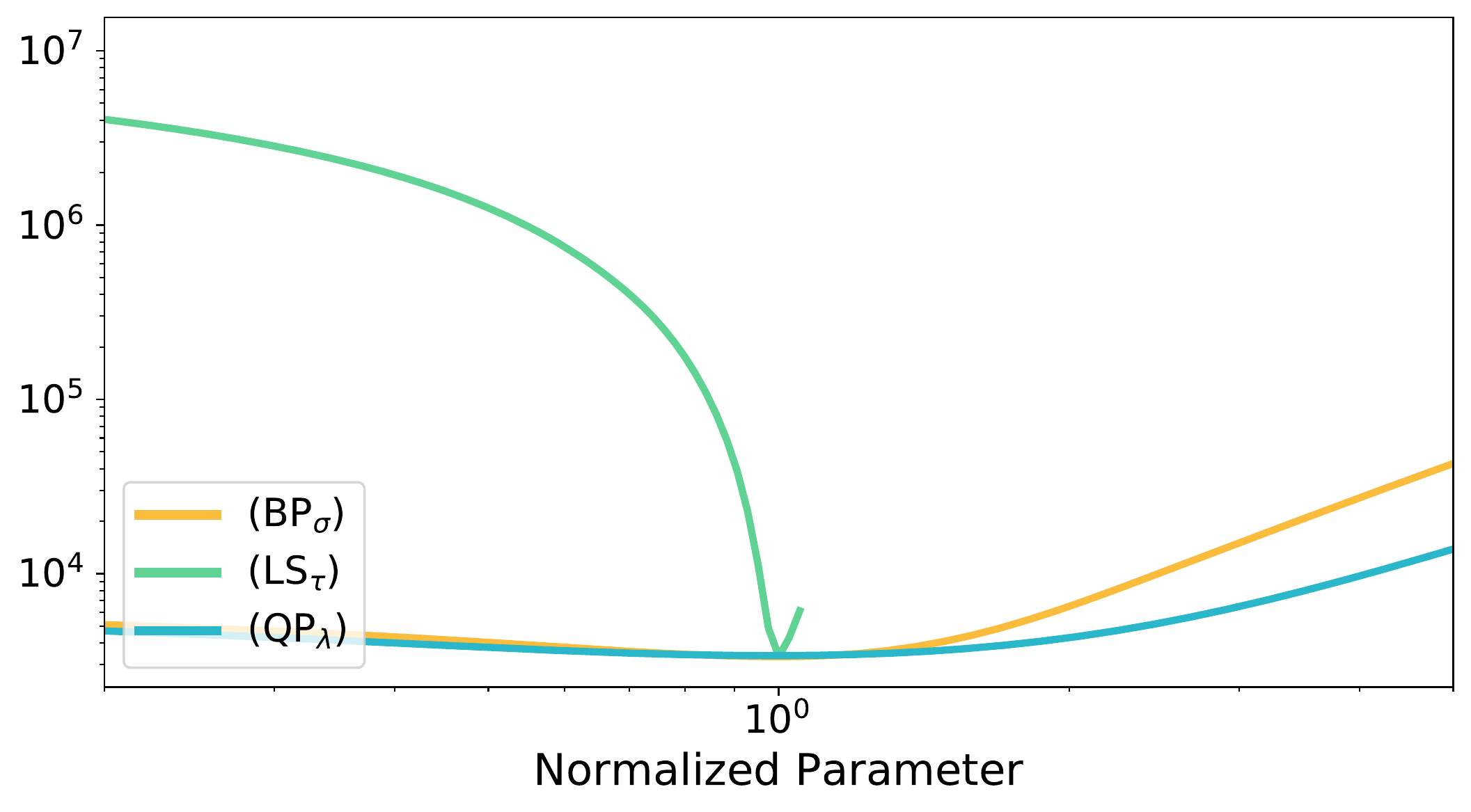}\hfill
  \includegraphics[width=.45\textwidth]{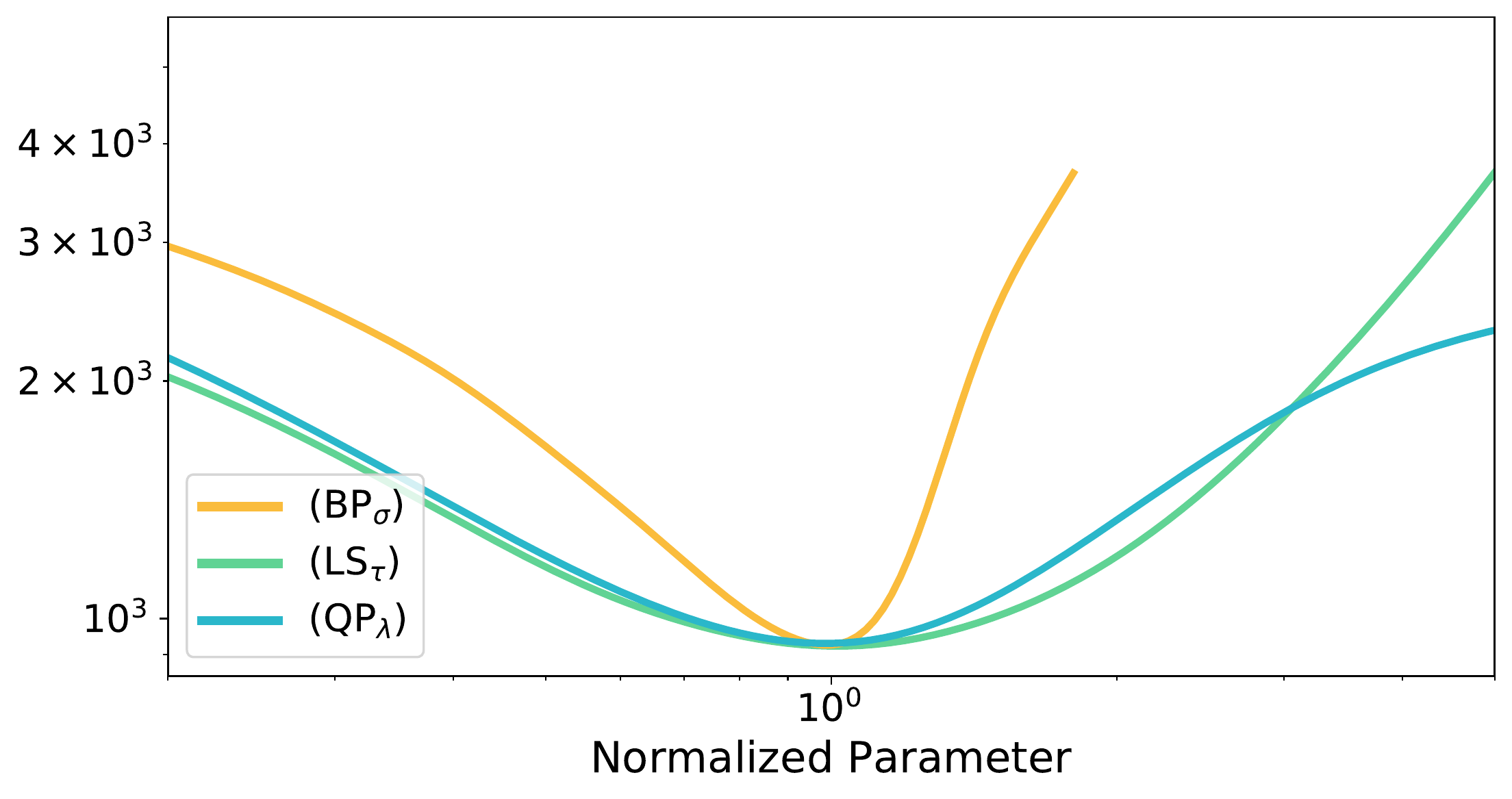}

  \vspace{-6pt}
  \textcolor{lightgrey}{\rule{.45\textwidth}{.75pt}}\hfill
  \textcolor{lightgrey}{\rule{.45\textwidth}{.75pt}}
  \vspace{6pt}

  \includegraphics[width=.45\textwidth]{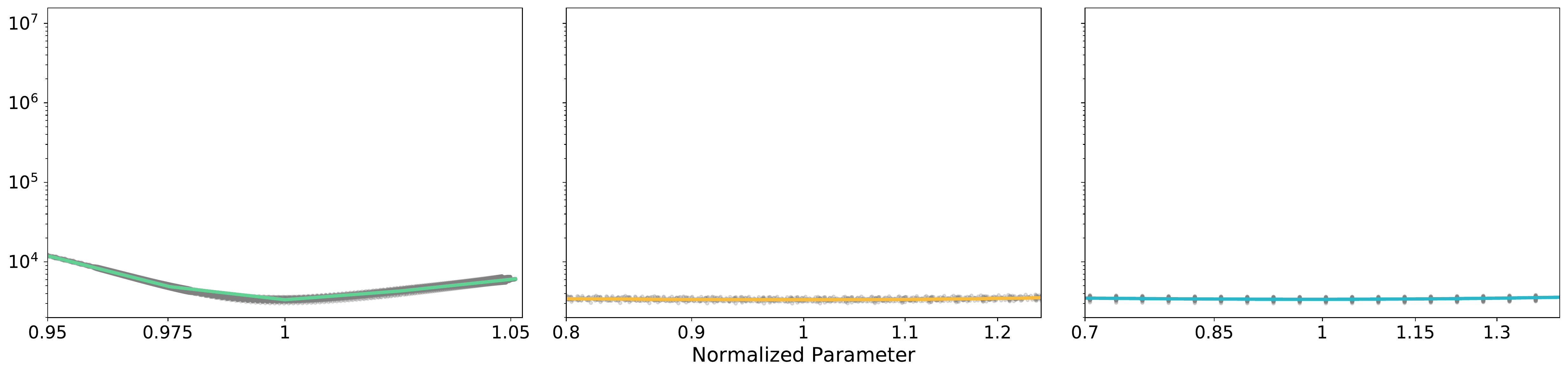}\hfill  \includegraphics[width=.45\textwidth]{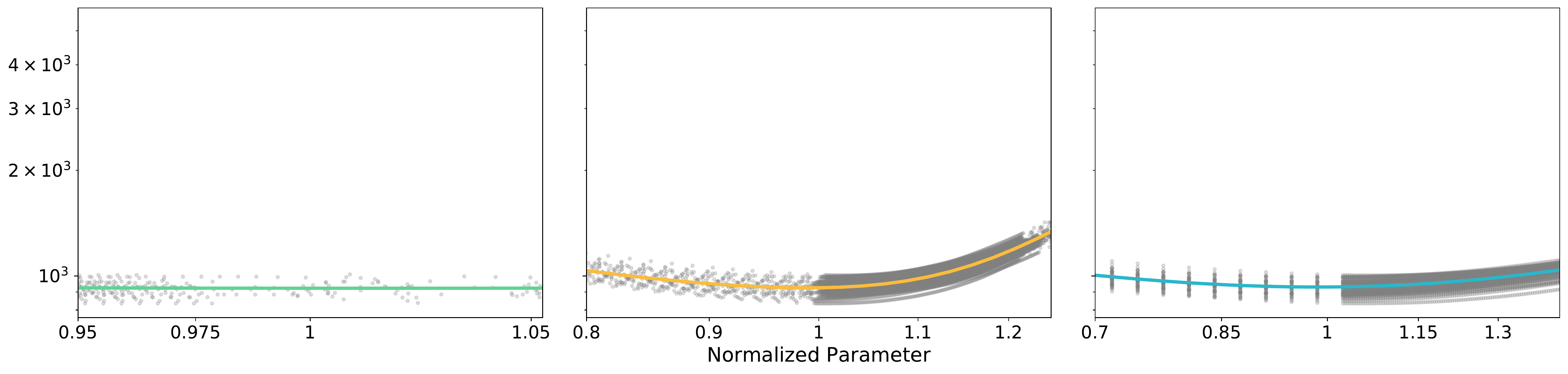}

  \caption{Average loss for a 2D wavelet compressed sensing problem, plotted as
    a function of $\rho$; $(s, N, m) = (416, 6418, 2888)$ with
    $(k, n) = (50, 501)$. \textbf{Left:} $\eta = 10^{-2}$. \textbf{Right:}
    $\eta = 1/2$. \textbf{Top:} The average loss (\ie nnse) for each program as
    a function of $\rho$. The average loss was approximated using RBF
    interpolation with parameters given
    in~\autoref{tab:2d-wvlt-cs-params}. \textbf{Bottom:} Plots to evaluate the
    quality of the RBF interpolation. In each polot, individual realizations of
    the loss are visible as grey points; the approximation to the average loss
    is visible as the coloured line through those points. }
  \label{fig:2d-wavelet-nnse-1}
\end{figure*}

In both \autoref{fig:realistic-lasso-sslp-1} and
\autoref{fig:realistic-lasso-sslp-2}, we use four main elements to depict the
results of a 2D wavelet CS problem, each for a single realization of the
noise. In each figure, the top row depicts the psnr curves for each program
(left) and loss curves for each program (right). The bottom row of the figure
contains two groupings of the 15 plots each. Each grid of 15 plots is faceted by
program ({\ls}, {\bp} and {\qp}, top-to-bottom) and normalized parameter value
($0.5$, $0.75$, $1$, $1.3$, $2$, left-to-right). Each (program, normalized
parameter) tuple on the left-hand side of the figure corresponds with its
partner on the right-hand side. Specifically, the left-hand grid of images
depicts the recovered image for a given (program, normalized parameter) tuple,
while the corresponding right-hand image depicts the pixel-wise nnse in the
signal domain. The details of these images are best examined on a computer.

In \autoref{fig:realistic-lasso-sslp-1}, the parameter settings are
$(s, N, m, \eta, k, n) = (416, 6418, 2888, 10^{-2}, 50, 501)$. In particular,
$\eta$ lies within the low-noise regime, as observed by the relative sensitivity
of {\ls} to its parameter choice. In~\autoref{fig:realistic-lasso-sslp-2}, the
parameter settings are
$(s, N, m, \eta, k, n) = (416, 6418, 2888, 1/2, 50, 501)$. In particular, the
noise scale is relatively larger. The relatively high sparsity of the signal
causes {\bp} to be relatively more sensitive to its parameter choice than either
{\ls} or {\qp}. These observations are supported by the theory
of~\autoref{sec:LS-instability} and~\autoref{sec:bp-minimax-suboptimality}.

\begin{figure*}[h]
  \centering
  \includegraphics[width=0.45\textwidth]{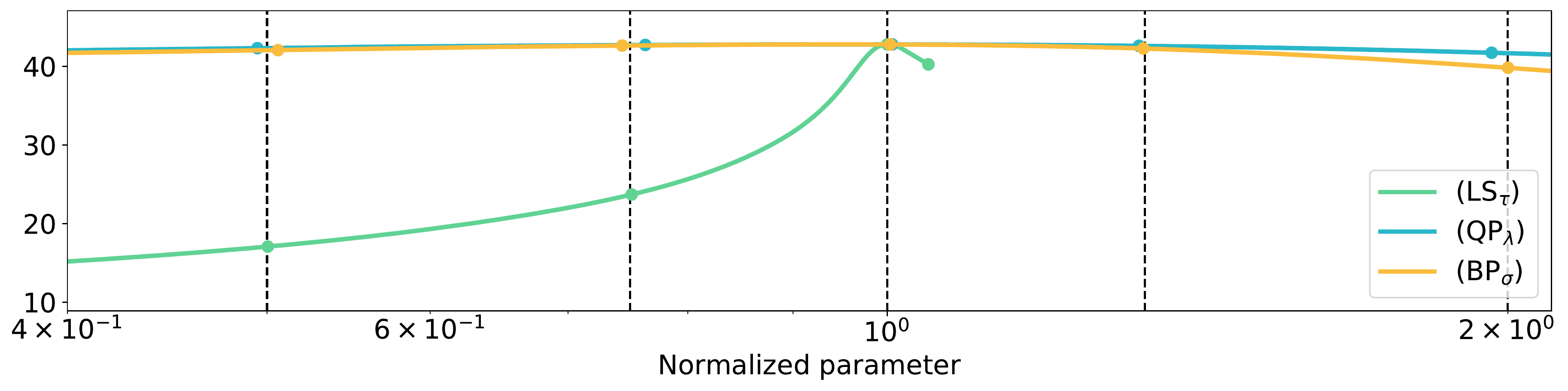}\hfill
  \includegraphics[width=0.45\textwidth]{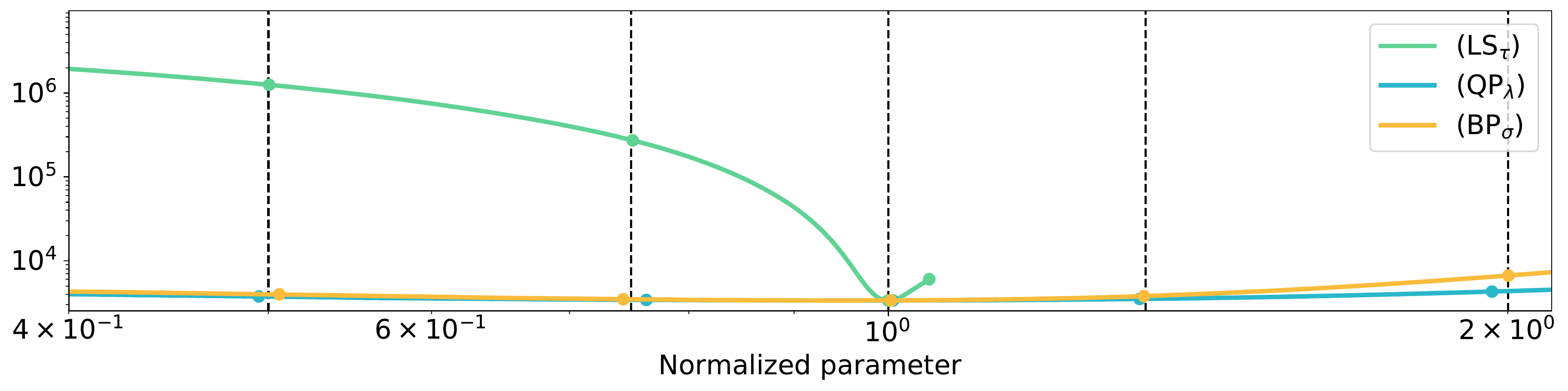}
  

  \includegraphics[width=0.45\textwidth]{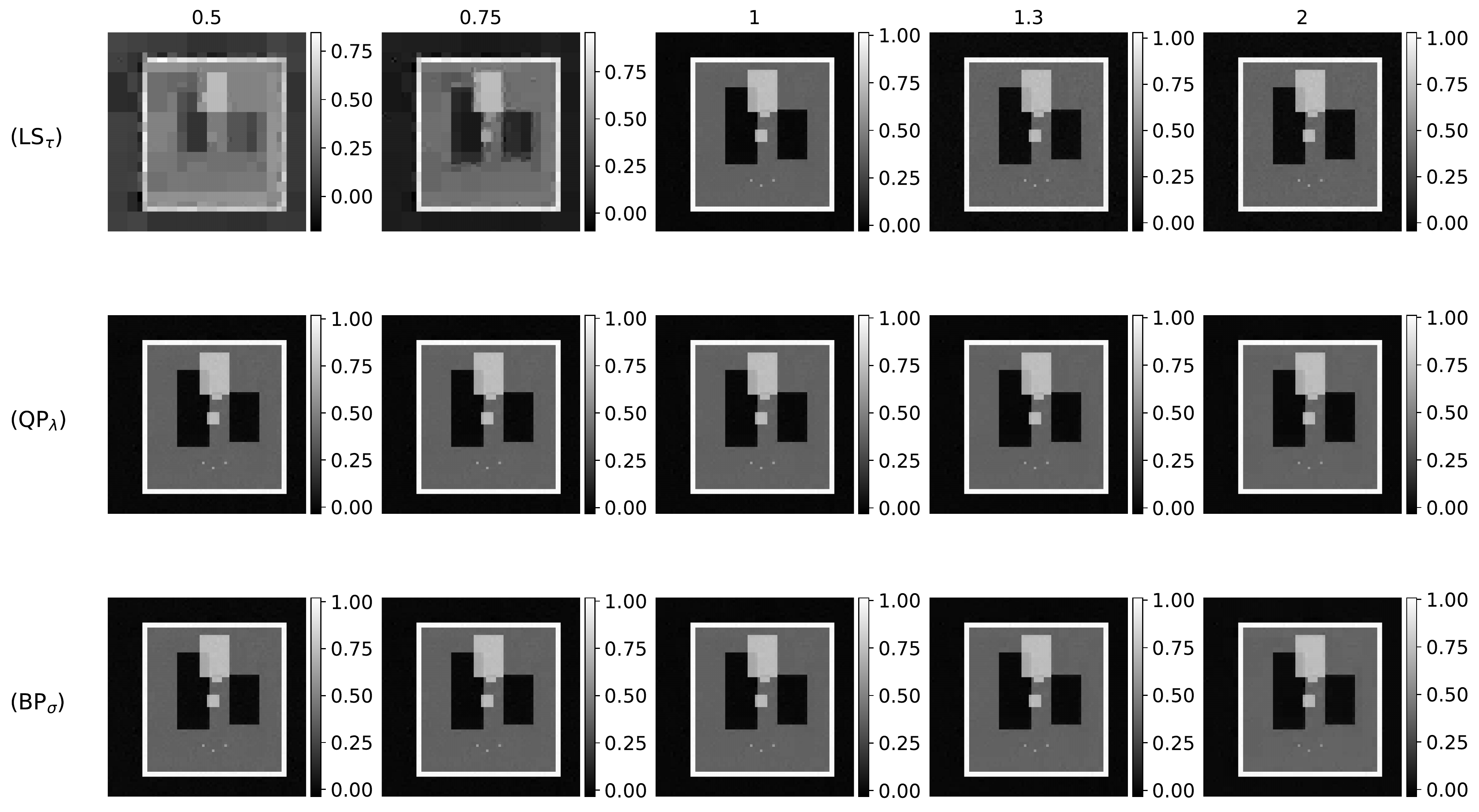}\hfill
  \includegraphics[width=0.45\textwidth]{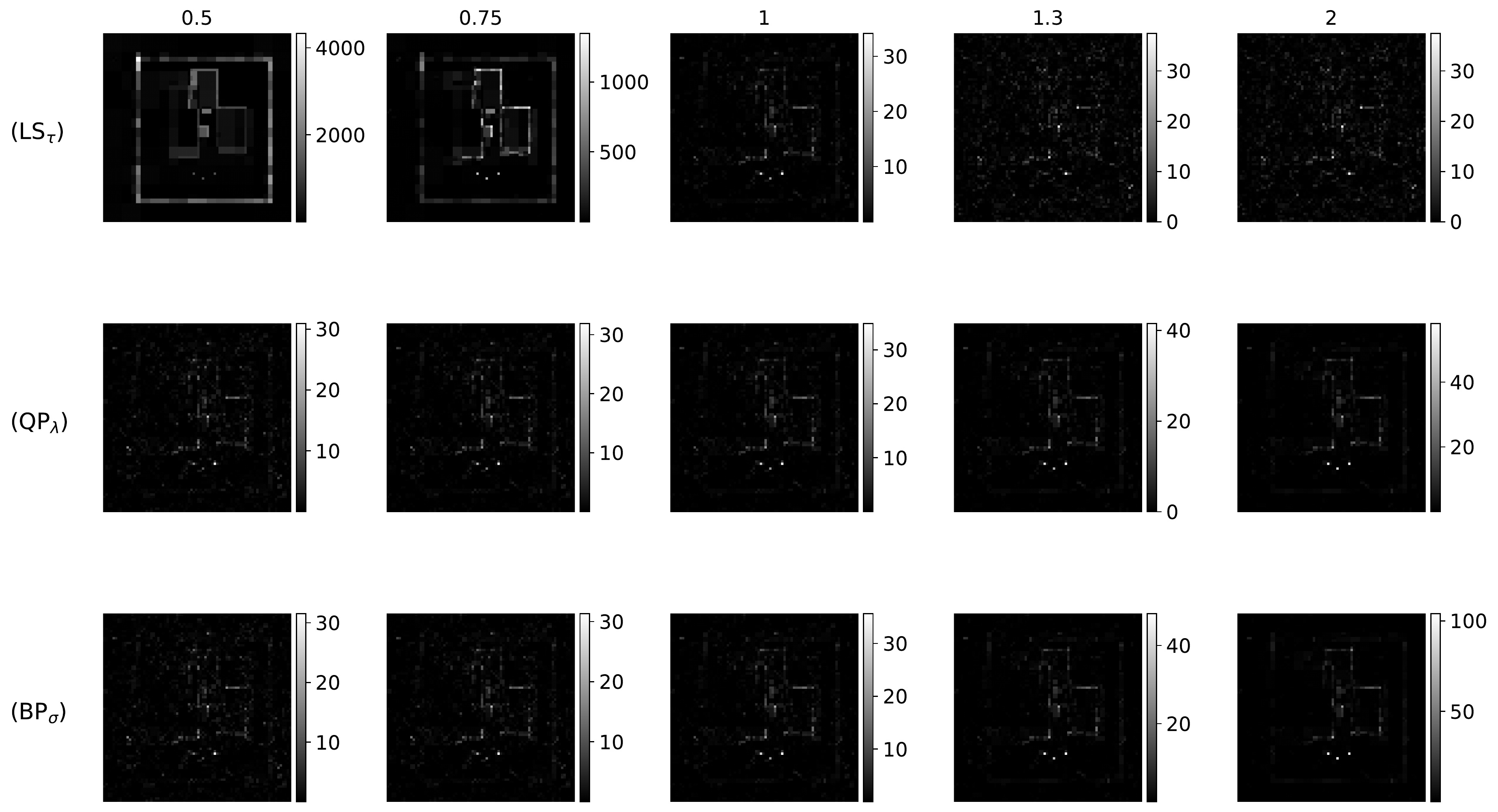}
  \caption{A 2D wavelet compressed sensing problem using the square Shepp-Logan
    phantom; $(s, N, m, \eta) = (416, 6418, 2888, 10^{-2})$ with
    $(k, n) = (50, 501)$. \textbf{Top row:} psnr (left) and nnse (right),
    plotted as a function of $\rho$. The plotted curves were generated from the
    single realization of the measurements that correspond to the grids depicted
    below them. \textbf{Bottom grids:} The left grid of 15 images shows the
    recovered image for each of five values of $\rho$:
    $\rho \in \{ \frac12, \frac34, 1, \frac43, 2\}$; and for each program:
    {\ls}, {\qp}, {\bp}. The right grid of 15 images shows the pixel-wise nnse
    of the recovered image for the same values of $\rho$, and for the three
    programs. Colour bars provide scale, and are best observed on a
    computer. The stated values of $\rho$ are approximate; the values of $\rho$
    for which the images are depicted are marked by points in the nnse and psnr
    plots of the same colour as the loss curve on top of which they're plotted.}
  \label{fig:realistic-lasso-sslp-1}
\end{figure*}

\begin{figure*}[h]
  \centering
  \includegraphics[width=0.45\textwidth]{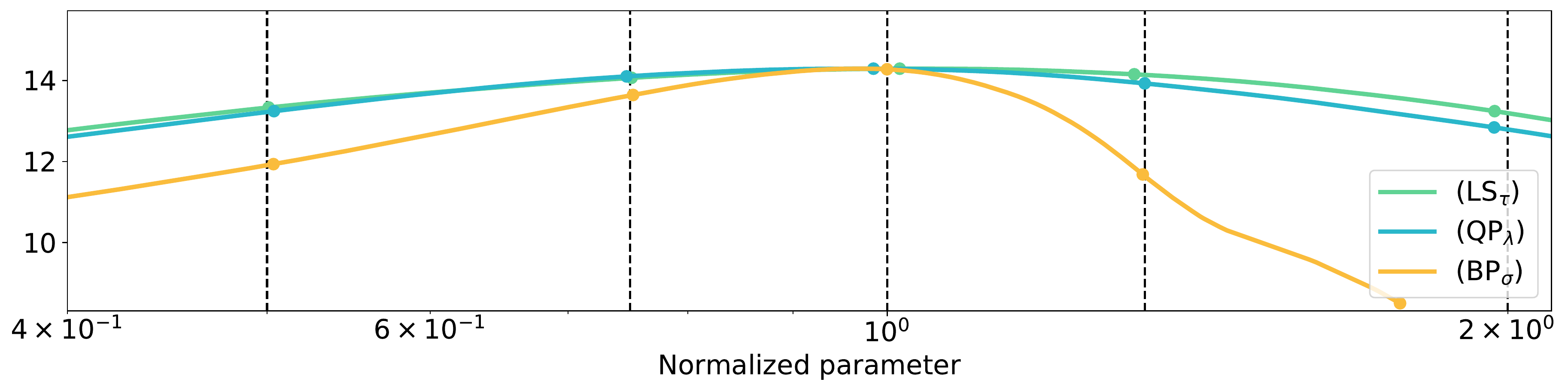}\hfill
  \includegraphics[width=0.45\textwidth]{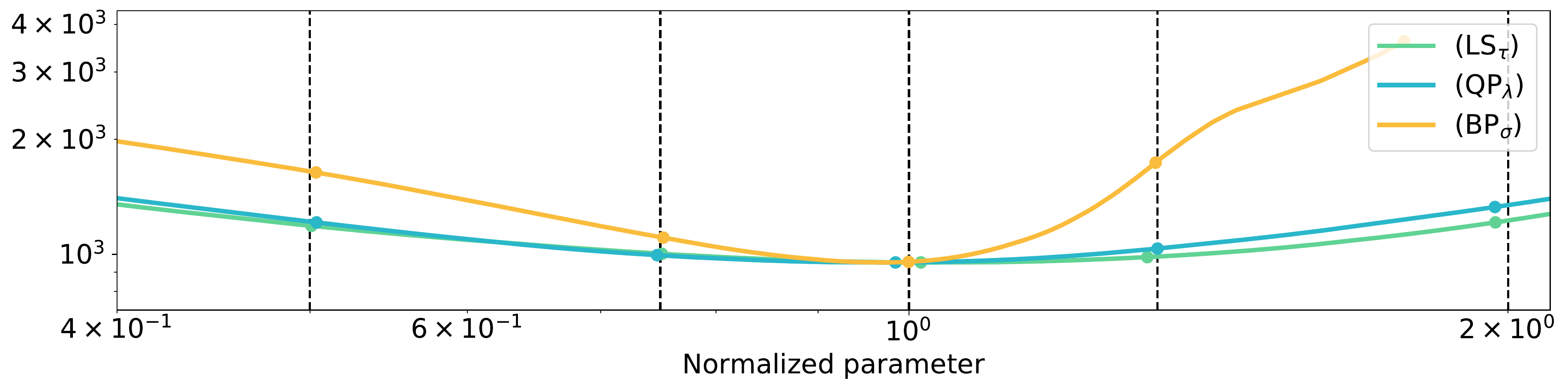}
  

  \includegraphics[width=0.45\textwidth]{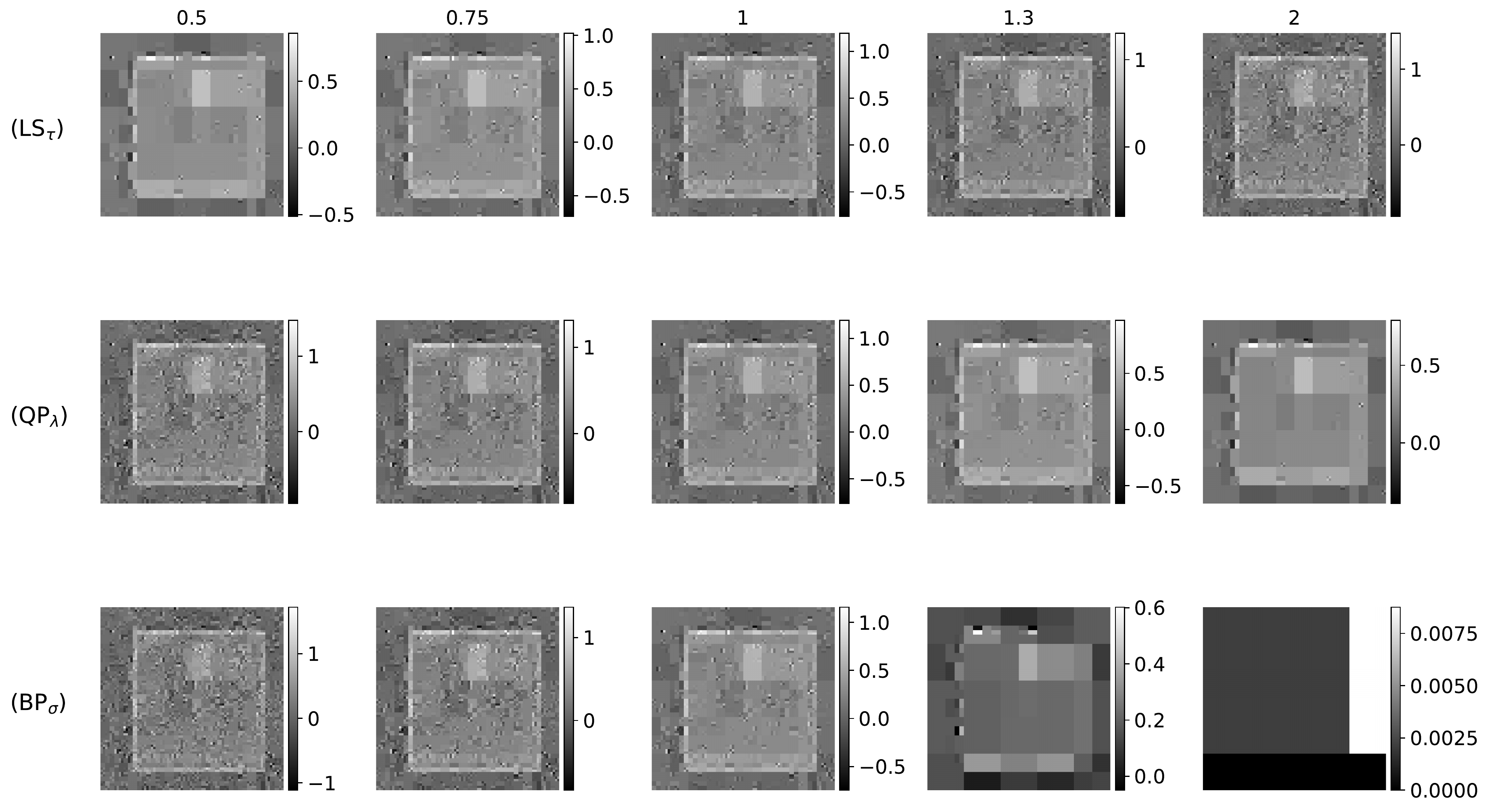}\hfill
  \includegraphics[width=0.45\textwidth]{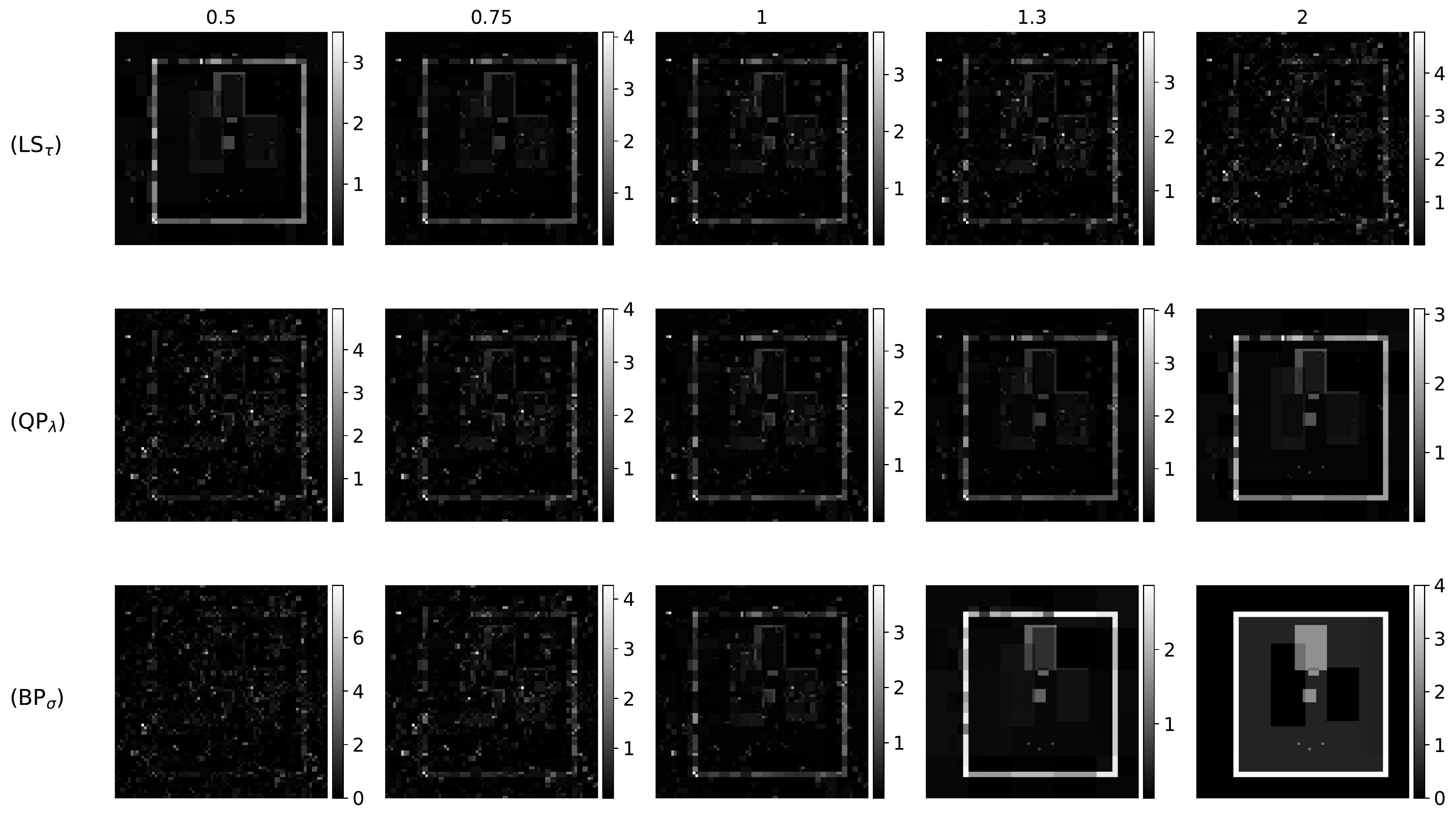}
  \caption{A 2D wavelet compressed sensing problem using the square Shepp-Logan
    phantom; $(s, N, m, \eta) = (416, 6418, 2888, 1/2)$ with
    $(k, n) = (50, 501)$. \textbf{Top row:} psnr (left) and nnse (right),
    plotted as a function of $\rho$ for each program. The plotted curves were
    generated from the single realization of the measurements that correspond to
    the grids depicted below them. \textbf{Bottom grids:} The left grid of 15
    images shows the recovered image for each of five values of $\rho$:
    $\rho \in \{ \frac12, \frac34, 1, \frac43, 2\}$; and for each program:
    {\ls}, {\qp}, {\bp}. The right grid of 15 images shows the pixel-wise nnse
    of the recovered image for the same values of the normalized parameter, and
    for the three programs. Colour bars provide scale, and are best observed on
    a computer. The stated values of $\rho$ are approximate; the values of
    $\rho$ for which the images are depicted are marked by points in the nnse
    and psnr plots of the same colour as the loss curve on top of which they're
    plotted.}
  \label{fig:realistic-lasso-sslp-2}
\end{figure*}

\section{Proofs}
\label{sec:proofs}

\subsection{Risk equivalences}
\label{sec:risk-equivalences}

\begin{proof}[Proof of {\autoref{prop:Rstar-minimax-optimal}}]
  The left-most inequality,
  \begin{align*}
    cs \log (N /s ) %
    \leq M^{*}(s, N), %
  \end{align*}
  is a consequence of~\cite[Theorem 1]{candes2013well}, and the second inequality,
  \begin{align*}
    M^{*}(s, N) %
    \leq  \inf_{\lambda > 0} \sup_{x_{0} \in \Sigma_{s}^{N}} %
    R^{\sharp}(\lambda; x_{0}, A, \eta) %
  \end{align*}
  is trivial. The third inequality,
  \begin{align*}
     \inf_{\lambda > 0} \sup_{x_{0} \in \Sigma_{s}^{N}} %
    R^{\sharp}(\lambda; x_{0}, A, \eta) %
    \leq C_{\delta} R^{*}(s, A)
  \end{align*}
  is a consequence of~\cite[Theorem 6]{bickel2009simultaneous} and
  \autoref{coro:opt-risk-near-equiv}. Indeed, $R^{*}(s, A)$ may be lower-bounded
  by the optimally tuned worst-case risk,
  $\sup_{x \in \Sigma_{s}^{N}} \hat R(\|x\|_{1}; x, A, \eta)$, which is again
  lower-bounded by $c s \log (N /s )$ due to~\cite[Theorem 1]{candes2013well}. In
  particular, selecting constants appropriately gives
  \begin{align*}
    &  \inf_{\lambda > 0} \sup_{x_{0} \in \Sigma_{s}^{N}} %
      R^{\sharp}(\lambda; x_{0}, A, \eta) %
      \\
    &\qquad \leq C_{\delta} s \log(N/s) %
    \leq C_{\delta} M^{*}(s, N) %
    \\
    &\qquad \leq C_{\delta} \sup_{x \in \Sigma_{s}^{N}} \hat R(\|x\|_{1}; x, A, \eta) %
      \leq C_{\delta} R^{*}(s, A).
  \end{align*}
  The final inequality, a variant of which may be found in~\cite{liaw2017simple}
  or~\cite{oymak2013squared}, easily follows
  from~\autoref{lem:ls-instability-ec}.

\end{proof}

\subsection{$\hat R$ is nearly monotone}
\label{sec:rhat-nearly-monotone}

We first quote a specialized version of a result introduced
in~\cite{liaw2017simple}, which gives a kind of local characterization of the
deviation inequality presented in~\autoref{thm:liaw-14}.

\begin{thm}[{\cite[Theorem~1.7]{liaw2017simple}}]
  \label{thm:liaw-17}
  Let $A$ be a normalized $K$-subgaussian matrix and $T\subseteq \reals^{N}$ a
  convex set. For any $t \geq 1$, it holds with probability at least
  $1 - \exp(-t^{2})$ that
  \begin{align*}
    \left| \|Ax\|_{2} - \sqrt m \|x \|_{2}\right| %
    \leq t \cdot C \tilde K \gamma(T \cap \|x\|_{2} B_{2}^{N}), %
    \qquad \text{for all } x \in T. 
  \end{align*}
\end{thm}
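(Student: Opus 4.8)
The plan is to \emph{stratify} $T$ by Euclidean norm and reduce the local estimate to the global deviation bound of \autoref{thm:liaw-14}, applied at each dyadic scale. Two structural observations drive the argument. First, the deviation functional $f(x) := \|Ax\|_{2} - \sqrt m \|x\|_{2}$ is positively homogeneous of degree one, so $|f(\lambda x)| = \lambda |f(x)|$ for $\lambda \geq 0$; consequently the estimate for $x$ and for any positive multiple of $x$ are governed by the same realization of $A$. Second, when $T$ is convex with $0 \in T$, the map $r \mapsto \gamma(T \cap r B_{2}^{N}) / r$ is non-increasing: writing $\gamma(T \cap r B_{2}^{N})/r = \gamma((T/r)\cap B_{2}^{N})$ and using that $T/r' \subseteq T/r$ for $r \leq r'$ (a convexity consequence of $0 \in T$) gives the claim. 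I would assume $0 \in T$ throughout, as holds in the intended applications.

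Next I would carry out the peeling. For $k \in \ints$ set $S_{k} := T \cap 2^{k} B_{2}^{N}$ and let $T_{k} := \{ x \in T : 2^{k-1} < \|x\|_{2} \leq 2^{k}\}$, so that $T = \bigcup_{k} T_{k}$ up to the point $0$ (for which the statement is trivial), with $T_{k} \subseteq S_{k}$ and $\rad(S_{k}) \leq 2^{k}$. Applying \autoref{thm:liaw-14} to the bounded set $S_{k}$ with a parameter $u_{k} \geq 1$ yields, with probability at least $1 - 3\exp(-u_{k}^{2})$,
\[
  \sup_{x \in S_{k}} |f(x)| \leq C\tilde K \big[ \w(S_{k}) + u_{k} 2^{k}\big].
\]
Convexity and $0 \in T$ give $S_{k} \subseteq 2(T \cap 2^{k-1}B_{2}^{N})$, whence $\w(S_{k}) \leq \gamma(S_{k}) \leq 2\gamma(T \cap 2^{k-1}B_{2}^{N})$; and since $T \cap 2^{k-1}B_{2}^{N}$ contains a point of norm $\asymp 2^{k-1}$ (take the point of that norm on the segment $[0,x]$ for any $x \in T_{k}$), one has $2^{k} \leq C\gamma(T \cap 2^{k-1}B_{2}^{N})$. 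For $x \in T_{k}$ the monotonicity of the set in the localization radius gives $\gamma(T \cap 2^{k-1}B_{2}^{N}) \leq \gamma(T \cap \|x\|_{2}B_{2}^{N})$, so the displayed estimate collapses to $\sup_{x \in T_{k}}|f(x)| \leq u_{k} \cdot C'\tilde K\,\gamma(T \cap \|x\|_{2}B_{2}^{N})$, the desired form up to the choice of $u_{k}$.

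The main obstacle is the union bound over scales: I must choose $\{u_{k}\}$ so that the prefactor stays of order $t$ while $\sum_{k} 3\exp(-u_{k}^{2}) \leq \exp(-t^{2})$, even though a priori infinitely many shells are nonempty. Three facts make this possible. When $T$ is bounded only finitely many large scales contribute, since $T_{k} = \emptyset$ once $2^{k-1} \geq \rad(T)$. The homogeneity of $f$ couples the small scales: where $T$ is locally conical (e.g.\ $T \cap rB_{2}^{N} = rB_{2}^{N}$ near the origin) the \emph{normalized} event $\sup_{\|x\|_{2}\leq r}|f(x)|/r = \sup_{\|x\|_{2}\leq 1}|f(x)|$ is scale-invariant, so these shells share a single event rather than demanding independent deviation budgets. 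Finally, the monotonicity of $r \mapsto \gamma(T\cap rB_{2}^{N})/r$ lets me take $u_{k}$ proportional to $t\,\gamma(T\cap 2^{k-1}B_{2}^{N})/2^{k}$, which is bounded below by a constant multiple of $t$ and grows on the atypical scales, so the series is dominated by its value on the finitely many ``typical'' scales and sums to at most $\exp(-t^{2})$ after adjusting $C$. Balancing the prefactor against this summation requirement — i.e.\ showing that the coupling and the monotonicity remove what a naive peeling would produce as a spurious $\sqrt{\log}$ factor — is the delicate heart of the argument; the remaining estimates are routine.
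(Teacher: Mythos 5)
A preliminary remark: the paper never proves \autoref{thm:liaw-17} — it is imported verbatim from \cite[Theorem~1.7]{liaw2017simple} and used as a black box (e.g., in the proof of \autoref{prop:Rhat-nearly-monotone}) — so your proposal must be judged on its own merits rather than against an internal argument. Your standing assumption $0 \in T$ is not the problem; the result is genuinely a statement about sets star-shaped about the origin, and the paper only applies it to descent cones.

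The genuine gap is the union bound over shells, the step you defer to the end; the mechanism you offer for closing it rests on a false premise. Your per-shell estimates are all correct for convex $T \ni 0$: homogeneity of $f(x) := \|Ax\|_{2} - \sqrt m \|x\|_{2}$, the doubling bound $\gamma(T \cap 2^{k}B_{2}^{N}) \leq 2\gamma(T\cap 2^{k-1}B_{2}^{N})$, the lower bound $2^{k} \leq C\gamma(T \cap 2^{k-1}B_{2}^{N})$, and the monotonicity of $r \mapsto \gamma(T \cap rB_{2}^{N})/r$. But writing $\beta_{k} := \gamma(T\cap 2^{k-1}B_{2}^{N})/2^{k-1}$, your choice is $u_{k} \asymp t\beta_{k}$, and monotonicity only says $\beta_{k}$ is non-increasing in $k$; it does not make $\beta_{k}$ grow in either direction. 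In fact $\beta_{k}$ is trapped in $[\sqrt{2/\pi},\, \gamma(\cone(T)\cap B_{2}^{N})] \subseteq [\sqrt{2/\pi},\, C\sqrt N]$ and converges to finite limits as $k \to \pm\infty$. Since $[0,x] \subseteq T$ for every $x \in T$, all shells below some index are nonempty, so for \emph{every} convex $T \ni 0$ other than $\{0\}$ your failure series has infinitely many terms $3\exp(-u_{k}^{2}) \geq 3\exp(-Ct^{2}N)$ bounded below by a $k$-independent constant: it diverges, for any choice of absolute constants. Your rescues do not apply in general. The scale-invariance coupling requires $T$ to coincide with a cone near the origin, which fails for, e.g., $T = \{x : \|x - e_{1}\|_{2} \leq 1\}$ (a ball touching the origin: all small-radius shells have $\beta_{k} \asymp \sqrt N$, yet the sets $T \cap rB_{2}^{N}$ are not rescalings of one another); and an unbounded set such as the slab $\{x \in \reals^{2} : |x_{2}| \leq 1\}$ has infinitely many nonempty large-radius shells with $u_{k} \to ct$. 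So ``finitely many typical scales'' is exactly backwards: the plateau scales are all but finitely many. Even in the most favorable case (bounded $T$, exactly conical near $0$) the surviving shells force a prefactor depending on the number of scales, i.e., the spurious logarithmic factor you claim to remove but for which no mechanism is supplied.

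What a correct argument needs — and what your write-up lacks — is a decomposition along which the deviation parameters genuinely grow. One way: peel by the dyadic \emph{value} of the ratio $\beta(r) = \gamma(T \cap rB_{2}^{N})/r$ rather than by the radius. By your own monotonicity observation, $\beta$ is non-increasing with range in $[\sqrt{2/\pi}, C\sqrt N]$, so the bands $\{r : \beta(r) \in [2^{i}, 2^{i+1})\}$ are intervals of radii. Homogeneity then couples \emph{all} radii in a band to a single application of \autoref{thm:liaw-14}: truncating $T$ radially at the band's smallest radius $\rho_{i}$ gives $|f(x)| \leq (\|x\|_{2}/\rho_{i}) \sup_{T \cap \rho_{i}B_{2}^{N}} |f|$ for every $x$ in the band (for the band abutting $r = 0$, compare instead with the superset $r(\cone(T)\cap B_{2}^{N})$). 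Crucially one takes $u_{i} \asymp t\,2^{i}$, growing geometrically in the band index — affordable because the target satisfies $\gamma(T \cap \|x\|_{2}B_{2}^{N}) \geq 2^{i}\|x\|_{2}$ inside band $i$ — so that $\sum_{i} 3\exp(-t^{2}4^{i}) \leq C\exp(-t^{2})$. (Alternatively, one can avoid peeling entirely by chaining once over the normalized set $\{x/\gamma(T\cap\|x\|_{2}B_{2}^{N}) : x \in T\}$, whose radius is at most $\sqrt{\pi/2}$.) Radius-peeling with near-constant $u_{k}$, as proposed, cannot be repaired.
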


We now present the main result of this section.

\begin{prop}[$\hat R$ is nearly monotone]
  \label{prop:Rhat-nearly-monotone}

  Let $A$ be a normalized $K$-subgaussian matrix and
  $\mathcal{K} \subseteq \reals^{N}$ a non-empty closed convex set. Fix
  $\delta, \eta > 0$, $0 < \tau_{1} \leq \tau_{2} < \infty$ and
  $x_{0} \in \mathcal{K}$ with $\|x_{0}\|_{\mathcal{K}} = 1$. For any
  $t \geq 1$, if $m$ satisfies
  $m > Ct^{2}\tilde K^{2}\delta^{-2} \gamma^{2}(T_{\mathcal{K}}(x_{0}) \cap
  \sph^{N-1})$, then with probability at least $1 - \exp(-t^{2})$ on the
  realization of $A$,
  \begin{align*}
    \hat R(\tau_{1}; \tau_{1} x_{0}, A, \eta)
    \leq \frac{1 + \delta}{1 - \delta} \hat R(\tau_{2}; \tau_{2} x_{0}, A, \eta). 
  \end{align*}

\end{prop}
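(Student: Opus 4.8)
The plan is to reduce both constrained-Lasso problems to metric projections in the \emph{measurement} space, establish an exact (deterministic, given $A$) monotonicity of the projected errors via the Projection Lemma, and then transfer this monotonicity back to the signal space using the localized deviation inequality of \autoref{thm:liaw-17} on the descent cone. The quantity $\gamma^{2}(T_{\mathcal{K}}(x_{0}) \cap \sph^{N-1})$ appearing in the hypothesis on $m$ signals precisely this last ingredient.

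First I would rewrite each risk in terms of an error vector. Setting $y_{i} = A(\tau_{i} x_{0}) + \eta z$ and $h_{i} := \hat x(\tau_{i}) - \tau_{i} x_{0}$, feasibility $\hat x(\tau_{i}) \in \tau_{i}\mathcal{K}$ together with $\|\tau_{i} x_{0}\|_{\tau_{i}\mathcal{K}} = \|x_{0}\|_{\mathcal{K}} = 1$ shows that $h_{i}$ minimizes $\|Ah - \eta z\|_{2}^{2}$ over the convex set $S_{i} := \tau_{i}(\mathcal{K} - x_{0})$, with $\hat R(\tau_{i}; \tau_{i} x_{0}, A, \eta) = \eta^{-2}\,\E_{z}\|h_{i}\|_{2}^{2}$. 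Since $x_{0} \in \mathcal{K}$ we have $0 \in \mathcal{K} - x_{0}$, and $S_{i} \subseteq \cone(\mathcal{K} - x_{0}) = T_{\mathcal{K}}(x_{0})$, so both error vectors lie in the (convex) descent cone $T := T_{\mathcal{K}}(x_{0})$.

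The crux is the next, purely deterministic, step. The achievable residuals $\{Ah : h \in S_{i}\}$ form the convex set $AS_{i}$, so $Ah_{i}$ is exactly the metric projection $\mathrm{P}_{AS_{i}}(\eta z)$ (passing to closures if needed, which is harmless because a minimizer is attained in $AS_{i}$). Because $\tau_{1} \leq \tau_{2}$, one has $AS_{2} = (\tau_{2}/\tau_{1})\,AS_{1} = \lambda\,AS_{1}$ with $\lambda \geq 1$ and $0 \in AS_{1}$. The Projection Lemma (\autoref{lem:projection-lemma}) then yields, for \emph{every} realization of $z$, the deterministic monotonicity $\|Ah_{1}\|_{2} = \|\mathrm{P}_{AS_{1}}(\eta z)\|_{2} \leq \|\mathrm{P}_{\lambda AS_{1}}(\eta z)\|_{2} = \|Ah_{2}\|_{2}$. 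Finally, I would transfer this to the signal space: applying \autoref{thm:liaw-17} to the cone $T$, and using $\gamma(T \cap r B_{2}^{N}) = r\,\gamma(T \cap \sph^{N-1})$ together with the hypothesis $m > Ct^{2}\tilde K^{2}\delta^{-2}\gamma^{2}(T \cap \sph^{N-1})$, one obtains, with probability at least $1 - \exp(-t^{2})$, the two-sided bound $(1-\delta) m \|h\|_{2}^{2} \leq \|Ah\|_{2}^{2} \leq (1+\delta) m \|h\|_{2}^{2}$ for all $h \in T$ simultaneously. On this event, chaining with $\|Ah_{1}\|_{2} \leq \|Ah_{2}\|_{2}$ gives $(1-\delta) m \|h_{1}\|_{2}^{2} \leq \|Ah_{1}\|_{2}^{2} \leq \|Ah_{2}\|_{2}^{2} \leq (1+\delta) m \|h_{2}\|_{2}^{2}$, hence $\|h_{1}\|_{2}^{2} \leq \tfrac{1+\delta}{1-\delta}\|h_{2}\|_{2}^{2}$ pointwise in $z$; taking $\E_{z}$ and multiplying by $\eta^{-2}$ closes the argument.

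I expect the main obstacle to be the identification $Ah_{i} = \mathrm{P}_{AS_{i}}(\eta z)$ coupled with the scaling relation $AS_{2} = \lambda AS_{1}$, since this is exactly what converts the otherwise awkward comparison of two Lasso solutions into a single clean invocation of \autoref{lem:projection-lemma}. The accompanying technical care concerns closedness of $AS_{i}$ and uniqueness of the minimizer, which I would handle by passing to closures and by noting that the deviation bound forces $A$ to be injective on $T$ on the good event (so $\|h_{i}\|_{2}$ is well defined regardless of the worst-case-solution convention). A final bookkeeping point is that the stated factor $\tfrac{1+\delta}{1-\delta}$ follows cleanly when the deviation inequality is read in its squared-norm (restricted-isometry) form; phrasing \autoref{thm:liaw-17}'s conclusion directly on $\|Ah\|_{2}^{2}$ versus $m\|h\|_{2}^{2}$ (absorbing constants into $C$) lands exactly this constant rather than its square.
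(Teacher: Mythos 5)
Your proposal is correct and takes essentially the same route as the paper's proof: both recast the two \textsc{Lasso} errors as metric projections of $\eta z$ onto the nested convex sets $A\bigl(\tau_{i}(\mathcal{K}-x_{0})\bigr)$, apply \autoref{lem:projection-lemma} to obtain the deterministic ordering $\|Ah_{1}\|_{2} \leq \|Ah_{2}\|_{2}$, and then transfer back to the signal domain via the localized deviation bound of \autoref{thm:liaw-17} on the descent cone $T_{\mathcal{K}}(x_{0})$. Your two refinements---passing to closures of $AS_{i}$, and reading the deviation inequality in squared (RIP) form so the constant lands as $\tfrac{1+\delta}{1-\delta}$ rather than its square---only tighten points that the paper's own write-up glosses over.
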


\begin{proof}[Proof of {\autoref{prop:Rhat-nearly-monotone}}]
  Given $0 < \tau_{1} \leq \tau_{2} < \infty$, let
  $\tau \in \{ \tau_{1}, \tau_{2}\}$, define $y(\tau) = A\tau x_{0} + \eta z$,
  and define
  \begin{align*}
    q(\tau) := A\hat w(\tau), %
    \quad \text{where} \quad %
    \hat w(\tau) &:= \hat x(\tau; A, y) - \tau x_{0}, %
    \\
    \tau &\in \{\tau_{1}, \tau_{2}\}. 
  \end{align*}
  Observe that $q(\tau)$ may be written as
  \begin{align*}
    q(\tau) &\in \argmin \{ \|q - \eta z\|_{2} : q \in \tau \mathcal{K}' \}, %
    \\
    \mathcal{K}' &:= \{ A(x - x_{0}) : x \in \mathcal{K} \}.
  \end{align*}
  The set $\mathcal{K}' \subseteq \reals^{m}$ is non-empty, closed and convex,
  with $0 \in \mathcal{K}'$. In particular, \autoref{lem:projection-lemma}
  implies
  \begin{align*}
    \|q(\tau_{1})\|_{2} \leq \|q(\tau_{2})\|_{2}. 
  \end{align*}
  By~\cite[Theorem 1.7]{liaw2017simple}, for any $t \geq 1$ it holds with
  probability at least $1 - \exp(-t^{2})$ on $A$ that for all
  $w \in T_{\mathcal{K}}(x_{0})$,
  \begin{align*}
    \sqrt m \cdot \left|\|Aw\|_{2} - \|w\|_{2} \right| %
    \leq Ct\tilde K \gamma(T_{\mathcal{K}}(x_{0}) \cap \sph^{N-1}).
  \end{align*}
  Accordingly, since $\hat w(\tau) \in T_{\mathcal{K}}(x_{0})$ for
  $\tau = \tau_{1}, \tau_{2}$, under the assumption on $m$ it holds with
  probability at least $1 - \exp(-t^{2})$
  \begin{align*}
    (1 - \delta) \|\hat w(\tau_{1})\|_{2}  %
    &\leq \|q(\tau_{1})\|_{2} \\
    &\leq \|q(\tau_{2})\|_{2} %
    \leq (1 + \delta) \|\hat w(\tau_{2})\|_{2}.  %
  \end{align*}
  In particular,
  $\|\hat w(\tau_{1})\|_{2} \leq \frac{1+\delta}{1-\delta} \|\hat
  w(\tau_{2})\|_{2}$. As $z$ was arbitrary, the result follows:
  \begin{align*}
    \hat R(\tau_{1}, \tau_{1} x_{0}, A, \eta) %
    \leq \frac{1+\delta}{1-\delta} \hat R(\tau_{2}; \tau_{2}x_{0}, A, \eta).
  \end{align*}

\end{proof}

\begin{coro}
  \label{coro:opt-risk-near-equiv}
  Under the assumptions of \autoref{prop:Rhat-nearly-monotone}, the optimally
  tuned worst-case risk for {\ls} is nearly equivalent to $R^{*}(s, A)$, in the
  sense that
  \begin{align*}
    R^{*}(s, A) %
    \leq \sup_{x \in \Sigma_{s}^{N}} \hat R(\|x\|_{1} ; x, A, \eta) 
    \leq C R^{*}(s, A).
  \end{align*}
\end{coro}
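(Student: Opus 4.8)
The plan is to reduce everything to a single homogeneity identity for the matched-tuning risk and then read off both inequalities. First I would record that $\hat R$ is scale-homogeneous under matched tuning. Fix $x \in \Sigma_{s}^{N}$ with $\tau := \|x\|_{1} > 0$ and set $\bar x := x/\tau \in \Sigma_{s}^{N} \cap \partial B_{1}^{N}$. Substituting $x' = \tau \xi$ into \eqref{eq:ls} with $y = Ax + \eta z$ and factoring $\tau^{2}$ out of the objective shows that the minimizer scales as $\tau$ times the matched-tuning minimizer for $\bar x$ at noise level $\eta/\tau$, whence
\[
  \hat R(\|x\|_{1}; x, A, \eta) = \hat R\bigl(1; \bar x, A, \eta/\tau\bigr).
\]
Because $\bar x$ remains $s$-sparse on the unit $\ell_{1}$-sphere and $\tau = \|x\|_{1}$ ranges over all of $(0, \infty)$, this identity reindexes the matched-tuning worst-case risk as
\[
  \sup_{x \in \Sigma_{s}^{N}} \hat R(\|x\|_{1}; x, A, \eta)
  = \sup_{\bar x \in \Sigma_{s}^{N} \cap \partial B_{1}^{N}} \ \sup_{\eta' > 0} \ \hat R(1; \bar x, A, \eta').
\]

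For the lower (left) inequality I would argue directly from this reindexing. For every $\bar x \in \Sigma_{s}^{N} \cap \partial B_{1}^{N}$ and every $\eta' > 0$, the point $x = (\eta/\eta') \bar x$ lies in $\Sigma_{s}^{N}$ and satisfies $\hat R(\|x\|_{1}; x, A, \eta) = \hat R(1; \bar x, A, \eta')$; hence $\sup_{\bar x} \hat R(1; \bar x, A, \eta') \le \sup_{x \in \Sigma_{s}^{N}} \hat R(\|x\|_{1}; x, A, \eta)$ for each fixed $\eta'$. Letting $\eta' \to 0$ and recalling the definition $R^{*}(s, A) = \lim_{\eta' \to 0} \sup_{\bar x} \hat R(1; \bar x, A, \eta')$ gives $R^{*}(s, A) \le \sup_{x \in \Sigma_{s}^{N}} \hat R(\|x\|_{1}; x, A, \eta)$, with no constant lost.

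For the upper (right) inequality I would invoke \autoref{prop:Rhat-nearly-monotone} with $\mathcal{K} = B_{1}^{N}$, so that $\|\bar x\|_{\mathcal{K}} = \|\bar x\|_{1} = 1$. Applying the homogeneity identity to its conclusion $\hat R(\tau_{1}; \tau_{1} \bar x, A, \eta) \le \tfrac{1+\delta}{1-\delta} \hat R(\tau_{2}; \tau_{2} \bar x, A, \eta)$ translates near-monotonicity in $\tau$ into near-monotonicity in the noise level: for $\eta' \ge \eta'' > 0$,
\[
  \hat R(1; \bar x, A, \eta') \le \frac{1+\delta}{1-\delta}\, \hat R(1; \bar x, A, \eta'').
\]
Bounding $\hat R(1; \bar x, A, \eta'') \le \sup_{\bar x'} \hat R(1; \bar x', A, \eta'')$ on the right and letting $\eta'' \to 0$ gives $\hat R(1; \bar x, A, \eta') \le \tfrac{1+\delta}{1-\delta} R^{*}(s, A)$ for every $\bar x$ and every $\eta'$. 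Taking the supremum over $\bar x$ and $\eta'$ and using the reindexing from the first paragraph yields $\sup_{x \in \Sigma_{s}^{N}} \hat R(\|x\|_{1}; x, A, \eta) \le \tfrac{1+\delta}{1-\delta} R^{*}(s, A)$, i.e.\ the claim with $C = \tfrac{1+\delta}{1-\delta}$.

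The main obstacle is uniformity, together with the limit interchanges. \autoref{prop:Rhat-nearly-monotone} is stated for a single $x_{0}$ on a high-probability event, whereas here I need its conclusion to hold simultaneously for all $\bar x \in \Sigma_{s}^{N} \cap \partial B_{1}^{N}$. I would secure this by observing that every descent cone of an $s$-sparse vector has controlled complexity, $\w^{2}(T_{B_{1}^{N}}(\bar x) \cap \sph^{N-1}) \le 2 s \log(eN/s)$, so the sample-size condition $m > C_{\varepsilon} \tilde K^{2} \delta^{-2} s \log(eN/s)$ makes the underlying deviation estimate of \autoref{thm:liaw-17} hold uniformly on a single event of probability at least $1 - \varepsilon$, which is exactly what drives the proof of \autoref{prop:Rhat-nearly-monotone}. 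The remaining care is purely analytic: the interchange of $\sup_{\bar x}$ with $\lim_{\eta' \to 0}$ in the definition of $R^{*}$ should be handled by passing to $\liminf$ and $\limsup$, using that the displayed monotonicity makes $\eta' \mapsto \hat R(1; \bar x, A, \eta')$ comparable across all small $\eta'$ up to the factor $\tfrac{1+\delta}{1-\delta}$, so that the defining limit exists and agrees with the relevant supremum up to that same factor.
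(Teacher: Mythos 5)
Your proposal is correct and follows essentially the same route as the paper: the same scaling identity $\hat R(\tau;\tau\bar x,A,\eta)=\hat R(1;\bar x,A,\eta/\tau)$, the same decoupling of the worst-case supremum into a supremum over normalized directions and scales, a trivial lower bound, and \autoref{prop:Rhat-nearly-monotone} driving the upper bound. The differences are improvements in execution rather than in approach: you bypass the paper's compactness argument (maximizers $x^{*}(\tau)$ on the sphere and a bi-infinite monotone sequence $\tau_{i}$) by establishing $\hat R(1;\bar x,A,\eta')\le \tfrac{1+\delta}{1-\delta}\,R^{*}(s,A)$ pointwise in $\bar x$ and $\eta'$ before taking suprema, and you make explicit the uniformity over $\bar x$ of the high-probability event (via the common complexity bound on descent cones of $s$-sparse vectors), a point the paper's proof uses but leaves implicit.
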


\begin{proof}[Proof of {\autoref{coro:opt-risk-near-equiv}}]
  The $\sup$ defining the optimally tuned worst-case risk may be decoupled as
  \begin{align}
    \label{eq:opt-risk-decouple}
    \sup_{x' \in \Sigma_{s}^{N}} \hat R(\|x'\|_{1}; x', A, \eta) %
    = \sup_{\tau > 0} \sup_{x \in \Sigma_{s}^{N} \cap \sph^{N-1}} \hat R (\tau; \tau x, A, \eta).
  \end{align}
  Applying a standard scaling property gives the relation:
  \begin{align*}
    \hat R(\tau; \tau x, A, \eta) %
    & = \left(\frac{\tau}{\eta}\right)^{2} %
    \E \|\hat x(1; y/\tau, A) - x \|_{2}^{2} %
    \\
    & = \hat R(1; x, A, \eta / \tau). %
  \end{align*}
  The lower bound follows trivially from these two observations. To prove the
  upper bound, we start by observing two facts. First,
  $\Sigma_{s}^{N} \cap \sph^{N-1}$ is compact, so there is $x^{*}(\tau)$
  achieving the supremum over the set $\Sigma_{s}^{N} \cap \sph^{N-1}$ in
  \eqref{eq:opt-risk-decouple}. Next, if the supremum over $\tau > 0$ is
  achieved for $\tau \to \infty$, there is nothing to show, since
  \begin{align*}
    \sup_{\tau > 0} &\sup_{x \in \Sigma_{s}^{N}\cap \sph^{N-1}} \hat R(\tau; \tau x, A, \eta) %
    \\
    &=  \lim_{\tau \to \infty} \sup_{x \in \Sigma_{s}^{N}\cap
      \sph^{N-1}}
      \hat R(\tau; \tau x, A, \eta)
    \\
    &=  \lim_{\tau \to \infty} \sup_{x \in \Sigma_{s}^{N}\cap \partial B_{1}^{N}}
      \hat R(1; x, A, \eta/ \tau)
    \\
    & =  \lim_{\eta \to 0} \sup_{x \in \Sigma_{s}^{N}\cap \partial B_{1}^{N}}
      \hat R(1; x, A, \eta).
  \end{align*}
  Otherwise, the supremum is achieved for some $0 \leq \tau^{*} < \infty$. Let
  $(\tau_{i})_{i \in \ints}$ be an arbitrary bi-infinite monotone sequence with
  $\tau_{i} \xrightarrow{i\to -\infty} \tau^{*}$ and
  $\tau_{i} \xrightarrow{i\to\infty} \infty$. For any $i \leq j$,
  \autoref{prop:Rhat-nearly-monotone} and properties of the supremum give
  \begin{align*}
    \sup_{x \in \Sigma_{s}^{N} \cap \sph^{N-1}} &\hat R(\tau_{i}; \tau_{i} x, A, \eta) %
    \\
    &= \hat R(\tau_{i}; \tau_{i} x^{*}(\tau_{i}), A, \eta) %
    \\
    & \leq C \hat R(\tau_{j}; \tau_{j} x^{*}(\tau_{i}), A, \eta) %
    \\
    &\leq C \hat R(\tau_{j}; \tau_{j} x^{*}(\tau_{j}), A, \eta) %
    \\
    & = C \sup_{x \in \Sigma_{s}^{N}\cap \sph^{N-1}}\hat R(\tau_{j}; \tau_{j} x, A, \eta)
  \end{align*}
  As the above chain of inequalities holds for any pair $i < 0$ and $j > 0$,
  taking $i \to -\infty$ and $j \to \infty$ gives,
  \begin{align*}
    \sup_{\tau > 0} \sup_{x \in \Sigma_{s}^{N} \cap \sph^{N-1}}\hat R(\tau; \tau x, A, \eta)
    &\leq C \sup_{x \in \Sigma_{s}^{N}\cap \sph^{N-1}}%
      \hat R(\tau_{j}; \tau_{j}x, A, \eta) %
    \\
    & \xrightarrow{j\to\infty} C  \liminf_{\tau \to \infty}
      \sup_{x \in \Sigma_{s}^{N} \cap \sph^{N-1}} \hat R(\tau; \tau x, A, \eta). %
  \end{align*}
  Finally, combining the above with an application of the standard scaling property yields
  \begin{align*}
    \sup_{\tau > 0}& \sup_{x \in \Sigma_{s}^{N} \cap \sph^{N-1}}\hat R(\tau; \tau x, A, \eta) %
    \\
    &\leq C  \liminf_{\tau \to \infty} \sup_{x \in \Sigma_{s}^{N} \cap \sph^{N-1}} %
      \hat R(\tau; \tau x, A, \eta) %
    \\
    &= C  \liminf_{\tau \to \infty} \sup_{x \in \Sigma_{s}^{N} \cap \partial B_{1}^{N}} %
      \hat R(1; x, A, \eta / \tau) %
    \\
    & = C  \liminf_{\eta \to 0} \sup_{x \in \Sigma_{s}^{N} \cap \partial B_{1}^{N}} %
      \hat R(1; x, A, \eta) %
  \end{align*}

\end{proof}

\subsubsection{Controlling a conditionally Gaussian process}
\label{sec:control-condl-gp}

Here we ready two technical results that are used to control the error of the
tuned approximation ($\tau = \tau^{*}$) uniformly with respect to the noise
scale $\eta > 0$. First, we specialize a result of~\cite{liaw2017simple}. Next,
with high probability we control in expectation the extreme values of a
conditionally Gaussian process. 

\begin{lem}[Corollary of {\autoref{thm:liaw-14}}]
  \label{lem:corollary-lmpv-thm1.4}
  Fix $\delta, \varepsilon, r > 0$ and let $A \in \reals^{m \times N}$ be a
  normalized $K$-subgaussian matrix. For a constant $C_{\varepsilon} > 0$, if
  \begin{align}
    \label{eq:m-bd-near-isometry}
    m %
    > C_{\varepsilon} \delta^{-2} \tilde K^{2} r^{2} s \log\left(
    \frac{2N}{s}\right),
  \end{align}
  it holds with probability at least $1 - \varepsilon$ on the realization of
  $A$ that
  \begin{align}
    \label{eq:near-isometry}
    \sup_{x \in \mathcal{L}_{s}(r)} \left|\|Ax \|_{2} - \|x\|_{2}\right| %
    < \delta.
  \end{align}
\end{lem}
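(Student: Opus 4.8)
The plan is to specialize the deviation inequality of \autoref{thm:liaw-14} to the set $T = \mathcal{L}_{s}(r)$, after accounting for the normalization of $A$. The only data entering that bound are the radius and the Gaussian width of $\mathcal{L}_{s}(r)$, and both are straightforward to control. First I would record the radius: since $\Sigma_{s}^{N}\cap\sph^{N-1}$ consists of unit vectors and the Euclidean norm attains its maximum over a convex hull at an extreme point, $\rad(\mathcal{L}_{s}(r)) = r$. For the width, the supremum of a linear functional over a convex hull equals its supremum over the generating set, so $\w(\mathcal{L}_{s}(r)) = r\,\w(\Sigma_{s}^{N}\cap\sph^{N-1})$. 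Moreover, since $0 \in \Sigma_{s}^{N}\cap B_{2}^{N}$, for each $g$ the maximizer of $\ip{x, g}$ over $\Sigma_{s}^{N}\cap B_{2}^{N}$ may be taken on the sphere, so $\w(\Sigma_{s}^{N}\cap\sph^{N-1}) = \w(\Sigma_{s}^{N}\cap B_{2}^{N})$. Then \autoref{lem:mean-width-sparse-signals} yields $\w(\mathcal{L}_{s}(r)) \leq C r\sqrt{s\log(2N/s)}$.

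Next I would handle the normalization. As $A$ is a \emph{normalized} $K$-subgaussian matrix, $\sqrt m\, A$ is an (unnormalized) $K$-subgaussian matrix, so dividing the conclusion of \autoref{thm:liaw-14} by $\sqrt m$ gives, for any $u \geq 0$, with probability at least $1 - 3\exp(-u^{2})$,
\[
\sup_{x\in\mathcal{L}_{s}(r)}\bigl|\|Ax\|_{2} - \|x\|_{2}\bigr| \leq \frac{C\tilde K}{\sqrt m}\bigl[\w(\mathcal{L}_{s}(r)) + u\,\rad(\mathcal{L}_{s}(r))\bigr].
\]
Choosing $u = \sqrt{\log(3/\varepsilon)}$ makes the probability at least $1 - \varepsilon$, and inserting the two estimates bounds the right-hand side by $\tfrac{C\tilde K r}{\sqrt m}\bigl(C\sqrt{s\log(2N/s)} + \sqrt{\log(3/\varepsilon)}\bigr)$. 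Because $s\log(2N/s) \geq \log 2 > 0$ for $1 \leq s \leq N$ and $N \geq 2$, the $\varepsilon$-dependent additive term may be absorbed into the width term, giving the cleaner bound $C_{\varepsilon}\tilde K r\sqrt{s\log(2N/s)}/\sqrt m$. The hypothesis $m > C_{\varepsilon}\delta^{-2}\tilde K^{2} r^{2} s\log(2N/s)$, with the square of the constant folded into $C_{\varepsilon}$, then makes this quantity strictly smaller than $\delta$, which is exactly \eqref{eq:near-isometry}.

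The main obstacle here is not deep: the result is a direct specialization of \autoref{thm:liaw-14}, and the substantive content is the two geometric estimates, which are immediate from earlier results. The only points that demand care are the bookkeeping around the normalization of $A$ and the clean absorption of the $u\,\rad(T)$ term into the width term so that the stated form of the threshold on $m$ is recovered. As a stylistic alternative, since $\mathcal{L}_{s}(r)$ is symmetric one could instead invoke the Remark following \autoref{thm:liaw-14} with $u \geq 1$, where $\gamma(\mathcal{L}_{s}(r)) = \w(\mathcal{L}_{s}(r))$, replacing the additive bound by the single term $C\tilde K u\,\w(\mathcal{L}_{s}(r))$ and thereby avoiding the absorption step entirely.
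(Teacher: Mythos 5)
Your proposal is correct and follows essentially the same route as the paper's proof: specialize \autoref{thm:liaw-14} to $T = \mathcal{L}_{s}(r)$ (accounting for the normalization of $A$), bound $\w(\Sigma_{s}^{N}\cap\sph^{N-1})$ via \autoref{lem:mean-width-sparse-signals}, pick $u \sim \sqrt{\log(1/\varepsilon)}$, and absorb the $u\,\rad(T)$ term into the width term using $s\log(2N/s) \gtrsim 1$, exactly as the paper does when it folds $\max\{\log(2\varepsilon^{-1}), C_{2}^{2}\}$ into $C_{\varepsilon}$. The only cosmetic difference is that the paper also notes the trivial case $s = 0$, which your absorption step (requiring $s \geq 1$) silently excludes.
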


\begin{proof}[Proof of {\autoref{lem:corollary-lmpv-thm1.4}}]
  If $s = 0$ the result holds trivially. For $s \geq 1$, this lemma is a
  straightforward consequence of \autoref{thm:liaw-14}. Set
  $u := \sqrt{\log(2\varepsilon^{-1})}$. Indeed, by that result, it holds with
  probability at least $1 - \varepsilon$ on the realization of $A$ that
  \begin{align*}
    \sup_{x \in \mathcal{L}_{s}(r)} &\left|\|Ax\|_{2} - \|x\|_{2}\right| %
    \\
    & \leq C_{1} m^{-1/2} \tilde K r \left[\w(\Sigma_{s}^{N} \cap \sph^{N-1})
    + u \right],
  \end{align*}
  where $C_{1}$ is an absolute constant. By
  \autoref{lem:mean-width-sparse-signals}, there is an absolute constant
  $C_{2} > 0$ so that
  \begin{align*}
    \w^{2}(\Sigma_{s}^{N} \cap \sph^{N-1}) %
    \leq C_{2}^{2} s \log \left(\frac{2N}{s}\right).
  \end{align*}
  In particular,~\eqref{eq:near-isometry} holds if
  \begin{align*}
    C_{1} \tilde K m^{-1/2}r \left[ %
    C_{2} \sqrt{s \log \left(\frac{2N}{s}\right)} + u\right] %
    < \delta.
  \end{align*}
  Observe that this condition is satisfied if~\eqref{eq:m-bd-near-isometry}
  holds:
  \begin{align*}
    m &> C_{\varepsilon}\delta^{-2} \tilde K^{2} r^{2} s
    \log\left(\frac{2N}{s}\right),
    \\
    C_{\varepsilon} %
    &:= 4C_{1}^{2} \cdot \max \left\{ %
    \log\left(2\varepsilon^{-1}\right), C_{2}^{2} \right\}.
  \end{align*}

\end{proof}

\begin{lem}[Conditionally Gaussian process]
  \label{lem:conditional-gp-expectation}
  Let $\mathcal{K} \subseteq \mathcal{K}_{s}^{N} \cap \sph^{N-1}$ and suppose
  $A \in \reals^{m \times N}$ is a normalized $K$-subgaussian matrix. Let
  $z \in \reals^{m}$ with $z_{i} \iid \mathcal{N}(0, 1)$ and define
  \begin{align*}
    f(A, z) := \sup_{x \in \mathcal{K}} \ip{A x, z}. %
  \end{align*}
  Let $\delta, \varepsilon > 0$ and $s \in \nats$ with $s \geq 1$. There is an
  absolute constant $C_{\varepsilon} > 0$, depending only on $\varepsilon$, so
  that if
  \begin{align*}
  m > C_{\varepsilon} \delta^{-2} \tilde K^{2} s \log(N/s),
  \end{align*}
  then with probability at least $1 - \varepsilon$ on the realization of $A$,
  \begin{align*}
    \E\left[ f(A,z) \mid A\right] %
    \leq C_{\delta} \sqrt{s \log (2N/s)}
  \end{align*}
  where $C_{\delta} > 0$ is an absolute constant depending only on
  $\delta$.
\end{lem}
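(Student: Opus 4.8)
The plan is to recognize that, conditionally on $A$, the target quantity $\E[f(A,z)\mid A] = \E_z\sup_{x\in\mathcal{K}}\langle Ax,z\rangle$ is the Gaussian width of the image set $A\mathcal{K}\subseteq\reals^m$, and to dominate it by the Gaussian width of the image of the \emph{sparse} ball, which is amenable to a union bound over supports. First I would pass to a sparse generating set. Since $\mathcal{K}\subseteq\mathcal{K}_s^N\subseteq 2\cvx(\Sigma_s^N\cap B_2^N)$ by \autoref{lem:convexification}, and a linear functional attains its supremum over a convex hull on the generators, for every fixed $z$ one has $\sup_{x\in\mathcal{K}}\langle Ax,z\rangle\le 2\sup_{x\in\Sigma_s^N\cap B_2^N}\langle Ax,z\rangle = 2\max_{|S|=s}\|A_S^{T} z\|_2$, where $A_S$ denotes the column submatrix of $A$ indexed by $S$. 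Taking $\E_z$, it then suffices to show $\E_z\max_{|S|=s}\|A_S^{T} z\|_2\le C_\delta\sqrt{s\log(2N/s)}$ on a high-probability event for $A$.

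Next I would fix the good event for $A$. Applying \autoref{lem:corollary-lmpv-thm1.4} with $r=1$ — whose hypothesis $m > C_\varepsilon\delta^{-2}\tilde{K}^2 s\log(2N/s)$ matches the assumed lower bound on $m$ — yields, with probability at least $1-\varepsilon$, the uniform near-isometry $\sup_{x\in\mathcal{L}_s(1)}|\|Ax\|_2-\|x\|_2|<\delta$. Because $\Sigma_s^N\cap B_2^N\subseteq\cvx(\Sigma_s^N\cap\sph^{N-1}) = \mathcal{L}_s(1)$, this gives $\|A_S\|_{\mathrm{op}}\le 1+\delta$ for every $|S|=s$, and, taking $x=e_j$, $\|A^j\|_2\le 1+\delta$ for every column, hence $\|A_S\|_F\le(1+\delta)\sqrt{s}$ — all simultaneously on one event.

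Finally, I would bound the expected maximum on this event by concentration. For each fixed $S$, applying \autoref{thm:borell-tis} to the centred Gaussian process $x\mapsto\langle A_S x, z\rangle$ over $x\in B_2^S$ (whose supremum is $\|A_S^{T} z\|_2$ and whose variance proxy is $\sigma_S^2=\|A_S\|_{\mathrm{op}}^2\le(1+\delta)^2$) shows that $\|A_S^{T} z\|_2-\E\|A_S^{T} z\|_2$ is subgaussian with parameter $\lesssim 1+\delta$, while $\E\|A_S^{T} z\|_2\le\|A_S\|_F\le(1+\delta)\sqrt{s}$. Bounding the maximum of the $\binom{N}{s}$ centred subgaussian variables by $\lesssim(1+\delta)\sqrt{\log\binom{N}{s}}$ and using $\log\binom{N}{s}\lesssim s\log(2N/s)$ gives $\E_z\max_{|S|=s}\|A_S^{T} z\|_2\le(1+\delta)\sqrt{s} + C(1+\delta)\sqrt{s\log(2N/s)}\le C_\delta\sqrt{s\log(2N/s)}$, and the factor $2$ from the first step is absorbed into $C_\delta$.

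The main obstacle is this last step: one must control the maximum over the exponentially many ($\binom{N}{s}$) highly correlated supports, and verify that the union-bound cost $\sqrt{\log\binom{N}{s}}$ is of the same order $\sqrt{s\log(2N/s)}$ as the target, rather than dominating it. The rest is bookkeeping — ensuring the operator-norm and Frobenius bounds hold simultaneously over all supports on a single event (delivered by the \emph{uniform} near-isometry of \autoref{lem:corollary-lmpv-thm1.4}), and checking that the additive slack $\delta$ enters only through the absolute constants.
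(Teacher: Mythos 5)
Your proof is correct, but it takes a genuinely different route from the paper. The paper keeps the supremum over the convexified set $\mathcal{L}_{s}$ and applies the Sudakov--Fernique inequality (\autoref{thm:sudakov-fernique}), comparing the conditionally Gaussian process $X_{x} = \ip{Ax,z}$ with the dominating process $Y_{x} = (1+\delta)\ip{x,g}$; verifying the increment condition $\E(X_{x}-X_{y})^{2} \leq \E(Y_{x}-Y_{y})^{2}$ forces control of $\|A(x-y)\|_{2}$ on the \emph{difference set} $\mathcal{L}_{s}-\mathcal{L}_{s} \subseteq \mathcal{L}_{2s}(4)$, so the paper's good event is the near-isometry of \autoref{lem:corollary-lmpv-thm1.4} on $2s$-sparse vectors at radius $4$ (costing a factor $32$ in the measurement bound), after which the conclusion follows from $\w(\mathcal{L}_{s}) \lesssim \sqrt{s\log(2N/s)}$ (\autoref{lem:mean-width-sparse-signals}). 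You instead push the supremum to the extreme points, reducing to $\max_{|S|=s}\|A_{S}^{T}z\|_{2}$, and then run per-support Borell--TIS concentration plus a union bound over the $\binom{N}{s}$ supports. Your route is more elementary (no Gaussian comparison theorem), and needs a strictly weaker event --- near-isometry only on $s$-sparse unit vectors, to get $\|A_{S}\|_{\mathrm{op}} \leq 1+\delta$ and $\|A_{S}\|_{F} \leq (1+\delta)\sqrt{s}$ --- but it leans on the combinatorial structure of sparsity and would not transfer to a general convex proxy set $\mathcal{K}$, whereas the paper's comparison argument is exactly the kind that does. The one point your approach had to get right, and does, is that the union-bound cost $\sqrt{\log\binom{N}{s}} \lesssim \sqrt{s\log(2N/s)}$ is of the same order as the target rather than dominating it. The residual discrepancy between your hypothesis $m \gtrsim \delta^{-2}\tilde{K}^{2}s\log(2N/s)$ and the stated $s\log(N/s)$ is a constant-factor matter (for $s \leq N/2$), of the same kind the paper itself absorbs into $C_{\varepsilon}$.
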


\begin{proof}[Proof of {\autoref{lem:conditional-gp-expectation}}]

  By \autoref{lem:convexification},
  \begin{align*}
    \mathcal{K} %
    \subseteq \mathcal{K}_{s}^{N} \cap \sph^{N-1} %
    \subseteq \mathcal{L}_{s}.
  \end{align*}
  Therefore, $f(A,z) \leq \sup_{x \in \mathcal{L}_{s}} \ip{A x,
    z}$. Furthermore,
  \begin{align*}
    \mathcal{L}_{s} - \mathcal{L}_{s} \subseteq \mathcal{L}^{*}_{s}.
  \end{align*}
  By \autoref{lem:corollary-lmpv-thm1.4},
  \begin{align}
    \label{eq:conditional-gp-event}
    \max_{j \in [N]} \left|\|A^{j}\|_{2} - 1\right| %
    \leq \sup_{x \in \mathcal{L}^{*}_{s}} \left|\|Ax\|_{2} - \|x\|_{2}\right| %
    < \delta %
  \end{align}
  with probability at least $1 - \varepsilon$ if $m$ satisfies
  \begin{align}
    \label{eq:m-bd-conditional-gp}
    m %
    > 32 C_{\varepsilon} \delta^{-2} \tilde K^{2}  s \log(N/s).
  \end{align}
  Next, where $x \in \mathcal{L}_{s}$, define the random processes
  \begin{align*}
    X_{x} %
    &:= \ip{Ax, z},
      & z_{i} &\iid \mathcal{N}(0, 1);
      \\
      Y_{x} %
    &:= (1 + \delta) \ip{x, g}, 
      & g_{i} &\iid \mathcal{N}(0,1).
  \end{align*}
  Assume~\eqref{eq:m-bd-conditional-gp} holds and condition on the event
  $\mathcal{A}$ described by~\eqref{eq:conditional-gp-event}. Then
  $x - y \in \mathcal{L}_{s} - \mathcal{L}_{s} \subseteq \mathcal{L}^{*}_{s}$, so
  \begin{align*}
    \E (X_{y} - X_{x})^{2} %
    & = \|A(x - y)\|_{2}^{2}
    \\
    & \leq (1 + \delta)^{2} \|x - y\|_{2}^{2}
     = \E (Y_{y} - Y_{x})^{2}.
  \end{align*}
  Namely, conditioned on $\mathcal{A}$, the Sudakov-Fernique inequality
  (\autoref{thm:sudakov-fernique}) gives
  \begin{align*}
    \E\left[f(A,z) \mid A\right] %
    & \leq \E\left[\sup_{x \in \mathcal{L}_{s}} X_{x}\right] %
    \leq \E\left[\sup_{x \in \mathcal{L}_{s}} Y_{x}\right] %
    \\
    &= (1 + \delta) \w(\mathcal{L}_{s})
    \\
    & \leq 2 C (1+ \delta) \sqrt{s \log(2N/s)},
  \end{align*}
  where $C > 0$ is an absolute constant.

\end{proof}

\begin{rmk}[Subgaussianity of $f(A,z) \mid A$]
  \label{rmk:fAz-subgaussianity}
  Conditioned on $A$, Borell-TIS (\autoref{thm:borell-tis}) gives subgaussian
  concentration of $f(A,z)$ about $\E[f(A,z)\mid A]$. In particular,
  \begin{align*}
    \left\|f(A,z) - \E\left[f(A,z) \mid A\right] \right\|_{\Psi_{2}} %
    \lesssim \sigma_{\mathcal{K}}
  \end{align*}
  where, on the event $\mathcal{A}$ as defined in the proof of
  \autoref{lem:conditional-gp-expectation},
  \begin{align*}
    \sigma_{\mathcal{K}}^{2} %
    = \sup_{x \in \mathcal{K}} \E \left[|\ip{Ax, z}|^{2}\right] %
    = \sup_{x \in \mathcal{K}} \|Ax\|_{2}^{2} \leq (1 + \delta)^{2}.
  \end{align*}
  Note that subgaussianity of $f(A, z)$ about $\E[f(A, z)\mid A]$ can also be
  established using concentration of Lipschitz functions of Gaussians. Indeed,
  since $\mathcal{K} \subseteq \sph^{N-1}$, for each $A$ it holds that $f(A, z)$
  is Lipschitz in $z$. In fact, one can show that ``for most'' $A$, $f(A, z)$ is
  ``nearly'' $1$-Lipschitz.
\end{rmk}

\subsection{Proofs for constrained Lasso sensitivity}
\label{sec:proofs-constr-lasso}

\subsubsection{Suboptimal choice of $\tau$}
\label{sec:subopt-choice-tau}

The first result required to prove \autoref{thm:ls-instability} concerns the
case where {\ls} is controlled by a parameter that is too large. Under mild
regularity assumptions on the mapping $A$, we show that this underconstrained
problem cannot recover even the least-squares proximal denoising error rate in
the limiting low-noise regime. The second result of this section concerns the
situation where $\tau$ is too small, $\tau < \tau^{*}$. In this overconstrained
problem, the ground truth does not lie in the feasible set and one expects this
to be detrimental to recovery performance. We confirm this intuition
irrespective of the assumptions on the measurement matrix $A$.

\begin{lem}[Underconstrained {\ls}]
  \label{lem:ls-instability-uc}
  Let $A \in \reals^{m\times N}$ and assume that $\dim(\Null (A)) >
  0$. Given $x_{0} \in \reals^{N}, \eta > 0$ and $z \in \reals^{m}$ with
  $z_{i} \iid \mathcal{N}(0, 1)$, let $y := Ax_{0} + \eta z$. Suppose 
  $\tau > \|x_{0}\|_1$. Almost surely on the realization of $z$,
  \begin{align*}
    \lim_{\eta \to 0} \hat L (\tau; x_{0}, A, \eta z) = \infty.
  \end{align*}
\end{lem}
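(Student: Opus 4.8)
The plan is to exploit the fact that when $\tau > \|x_0\|_1$ the ground truth lies strictly inside the feasible set $\tau B_1^N$, so for small $\eta$ the constrained program inherits the degeneracy of the underdetermined least-squares problem: its minimizers fill out an entire affine slice of the $\ell_1$ ball rather than concentrating near $x_0$. The worst-case-error convention for $\hat x(\tau)$ then selects a point at a fixed positive distance from $x_0$, which overwhelms the shrinking noise scale.

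First I would decompose the noise orthogonally to the range of $A$. Writing $\eta z = \eta z_\parallel + \eta z_\perp$ with $z_\parallel \in \Rng(A)$ and $z_\perp \in \Rng(A)^\perp$, and choosing $w$ with $Aw = z_\parallel$, one has for every $x$
\[
  \|y - Ax\|_2^2 = \|A(x_0 + \eta w - x)\|_2^2 + \eta^2 \|z_\perp\|_2^2,
\]
since $A(x_0 + \eta w - x) \in \Rng(A)$ while $\eta z_\perp \in \Rng(A)^\perp$. The additive constant is irrelevant to the minimization, so the unconstrained least-squares minimizers form precisely the affine subspace $x_0 + \eta w + \Null(A)$, on which the objective attains its global minimum value $\eta^2 \|z_\perp\|_2^2$.

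Next I would show that for all sufficiently small $\eta$ the constrained minimizer set coincides with $(x_0 + \eta w + \Null(A)) \cap \tau B_1^N$. Because $\|x_0\|_1 < \tau$, the point $x_0$ is interior to $\tau B_1^N$, and as $x_0 + \eta w \to x_0$ the point $x_0 + \eta w$ is itself feasible once $\eta$ is small; hence the affine subspace meets the feasible set and the constrained minimum equals $\eta^2 \|z_\perp\|_2^2$, with every feasible point off the subspace yielding a strictly larger residual. To exhibit a minimizer bounded away from $x_0$, I would use $\dim(\Null(A)) > 0$ to fix $v \in \Null(A)$ with $\|v\|_2 = 1$; since $x_0$ is interior there is $t_0 > 0$ with $x_0 + t_0 v$ strictly feasible, and then $x_0 + \eta w + t_0 v$ lies in the minimizer set for small $\eta$ with $\|(x_0 + \eta w + t_0 v) - x_0\|_2 \geq t_0 - \eta\|w\|_2 \geq t_0/2$. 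By the worst-case-error convention, this forces $\|\hat x(\tau) - x_0\|_2 \geq t_0/2$ for all small $\eta$, whence
\[
  \hat L(\tau; x_0, A, \eta z) = \eta^{-2}\|\hat x(\tau) - x_0\|_2^2 \geq \frac{t_0^2}{4\eta^2} \xrightarrow{\eta \to 0} \infty.
\]

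The step requiring the most care is the degeneracy bookkeeping: verifying that the constrained and unconstrained minima agree for small $\eta$, and that the displacement $t_0 v$ keeps us feasible uniformly as $\eta \to 0$. I note that the argument uses no special property of $z$ beyond the fixed decomposition above, so it in fact holds for every realization of $z$, giving the claimed almost-sure statement.
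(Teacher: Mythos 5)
Your proposal is correct and follows essentially the same route as the paper's proof: both exploit that $x_{0}$ is strictly interior to $\tau B_{1}^{N}$, construct a minimum-residual solution by pulling the (projected) noise back through $A$, shift it along $\Null(A)$ to a point whose distance from $x_{0}$ is bounded below independently of $\eta$, and invoke the worst-case-solution convention to force blow-up at rate $\eta^{-2}$. The differences are cosmetic: your orthogonal decomposition of $z$ handles the rank-deficient case in one stroke where the paper treats $\Span(A) = \reals^{m}$ and its complement separately, and you bound the $\ell_{2}$ distance directly via a unit-norm null vector where the paper scales its null-space perturbation in $\ell_{1}$ norm and converts to $\ell_{2}$ at the cost of a $\sqrt{N}$ factor.
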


\begin{proof}[Proof of {\autoref{lem:ls-instability-uc}}] 
  Define $\rho := \tau - \tau^{*}$, where $\tau^{*} := \|x_{0}\|_1$. For
  simplicity, first assume $\Span (A) = \reals^{m}$. There exists $\zeta \in
  \reals^{N}$ such that $A\zeta = z$, and so $A(x_{0} + \eta \zeta) = Ax_{0} +
  \eta z = y$. Moreover, if $\eta < \rho \|\zeta\|_{1}^{-1}$ then $x_{0} + \eta
  \zeta \in \tau B_{1}^{N}$. In particular, $\xi := x_{0} + \eta \zeta$ solves
  {\ls}, because it is feasible and achieves the lowest possible objective value
  for {\ls}. Notice $\|\zeta\|_{1} < \infty$ almost surely, so for any
  realization of $z$, $\eta < \rho \|\zeta\|_{1}^{-1}$ holds for all $\eta$
  sufficiently small. Specifically, we have constructed $\xi$ solving {\ls}, and
  lying on the interior of $\tau B_{1}^{N}$. Consequently, almost surely there
  is $\nu \in \Null (A)$ so that $\xi + \nu \in \tau B_{1}^{N}$ and still $A
  (\xi + \nu) = y$. Scale $\nu$ if necessary so that $\|\xi + \nu\|_{1} \in
  [\frac{1}{2}(\tau + \tau_{*}), \tau]$. Then, almost surely on the realization
  of $z$,
  \begin{align*}
    \hat L (\tau; x_{0}, A, \eta z) %
    &\geq \eta^{-2} \|\xi + \nu - x_{0}\|_{2}^{2} %
    \\
    &\geq \frac{1}{N\eta^{2}} \|\xi + \nu - x_{0} \|_{1}^{2} %
    \\
    &\geq \frac{\rho^{2}}{4N \eta^{2}} %
    \xrightarrow{\eta \to 0} \infty.
  \end{align*}
  The case $\Span(A) \neq \reals^{m}$ is similar. This case is interesting only
  when $z \in \reals^{m} \setminus \Span(A) \neq \emptyset$, otherwise we argue
  as above. In this setting, define $P$ to be the projection onto the range of
  $A$ with $P^{\perp} = (I - P)$ being its orthogonal component. We may
  re-write the objective of {\ls} as
  \begin{align*}
    \|y - Ax\|_{2}^{2} %
    & = \|(P + P^{\perp})(y - Ax) \|_{2}^{2} %
    \\
    & = \|P(y - Ax) + P^{\perp} y\|_{2}^{2} %
    \\
    & =  \|\tilde y - Ax\|_{2}^{2} + \|P^{\perp} y\|_{2}^{2}
  \end{align*}
  where $\tilde y := Py$. Therefore, when $z \not \in \Span (A)$, solving {\ls}
  is equivalent to solving
  \begin{align}
    \argmin \left\{ \|Py - Ax\|_{2} : \|x\|_{1} \leq \tau\right\}.
    \tag{$\star$}
  \end{align}
  By construction, $\tilde y = Py \in \Rng(A)$, so we may apply the same
  argument as above to the program $(\star)$, implying
  \begin{align*}
    \lim_{\eta \to 0}\hat L (\tau; x_{0}, A, \eta z) = \infty.
  \end{align*}

\end{proof}

\begin{lem}[Overconstrained {\ls}]
  \label{lem:ls-instability-oc}
  Fix $\tau < \tau^{*}$. Almost surely on the realization $z$,
  \begin{align*}
    \lim_{\eta \rightarrow 0} \hat{L} \left(\tau ; x_{0}, A, \eta z\right)%
    = \infty.
  \end{align*}

\end{lem}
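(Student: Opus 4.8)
The plan is to exploit that the constraint $\tau < \tau^{*} = \|x_{0}\|_{1}$ renders the ground truth \emph{infeasible}, so that every candidate solution is bounded away from $x_{0}$ by a gap that is independent of the noise scale $\eta$. Unlike the underconstrained case of \autoref{lem:ls-instability-uc}, no construction tailored to $\Null(A)$ or to $z$ is needed; the blow-up is purely geometric.

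First I would observe that the feasible set of {\ls} is $\tau B_{1}^{N} = \{x \in \reals^{N} : \|x\|_{1} \leq \tau\}$, which is closed and convex, and that $x_{0} \notin \tau B_{1}^{N}$ precisely because $\|x_{0}\|_{1} = \tau^{*} > \tau$. Since $\tau B_{1}^{N}$ is closed and $x_{0}$ lies strictly outside it, the Euclidean distance
\begin{align*}
  d := \mathrm{dist}(x_{0}, \tau B_{1}^{N})
  = \inf_{w \in \tau B_{1}^{N}} \|x_{0} - w\|_{2}
\end{align*}
is \emph{strictly} positive.

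Next I would note that any minimizer $\hat x(\tau)$ of {\ls} --- in particular the worst-case, lexicographically-first solution singled out by our conventions --- is feasible, so $\hat x(\tau) \in \tau B_{1}^{N}$ and hence $\|\hat x(\tau) - x_{0}\|_{2} \geq d$. The crucial point is that this bound is uniform in $\eta > 0$ and in the realization of $z$: the feasible set $\tau B_{1}^{N}$ depends on neither, so neither does the separation $d$. Consequently,
\begin{align*}
  \hat L(\tau; x_{0}, A, \eta z)
  = \eta^{-2} \|\hat x(\tau) - x_{0}\|_{2}^{2}
  \geq \frac{d^{2}}{\eta^{2}}
  \xrightarrow{\eta \to 0} \infty.
\end{align*}

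There is no substantive obstacle here, which is worth emphasizing for contrast. The conclusion is in fact deterministic: it holds for \emph{every} realization of $z$ (not merely almost surely) and for \emph{any} matrix $A$, since it rests only on the fact that $x_{0}$ is separated from the closed feasible set $\tau B_{1}^{N}$. The almost-sure qualifier in the statement is therefore satisfied vacuously. This stands in sharp contrast to \autoref{lem:ls-instability-uc}, where the divergence genuinely required producing feasible iterates drifting off along $\Null(A)$ in a manner depending on both $A$ and $z$.
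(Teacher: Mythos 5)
Your proposal is correct and takes essentially the same route as the paper: both arguments rest on the single observation that any solution of {\ls} lies in the feasible set $\tau B_{1}^{N}$, which is separated from $x_{0}$ by a gap independent of $\eta$ and $z$, so the loss is bounded below by a positive constant times $\eta^{-2}$ and the divergence is in fact deterministic. The only difference is cosmetic: the paper makes the gap explicit as $\rho/\sqrt{N}$ with $\rho = \tau^{*} - \tau$ (via $\|\xi - x_{0}\|_{1} \geq \|x_{0}\|_{1} - \|\xi\|_{1} \geq \rho$ and the comparison $\|\cdot\|_{2} \geq \|\cdot\|_{1}/\sqrt{N}$), whereas you obtain an unspecified $d > 0$ from closedness of $\tau B_{1}^{N}$.
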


\begin{proof}[{Proof of \autoref{lem:ls-instability-oc}}]
  Let $\rho := \tau^{*} - \tau > 0$. For any solution $\xi$ to {\ls}, one has
  \begin{align*}
    \eta^{-2} \|\xi - x_{0} \|_{2}^{2} %
    \geq \frac{\rho^{2}}{N \eta^{2}}. %
  \end{align*}
  By definition, the desired result follows immediately:
  \begin{align*}
    \lim_{\eta \to 0}\hat L (\tau; x_{0}, A, \eta z) %
    \geq \eta^{-2} \|\xi - x_{0}\|_{2}^{2} %
    \geq \frac{\rho^{2}}{N \eta ^{2}} %
    \xrightarrow{\eta \to 0} \infty.
  \end{align*}

\end{proof}

\subsubsection{Uniform control over noise scales}
\label{sec:uniform-control-over-noise-scales}

In this section, we control {\ls} in the optimally tuned setting, uniform over
the noise scale $\eta$. Specifically, for any $x_{0} \in \Sigma_{s}^{N}$ we
control the expected error of recovery for {\ls} uniformly over the noise scale
$\eta > 0$. The results of \autoref{sec:control-condl-gp} are crucial for this
purpose. 

\begin{prop}[Uniform over noise scale]
  \label{prop:unif-noise-scale}
  Let $0 \leq s < N < \infty$ be integers and let $m \in \nats$. Let
  $A \in \reals^{m \times N}$ be a normalized $K$-subgaussian matrix, and fix
  $\delta, \varepsilon > 0$. Suppose that $y = Ax_{0} + \eta z$ for $\eta > 0$
  and $z \in \reals^{m}$ with $z_{i} \iid \mathcal{N}(0, 1)$. With probability
  at least $1 - \varepsilon$ on the realization of $A$, there exist constants
  $C_{\delta}, C_{\varepsilon} > 0$ so that if
  \begin{align*}
    m > C_{\varepsilon} \delta^{-2} \tilde K^{2} s \log \left(\frac{N}{2s}\right),
  \end{align*}
  then for all $\eta > 0$:
  \begin{align*}
      \E \left[ \|\hat x - x_{0} \|_{2}^{2} \mid A \right] %
    &\leq C_{\delta} \eta^{2} s \log \left(\frac{N}{2s}\right). %
  \end{align*}
  where $\hat x = \hat x(\tau^{*})$ solves {\ls} with
  $\tau = \tau^{*} := \|x_{0}\|_{1}$.
\end{prop}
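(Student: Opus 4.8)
The plan is to reduce the claim to a single deterministic, $\eta$-free bound on a Gaussian-width-type supremum, so that uniformity over the noise scale falls out automatically. First I would dispose of the degenerate case $s = 0$ (equivalently $x_0 = 0$), where $\tau^* = 0$ forces $\hat x = 0$ and the error vanishes; so assume $s \geq 1$ and $x_0 \neq 0$. Writing $h := \hat x - x_0$ and using that $x_0$ is feasible for \ls\ with $\tau = \tau^* = \|x_0\|_1$, optimality gives $\|y - A\hat x\|_2^2 \leq \|y - Ax_0\|_2^2 = \eta^2\|z\|_2^2$. Since $y - A\hat x = -Ah + \eta z$, expanding the square yields the basic inequality
\begin{align*}
  \|Ah\|_2^2 \leq 2\eta \ip{Ah, z}.
\end{align*}

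The key geometric step is to control both sides over the relevant cone. By \autoref{lem:descent-cone-condition}, $\|h\|_1 \leq 2\sqrt s\,\|h\|_2$, so the normalized vector $u := h/\|h\|_2$ (assuming $h \neq 0$, else the claim is trivial) satisfies $\|u\|_2 = 1$ and $\|u\|_1 \leq \sqrt{4s}$, whence $u \in \mathcal{K}_{4s}^N \cap \sph^{N-1}$. For the right-hand side I would bound $\ip{Ah, z} = \|h\|_2\,\ip{Au, z} \leq \|h\|_2\, f(A,z)$, where $f(A,z) := \sup_{v \in \mathcal{K}_{4s}^N\cap\sph^{N-1}}\ip{Av, z}$. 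For the left-hand side I would invoke \autoref{lem:corollary-lmpv-thm1.4} (applied with $4s$ in place of $s$ and radius $r = 2$, using \autoref{lem:convexification} to place $\mathcal{K}_{4s}^N\cap\sph^{N-1}$ inside $\mathcal{L}_{4s}$) to obtain a near-isometry event on $A$, on which $\|Au\|_2 \geq 1 - \delta$ and hence $\|Ah\|_2^2 \geq (1-\delta)^2\|h\|_2^2$. Combining and cancelling one factor of $\|h\|_2$ gives the per-realization bound
\begin{align*}
  \|h\|_2 \leq \frac{2\eta}{(1-\delta)^2}\, f(A,z).
\end{align*}
Crucially, both $f(A,z)$ and the near-isometry event depend only on $A$ and $z$, not on $\eta$; thus this bound holds for every $\eta > 0$ simultaneously, with the $\eta$-dependence factored out cleanly. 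This is exactly what furnishes uniformity over the noise scale, so that aspect of the statement costs no extra work.

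It remains to bound the conditional second moment $\E[f(A,z)^2\mid A]$, which is the technical crux. I would write $\E[f^2\mid A] = (\E[f\mid A])^2 + \mathrm{Var}[f\mid A]$. The first term is controlled by \autoref{lem:conditional-gp-expectation} (again with parameter $4s$, which conveniently produces the logarithm $\log(N/(2s))$ rather than $\log(2N/s)$): on an event of probability at least $1 - \varepsilon$ on $A$, one has $\E[f\mid A] \leq C_\delta\sqrt{s\log(N/(2s))}$. For the variance I would appeal to \autoref{rmk:fAz-subgaussianity}: conditioned on $A$, Borell--TIS gives subgaussian concentration of $f(A,z)$ about its conditional mean with proxy $\sigma_{\mathcal{K}} \leq 1 + \delta$, so $\mathrm{Var}[f\mid A] \lesssim (1+\delta)^2$. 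Since $s \geq 1$ and $N \geq 2$ force $s\log(N/(2s)) \gtrsim 1$, the variance term is absorbed, giving $\E[f^2\mid A] \leq C_\delta\, s\log(N/(2s))$. Taking conditional expectations in the per-realization bound and intersecting the two $A$-events (each obtainable at confidence $1 - \varepsilon/2$, so the stated $m$-condition suffices after enlarging $C_\varepsilon$) yields
\begin{align*}
  \E[\|\hat x - x_0\|_2^2 \mid A]
  \leq \frac{4\eta^2}{(1-\delta)^4}\,\E[f(A,z)^2\mid A]
  \leq C_\delta\,\eta^2\, s\log(N/(2s)),
\end{align*}
as desired. The main obstacle is this second-moment passage: the cited lemmas supply only a first-moment (expected-supremum) estimate, and converting it to a bound on $\E[f^2\mid A]$ is where the Borell--TIS variance control of \autoref{rmk:fAz-subgaussianity} is indispensable; secondary care is needed to ensure the near-isometry and Gaussian-width events hold on a common, $\eta$-independent event of probability at least $1 - \varepsilon$.
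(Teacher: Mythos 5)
Your proposal is correct and follows essentially the same route as the paper's proof: the basic optimality inequality $\|Ah\|_2^2 \leq 2\eta\ip{Ah,z}$, the descent-cone condition placing $h/\|h\|_2$ in $\mathcal{K}_{4s}^{N}\cap\sph^{N-1}$, the near-isometry event from \autoref{lem:corollary-lmpv-thm1.4}, the conditional expected-supremum bound from \autoref{lem:conditional-gp-expectation}, and the Borell--TIS variance control of \autoref{rmk:fAz-subgaussianity} to pass from the first to the second conditional moment of $f(A,z)$, with the $\eta$-independence of all events giving uniformity over the noise scale. The only cosmetic difference is that you go directly to $\E[f^2\mid A]$ via the mean-variance decomposition, whereas the paper first records the first-moment bound and then presents this same second-moment argument as its concluding step.
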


\begin{proof}[Proof of {\autoref{prop:unif-noise-scale}}]
  If $s = 0$, the result holds trivially as, by construction,
  $\|\hat x - x_{0}\|_{2} = 0$ almost surely. Suppose $s \geq 1$. By definition
  of $\hat x$, where $h := \hat x - x_{0}$,
  \begin{align*}
    \|A \hat x - y\|_{2}^{2} \leq \|A x_{0} - y\|_{2}^{2} %
    \quad \implies \quad %
    \|Ah\|_{2}^{2} \leq 2 \eta \ip{A h, z}.
  \end{align*}

  \noindent\textbf{Step 1:} Lower bound $\|Ah\|_{2}$ with high
  probability. Note that $\|Aw\|_{2} = \|w\|_{2} \|A \hat w\|_{2}$ for
  $w \neq 0$ where $\hat w := w / \|w\|_{2}$. By
  \autoref{lem:corollary-lmpv-thm1.4}, there is an event $\mathcal{A}_{1}$ with
  $\mathbb{P}(\mathcal{A}_{1}) \geq 1 - \varepsilon /2$ on which
  \begin{align*}
    \sup_{x \in \mathcal{K}_{4s}^{N}} \left|\|Ax\|_{2} - \|x\|_{2}\right| %
    \leq \sup_{x \in \mathcal{L}_{4s}} \left|\|Ax\|_{2} - \|x\|_{2}\right| %
    < \delta_{1} %
  \end{align*}
  if $m$ satisfies
  \begin{align}
    \label{eq:unif-m-lb1}
    m %
    > 16 C_{\varepsilon}' \delta_{1}^{-2} \tilde K^{2}
    s\log\left(\frac{N}{2s}\right).
  \end{align}
  Specifically, $h \in \mathcal{J}_{4s}^{N}$ by
  \autoref{lem:descent-cone-condition}, meaning
  $\hat h \in \mathcal{J}_{4s}^{N} \cap \sph^{N-1} \subseteq
  \mathcal{K}_{4s}^{N}$ if $h \neq 0$. So, conditioning on $\mathcal{A}_{1}$
  and enforcing~\eqref{eq:unif-m-lb1}, one has
  \begin{align}
    \label{eq:unif-over-noise-scales-1}
    \|A h\|_{2}^{2} %
    & = \|h\|_{2}^{2} \|A\hat h\|_{2}^{2} %
    \geq \|h\|_{2}^{2} \left(\|\hat h\|_{2} - \delta_{1}\right)^{2}
    \geq (1 - \delta_{1})^{2}\|h\|_{2}^{2}.
  \end{align}
  The inequality $(1 - \delta_{1})^{2}\|h\|_{2}^{2} \leq \|Ah\|_{2}^{2}$
  holds also for $h = 0$.

  \noindent\textbf{Step 2a:} Upper bound $\ip{Ah, z}$. Again
  using that $h \in \mathcal{J}_{4s}^{N}$,
  \begin{align}
    \label{eq:unif-over-noise-scales-2}
    2\eta \ip{Ah, z} %
    \leq 2\eta\|h\|_{2} \sup_{\hat h \in \mathcal{K}_{4s}^{N} \cap \sph^{N-1}} %
    \ip{A\hat h, z}. %
  \end{align}
  \textbf{Step 2b:} Control the latter quantity in expectation. By
  \autoref{lem:conditional-gp-expectation}, there is $C_{\varepsilon}'' > 0$ so
  that for
  \begin{align}
    \label{eq:unif-m-lb2}
    m > 4 C_{\varepsilon}''\delta^{-2}\tilde K^{2} s \log \left(\frac{N}{4s}\right),
  \end{align}
  there is an event $\mathcal{A}_{2}$ holding
  with probability at least $1 - \varepsilon/2$, on which there is a constant
  $C_{\delta} > 0$ such that
  \begin{align*}
    \E \left[ \sup_{\hat h \in \mathcal{K}_{4s}^{N}} \ip{A\hat h, z} \mid A \right]%
    \leq 2C_{\delta} \sqrt{s \log \left(\frac{N}{2s}\right)}.
  \end{align*}
  \textbf{Step 3:} Now combine steps 1 and 2a. Assume $m$ simultaneously
  satisfies~\eqref{eq:unif-m-lb1} and~\eqref{eq:unif-m-lb2}, and condition on
  $\mathcal{A}_{1} \cap \mathcal{A}_{2}$, which holds with probability at least
  $1- \varepsilon$.  Combining~\eqref{eq:unif-over-noise-scales-1}
  and~\eqref{eq:unif-over-noise-scales-2}, and letting
  $\delta_{1} := 1 - 2^{-1/2}$ gives
  \begin{align*}
    \|h\|_{2} %
    \leq 4\eta \sup_{\hat h \in \mathcal{K}_{4s}^{N}} \ip{A \hat h, z}.
  \end{align*}
  Take expectation of both sides and bound the quantity by applying step
  2b. This yields,
  \begin{align*}
    \E \left[ \|h\|_{2}\mid A\right] %
    \leq 8 C_{\delta} \eta \sqrt{s \log \left(\frac{N}{2s}\right)}.
  \end{align*}
  Note that by setting $C_{\varepsilon} %
  := \max \{ \frac{32 C_{\varepsilon}'}{3 - 2 \sqrt 2} , 4
  C_{\varepsilon}''\}$, it suffices to require
  \begin{align*}
    m %
    > C_{\varepsilon} \delta^{-2} \tilde K^{2} s \log \left( \frac{N}{2s}\right).
  \end{align*}

  Alternatively, one may also apply a standard fact for subgaussian random
  variables. Recall as in~\eqref{eq:borell-norm},
  $\vertiii{X_{w}}_{K_{4s}^{N}} := \sup_{w \in K_{4s}^{N}} X_{w}$. Then
  $\left\| \vertiii{X_{w}}_{K_{4s}^{N}} - \E \vertiii{X_{w}}_{K_{4s}^{N}}
  \right\|_{\Psi_{2}} \leq \sigma_{K_{4s}^{N}}^{2}$ by \autoref{thm:borell-tis},
  and so there is an absolute constant $C > 0$ such that on $\mathcal{A}_{1}$,
  \begin{align*}
    &\left\|\vertiii{X_{w}}_{K_{4s}^{N}} - \E_{z} \vertiii{X_{w}}_{K_{4s}^{N}} %
    \right\|_{L^{2}}^{2} %
    \\
    &= \E_{z} \vertiii{X_{w}}_{K_{4s}^{N}}^{2} - \left(
    \E_{z} \vertiii{X_{w}}_{K_{4s}^{N}}\right)^{2} %
    \\
    & \leq C \sigma_{K_{4s}^{N}}^{2} \leq C (1 + \delta_{1})^{2}.
  \end{align*}
  where $X_{w} := \ip{Aw, z}$ conditioned on $A$. In particular, choosing
  instead $\delta_{1} := 1 - 2^{-1/4}$,
  \begin{align*}
    \E \left[ \|h\|_{2}^{2}\mid A\right] %
    \leq 8 \eta^{2} \left(4 C_{\delta}^{2} s \log \frac{N}{2s} %
    + C\sqrt 2 \right).
  \end{align*}
  Rearranging, and observing that the right hand term in parentheses is small
  relative to the left hand term, we may obtain a new absolute constant
  $C_{\delta} > 0$ depending only on $\delta$ such that
  \begin{align*}
    \eta^{-2} \E \left[ \|\hat x - x_{0}\|_{2}^{2} \mid A \right] %
    \leq C_{\delta} s \log \frac{N}{2s}.
  \end{align*}

\end{proof}

\begin{rmk}
  In the proof above, no attempt was made to optimize constants. In fact,
  several simplifications were made for clarity of presentation, which in turn
  resulted in larger than necessary constants.
\end{rmk}

\begin{rmk}[Uniform control over noise scale and signal class]
  Observe that the result above is uniform over noise scale $\eta > 0$ and
  signal $x_{0} \in \Sigma_{s}^{N}$. In particular, we could have written
  (conditioning on $A$),
  \begin{align*}
     \sup_{\eta > 0} \sup_{x_{0} \in \Sigma_{s}^{N}}
    \hat R (\tau^{*}; x_{0}, A, \eta) %
    \leq C_{\delta} s \log \frac{N}{2s}.
  \end{align*}

\end{rmk}

\subsubsection{Optimal choice of $\tau$ and phase transition}
\label{sec:optimal-choice-tau}

Here, we synthesize the technical results
of~\autoref{sec:uniform-control-over-noise-scales} to show that, with high
probability on the realization of $A$, {\ls} achieves order-optimal risk in the
limiting low-noise regime when $m$ is sufficiently large and $\tau =
\tau^{*}$. 

\begin{lem}[Tuned {\ls}]
  \label{lem:ls-instability-ec}
  Fix $\delta, \varepsilon > 0$ and let $A \in \reals^{m\times N}$ be a
  normalized $K$-subgaussian matrix. For $s\in \nats$ fixed with
  $0 \leq s \leq m$, suppose $x_{0} \in \Sigma_{s}^{N}$ and $\eta > 0$. If
  $m$ satisfies
  \begin{align*}
    m > C_{\varepsilon}' \delta^{-2} \tilde K^{2} s \log \frac{N}{2s},
  \end{align*}
  then, with probability at least $1 - \varepsilon$ on the realization $A$,
  there exist constants $0 < c_{\delta} < C_{\delta} < \infty$ such that
  \begin{align*}
    c_{\delta} \cdot s \log\left( \frac{N}{s}\right) %
    &\leq  \lim_{\eta \to 0}
    \sup_{x_{0} \in \Sigma_{s}^{N}}
    \hat{R} (\tau^{*}; x_{0}, N, \eta) %
    \\
    & \leq C_{\delta} \cdot s\log\left( \frac{N}{2s} \right).
  \end{align*}

\end{lem}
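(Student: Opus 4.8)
The plan is to prove the two bounds separately, both on the single high-probability event on which $A$ acts as a near-isometry on the relevant low-complexity sets (the event of \autoref{lem:corollary-lmpv-thm1.4}, enforced on $\mathcal{L}^{*}_{4s}$ by inflating the constant in the hypothesis on $m$). The upper bound is essentially immediate: the remark ``Uniform control over noise scale and signal class'' following \autoref{prop:unif-noise-scale} already gives, with probability at least $1-\varepsilon$, the \emph{uniform} estimate $\sup_{\eta>0}\sup_{x_{0}\in\Sigma_{s}^{N}}\hat R(\tau^{*};x_{0},A,\eta)\le C_{\delta} s\log(N/2s)$. Since this holds for every $\eta>0$, passing to $\eta\to0$ preserves it, and no limit-existence argument is needed for the right-hand inequality.

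For the lower bound it suffices to exhibit a single worst-case direction, so I would fix an $s$-sparse $x_{0}^{*}$ (say with equal nonzero entries) whose $\ell_{1}$-descent cone $T(x_{0}^{*})$ has maximal effective dimension, set $\tau^{*}=\|x_{0}^{*}\|_{1}$, and lower-bound $\lim_{\eta\to0}\hat R(\tau^{*};x_{0}^{*},A,\eta)$. First I would identify the low-noise limit. Writing $x=x_{0}^{*}+\eta v$ turns {\ls} into $\hat v_{\eta}\in\argmin\{\|z-Av\|_{2}:v\in S_{\eta}\}$ with $S_{\eta}:=\eta^{-1}(\tau^{*}B_{1}^{N}-x_{0}^{*})$, and $\hat L(\tau^{*};x_{0}^{*},A,\eta z)=\|\hat v_{\eta}\|_{2}^{2}$. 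Because $0\in\tau^{*}B_{1}^{N}-x_{0}^{*}$, the compact convex sets $S_{\eta}$ increase as $\eta\downarrow0$ to a set whose closure is exactly the tangent cone $T(x_{0}^{*})$; hence the images $AS_{\eta}$ increase to a set dense in $K':=\overline{A\,T(x_{0}^{*})}$, and $A\hat v_{\eta}=\mathrm{P}_{AS_{\eta}}(z)$. By the standard monotone convergence of metric projections onto an increasing family of closed convex sets, $A\hat v_{\eta}\to\mathrm{P}_{K'}(z)$ almost surely.

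Now I would transfer back to $\reals^{N}$ and bound below. By \autoref{lem:descent-cone-condition}, $\hat v_{\eta}\in T(x_{0}^{*})\subseteq\mathcal{J}_{4s}^{N}$, so on the near-isometry event $\|\hat v_{\eta}\|_{2}\ge(1+\delta)^{-1}\|A\hat v_{\eta}\|_{2}$; taking $\liminf_{\eta\to0}$ and applying Fatou's lemma (only the one-sided inequality is used, sidestepping uniform integrability) gives
\[
  \lim_{\eta\to0}\E_{z}\|\hat v_{\eta}\|_{2}^{2}\ge(1+\delta)^{-2}\,\E_{z}\|\mathrm{P}_{K'}(z)\|_{2}^{2}.
\]
Since $K'$ is a closed convex cone, $\|\mathrm{P}_{K'}(z)\|_{2}=\big(\sup_{u\in K'\cap\sph^{m-1}}\ip{z,u}\big)_{+}$, whence $\E_{z}\|\mathrm{P}_{K'}(z)\|_{2}\ge\w(K'\cap\sph^{m-1})$ and, by Jensen, $\E_{z}\|\mathrm{P}_{K'}(z)\|_{2}^{2}\ge\w^{2}(K'\cap\sph^{m-1})$. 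For $v\in T(x_{0}^{*})\cap\sph^{N-1}$ one has $Av/\|Av\|_{2}\in K'\cap\sph^{m-1}$ with $\|Av\|_{2}\le1+\delta$, so $\w(K'\cap\sph^{m-1})\ge(1+\delta)^{-1}\E_{z}\sup_{v\in T(x_{0}^{*})\cap\sph^{N-1}}\ip{A^{T}z,v}$. The centred process $X_{v}:=\ip{A^{T}z,v}$ has increments $\E(X_{v}-X_{v'})^{2}=\|A(v-v')\|_{2}^{2}\ge(1-\delta)^{2}\|v-v'\|_{2}^{2}$, so Sudakov--Fernique (\autoref{thm:sudakov-fernique}) against the process $(1-\delta)\ip{g,v}$ yields $\E_{z}\sup_{v}\ip{A^{T}z,v}\ge(1-\delta)\,\w(T(x_{0}^{*})\cap\sph^{N-1})$. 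Chaining these estimates gives
\[
  \lim_{\eta\to0}\hat R(\tau^{*};x_{0}^{*},A,\eta)\ge(1+\delta)^{-2}\Big(\tfrac{1-\delta}{1+\delta}\Big)^{2}\w^{2}\big(T(x_{0}^{*})\cap\sph^{N-1}\big),
\]
and I would finish by invoking the standard lower bound $\w^{2}(T(x_{0}^{*})\cap\sph^{N-1})\ge c\,s\log(N/s)$ on the effective dimension of the $\ell_{1}$-descent cone at an $s$-sparse point (the companion to the upper bound $\le2s\log(eN/s)$ recalled before \autoref{lem:convexification}), absorbing the $\delta$-factors into $c_{\delta}$.

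The delicate step is the identification of the low-noise limit: establishing the almost-sure convergence $A\hat v_{\eta}\to\mathrm{P}_{K'}(z)$ as the feasible sets $AS_{\eta}$ swell to $K'$, and transferring the $\reals^{m}$-projection norm back to the $\reals^{N}$-error norm through near-isometry \emph{without} pinning down the exact cone minimizer. Using the one-directional Fatou bound in place of dominated convergence is what keeps this elementary and avoids a uniform-integrability argument; after that, the Gaussian comparison to the descent-cone width and the cited width lower bound are routine.
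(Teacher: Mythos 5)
Your upper bound is exactly the paper's: \autoref{prop:unif-noise-scale} (and the remark following it) gives the bound uniformly in $\eta>0$ and $x_{0}\in\Sigma_{s}^{N}$, so passing to $\eta\to 0$ is immediate. The genuine gap is in your lower bound, at the Sudakov--Fernique step. To compare $X_{v}=\ip{A^{T}z,v}$ with $Y_{v}=(1-\delta)\ip{g,v}$ over $v\in T(x_{0}^{*})\cap\sph^{N-1}$, \autoref{thm:sudakov-fernique} requires the increment bound $\|A(v-v')\|_{2}\geq(1-\delta)\|v-v'\|_{2}$ for \emph{all pairs} $v,v'$ in the index set. The multiplicative near-isometry you invoke holds only on the cone $\mathcal{J}_{4s}^{N}$ (by normalizing cone elements); it does not pass to differences $v-v'$, which in general lie in no set of the form $\mathcal{J}_{cs}^{N}$: one only controls $\|v-v'\|_{1}\leq 4\sqrt{s}$, so the additive deviation bound of \autoref{lem:corollary-lmpv-thm1.4} yields $\|A(v-v')\|_{2}\geq\|v-v'\|_{2}-\delta$, which is vacuous for small increments. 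And the failure is real, not just a technicality: $T(x_{0}^{*})$ is full-dimensional (it contains a ball around $-x_{0}^{*}$), so $T(x_{0}^{*})\cap\sph^{N-1}$ contains a spherical cap, and since $\dim\Null(A)=N-m\geq 1$ there exist pairs $v\neq v'$ in that cap whose chord is nearly aligned with a null direction of $A$ tangent to the sphere, giving $\|A(v-v')\|_{2}/\|v-v'\|_{2}$ arbitrarily small. For such pairs the hypothesis of Sudakov--Fernique is violated, so the key inequality $\E_{z}\sup_{v}\ip{A^{T}z,v}\geq(1-\delta)\w\big(T(x_{0}^{*})\cap\sph^{N-1}\big)$ is unjustified; since small-scale increments are precisely what governs the supremum of a Gaussian process, this cannot be waved away by adjusting constants.

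For contrast, the paper's lower bound never touches the descent cone: it bounds the optimally tuned worst-case risk below by the minimax risk over \emph{all} estimators, $\sup_{x_{0}\in\Sigma_{s}^{N}}\hat R(\tau^{*};x_{0},A,\eta)\geq\inf_{x_{*}}\sup_{x_{0}}\eta^{-2}\E\|x_{*}-x_{0}\|_{2}^{2}\geq C_{1}N\|A\|_{F}^{-2}\,s\log(N/s)$ by \cite{candes2013well}, and then uses Bernstein-type concentration of $\|A\|_{F}^{2}$ about $\E\|A\|_{F}^{2}=N$ to absorb the Frobenius factor, which is both simpler and avoids any geometry of $A$ restricted to the cone. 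If you want to keep your geometric route, you would need to replace Sudakov--Fernique by an argument that only sees well-separated sparse points --- e.g.\ Sudakov minoration over a separated family of $O(s)$-sparse unit vectors inside the descent cone, whose pairwise differences are again $O(s)$-sparse so that RIP does give two-sided multiplicative control of $\|A(v-v')\|_{2}$. Two smaller repairs would also be needed: the passage from $\w(K'\cap\sph^{m-1})$ to $\E_{z}\sup_{v}\ip{z,Av}$ needs the positive-part device to handle signs, and the width lower bound $\w^{2}(T(x_{0}^{*})\cap\sph^{N-1})\gtrsim s\log(N/s)$, while true, is not among the results quoted in this paper and requires an external citation.
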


\begin{proof}[{Proof of \autoref{lem:ls-instability-ec}}]

  For simplicity of the proof, we assume $\tilde K^{2} = 1$.

  \noindent\textbf{Upper bound:} Given $\delta, \varepsilon_{1} > 0$, assume
  \begin{align*}
    m > C_{\varepsilon_{1}} \delta^{-2} s \log \frac{N}{2s}.
  \end{align*}
  With probability at least $1 - \varepsilon_{1}$ on the realization of $A$, by
  \autoref{prop:unif-noise-scale}, for any $x_{0} \in \Sigma_{s}^{N}$ and
  $\eta > 0$,
  \begin{align*}
    \hat R(\tau^{*}; x_{0}, A, \eta) %
    \leq C_{\delta} \cdot s \log \frac{N}{2s}.
  \end{align*}
  In particular,
  \begin{align*}
     \lim_{\eta \to 0} \sup_{x_{0} \in \Sigma_{s}^{N}}
    \hat R(\tau^{*}; x_{0}, A, \eta) %
    \leq C_{\delta} \cdot s \log \frac{N}{2s}.
  \end{align*}

  \noindent \textbf{Lower bound:} From \autoref{coro:opt-risk-near-equiv}
  and~\cite[Theorem 1]{candes2013well}, 
  \begin{align*}
    \sup_{x_{0} \in \Sigma_{s}^{N}}\hat R(\tau^{*}; x_{0}, A, \eta) %
    & \geq  \inf_{x_{*}} \sup_{x_{0} \in \Sigma_{s}^{N}} \eta^{-2}
    \E \|x_{*} - x_{0} \|_{2}^{2}
    \\
    & \geq \frac{C_{1} N}{\|A\|_{F}^{2}}s \log\left(\frac{N}{s}\right).
  \end{align*}
  In particular,
  \begin{align*}
     \lim_{\eta \to 0} \sup_{x_{0} \in \Sigma_{s}^{N}}
    \hat R(\tau^{*}; x_{0}, A, \eta) %
    \geq \frac{C_{1} N}{\|A\|_{F}^{2}}s \log\left(\frac{N}{s}\right).
  \end{align*}
  Now, $\E \|A\|_{F}^{2} = N$, and $\|A\|_{F}^{2}$ admits subexponential
  concentration around its expectation by Bernstein's inequality~\cite[Corollary
  2.8.3]{vershynin2018high}. Therefore, with probability at least
  $1 - \varepsilon_{2}$ on the realization of $A$, there is a constant
  $c_{\delta} > 0$ depending only on $C_{1}$ and $\delta$ such that
  \begin{align*}
     \lim_{\eta \to 0} \sup_{x_{0} \in \Sigma_{s}^{N}}
    \hat R(\tau^{*}; x_{0}, A, \eta) %
    \geq c_{\delta} \cdot s \log\left(\frac{N}{s}\right),
  \end{align*}
  under the condition that
  \begin{align*}
    m \geq C \delta^{-2}N^{-1}  \log \frac{2}{\varepsilon_{2}}.
  \end{align*}

  \noindent\textbf{Combine:} Finally, set $\varepsilon_{1} = \varepsilon_{2}
  = \varepsilon/2$. Under the assumptions on $m$, with probability at least
  $1 - \varepsilon$ on the realization of $A$ it holds that
  \begin{align*}
    c_{\delta} \cdot s \log\left(\frac{N}{s}\right) %
    & \leq  \lim_{\eta \to 0} \sup_{x_{0} \in \Sigma_{s}^{N}}
    \hat R(\tau^{*} ; x_{0}, A, \eta) %
    \\
    & \leq C_{\delta} \cdot s \log \frac{N}{2s}.
  \end{align*}

\end{proof}

We conclude this section with the proof of \autoref{thm:ls-instability} which
combines \autoref{lem:ls-instability-ec} and the results of
\autoref{sec:subopt-choice-tau}. Namely, even when $m$ is sufficiently large,
{\ls} admits order-optimal risk in the limiting low-noise regime only when the
governing parameter is chosen optimally.

\begin{proof}[Proof of {\autoref{thm:ls-instability}}]
  This result follows immediately from the lemmas of this section. Indeed, a
  direct application of \autoref{lem:ls-instability-ec} gives
  \begin{align*}
    c_{\delta} \cdot s \log\left(\frac{N}{s}\right) %
    & \leq  \lim_{\eta \to 0} \sup_{x \in \Sigma_{s}^{N}}
    \hat R(\tau^{*} ; x, A, \eta) %
    \\
    & \leq C_{\delta} \cdot s \log \frac{N}{2s}.
  \end{align*}
  Otherwise,
  $\tau \neq \tau^{*}$. First, if $\tau < \tau^{*}$, then
  \autoref{lem:ls-instability-oc} immediately implies
  \begin{align*}
    \lim_{\eta \to 0} \hat L (\tau; x_{0}, A, \eta z) = \infty.
  \end{align*}
  Otherwise, assume $\tau > \tau^{*}$. In order to apply
  \autoref{lem:ls-instability-uc}, $A$ must satisfy $\dim(\Null(A)) > 0$, which
  holds trivially, as $m < N$. In particular, \autoref{lem:ls-instability-uc}
  implies almost surely on $(A, z)$,
  \begin{align*}
    \lim_{\eta \to 0} \hat L (\tau; x_{0}, A, \eta z) = \infty.
  \end{align*}
\end{proof}

\begin{rmk}
  The proof for \autoref{thm:ls-instability} proceeds whether $z$ be
  deterministic (say with fixed norm $\|z\|_{2} = \sqrt m$) or have entries
  $z_{i} \iid \mathcal{N}(0,1)$. We have presented it this way so that the
  assumption is consistent with the implicit assumption on the noise for the
  result concerning $\hat R(\tau^{*}; x_{0}, A, \eta)$.
\end{rmk}

\subsection{Proofs for basis pursuit suboptimality}
\label{sec:proofs-basis-pursuit}

\subsubsection{Suboptimal regime for underconstrained basis pursuit}
\label{sec:subopt-regime-underc}

This section contains the proof for \autoref{lem:uc-bp-subgaus} in
\autoref{sec:underc-param-inst}.

\begin{proof}[Proof of {\autoref{lem:uc-bp-subgaus}}]

  It suffices to prove this result for the best choice of $\sigma$ and any
  $x \in \Sigma_{s}^{N}$. In particular, choose $x_{0} \in \Sigma_{s}^{N}$
  having at least one non-zero entry, and for which the non-zero entries have
  magnitude satisfying $|x_{0,j}| \geq C\eta \sqrt{m}$,
  $j \in \supp(x_{0}) \subseteq [N]$. For this choice of $x_{0}$, let
  $y = Ax_{0} + \eta z$ and define the event
  $\mathcal{F} := \{ \|y\|_{2} \leq \sigma \}$.


  For any $\sigma \geq \eta \sqrt m$, re-choose $x_{0} \in \Sigma_{s}^{N}$ if
  necessary so that moreover $\mathbb{P}(\mathcal{F}^{C}) \geq
  0.99$. Restricting to $\mathcal{F}^{C}$, the solution to {\bp} satisfies, by
  the KKT conditions~\cite{bertsekas2003convex},
  \begin{align*}
    \eta^{2} m %
    \leq \sigma^{2} %
    = \|Ah \|_{2}^{2} - 2 \eta \ip{Ah, z} + \eta^{2} \|z\|_{2}^{2}.
  \end{align*}
  By \autoref{lem:corollary-lmpv-thm1.4}, it holds with probability at least
  $1 - \varepsilon$ on the realization of $A$ that
  \begin{align*}
    (1 + \delta)^{2} \|h\|_{2}^{2} %
    \geq \|Ah \|_{2}^{2} %
    \geq \eta^{2}(m - \|z\|_{2}^{2}) + 2\eta \ip{Ah, z}
  \end{align*}
  Define the event
  $\mathcal{Z}_{\leq} := \{ \|z\|_{2}^{2} \leq m - 2 \sqrt m\}$ and observe
  that further restricting to $\mathcal{F}^{C} \cap \mathcal{Z}_{\leq}$ thereby
  gives
  \begin{align*}
    (1 + \delta)^{2} \|h\|_{2}^{2} %
    & \geq 2 \eta^{2}\sqrt m - 2 \eta \|h\|_{2} f(A,z)
    \\
    & \geq 2 \eta^{2}\sqrt m - \frac{1}{2} \|h\|_{2}^{2}
    - 2\eta^{2} f^{2}(A,z),
  \end{align*}
  where $f(A, z)$ is defined as in \autoref{lem:conditional-gp-expectation} with
  $\mathcal{K} = \mathcal{K}_{2s}^{N}\cap \sph^{N-1}$. Indeed, where
  $\hat h = h / \|h\|_{2}$, one has $\ip{A\hat h, z} \leq f(A,z)$ since
  $\hat h \in \mathcal{K}_{2s}^{N}\cap \sph^{N-1}$ with high probability on the
  realization of $A$. This yields the following bound on the risk:
  \begin{align}
    \label{eq:uc-bp-subgaus-1}
    \tilde R(\sigma; &\, x_{0}, A, \eta)
    \nonumber
    \\
    &\geq \eta^{-2} \E_{z} \left[\|h\|_{2}^{2} \cdot
      \1 \left( \mathcal{F}^{C} \cap \mathcal{Z}_{\leq}\right)\right] %
      \nonumber 
    \\
    & \geq C_{\delta} \E_{z}\left[\big(\sqrt m - f^{2}(A,z)\big)
      \cdot \1 \left( \mathcal{F}^{C} \cap \mathcal{Z}_{\leq}\right) \right]
      \nonumber
    \\
    & = C_{\delta} \sqrt m
      \mathbb{P}\left(\mathcal{F}^{C} \cap \mathcal{Z}_{\leq}\right)
      - C_{\delta} \E_{z} \left[f^{2}(A,z)\cdot
      \1\left(\mathcal{F}^{C} \cap \mathcal{Z}_{\leq}\right)\right]
      \nonumber
    \\
    &\geq C_{\delta} \sqrt m
      \mathbb{P}\left(\mathcal{F}^{C} \cap \mathcal{Z}_{\leq}\right)
      - C_{\delta} \E_{z} f^{2}(A,z)
  \end{align}
  Finally, we bound $\E_{z} f^{2}(A,z) = \E [ f^{2}(A,z) \mid A]$. With high
  probability on the realization of $A$:
  \begin{align*}
    \E [ f^{2}(A,z) \mid A] \leq C \E[f(A, z) \mid A]^{2} \leq C_{\delta}s \log (N/s). 
  \end{align*}
  Above, we have first used~\cite[Exercise~7.6.1]{vershynin2018high} followed by
  an application of~\autoref{lem:conditional-gp-expectation}. Another way to see
  this would be through the successive application
  of~\autoref{rmk:fAz-subgaussianity}
  and~\autoref{lem:conditional-gp-expectation}, noting that
  $f(A, z) - \E[f(A, z)\mid A]$ is a centered subgaussian random variable.


  Consequently, using that
  $\mathbb{P}(\mathcal{F}^{C} \cap \mathcal{Z}_{\leq}) \geq
  C$,~\eqref{eq:uc-bp-subgaus-1} becomes
  \begin{align}
    \label{eq:uc-bp-subgaus-2}
    \tilde R(\sigma; x_{0}, A, \eta) %
    \geq C_{\delta} \left(\sqrt m - s\log(N/s)\right).
  \end{align}
  The result follows trivially from the definition of $\sup$ and by the initial
  assumption on $m$.

\end{proof}

\subsubsection{Suboptimal regime for overconstrained basis pursuit}
\label{sec:overc-param-inst}

In this section, we show that $\tilde R(\sigma; x_{0}, A, \eta)$ is suboptimal
for $\sigma \leq \eta \sqrt m$. To the chagrin of the beleaguered reader, the
proofs in this section require several technical lemmas, some assumptions and
notation. As much as possible, we attempt to relegate to the appendix those
details that, we believe, do not aid the reader's intuition and are particularly
technical.

The flow of this section will proceed as follows. After establishing required
preliminary details, we state and prove results concerning the ability of {\bp}
to recover the $0$ vector from noisy random measurements. The results exhibit a
regime in which $\tilde R(\sigma; x_{0}, A, \eta)$ may be lower-bounded in the
case where $\sigma = \eta \sqrt m$. Then, we proceed by showing that {\bp}
performs no better if $\sigma$ is allowed to be smaller. In particular, we
obtain lower bounds on $\tilde R(\sigma; x_{0}, A, \eta)$ for
$\sigma \leq \eta \sqrt m$. Motivation for this latter result is readily
observed by a re-phrasing of the projection lemma in
\autoref{prop:oc-solution-ordering}.

\paragraph{Preliminaries.}

For $z \in \reals^{m}$ and $\sigma > 0$ define
$F(z; \sigma) := \{ q \in \reals^{m} : \|q - z\|_{2}^{2} \leq \sigma^{2}\}$ and
denote $F := F(z; \sqrt m)$. For a matrix $A \in \reals^{m \times N}$, denote
$B_{1, A} := \{ Ax \in \reals^{m} : x \in B_{1}^{N}\}$, and define the gauge of
$B_{1, A}$ by
\begin{align}
  \|q \|_{1, A} %
  &:= \inf \{ \|x\|_{1} : Ax = q, x \in \reals^{N} \} %
    \nonumber
  \\
  \label{eq:defn-gauge-B1A}
  & = \inf \{ \lambda > 0 : q \in \lambda B_{1, A}\}.
\end{align}
Recall a gauge is nonnegative, positively homogeneous, convex and vanishes at
the origin. Moreover, note that $B_{1, A}$ is a random set, and so
$\|\cdot\|_{1, A}$ is random. Now, for a matrix $A \in \reals^{m\times N}$,
$z \in \reals^{m}$ and $\sigma > 0$, define the program
\begin{align}
  \label{eq:bq}
  \tilde q(\sigma; A, z) %
  := \argmin \big\{ \|q\|_{1, A} : q \in F(z; \sigma) \big\},
  \tag{$\mathrm{BQ}_{\sigma}$}
\end{align}
where $\|\cdot\|_{1, A}$ is defined as in~\eqref{eq:defn-gauge-B1A}. Where
clear, we omit notating the dependence of $\tilde q(\sigma; A, z)$ on $A$ and
$z$, writing simply $\tilde q(\sigma)$.

With the above notation, we define an admissible ensemble. The elements of an
admissible ensemble will be used to state the main lemma,
\autoref{lem:geometric-lemma}. The technical arguments characterizing their
properties appear in \autoref{sec:techn-lemm-supp}.

\begin{defn}[Admissibile ensemble]
  Let $0 \leq s < N$ be integers, and let $m : \nats \to \nats$ be an
  integer-valued function mapping $N \mapsto m(N)$ such that
  $\lim_{N\to \infty} m(N) / N = \gamma \in (0, 1)$. For any
  $0 < \theta < \min\{1-\gamma, \gamma\}$, define $N_{\theta} \geq 1$ to be the
  least integer such that for all $N \geq N_{\theta}$,
  \begin{align*}
    \left|\frac{m(N)}{N} - \gamma\right| < \theta.
  \end{align*}
  Where $N \geq 2$, let $A(N)$ be a family of normalized $K$-subgaussian
  matrices $A = A(N) \in \reals^{m(N) \times N}$. Define
  $N_{*} := \max \{ N_{\theta}, N_{\text{RIP}}\}$ where $N_{\mathrm{RIP}} \geq 1$ is the
  least positive integer such that for all $N \geq N_{\mathrm{RIP}}$,
  \begin{align*}
    m(N) \geq C_{\varepsilon} \delta^{-2} \tilde K^{2} s \log \frac{2N}{s}.
  \end{align*}
  where $\delta, \varepsilon > 0$ are fixed in advance.

  Let $z = z(N) \in \reals^{m(N)}$ with $z_{i} \iid \mathcal{N}(0, 1)$. Define
  $F = F(z; \sqrt{m(N)}) = \{ q \in \reals^{m(N)} : \|q - z\|_{2}^{2} \leq
  m(N)\}$ and omit writing explicitly its dependence on $N$, unless
  necessary. Define $\alpha_{1} = \alpha_{1}(N) := a_{1} m(N)^{1/4}$ for some
  dimension-independent constant $a_{1} > 0$;
  $\lambda = \lambda(N) := L \sqrt{\frac{m(N)}{\log N}}$ for some
  dimension-independent constant $L > 1$; and
  \begin{align*}
    K_{1} &= K_{1}(N) := \lambda(N) B_{1, A} \cap \alpha_{1}(N) B_{2}^{m(N)},
    \\
      K_{2} &= K_{2}(N) := \lambda(N) B_{1, A} \cap \alpha_{2}(N) B_{2}^{m(N)},
  \end{align*}
  where $0 < \alpha_{2} = \alpha_{2}(N) \leq \alpha_{1}$ will be quantified in
  \autoref{prop:gw-ub-1}. Lastly, define the following random processes. For
  $g \in \reals^{m}$ with $g_{i} \iid \mathcal{N}(0, 1)$, let
  \begin{align*}
    X_{1} &:= \sup_{x \in K_{1}}|\ip{x, g}|,
    &
    X_{2} &:= \sup_{x \in K_{2}} |\ip{x, g}|.
  \end{align*}
  Thus we define an $(s, m(N), N, \delta, \varepsilon, \theta)$-admissible
  ensemble as the collection $(A(N), z(N), K_{1}(N), K_{2}(N), X_{1}, X_{2})$
  satisfying the conditions just described, defined for all $N \geq N_{*}$. This
  collection will generally be abbreviated to
  $(A, z, K_{1}, K_{2}, X_{1}, X_{2})$ where clear.
\end{defn}

Where possible, we simplify notation by omitting explicit dependence on
arguments. For example, if $N$ is fixed, then we may refer to $m(N)$ simply as
$m$. Note, however, that for any $N \geq 2$, $K_{1}$ and $K_{2}$ always depend
on $\alpha_{1} = \alpha_{1}(N)$ and $\alpha_{2} = \alpha_{2}(N)$,
respectively. Further observe that $K_{1}$ and $K_{2}$ are random, as they
depend on the matrix $A$. Observe that $N_{\theta}$ depends on $\theta$,
$\gamma$ and $m(\cdot)$, and omit writing explicitly its dependence on the
latter two; we assume $m(\cdot)$ and $\gamma$ are fixed in advance. Requiring
$N \geq N_{\text{RIP}}$ is the key condition on $m(N)$ so that
\autoref{lem:corollary-lmpv-thm1.4} holds. Clearly, $N_{\mathrm{RIP}}$ depends
on the parameters $\delta, \varepsilon, K, s$ and the function $m(\cdot)$; for
simplicity of presentation we omit writing explicitly its dependence on these
parameters. Finally, note that the parameters on which $N_{*}$ depends are
exactly those for which $N_{\theta}$ and $N_{\text{RIP}}$ depend.

\begin{prop}
  \label{prop:oc-solution-ordering}
  Let $z \in \reals^{m}$ and $A \in \reals^{m \times N}$ be a normalized
  $K$-subgaussian matrix with $1 \leq m < N$. If
  $0 < \sigma_{1} < \sigma_{2} < \infty$ and $\tilde q(\sigma)$ solves {\bq}
  then almost surely on $(A, z)$,
  \begin{align*}
    \|\tilde q (\sigma_{1}) \|_{2} \geq \|\tilde q (\sigma_{2})\|_{2}.
  \end{align*}
\end{prop}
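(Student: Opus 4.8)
The plan is to recognize this proposition as an immediate specialization of \autoref{cor:projection-lemma}, taking $\mathcal{K} = B_{1,A}$ and $y = z$. The feasible region of \bq{} is $F(z;\sigma) = \{q : \|q - z\|_2 \leq \sigma\}$, and by~\eqref{eq:defn-gauge-B1A} the objective $\|\cdot\|_{1,A}$ is exactly the Minkowski functional (gauge) of $B_{1,A}$. Hence $\tilde q(\sigma)$ coincides with the quantity $q_\alpha$ of \autoref{cor:projection-lemma} upon setting $\alpha = \sigma$, and the corollary asserts that $\|q_\alpha\|_2$ is decreasing in $\alpha$, which is precisely the claimed inequality $\|\tilde q(\sigma_1)\|_2 \geq \|\tilde q(\sigma_2)\|_2$ for $\sigma_1 < \sigma_2$.

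First I would verify the hypotheses of \autoref{cor:projection-lemma}. The set $B_{1,A} = \{Ax : x \in B_1^N\}$ is the image of the compact convex set $B_1^N$ under the linear map $A$, so it is compact (in particular closed) and convex, and it contains $0 = A\cdot 0$; these facts hold for every realization of $A$. The ``almost surely'' qualifier enters only to ensure that \bq{} is well posed. Since $\|\cdot\|_{1,A}$ is finite exactly on $\cone(B_{1,A}) = \Rng(A)$, one wants $\Rng(A) = \reals^m$, which holds with probability one for a normalized $K$-subgaussian matrix with $m < N$. On this full-probability event the gauge is finite and lower semicontinuous, hence attains its infimum over the compact set $F(z;\sigma)$. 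Moreover the minimizer is unique: writing $\lambda^* := \|\tilde q(\sigma)\|_{1,A}$, a short argument shows $\mathrm{dist}(z,\lambda^* B_{1,A}) = \sigma$ when $\lambda^* > 0$, so any optimal $q$ satisfies $\|q - z\|_2 = \sigma = \mathrm{dist}(z,\lambda^* B_{1,A})$ and therefore equals the projection $\mathrm{P}_{\lambda^* B_{1,A}}(z)$; when $\lambda^* = 0$ (i.e. $\|z\|_2 \leq \sigma$) the unique solution is $\tilde q(\sigma) = 0$. Consequently the paper's worst-case/lexicographic tie-breaking convention is vacuous here.

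Because the substantive geometric content is already packaged in \autoref{lem:projection-lemma} and \autoref{cor:projection-lemma}, there is no genuinely difficult step; the only care required lies in matching the definitions and confirming well-posedness. If one preferred not to invoke the corollary as a black box, the main work would be to reprove the monotonicity directly: express $\tilde q(\sigma) = \mathrm{P}_{\lambda(\sigma) B_{1,A}}(z)$ for the scaling $\lambda(\sigma)$ determined by $\mathrm{dist}(z, \lambda(\sigma) B_{1,A}) = \sigma$, observe that $\lambda(\sigma)$ is nonincreasing in $\sigma$ (since $\mathrm{dist}(z, \lambda B_{1,A})$ decreases in $\lambda$), and then apply \autoref{lem:projection-lemma} to conclude that $\|\mathrm{P}_{\lambda(\sigma) B_{1,A}}(z)\|_2$ is nonincreasing in $\sigma$.
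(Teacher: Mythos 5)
Your proposal is correct and follows exactly the paper's own route: the paper's proof is the one-line observation that the result follows from \autoref{cor:projection-lemma} because $\|\cdot\|_{1,A}$ is a gauge. Your additional verification of the hypotheses (closedness and convexity of $B_{1,A}$, $0 \in B_{1,A}$, and the almost-sure well-posedness via $\Rng(A) = \reals^{m}$) is sound and simply makes explicit what the paper leaves implicit.
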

\begin{proof}[Proof of {\autoref{prop:oc-solution-ordering}}]
  The result follows by~\autoref{cor:projection-lemma}, because
  $\|\cdot\|_{1, A}$ is a gauge.
\end{proof}

\paragraph{The geometric lemma.}

Next, we state a lemma with a geometric flavour, \autoref{lem:geometric-lemma},
which is the main workhorse for proving suboptimality of $\tilde R$ in the
overconstrained setting. It is a generalization of \cite[Lemma
6.2]{berk2020sl1mpc}.

\begin{lem}[Geometric Lemma]
  \label{lem:geometric-lemma}

  Fix $\delta, \varepsilon_{1}, \varepsilon_{2} > 0$ and
  $\theta \in (0, \gamma)$. Given an
  $(s, m, N, \delta, \varepsilon_{1}, \theta)$-admissible ensemble, there is a
  choice of $a_{1} > 0$ defining $\alpha_{1}(N)$; $L > 1$ defining $\lambda(N)$;
  an integer $N_{0} \geq N_{*}$; and absolute constants $p, k > 0$, so that the
  following occurs. For all $N \geq N_{0}$, with probability at least
  $1 - \varepsilon_{1}$ on the realization of $A$, there is an event
  $\mathcal{E} := \mathcal{E}(\varepsilon_{1}, \varepsilon_{2})$ for $z$ on
  which
  \begin{align*}
    &1.~K_{1} \cap F \neq \emptyset, %
      &
      &2.~K_{2} \cap F = \emptyset,  %
    \\
      &3.~\alpha_{2} > C N^{p}, %
    &
      &4.~\mathbb{P}(\mathcal{E}) > k.
  \end{align*}
  Above, $k$ depends on $N_{0}$ and $\varepsilon_{2}$ only; $p$ on
  $\delta, \gamma$ and $\theta$ only.
\end{lem}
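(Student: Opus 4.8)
The plan is to translate the four set-theoretic conditions into statements about the Euclidean distance from the noise vector $z$ to the two nested symmetric polytopes $K_2 \subseteq K_1$, and then to control these distances through the Gaussian-width estimates of \autoref{coro:bellec-random-hulls} together with Gaussian concentration. First I would record the elementary identity, valid for any closed convex $K \ni 0$,
\[
  \mathrm{dist}(z, K)^2 = \|z\|_2^2 - \sup_{q \in K}\big(2\ip{z, q} - \|q\|_2^2\big),
\]
and observe that, since every $q \in K_r := \lambda B_{1, A} \cap \alpha_r B_2^m$ obeys $\|q\|_2 \le \alpha_r$, this yields
\[
  \|z\|_2^2 - 2 X_r \le \mathrm{dist}(z, K_r)^2 \le \|z\|_2^2 - 2X_r + \alpha_r^2, \qquad X_r := \sup_{q \in K_r}\ip{z, q}.
\]
Because $K_1, K_2$ are symmetric and depend only on $A$, conditioned on $A$ the process $q \mapsto \ip{z, q}$ is equidistributed with the one defining $X_1, X_2$; in particular $X_r$ is the supremum of a centred Gaussian process with mean $\w(K_r)$ and, by \autoref{thm:borell-tis}, sub-Gaussian fluctuations of scale $\sigma_{K_r} = \alpha_r$. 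Conditions (1) and (2) then reduce, respectively, to $\|z\|_2^2 - m \le 2X_1 - \alpha_1^2$ (forcing $\mathrm{dist}(z, K_1) \le \sqrt m$) and $\|z\|_2^2 - m > 2X_2$ (forcing $\mathrm{dist}(z, K_2) > \sqrt m$).

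Next I would pin down the two widths. Writing $K_r = \lambda\,(B_{1, A} \cap (\alpha_r/\lambda)B_2^m)$ and applying \autoref{coro:bellec-random-hulls} with $\alpha = \alpha_r/\lambda$, the substitutions $\lambda = L\sqrt{m/\log N}$ and $\alpha_1 = a_1 m^{1/4}$ give $\alpha_1/\lambda \to 0$ and $N(\alpha_1/\lambda)^2 \to \infty$ with $\log\!\big(N(\alpha_1/\lambda)^2\big) \sim \tfrac12 \log N$, so the corollary's lower bound yields $\w(K_1) \ge c_\delta L\sqrt m$ with $c_\delta = (1-\delta)^2/4$. For $K_2$ I would instead \emph{choose} $\alpha_2$ (this is the $\alpha_2$ quantified in \autoref{prop:gw-ub-1}) as large as possible subject to the effective logarithm in the corollary's upper bound being a small fraction $\beta \log N$; this forces $\w(K_2) \le C_\delta\sqrt\beta\, L\sqrt m$ while simultaneously making $\alpha_2 \asymp N^{\beta/2}/\sqrt{\log N}$, which is exactly condition (3) with exponent $p \asymp \beta/2$ (its dependence on $\gamma, \theta$ entering only through $m(N) \asymp N$ for $N \ge N_*$). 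Taking $\beta$ small and $L$ large relative to $a_1$ then secures the key gap
\[
  2\w(K_1) - 2\w(K_2) - \alpha_1^2 \ge \big(2c_\delta L - 2C_\delta\sqrt\beta\, L - a_1^2\big)\sqrt m \ge c'_\delta L\sqrt m > 0.
\]

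With the widths fixed I would assemble $\mathcal{E}$. On a Borell-TIS event of $z$-probability at least $1 - \varepsilon_2$ one has simultaneously $X_1 \ge \w(K_1) - t\alpha_1$ and $X_2 \le \w(K_2) + t\alpha_2$, and on this event the two displayed requirements are implied by the single band
\[
  2\w(K_2) + 2t\alpha_2 < \|z\|_2^2 - m < 2\w(K_1) - 2t\alpha_1 - \alpha_1^2,
\]
whose endpoints are deterministic given $A$, of order $\Theta(L\sqrt m)$, and whose length is $\Theta(\sqrt m)$ by the gap above (here $t\alpha_1, t\alpha_2 = o(\sqrt m)$ since $\alpha_1 \asymp m^{1/4}$ and $\alpha_2 \asymp N^p$ with $p < 1/2$). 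Standardising by $\sqrt{2m}$, this band is an interval of positive length lying $\Theta(L)$ standard deviations above the mean of $\|z\|_2^2 \sim \chi^2_m$, so by anti-concentration of the chi-square law (via the central limit theorem / standard $\chi^2$ density bounds, valid for $N \ge N_0$ large) it is hit with $z$-probability at least a constant $k_0 = k_0(L, \beta) > 0$. Since the Borell-TIS bounds on $X_1, X_2$ hold \emph{irrespective} of $\|z\|_2^2$, intersecting the two events circumvents their correlation and gives $\mathbb{P}(\mathcal{E}) \ge k_0 - \varepsilon_2 =: k > 0$ once $\varepsilon_2 < k_0$; everything is carried out on the matrix event of probability at least $1 - \varepsilon_1$ on which \autoref{coro:bellec-random-hulls} controls $\w(K_1), \w(K_2)$. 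The estimates packaged in \autoref{prop:gw-lb-2} and \autoref{prop:gw-ub-2} are what supply the lower and upper width bounds used here.

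I expect the main obstacle to be the simultaneous calibration of $a_1$, $L$, and the exponent $p$ (equivalently $\beta$): positivity of the width gap forces $L$ large compared with $a_1^2/c_\delta$, yet the same largeness of $L$ pushes the target band for $\|z\|_2^2 - m$ out to $\Theta(L)$ standard deviations, making $k$ (and hence the tolerated $\varepsilon_2$) exponentially small in $L^2$. Keeping $k$ bounded below while still extracting a genuinely polynomial $\alpha_2 > C N^p$ requires threading the logarithmic factors of \autoref{coro:bellec-random-hulls} precisely, which is the delicate core and the reason $\alpha_2$ and $p$ are pinned down in the auxiliary propositions rather than inline.
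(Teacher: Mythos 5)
Your proposal is correct and takes essentially the same route as the paper's own proof, which simply assembles \autoref{prop:gw-lb-3} and \autoref{prop:gw-ub-3} (themselves built from \autoref{prop:gw-lb-1}, \autoref{prop:gw-lb-2}, \autoref{prop:gw-ub-1}, \autoref{prop:gw-ub-2}): the identical ingredients appear in your write-up, namely the expansion $\|q - z\|_{2}^{2} = \|q\|_{2}^{2} - 2\ip{q, z} + \|z\|_{2}^{2}$ reducing conditions (1)--(2) to a band for $\|z\|_{2}^{2} - m$, the width bounds for $K_{1}, K_{2}$ from \autoref{coro:bellec-random-hulls}, Borell-TIS concentration of $X_{1}, X_{2}$, polynomial growth of $\alpha_{2}$ from the logarithmic calibration, and a union bound against a constant-probability $\chi^{2}$ anti-concentration event. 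The one organizational difference is harmless: the paper fixes absolute constants $C_{2} < C_{1}$ defining the band $\{m + C_{2}\sqrt m \leq \|z\|_{2}^{2} \leq m + C_{1}\sqrt m\}$ \emph{first} and then calibrates $L$ and $\alpha_{2}$ so that $2\w(K_{1})$ and $2\w(K_{2})$ clear these thresholds (keeping $k$ independent of $L$), whereas you let the band endpoints be the widths themselves, which places the band at $\Theta(L)$ standard deviations and makes your $k$ exponentially small in $L^{2}$ --- still a valid dimension-independent constant, so the lemma's conclusion is unaffected.
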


\begin{proof}[Proof of \autoref{lem:geometric-lemma}]
  For constants $0 < C_{2} < C_{1} < \infty$, define the events
  \begin{align*}
    \mathcal{Z}_{<} &:= \{ \|z\|_{2}^{2} \leq m + C_{1}\sqrt m \}
    \\
      \mathcal{Z}_{>} &:= \{ \|z\|_{2}^{2} \geq m + C_{2} \sqrt m \}.
  \end{align*}
  By \autoref{prop:gw-lb-3} and \ref{prop:gw-ub-3}, there is an integer
  $N_{0} \geq N_{*}$ (select the larger of the two bestowed by each result),
  and respective events, $\mathcal{E}_{1}, \mathcal{E}_{2}$, so that with
  probability at least $1 - \varepsilon_{1}$ on the realization of $A$,
  \begin{align*}
    \mathbb{P}(\mathcal{E}_{1}) %
    &\geq \mathbb{P}(\mathcal{Z}_{<}) - \varepsilon_{2} %
    & %
      \mathbb{P}(\mathcal{E}_{2}) %
    &\geq \mathbb{P}(\mathcal{Z}_{>}) - \varepsilon_{2} .
  \end{align*}
  In particular, for $\mathcal{E} := \mathcal{E}_{1} \cap \mathcal{E}_{2}$,
  choose a largest such absolute constant
  $k := k(N_{0}, C_{1}, C_{2}, \varepsilon_{2}) > 0$ so that
  \begin{align*}
    \mathbb{P}(\mathcal{E}) %
    = \mathbb{P}(\mathcal{E}_{1} \cap \mathcal{E}_{2}) %
    \geq \mathbb{P}(\mathcal{Z}_{<}\cap \mathcal{Z}_{>})
    - 2 \varepsilon_{2} %
    \geq k.
  \end{align*}
  As per \autoref{prop:gw-lb-3} and \autoref{prop:gw-ub-3}, conditioning on
  $\mathcal{E}$ and letting $N \geq N_{0}$ gives $K_{1} \cap F \neq \emptyset$
  and $K_{2} \cap F = \emptyset$ with probability at least $1 - \varepsilon_{1}$
  on the realization of $A$, as desired. In this regime, that there exists
  $p > 0$ satisfying $\alpha_{2} = \alpha_{2}(N) \geq C N^{p}$ is a consequence
  of \autoref{prop:gw-ub-2}. One need simply select the largest $p$ satisfying
  for all $N \geq N_{0}$:
  \begin{align*}
    C N^{p}\sqrt{\log N} \leq C_{\delta, \gamma, L, \theta} N^{d/2}.
  \end{align*}
  Thus, for all $N \geq N_{0}$, with probability at least $1 - \varepsilon$ on
  the realization of $A$ there exists an event $\mathcal{E}$ for $z$ on which
  all four of the desired criteria hold.

\end{proof}

\paragraph{Implications for overconstrained basis pursuit.} Finally, we state
the main results of this section. The first result,
\autoref{lem:bp-oc-tuned-zero}, uses the geometric lemma to show that there
exists a regime in which $\tilde R$ is suboptimal in the setting where
$x_{0} = 0$ and $\sigma = \eta \sqrt m$. From there, we show in
\autoref{lem:bp-oc-zero} that $\tilde R$ is no better if $\sigma$ is any
larger. This is enough to state a maximin suboptimality result for {\bp}, with
$\sigma$ restricted to $(0, \eta \sqrt m]$, in
\autoref{thm:bp-oc-maximin}. Notably, this result is stronger than the analogous
minimax statement, which necessarily follows from the maximin result.

\begin{lem}[Lower bound $\tilde R(\eta \sqrt m; 0, A, \eta)$]
  \label{lem:bp-oc-tuned-zero}

  Fix $\delta, \varepsilon, \eta > 0$ and suppose $m : \nats \to \nats$
  satisfies $m(N)/N \to \gamma \in (0, 1)$. There is $N_{0} \in \nats$ and an
  absolute constant $p > 0$ so that for all $N \geq N_{0}$, if
  $A \in \reals^{m(N) \times N}$ is a normalized $K$-subgaussian matrix, then
  with probability at least $1 - \varepsilon$ on the realization of $A$,
  \begin{align*}
    \tilde R (\eta \sqrt m; 0, A, \eta) \geq C_{\delta, \gamma, K} N^{p}.
  \end{align*}
\end{lem}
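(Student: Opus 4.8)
The plan is to reduce the risk lower bound to a statement about the image program {\bq} and then invoke the Geometric Lemma (\autoref{lem:geometric-lemma}). Since $x_{0} = 0$, the measurements are $y = \eta z$ and the risk is $\tilde R(\eta\sqrt m; 0, A, \eta) = \eta^{-2}\E_{z}\|\tilde x(\eta\sqrt m)\|_{2}^{2}$. First I would record the correspondence between {\bp} at $x_{0}=0$ and {\bq}: minimizing $\|x\|_{1}$ subject to $\|\eta z - Ax\|_{2} \le \sigma$ is, upon writing $q = Ax$ and using $\|q\|_{1,A} = \inf\{\|x\|_{1} : Ax = q\}$, equivalent to minimizing $\|q\|_{1,A}$ over $F(\eta z; \sigma)$, so that $A\tilde x(\sigma) = \tilde q(\sigma; A, \eta z)$ (a.s.\ unique under random $A$). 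Positive homogeneity of $\|\cdot\|_{1,A}$ and of the ball constraint then give the scaling $\tilde q(\eta\sqrt m; A, \eta z) = \eta\,\tilde q(\sqrt m; A, z)$, reducing the problem to lower bounding $\|\tilde q(\sqrt m; A, z)\|_{2}$, where $\tilde q(\sqrt m; A, z) = \argmin\{\|q\|_{1,A} : q \in F\}$ with $F = F(z; \sqrt m)$ as in the admissible ensemble.

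The core step is a nesting argument using the first three conclusions of \autoref{lem:geometric-lemma}. Condition on the good realization of $A$ (probability $\ge 1 - \varepsilon_{1}$) and on the event $\mathcal{E}$ for $z$. By conclusion~1 there is $q^{*} \in K_{1} \cap F$; since $q^{*} \in \lambda B_{1,A}$ we have $\|q^{*}\|_{1,A} \le \lambda$, and because $\tilde q(\sqrt m; A, z)$ minimizes $\|\cdot\|_{1,A}$ over $F \ni q^{*}$, it follows that $\tilde q(\sqrt m; A, z) \in \lambda B_{1,A}$. Were $\|\tilde q(\sqrt m; A, z)\|_{2} \le \alpha_{2}$, then $\tilde q(\sqrt m; A, z) \in \lambda B_{1,A} \cap \alpha_{2} B_{2}^{m} \cap F = K_{2} \cap F$, contradicting conclusion~2 that $K_{2} \cap F = \emptyset$. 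Hence $\|\tilde q(\sqrt m; A, z)\|_{2} > \alpha_{2} > C N^{p}$ by conclusion~3.

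To pass back to $\|\tilde x\|_{2}$, I would use the crude operator-norm bound $\|\tilde x(\eta\sqrt m)\|_{2} \ge \|A\tilde x(\eta\sqrt m)\|_{2}/\|A\| = \eta\|\tilde q(\sqrt m; A, z)\|_{2}/\|A\|$, controlling $\|A\|$ by \autoref{coro:largest-subgaus-singval}: since $m/N \to \gamma$, with probability at least $1 - \varepsilon_{3}$ (increasing $N_{0}$ if needed) one has $\|A\| \le C_{\gamma, K}$, a constant not growing with $N$. On the intersection of the good $A$-events (probability $\ge 1 - \varepsilon$ after choosing $\varepsilon_{1} + \varepsilon_{3} \le \varepsilon$) this yields $\eta^{-2}\|\tilde x(\eta\sqrt m)\|_{2}^{2} \ge (C N^{p}/C_{\gamma,K})^{2}$ for every $z \in \mathcal{E}$. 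Finally, bounding the expectation from below by its restriction to $\mathcal{E}$ and using conclusion~4, $\mathbb{P}(\mathcal{E}) > k$, gives
\begin{align*}
  \tilde R(\eta\sqrt m; 0, A, \eta)
  \ge \Big(\frac{C N^{p}}{C_{\gamma,K}}\Big)^{2}\, \mathbb{P}_{z}(\mathcal{E})
  \ge C_{\delta, \gamma, K}\, N^{2p},
\end{align*}
and relabelling $2p$ as the claimed exponent finishes the proof.

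The substantive content is entirely contained in \autoref{lem:geometric-lemma}; what must be gotten right here is the two-step reduction (program correspondence plus homogeneous $\eta$-scaling) and the use of the operator-norm bound to convert the geometric lower bound on $\|\tilde q\|_{2}$ into one on $\|\tilde x\|_{2}$. The only mild subtlety I expect is the expectation bookkeeping: the pointwise lower bound holds only on $\mathcal{E}$, so the payoff comes from pairing the $N^{p}$-scale bound with the \emph{dimension-independent} lower bound $\mathbb{P}(\mathcal{E}) > k$, rather than from any new concentration estimate.
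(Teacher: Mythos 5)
Your proposal is correct and follows essentially the same route as the paper's proof: reduce to $\eta=1$ by homogeneity, identify $A\tilde x(\sqrt m)$ with the {\bq} minimizer $\tilde q$, use the Geometric Lemma's conclusions to force $\|\tilde q\|_{2} > \alpha_{2} \geq CN^{p}$ (the paper states $\tilde q \in (K_{1}\setminus K_{2})\cap F$ tersely where you spell out the feasibility-plus-contradiction argument), pass to $\|\tilde x\|_{2}$ via the singular-value bound of \autoref{coro:largest-subgaus-singval}, and conclude by restricting the expectation to $\mathcal{E}$ with $\mathbb{P}(\mathcal{E}) \geq k$. The only cosmetic difference is your explicit relabelling of the exponent $2p$, which the paper absorbs by using $N^{2p} \geq N^{p}$.
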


\begin{proof}[Proof of \autoref{lem:bp-oc-tuned-zero}]
  By a simple scaling argument, it suffices to assume $\eta = 1$. Consider an
  $(s, m, N, \delta, \varepsilon, \theta)$-admissible ensemble. By
  \autoref{lem:geometric-lemma}, there is a choice of $a_{1} > 0$ for
  $\alpha_{1}(N)$ and $L > 1$ for $\lambda (N)$, an integer $N_{0} \geq N_{*}$
  and absolute constants $k, p > 0$ so that with probability at least
  $1 - \varepsilon/2$ on the realization of $A$, there is an event $\mathcal{E}$
  for $z$ on which $K_{1} \cap F \neq \emptyset$, $K_{2} \cap F = \emptyset$,
  and for which $\mathbb{P}(\mathcal{E}) \geq k_{3}$. Where $\tilde q$ solves
  {\bq}, observe that $\tilde q = A\tilde x(\sqrt m)$ and moreover, by
  construction, $\tilde q \in (K_{1} \setminus K_{2}) \cap F$. In particular,
  \begin{align*}
    \|\tilde q\|_{1, A} \leq \lambda, \qquad %
    \alpha_{2} \leq \|\tilde q\|_{2} \leq \alpha_{1}.
  \end{align*}
  By \autoref{coro:largest-subgaus-singval} and our initial assumptions,
  \begin{align*}
    \|A\| %
    &\leq 1 + C \tilde K\left(1 + \sqrt{\frac{N}{m}}\right)  %
    \\
    &\leq 1 + C\tilde K\left(1 + (\gamma - \theta)^{-1/2}\right) %
    = C_{\gamma, K, \theta},
  \end{align*}
  with probability at least $1 - C\exp(-m)$. Note, by re-choosing $N_{0}$ if
  necessary,
  \begin{align*}
    1 - C\exp(-m) %
    &\geq 1 - C\exp(-N(\gamma - \theta)) %
    \\
    & \geq 1 - C_{\gamma, \theta}\exp(-N_{0}) %
    \geq 1 - \varepsilon/2.
  \end{align*}
  In particular, for $N \geq N_{0}$, with probability at least $1 - \varepsilon$
  on the realization $A$, it holds with probability at least $k$ on $z$ that
  \begin{align*}
    \alpha_{2} \leq \|\tilde q\|_{2} \leq \|A\| \|\tilde x(\sqrt m)\|_{2} %
    \leq C_{\gamma, K, \theta}\|\tilde x (\sqrt m)\|_{2}.
  \end{align*}
  On the same event, by item 3 of \autoref{lem:geometric-lemma}, there is an
  absolute constant $p > 0$ so that
  $\alpha_{2} \geq C_{\delta, \gamma, L, \theta} N^{p}$, whence
  \begin{align*}
    \|\tilde x(\sqrt m)\|_{2} \geq C_{\delta, \gamma, K, L, \theta} N^{p}.
  \end{align*}
  Finally, this immediately implies that for $N \geq N_{0}$, with probability
  at least $1 - \varepsilon_{1}$ on the realization of $A$,
  \begin{align*}
    \tilde R(\sqrt m; 0, N, 1) %
    &\geq \E \left[\|\tilde x(\sqrt m)\|_{2}^{2}
    \mid \mathcal{E}\right] \mathbb{P}(\mathcal{E})
    \\
    &\geq C_{\delta, \gamma, K, L, \theta} k N^{p}.
  \end{align*}

\end{proof}

\begin{lem}[Lower bound $\tilde R(\sigma; 0, A, \eta)$, $\sigma < \eta \sqrt m$]
  \label{lem:bp-oc-zero}

  Fix $\delta, \varepsilon, \eta > 0$ and suppose $m : \nats \to \nats$
  satisfies $m(N)/N \to \gamma \in (0, 1)$. There is $N_{0} \in \nats$ and
  absolute constant $p > 0$ so that for all $N \geq N_{0}$, if
  $A \in \reals^{m(N) \times N}$ is a normalized $K$-subgaussian matrix, it
  holds with probability at least $1 - \varepsilon$ on the realization of $A$
  that for any $0 < \sigma \leq \eta \sqrt m$,
  \begin{align*}
    \tilde R(\sigma; 0, A, \eta) \geq C_{\delta, \gamma, K} N^{p}.
  \end{align*}

\end{lem}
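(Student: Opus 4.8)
The plan is to reduce the statement to the endpoint $\sigma = \eta\sqrt m$, which is already handled by \autoref{lem:bp-oc-tuned-zero}, and then to propagate that lower bound down to every smaller $\sigma$ using the monotonicity supplied by the projection lemma. First I would run the same scaling reduction as in \autoref{lem:bp-oc-tuned-zero}: setting $x' = x/\eta$ and $\sigma' = \sigma/\eta$ gives $\tilde x(\sigma; 0, A, \eta z) = \eta\,\tilde x(\sigma'; 0, A, z)$, hence $\tilde R(\sigma; 0, A, \eta) = \tilde R(\sigma'; 0, A, 1)$, with the constraint $\sigma \leq \eta\sqrt m$ becoming $\sigma' \leq \sqrt m$. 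Thus it suffices to bound $\tilde R(\sigma; 0, A, 1) = \E_z\|\tilde x(\sigma)\|_2^2$ uniformly over $0 < \sigma \leq \sqrt m$, where now $y = z$.

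Next I would pass to the image variable $\tilde q(\sigma) := A\tilde x(\sigma)$, which, as recorded in the preliminaries, solves {\bq} with $z$ in place of $y$. The crux is that \autoref{prop:oc-solution-ordering} (a consequence of \autoref{cor:projection-lemma}, since $\|\cdot\|_{1, A}$ is a gauge) shows that $\sigma \mapsto \|\tilde q(\sigma)\|_2$ is non-increasing, almost surely on $(A, z)$. Therefore, for every $0 < \sigma \leq \sqrt m$ we have the pointwise inequality $\|\tilde q(\sigma)\|_2 \geq \|\tilde q(\sqrt m)\|_2$. This is exactly where the prior work is imported: on the event $\mathcal{E}$ furnished by \autoref{lem:geometric-lemma} (with $\mathbb{P}(\mathcal{E}) \geq k$ on $z$, occurring with probability at least $1 - \varepsilon$ on $A$), the proof of \autoref{lem:bp-oc-tuned-zero} shows $\tilde q(\sqrt m) \in (K_1 \setminus K_2) \cap F$, whence $\|\tilde q(\sqrt m)\|_2 \geq \alpha_2 \geq C_{\delta,\gamma,L,\theta} N^p$.

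The remaining step converts the bound on $\|\tilde q(\sigma)\|_2$ back into one on $\|\tilde x(\sigma)\|_2$. Since $\|\tilde q(\sigma)\|_2 = \|A\tilde x(\sigma)\|_2 \leq \|A\|\,\|\tilde x(\sigma)\|_2$, and since \autoref{coro:largest-subgaus-singval} together with the admissible-ensemble growth $m(N)/N \to \gamma$ bounds $\|A\| \leq C_{\gamma,K,\theta}$ with probability at least $1 - \varepsilon/2$ (exactly as in \autoref{lem:bp-oc-tuned-zero}), I obtain, on the intersection of these $A$-events and the $z$-event $\mathcal{E}$, uniformly over $0 < \sigma \leq \sqrt m$,
\begin{align*}
  \|\tilde x(\sigma)\|_2 \geq \frac{\|\tilde q(\sigma)\|_2}{\|A\|} \geq \frac{\|\tilde q(\sqrt m)\|_2}{\|A\|} \geq C_{\delta,\gamma,K,L,\theta} N^p.
\end{align*}
Restricting the expectation to $\mathcal{E}$ then yields $\tilde R(\sigma; 0, A, 1) \geq \E[\|\tilde x(\sigma)\|_2^2 \mid \mathcal{E}]\,\mathbb{P}(\mathcal{E}) \geq C_{\delta,\gamma,K,L,\theta}\, k\, N^{2p}$; absorbing constants and relabelling the exponent as in \autoref{lem:bp-oc-tuned-zero} gives the claim for all $\sigma \in (0, \sqrt m]$ simultaneously.

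The step I expect to require the most care is the bookkeeping of quantifiers rather than any single estimate. The monotonicity in \autoref{prop:oc-solution-ordering} holds pointwise in $(A, z)$, so it upgrades the single-$\sigma$ conclusion of \autoref{lem:bp-oc-tuned-zero} into a bound valid for \emph{all} $\sigma \in (0, \sqrt m]$ on one common good event; I must therefore intersect the $A$-event controlling $\|A\|$, the $A$-event from the geometric lemma, and the $z$-event $\mathcal{E}$ consistently so that the final probabilities ($1-\varepsilon$ over $A$ and $\geq k$ over $z$) are genuinely simultaneous. I should also emphasize that the lower bound is natively in $\|\tilde q(\sigma)\|_2$, not in $\|\tilde x(\sigma)\|_2$, so the operator-norm conversion is essential: one cannot assert monotonicity of $\sigma \mapsto \|\tilde x(\sigma)\|_2$ directly.
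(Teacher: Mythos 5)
Your proposal is correct and takes essentially the same route as the paper: the paper likewise reduces to the $\eta = 1$, $\sigma = \eta\sqrt m$ endpoint handled by \autoref{lem:bp-oc-tuned-zero} and then invokes \autoref{prop:oc-solution-ordering} to get the chain $\alpha_{2} \leq \|\tilde q(\sqrt m)\|_{2} \leq \|\tilde q(\sigma)\|_{2} \leq C_{\gamma, K, \theta}\|\tilde x(\sigma)\|_{2}$, concluding exactly as you do. The only difference is presentational --- the paper compresses this into two lines, whereas you spell out the event intersections and the $\|\tilde q\|_{2}$-to-$\|\tilde x\|_{2}$ conversion, which is indeed where the care is needed.
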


\begin{proof}[Proof of {\autoref{lem:bp-oc-zero}}]
  The proof of this result is nearly identical to that of
  \autoref{lem:bp-oc-tuned-zero}. The crucial difference is its use of
  \autoref{prop:oc-solution-ordering}, using which one argues
  \begin{align*}
    \alpha_{2} %
    \leq \|\tilde q(\sqrt m) \|_{2} %
    \leq \|\tilde q(\sigma) \|_{2} %
    \leq C_{\gamma, K, \theta} \|\tilde x(\sigma)\|_{2}
  \end{align*}
  to show, in the appropriate regime, that
  $\|\tilde x(\sigma) \|_{2} \geq C_{\delta, \gamma, K, L, \theta} N^{p}$.
\end{proof}

\begin{thm}[Overconstrained maximin]
  \label{thm:bp-oc-maximin}
  Fix $\delta, \varepsilon, \eta > 0$ and suppose $m : \nats \to \nats$
  satisfies $m(N) / N \to \gamma \in (0, 1)$.  For any $s \geq 0$, there is an
  integer $N_{0} \in \nats$ and an absolute constant $p > 0$ so that for all
  $N \geq N_{0}$, if $A \in \reals^{m(N)\times N}$ is a normalized
  $K$-subgaussian matrix, then it holds with probability at least
  $1 - \varepsilon$ on the realization of $A$ that
  \begin{align*}
     \sup_{x \in \Sigma_{s}^{N}} \inf_{\sigma \leq \eta \sqrt m}
    \tilde R(\sigma; x, A, \eta) %
    \geq C_{\delta, \gamma, K, \theta} N^{p}.
  \end{align*}
\end{thm}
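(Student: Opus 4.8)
The plan is to derive this maximin bound almost immediately from \autoref{lem:bp-oc-zero}, exploiting the single structural fact that the zero vector belongs to $\Sigma_{s}^{N}$ for every $s \geq 0$. Since $0 \in \Sigma_{s}^{N}$, I would lower-bound the supremum over the signal class by its value at the distinguished signal $x = 0$:
\begin{align*}
  \sup_{x \in \Sigma_{s}^{N}} \inf_{\sigma \leq \eta \sqrt m} \tilde R(\sigma; x, A, \eta)
  \geq \inf_{\sigma \leq \eta \sqrt m} \tilde R(\sigma; 0, A, \eta).
\end{align*}
It therefore suffices to lower-bound the right-hand side, and the entire content of the theorem reduces to controlling the risk of recovering the zero signal uniformly over the overconstrained parameter range.

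First I would fix an auxiliary parameter $\theta \in (0, \gamma)$ (say $\theta = \gamma/2$) so that the admissible-ensemble machinery underlying \autoref{lem:bp-oc-zero} applies, and extract the associated threshold $N_{0} \geq N_{*}$ and exponent $p > 0$ furnished by that lemma. The key observation is that \autoref{lem:bp-oc-zero} delivers a lower bound that is \emph{uniform} in $\sigma$: on one event of probability at least $1 - \varepsilon$ on the realization of $A$, the inequality $\tilde R(\sigma; 0, A, \eta) \geq C_{\delta, \gamma, K} N^{p}$ holds simultaneously for every $0 < \sigma \leq \eta \sqrt m$. Consequently the bound survives passage to the infimum, and for $N \geq N_{0}$ on this good event one obtains
\begin{align*}
  \inf_{\sigma \leq \eta \sqrt m} \tilde R(\sigma; 0, A, \eta)
  \geq C_{\delta, \gamma, K, \theta} N^{p},
\end{align*}
which combined with the previous display completes the proof. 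No union bound over $\sigma$ is needed, since the uniformity is built into \autoref{lem:bp-oc-zero}.

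The main subtlety—rather than a genuine obstacle—is precisely this uniformity in $\sigma$, and I would emphasize that it is supplied by \autoref{prop:oc-solution-ordering}: the monotonicity $\|\tilde q(\sigma)\|_{2} \geq \|\tilde q(\sqrt m)\|_{2}$ for $\sigma \leq \eta\sqrt m$ reduces the behaviour at every admissible $\sigma$ to the single tuned case $\sigma = \eta\sqrt m$ already controlled (via the geometric lemma) in \autoref{lem:bp-oc-tuned-zero}. Because all the probabilistic work is absorbed into that single event for $A$, the remaining steps are purely deterministic. I would close by noting that this maximin estimate is strictly stronger than the corresponding minimax statement, which is recovered from the elementary inequality $\sup_{x}\inf_{\sigma}\tilde R \leq \inf_{\sigma}\sup_{x}\tilde R$; the stated corollary on suboptimality in probability relative to $R^{*}(s,A)$ then follows by combining this bound with the benchmark estimate $R^{*}(s,A) \lesssim s\log(N/s)$ from \autoref{prop:Rstar-minimax-optimal}.
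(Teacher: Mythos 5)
Your proposal is correct and takes essentially the same approach as the paper: both lower-bound the supremum by its value at the distinguished signal $x = 0 \in \Sigma_{s}^{N}$ and then invoke the $\sigma$-uniform lower bound of \autoref{lem:bp-oc-zero}, which holds on a single high-probability event for $A$ simultaneously for all $0 < \sigma \leq \eta\sqrt{m}$. The only cosmetic difference is that the paper first reduces to $\eta = 1$ by a scaling argument, whereas you apply \autoref{lem:bp-oc-zero} directly at noise level $\eta$; both are valid since that lemma is stated for arbitrary $\eta > 0$.
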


\begin{proof}[Proof of {\autoref{thm:bp-oc-maximin}}]
  By a scaling argument, it suffices to consider the case $\eta =
  1$. Establishing an admissible ensemble and using \autoref{lem:bp-oc-zero},
  there is $N_{0} \geq N_{*}$ such that for any $N \geq N_{0}$, with probability
  at least $1 - \varepsilon$ on $A$,
  \begin{align*}
     \sup_{x \in \Sigma_{s}^{N}} \inf_{\sigma \leq \sqrt m}
    \tilde R(\sigma; x, A, 1) %
    &\geq \inf_{\sigma \leq \sqrt m}
    \tilde R(\sigma; 0, A, 1) %
    \\
    & \geq C_{\delta, \gamma, K, \theta} N^{p}.
  \end{align*}
\end{proof}

\subsubsection{Suboptimal regime for basis pursuit}
\label{sec:subopt-regime-basis}

This section contains the proof for \autoref{thm:bp-minimax-suboptimality}, the
main result of \autoref{sec:analysis-bp} establishing a regime in which {\bp} is
minimax suboptimal. In essence, it combines \autoref{lem:uc-bp-subgaus} and
\autoref{lem:bp-oc-zero}.

\begin{proof}[Proof of {\autoref{thm:bp-minimax-suboptimality}}]

  By a scaling argument, it suffices to consider the case $\eta = 1$. Re-write
  the minimax expression \eqref{eq:bp-minimax-suboptimality} as
  \begin{align*}
     \inf_{\sigma > 0} \sup_{x \in \Sigma_{s}^{N}} %
    \tilde R(\sigma; x, A, 1) %
    &= \min \left\{ %
    \inf_{\sigma \leq \sqrt m} S(\sigma), %
      \inf_{\sigma > \sqrt m} S(\sigma) \right\},
    \\
    S(\sigma) &:= \sup_{x \in \Sigma_{s}^{N}} \tilde R(\sigma; x, A, 1).
  \end{align*}
  For any $N \geq N_{*}$, by \autoref{lem:uc-bp-subgaus}, it holds with
  probability at least $1 - \varepsilon/2$ on the realization of $A$ that
  \begin{align*}
    \inf_{\sigma > \sqrt m} S(\sigma) \geq C_{\delta, \gamma, \theta} \sqrt N.
  \end{align*}
  Next observe that the trivial lower bound
  $S(\sigma) \geq \tilde R(\sigma; 0, A, 1)$ holds for any $\sigma >0$, because
  $0 \in \Sigma_{s}^{N}$. In particular, \autoref{lem:bp-oc-zero} yields an
  $N_{0} \geq N_{*}$ and absolute constant $p > 0$ such that, with probability
  at least $1 - \varepsilon/2$ on the realization of $A$,
  \begin{align*}
    \inf_{\sigma \leq \sqrt m} S(\sigma) %
    \geq \inf_{\sigma \leq \sqrt m} \tilde R(\sigma; 0, A, 1) %
    \geq C_{\delta,\gamma,K,\theta} N^{p}.
  \end{align*}
  Consequently, there is an absolute constant $p > 0$ so that, for all
  $N \geq N_{0}$, it holds with probability at least $1 - \varepsilon$ on the
  realization of $A$ that
  \begin{align*}
    \inf_{\sigma > 0} \sup_{x \in \Sigma_{s}^{N}}
    \tilde R (\sigma; x, A, 1) %
    &\geq \min \left\{ C_{\delta, \gamma, \theta} \sqrt N,
    C_{\delta, \gamma, K, \theta} N^{p} \right\} %
    \\
    & \geq C_{\delta, \gamma, K, \theta} N^{p}.
  \end{align*}

\end{proof}

\section{Conclusion}
\label{sec:conclusion}

This work examined the relative sensitivity of three \textsc{Lasso} programs to
their governing parameters: {\ls}, {\bp} and {\qp}. We proved asymptotic
cusp-like behaviour of $\hat R(\tau; x_{0}, A, \eta)$ in the limiting low-noise
regime in \autoref{sec:LS-instability}. Numerical simulations in
\autoref{sec:numerical-results} support these observations for even modest
dimensional parameters and noise scales.

In \autoref{sec:qp}, we recall a result establishing right-sided stability of
{\qp} for a class of matrices that satisfy a version of RIP.\@ The result does
not address sensitivity of {\qp} to its governing parameter when the governing
parameter is less than its optimal value. In \autoref{sec:numerical-results}, we
demonstrate numerically that there are regimes in which {\qp} is sensitive to
its governing parameter $\lambda$ when $\lambda < \lambda^{*}$. This sensitivity
is readily observed in the rightmost plot of \autoref{fig:qp-instability}. This
observation establishes a numerical connection to the numerics and theory
of~\cite{berk2019pdparmsens, berk2020sl1mpc}, in which the authors analyze the
proximal denoising setting. Moreover, we observe that {\qp} is more sensitive to
its choice of parameter when the aspect ratio is larger. We believe this is due
to there being a smaller null-space, which has the effect of shrinking the space
of possible solutions. This behaviour is visible in both plots of
\autoref{fig:qp-instability}: the error curves for larger $\delta$ are steeper
for $\lambda < \lambda^{*}$ than those for smaller values of $\delta$.

In \autoref{sec:analysis-bp}, we proved asymptotic suboptimality of
$\tilde R(\sigma; x_{0}, A, \eta)$ in a certain dimensional regime that falls
outside the typical CS regime where $m \approx Cs\log(N/s)$. In particular, for
$m \approx \delta N$, $\delta \in (0, 1)$, we show that {\bp} risk is
asymptotically suboptimal for ``very sparse'' signals. We demonstrate that this
theory is relevant to the CS regime in \autoref{sec:numerical-results}, in which
we show that the loss and average loss for {\bp} are sensitive to the value of
the governing parameter if the ground-truth signal is very sparse. Furthermore,
\autoref{fig:bp-numerics-1} and \autoref{fig:bp-numerics-2} depict suboptimality
of the {\bp} risk for modest choices of dimesional parameters.

Future works include extending the main results to the generalized
\textsc{Lasso} setting, using more general atomic norms. A rigorous examination
of low-rank matrix recovery could be interesting. Finally, it would be useful to
understand when a convex program is expected to exhibit sensitivity to its
governing parameter, and to determine systematically the regime in which that
instability arises.

\section{Acknowledgments}
\label{sec:acknowledgments}

We would like to thank Xiaowei Li for a careful reading of portions of the
manuscript.

\appendix
\section{Appendix}
\label{sec:appendix}

\subsection{Appendix I}
\label{sec:appendix-i}

\subsubsection{Proofs for refinements on bounds for gw}
\label{sec:proofs-refin-bounds}

\begin{proof}[Proof of {\autoref{coro:bellec-random-hulls}}]
  Assuming
  \begin{align*}
    m > C_{\varepsilon} \delta^{-2} \tilde K^{2} s \log\frac{2N}{s},
  \end{align*}
  \autoref{lem:corollary-lmpv-thm1.4} gives
  \begin{align*}
    \sup_{x \in \mathcal{L}_{s}(1)} \left|\|Ax\|_{2} - \|x\|_{2}\right| < \delta,
  \end{align*}
  with probability at least $1 - \varepsilon$ on the realization of $A$. In
  particular, $(1- \delta)^{-1}\|A^{j}\|_{2} \geq 1$ and
  $(1 + \delta)^{-1} \|A^{j}\|_{2} \leq 1$ for all $j \in [N]$. Define the sets
  \begin{align*}
    T_{+} &:= \cvx \{ \pm (1+\delta)^{-1}A^{j} : j \in [N]\}, %
    \\
      T_{-} &:= \cvx \{ \pm (1-\delta)^{-1}A^{j} : j \in [N]\}.
  \end{align*}
  We will apply \autoref{prop:bellec1} and \autoref{prop:bellec2} to $T_{+}$
  and $T_{-}$, respectively, then use that $T_{+} = (1 + \delta)^{-1} T$ and
  $T_{-} = (1 - \delta)^{-1} T$. Indeed, observe that
  \begin{align*}
    (1 + \delta)^{-1} & \w (T \cap (1 + \delta)\gamma B_{2}^{m}) %
    \\
    & = \w ((1+ \delta)^{-1} T \cap \gamma B_{2}^{m}) %
    = \w (T_{+} \cap \gamma B_{2}^{m}) %
    \\
    &\leq \min \big\{ %
      4 \sqrt{ \max \big\{1, \log(8e N \gamma^{2}) \big\} }, %
      \gamma \sqrt{ \min\{m, 2N\}} \big\}.
  \end{align*}
  Rearranging, with $\alpha = (1 + \delta)\gamma$ gives
  \begin{align*}
    \w (T \cap \alpha B_{2}^{m}) %
    & \leq  \min \big\{ %
    4 (1 + \delta) \sqrt{ \max \big\{1,
    \log(8e N (1 + \delta)^{-2}\alpha^{2}) \big\} }, %
    \alpha \sqrt{ \min\{m, 2N\}} \big\}.
  \end{align*}
  Similarly, one may derive the lower bound for $\alpha \in (0, (1-\delta))$,
  using that $\kappa = 1 - \delta$,
  \begin{align*}
    \w (T \cap \alpha B_{2}^{m}) %
    \geq (\sqrt 2 / 4) (1 - \delta)^{2} \sqrt{
    \log \frac{N\alpha^{2}}{5(1-\delta)^{2}}}. %
  \end{align*}
\end{proof}

\subsubsection{Proofs for projection lemma}
\label{sec:proofs-proj-lemma}

\begin{proof}[Proof of {\autoref{cor:projection-lemma}}]
  Define $\beta := \|q_{\alpha}\|_{\mathcal{K}}$. Then
  $q_{\alpha} \in \beta \mathcal{K}$ and so
  $\|y - \mathrm{P}_{\beta \mathcal{K}}(y)\|_{2} \leq \alpha$ by definition of
  $\mathrm{P}_{\beta \mathcal{K}}(\cdot)$. Again by definition of
  $\mathrm{P}_{\beta \mathcal{K}}(\cdot)$, it holds that
  $\|\mathrm{P}_{\beta \mathcal{K}}(y)\|_{\mathcal{K}} \leq \beta$. In particular,
  $\mathrm{P}_{\beta \mathcal{K}}(y)$ is feasible and
  \begin{align*}
    \|\mathrm{P}_{\beta \mathcal{K}}(y)\|_{\mathcal{K}} \leq \|q_{\alpha}\|_{\mathcal{K}},
  \end{align*}
  whence optimality of $q_{\alpha}$ implies
  $q_{\alpha} = \mathrm{P}_{\beta \mathcal{K}}(y)$. Thus, by an elementary sequence of
  steps, the proof follows from the projection lemma.
\end{proof}

\subsubsection{Parameter instability of nuclear norm recovery}
\label{sec:param-inst-nucl}

This section is a supplement to \autoref{sec:subopt-choice-tau}, in support of
the comment made at the end of \autoref{sec:LS-instability}. We include a result
that ports the two lemmas of \autoref{sec:subopt-choice-tau} from the setting of
constrained \textsc{Lasso} to that of constrained nuclear norm recovery. Before
we state the lemma, define
$B_{*} := \{ X\in \reals^{d\times d} : \|X\|_{*} = \sum_{i = 1}^{d}
\sigma_{i}(X) \leq r \}$, where $\sigma_{i}(X)$ is the $i$th largest singular
value of the matrix $X$. Recall that $\|\cdot\|_{*}$ is dual to the operator
norm $\|\cdot\|$ and so they admit the following inequality for any
$X, Y \in \reals^{d\times d}$
\begin{align*}
  \ip{X, Y} \leq \|X\| \|Y\|_{*}. 
\end{align*}
Finally, since the space of matrices is finite, all matrix norms are
equivalent. In particular, if $X$ is a rank $r$ matrix (\ie $\sigma_{i}(X) = 0$
for $i = r+1, \ldots, d$), then
\begin{align*}
  \|X\|_{F} \leq \|X\|_{*} \leq \sqrt r \|X\|_{F}.
\end{align*}

\begin{lem}[Nuclear Norm Recovery]
  \label{lem:nuc-norm-recovery}
  Let $\mathcal{A} : \reals^{d \times d} \to \reals^{m}$ be an operator mapping
  $\mathcal{A} X = (\ip{A_{i}, X})_{i=1}^{m}$ for
  $A_{i} \in \reals^{d\times d}$. Given $X_{0} \in \reals^{d\times d}$,
  $\eta > 0$ and $z \in \reals^{m}$ with $z_{i} \iid \mathcal{N}(0, 1)$, let
  $y = \mathcal{A}X_{0} + \eta z$. Suppose that either $\tau > \|X_{0}\|_{*}$
  and $\dim \ker \mathcal{A} > 0$ or $\tau < \|X_{0}\|_{*}$. Almost surely on
  the realization of $z$,
  \begin{align*}
    \lim_{\eta \to 0} \hat L ( \tau; X_{0}, \mathcal{A}, \eta z) %
    &= \lim_{\eta \to 0} \eta^{-2} \|\hat X(\tau; y, \mathcal{A}, B_{*}) - X_{0} \|_{F}^{2}
    \\
    &= \infty.
  \end{align*}

\end{lem}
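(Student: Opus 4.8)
The plan is to port the two constrained-\textsc{Lasso} instability lemmas, \autoref{lem:ls-instability-uc} and \autoref{lem:ls-instability-oc}, to the matrix setting, substituting the nuclear norm $\|\cdot\|_{*}$ for $\|\cdot\|_{1}$, the Frobenius norm $\|\cdot\|_{F}$ for $\|\cdot\|_{2}$, and the operator $\mathcal{A}$ for $A$. The single place where the scalar proofs used the geometry of $\reals^{N}$ is the inequality $\|x\|_{2} \geq N^{-1/2}\|x\|_{1}$; its replacement here is the norm equivalence $\|X\|_{F} \geq d^{-1/2}\|X\|_{*}$ recalled just above, valid for all $X \in \reals^{d\times d}$. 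Everything else carries over because $B_{*}$ is a symmetric convex body with nonempty interior and $\|\cdot\|_{*}$ is a norm, which are precisely the only properties of $B_{1}^{N}$ and $\|\cdot\|_{1}$ that the original arguments exploit.

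For the overconstrained case $\tau < \|X_{0}\|_{*}$, I would argue as in \autoref{lem:ls-instability-oc}. Set $\rho := \|X_{0}\|_{*} - \tau > 0$. Any feasible point $\Xi$ satisfies $\|\Xi\|_{*} \leq \tau$, so by the reverse triangle inequality $\|\Xi - X_{0}\|_{*} \geq \|X_{0}\|_{*} - \|\Xi\|_{*} \geq \rho$, whence $\|\Xi - X_{0}\|_{F} \geq d^{-1/2}\rho$. Applying this to any solution $\hat X(\tau)$ gives $\hat L(\tau; X_{0}, \mathcal{A}, \eta z) \geq \rho^{2}/(d\eta^{2}) \to \infty$ as $\eta \to 0$. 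This requires no assumption on $\mathcal{A}$ and holds surely in $z$.

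For the underconstrained case $\tau > \|X_{0}\|_{*}$ with $\dim \ker \mathcal{A} > 0$, I would follow \autoref{lem:ls-instability-uc}. First assuming $\Rng(\mathcal{A}) = \reals^{m}$, pick $Z$ with $\mathcal{A}Z = z$; then $\Xi := X_{0} + \eta Z$ is feasible and satisfies $\mathcal{A}\Xi = y$ once $\eta < \rho\|Z\|_{*}^{-1}$, where $\rho := \tau - \|X_{0}\|_{*}$, hence solves the program and lies in the interior of $\tau B_{*}$. Choosing $0 \neq V \in \ker\mathcal{A}$ and using continuity of $t \mapsto \|\Xi + tV\|_{*}$ together with $\|\Xi + tV\|_{*} \to \infty$, the intermediate value theorem supplies $t$ with $\|\Xi + tV\|_{*} = \tfrac12(\tau + \|X_{0}\|_{*}) \in (\|X_{0}\|_{*}, \tau)$; the resulting $W := \Xi + tV$ is still feasible and satisfies $\mathcal{A}W = \mathcal{A}\Xi = y$, so it too solves the program. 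The worst-case convention for $\hat X$ then forces $\|\hat X - X_{0}\|_{F} \geq \|W - X_{0}\|_{F} \geq d^{-1/2}(\|W\|_{*} - \|X_{0}\|_{*}) = \rho/(2\sqrt d)$, giving $\hat L \geq \rho^{2}/(4d\eta^{2}) \to \infty$. The non-surjective case is handled exactly as in \autoref{lem:ls-instability-uc}: replacing $y$ by its orthogonal projection $Py$ onto $\Rng(\mathcal{A})$ leaves the argmin unchanged, since $\|P^{\perp}y\|_{2}^{2}$ is an additive constant, and $Py \in \Rng(\mathcal{A})$ reduces matters to the surjective argument. Almost-sure finiteness of $\|Z\|_{*}$ supplies the ``almost surely on $z$'' qualifier.

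I do not expect a genuine obstacle: the statement is an essentially mechanical translation. The one point deserving care is the underconstrained construction of $W$, where I must confirm that $B_{*}$ being a symmetric convex body with nonempty interior legitimizes the interior/intermediate-value step, and that the reverse triangle inequality combined with $\|X\|_{F} \geq d^{-1/2}\|X\|_{*}$ furnishes the Frobenius lower bound; both are immediate. No restricted isometry property or probabilistic machinery is needed, since the blow-up is deterministic once a single finite preimage $Z$ of $z$ (resp.\ of $Py$) is fixed.
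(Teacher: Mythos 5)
Your proposal is correct and follows essentially the same route as the paper: both treat the overconstrained case $\tau < \|X_{0}\|_{*}$ by the reverse triangle inequality plus the norm equivalence $\|X\|_{*} \leq \sqrt{d}\,\|X\|_{F}$, and the underconstrained case by lifting the construction of \autoref{lem:ls-instability-uc} (a preimage $\zeta$ of $z$, feasibility of $X_{0}+\eta\zeta$ for small $\eta$, a kernel perturbation pushed out to nuclear norm at least $\tfrac12(\tau+\|X_{0}\|_{*})$, the worst-case-solution convention, and projection onto $\Rng(\mathcal{A})$ in the non-surjective case). If anything, your intermediate-value-theorem selection of $W$ is spelled out more explicitly than the paper's, which compresses that step into ``applying a similar argument as before.''
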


\begin{proof}[Proof of {\autoref{lem:nuc-norm-recovery}}]
  First, we examine the setting where $\tau > \|X_{0}\|_{*}$.  In like manner as
  the proof for \autoref{lem:ls-instability-uc}, define
  $\tau^{*} := \|X_{0}\|_{*}$ and set $\rho := \tau - \tau^{*}$. Assume
  $\Span (\mathcal{A}) = \reals^{m}$. There exists
  $\zeta \in \reals^{d\times d}$ such that $\mathcal{A} \zeta = z$, and so
  $\mathcal{A}(X_{0} + \eta \zeta) = y$. If $\eta$ is sufficiently small then
  $X' := X_{0} + \eta \zeta \in \tau B_{*}$ where
  $B_{*} = \{ X \in \reals^{d\times d} : \|X\|_{*} \leq 1\}$ is the nuclear norm
  ball. In particular, $X'$ is a solution for {\gls} where
  $\mathcal{K} = B_{*}$. Applying a similar argument as before, noting that
  \begin{align*}
    \|X' - X_{0}\|_{F} \geq \frac{\|X'\|_{*} - \|X_{0}\|_{*}}{d} \geq \rho > 0,
  \end{align*}
  will complete the proof in the case where $\mathcal{A}$ spans
  $\reals^{m}$. The case where $\Span\mathcal{A} \subsetneq \reals^{m}$ is
  similar.

  Next, we examine the setting where $\tau < \|X_{0}\|_{*}$. For any solution
  $\Xi$ to {\gls}, one has by norm equivalence and triangle inequality:
  \begin{align*}
    \hat L(\tau; X_{0}, \mathcal{A}, B_{*}) %
    \geq \eta^{-2} \|X_{0} - \Xi\|_{F}^{2} %
    &\geq \frac{(\|X_{0}\|_{*} - \|\Xi\|_{*})^{2}}{\eta^{2} d^{2}} %
    \\
    &\xrightarrow{\eta \to 0} \infty.
  \end{align*}
\end{proof}

Note that risk bounds are well-known for constrained nuclear norm recovery in
the case where $\tau = \|X_{0}\|_{*}$ and
$\mathcal{A} : X \mapsto (\ip{A_{i}, X})_{i=1}^{m}$ with
$A_{i} \in \reals^{d\times d}$ independent and having independent subgaussian
entries~\cite{candes2010matrix, candes2011tight}. In particular, combining such
a result with the lemmas above gives an analogue to \autoref{thm:ls-instability}
in the constrained nuclear norm setting.

\subsubsection{Technical lemmas for overconstrained basis pursuit}
\label{sec:techn-lemm-supp}

\begin{prop}[Lower bound $\w (K_{1})$]
  \label{prop:gw-lb-1}
  Fix $C_{1}, \delta, \varepsilon > 0$ and
  $0 < \theta < \min\{1-\gamma, \gamma\}$. Given an admissible ensemble, there
  exists a choice of absolute constants $a_{1} > 0$ and $L > 1$, as well as an
  integer $N_{0}^{\eqref{prop:gw-lb-1}} \geq N_{*}$ so that, for each
  $N \geq N_{0}^{\eqref{prop:gw-lb-1}}$, it holds with probability at least
  $1 - \varepsilon$ on the realization of $A$ that
  \begin{align*}
    \w (K_{1}) \geq \left(\frac{a_{1}^{2} + C_{1}}{2}\right) \sqrt m.
  \end{align*}
\end{prop}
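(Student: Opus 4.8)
The plan is to reduce the lower bound on $\w(K_1)$ to the random-hull estimate of \autoref{coro:bellec-random-hulls}, using positive homogeneity of the Gaussian width together with the identity $B_{1,A} = \cvx\{\pm A^j : j \in [N]\}$. First I would note that $B_{1,A} = A(B_1^N) = \cvx\{\pm A e_j\} = \cvx\{\pm A^j\}$, which is exactly the polytope $T$ of \autoref{coro:bellec-random-hulls}. Since $K_1 = \lambda B_{1,A}\cap\alpha_1 B_2^m = \lambda\big(T\cap(\alpha_1/\lambda)B_2^m\big)$ and $\w(cS) = c\,\w(S)$ for $c>0$, this yields
\begin{align*}
  \w(K_1) = \lambda\,\w\big(T\cap\alpha B_2^m\big),
  \qquad \alpha := \frac{\alpha_1}{\lambda} = \frac{a_1\sqrt{\log N}}{L\,m^{1/4}}.
\end{align*}

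Next I would apply \autoref{coro:bellec-random-hulls} at radius $\alpha$. Because $m(N)/N\to\gamma$ and hence $\alpha\to 0$, we have $\alpha\in(0,1-\delta)$ once $N$ is large; moreover, the internal sparsity $1/\alpha^2 = L^2 a_1^{-2}\sqrt m/\log N\sim\sqrt N/\log N$ demanded by the corollary satisfies $(1/\alpha^2)\log(2N\alpha^2)\lesssim\sqrt N\ll m$, so the hypothesis $m > C_\varepsilon\delta^{-2}\tilde K^2(1/\alpha^2)\log(2N\alpha^2)$ holds for all large $N$ (a harmless rounding of $\alpha$ down to make $1/\alpha^2\in\nats$ only decreases $\w(T\cap\alpha B_2^m)$, as this width is nondecreasing in $\alpha$). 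The corollary then gives, with probability at least $1-\varepsilon$ on the realization of $A$,
\begin{align*}
  \w(K_1) \geq \lambda\cdot\frac{\sqrt 2}{4}(1-\delta)^2\sqrt{\log\frac{N\alpha^2}{5(1-\delta)^2}}.
\end{align*}

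The crux is the asymptotics of the logarithm, where the exponents are arranged to force a clean $\sqrt m$ scaling. Using $|m/N-\gamma|<\theta$ we have $N\alpha^2 = a_1^2 L^{-2}(N/\sqrt m)\log N\geq a_1^2 L^{-2}(\gamma+\theta)^{-1/2}\sqrt N\log N$, so $\log(N\alpha^2)/\log N\to\tfrac12$. Choosing $N_0\geq N_*$ large enough that $\log\frac{N\alpha^2}{5(1-\delta)^2}\geq\tfrac13\log N$ for $N\geq N_0$, and exploiting the defining cancellation $\lambda\sqrt{\log N} = L\sqrt m$, I obtain
\begin{align*}
  \w(K_1)
  \geq \frac{\sqrt 2}{4}(1-\delta)^2 L\sqrt{\tfrac13}\,\sqrt m
  = \frac{\sqrt 6}{12}(1-\delta)^2 L\,\sqrt m.
\end{align*}
Finally, I would fix any $a_1>0$ and take $L>1$ large enough relative to $a_1,C_1,\delta$ that $\tfrac{\sqrt6}{12}(1-\delta)^2 L\geq\tfrac{a_1^2+C_1}{2}$, which gives $\w(K_1)\geq\tfrac{a_1^2+C_1}{2}\sqrt m$ for all $N\geq N_0$, as required.

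The main obstacle I anticipate is the two-parameter bookkeeping. The corollary's internal sparsity $1/\alpha^2$ grows with $N$, so one must confirm that the required restricted-isometry hypothesis at this growing level is implied by $m\sim\gamma N$ rather than by the admissible ensemble's fixed-$s$ condition at $N_{\mathrm{RIP}}$, and that this is the source of the $1-\varepsilon$ probability. The quantitative heart is the estimate $\log(N\alpha^2)\sim\tfrac12\log N$, which together with $\lambda\sqrt{\log N}=L\sqrt m$ produces the correct order; once that is in hand, the remaining work is positive homogeneity and a routine chase of constants, where the freedom to enlarge $L$ absorbs the target value $\tfrac{a_1^2+C_1}{2}$.
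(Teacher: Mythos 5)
Your proposal is correct and follows essentially the same route as the paper's proof: both reduce via homogeneity to $\w(K_{1}) = \lambda\,\w\big(B_{1,A}\cap(\alpha_{1}/\lambda)B_{2}^{m}\big)$, invoke \autoref{coro:bellec-random-hulls} for the lower bound, use $m \leq N(\gamma+\theta)$ to show the logarithm grows like $\tfrac12\log N$, and exploit the cancellation $\lambda\sqrt{\log N} = L\sqrt m$ together with a choice of $L$ large relative to $a_{1}, C_{1}, \delta$ (the paper's condition $2\big(\tfrac{a_{1}^{2}+C_{1}}{(1-\delta)^{2}L}\big)^{2} < \tfrac12$ is the same constraint you impose with the slightly cruder $\tfrac13\log N$ bound). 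Your explicit verification of the corollary's hypothesis at the growing sparsity level $1/\alpha^{2}\sim\sqrt m/\log N$ addresses a point the paper's proof leaves implicit, but it confirms rather than changes the argument.
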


\begin{proof}[Proof of \autoref{prop:gw-lb-1}]
  Since $\w (K_{1}) = \E \sup_{q \in K_{1}} \ip{q, z}$ is the gw of
  $K_{1}$ we may invoke \autoref{coro:bellec-random-hulls} to obtain a
  sufficient chain of inequalities:
  \begin{align}
    \label{eq:gw-lb-1-1}
    \w (K_{1}) %
    & \overset{\eqref{coro:bellec-random-hulls}}{\geq}
    \frac{\sqrt{2}}{4} (1 - \delta)^{2} \lambda
    \sqrt{\log \left(\frac{N \alpha_{1}^{2}}{5(1-\delta)^{2}\lambda^{2}}\right)}
    \\
    & \overset{(*)}{\geq} \left(\frac{a_{1}^{2} + C_{1}}{2}\right)\sqrt m.
  \end{align}
  The first inequality holds with probability at least $1 - \varepsilon$ on the
  realization of $A$. Therefore, showing that $(*)$ holds implies the desired
  result. Rewriting $(*)$ gives the equivalent condition
  \begin{align*}
    C_{\delta, L} \sqrt{\frac{m}{\log N}} \sqrt{
    \log\left(\frac{N \log N}{C_{\delta, L, a_{1}} \sqrt m}\right)}
    \geq C_{a_{1}, C_{1}} \sqrt m
  \end{align*}
  The latter term of the left-hand side may be simplified using that
  $m \leq N(\gamma + \theta)$, since $N \geq N_{\theta}$:
  \begin{align*}
    \log\left(\frac{N \log N}{C_{\delta, L, a_{1}} \sqrt m}\right) %
    \geq \log \left( C_{\delta, L, a_{1}, \gamma, \theta} \sqrt N \log N\right).
  \end{align*}
  Thus, $(*)$ is satisfied if
  \begin{align*}
    \log \left( C_{\delta, L, a_{1}, \gamma, \theta} \sqrt N \log N\right)
    \geq C_{\delta, L, a_{1}, C_{1}} \log N,
  \end{align*}
  which holds when
  \begin{align*}
    \left(\frac{1}{2} - C_{\delta, L, a_{1}, C_{1}}\right) \log \left( N\right)
    \geq - \log \left( C_{\delta, L, a_{1}, \gamma, \theta} \log N\right).
  \end{align*}
  This is eventually true so long as one chooses $(a_{1}, L)$ abiding
  \begin{align*}
    C_{\delta, L, a_{1}, C_{1}} %
    = 2 \left(\frac{a_{1}^{2} + C_{1}}{(1-\delta)^{2} L}\right)^{2} %
    < \frac{1}{2}.
  \end{align*}
\end{proof}

\begin{prop}[Lower bound $X_{1}$]
  \label{prop:gw-lb-2}

  Fix $\delta, \varepsilon_{1}, \varepsilon_{2} > 0$ and
  $\theta \in (0, \gamma)$. Given an admissible ensemble, there exists a choice
  of absolute constants $a_{1} > 0$ and $L > 1$, as well as an integer
  $N_{0}^{\eqref{prop:gw-lb-2}} \geq N_{*}$ so that, for each
  $N \geq N_{0}^{\eqref{prop:gw-lb-2}}$, it holds with probability at least
  $1 - \varepsilon_{1}$ on the realization of $A$ that with probability at
  least $1 - \varepsilon_{2}$ on the realization of $z$, for any $c \in (0, 1)$
  there exists $q \in K_{1}$ satisfying $\ip{q, z} \geq c \w (K_{1})$.
\end{prop}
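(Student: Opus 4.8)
The plan is to condition on the realization of $A$ and regard the map $q \mapsto \ip{q, z}$, indexed by $q \in K_1$, as a centred Gaussian process; its expected supremum is exactly $\w(K_1)$ by definition of the Gaussian width, and $K_1 \subseteq \alpha_1 B_2^m$ gives a uniform variance bound. The lower-tail form of Borell--TIS (\autoref{thm:borell-tis}) then shows the supremum concentrates above any prescribed constant fraction of its mean, and compactness of $K_1$ promotes this into the existence of an actual maximiser $q$.

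First I would fix $c \in (0,1)$ and invoke \autoref{prop:gw-lb-1} with $\varepsilon = \varepsilon_1$ and the same absolute constants $a_1 > 0$, $L > 1$: there is $N_0^{\eqref{prop:gw-lb-1}} \geq N_*$ so that for $N \geq N_0^{\eqref{prop:gw-lb-1}}$, with probability at least $1 - \varepsilon_1$ on the realization of $A$,
\begin{align*}
  \w(K_1) \geq \left(\frac{a_1^2 + C_1}{2}\right)\sqrt m.
\end{align*}
Conditioning on this event for $A$, and writing $f_q := \ip{q, z}$ with $z_i \iid \mathcal{N}(0,1)$ (which has the same law as the $g$ defining $X_1$, and is independent of $A$), I would record, since $K_1 \subseteq \alpha_1 B_2^m$ and $\alpha_1 = a_1 m^{1/4}$,
\begin{align*}
  \E\big[ \sup_{q \in K_1} f_q \mid A \big]
  &= \w(K_1), \\
  \sigma_{K_1}^2 := \sup_{q \in K_1}\E[f_q^2 \mid A]
  &= \sup_{q \in K_1}\|q\|_2^2 \leq \alpha_1^2 = a_1^2\sqrt m.
\end{align*}
Applying the lower-tail inequality of \autoref{thm:borell-tis} with $u := (1-c)\w(K_1)$ then yields
\begin{align*}
  \mathbb{P}\big(\, \sup_{q \in K_1} \ip{q,z} < c\,\w(K_1) \mid A \,\big)
  &\leq \exp\left(-\frac{(1-c)^2 \w(K_1)^2}{2\sigma_{K_1}^2}\right) \\
  &\leq \exp\left(-\frac{(1-c)^2}{2 a_1^2}\left(\frac{a_1^2 + C_1}{2}\right)^2 \sqrt m\right).
\end{align*}
Because $m = m(N) \to \infty$, the quantity $\w(K_1)^2/\sigma_{K_1}^2$ grows like $\sqrt m$, so the right-hand side tends to $0$; choosing $N_0^{\eqref{prop:gw-lb-2}} \geq N_0^{\eqref{prop:gw-lb-1}}$ large enough makes it at most $\varepsilon_2$. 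On the good $A$-event, with probability at least $1 - \varepsilon_2$ on $z$ one then has $\sup_{q \in K_1}\ip{q,z} \geq c\,\w(K_1)$, and since $K_1$ is compact (a bounded polytope $\lambda B_{1,A}$ intersected with the ball $\alpha_1 B_2^m$) this supremum is attained at some $q \in K_1$, the desired point.

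The only genuinely delicate point will be the quantifier on $c$: the Borell--TIS gap $u = (1-c)\w(K_1)$ shrinks as $c \uparrow 1$, so the threshold $N_0^{\eqref{prop:gw-lb-2}}$ must be permitted to depend on $c$ (together with $\varepsilon_2$ and the ensemble constants $a_1, C_1, \gamma, \theta$). Everything else is routine Gaussian concentration; the essential quantitative input is that pairing the lower bound $\w(K_1) \gtrsim \sqrt m$ from \autoref{prop:gw-lb-1} against the radius bound $\sigma_{K_1}^2 \leq \alpha_1^2 = a_1^2\sqrt m$ forces the exponent to diverge, which is exactly what renders the fluctuations of the supremum negligible relative to its mean.
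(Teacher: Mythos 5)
Your proof is correct and takes essentially the same approach as the paper: condition on $A$, apply the lower-tail Borell--TIS inequality to the conditionally Gaussian supremum with variance proxy $\sigma_{K_1}^2 \leq \alpha_1^2 = a_1^2\sqrt m$, play this against a lower bound $\w(K_1) \gtrsim \sqrt m$ so that the exponent diverges, and use compactness of $K_1$ to produce the maximiser $q$. The only differences are cosmetic: you import the width bound as the finished conclusion of \autoref{prop:gw-lb-1}, whereas the paper re-derives it from \autoref{coro:bellec-random-hulls} (obtaining the same diverging exponent after absorbing logarithmic factors), and your closing caveat that the threshold must be allowed to depend on $c$ is accurate --- the paper's own choice of threshold depends on $c$ through the constant $C_{a_1,c,\delta,\gamma,L,\theta}$ despite the statement's quantifier order, and the downstream application in \autoref{prop:gw-lb-3} fixes $c_1$ before invoking this proposition, so your reading is the operative one.
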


\begin{proof}[Proof of \autoref{prop:gw-lb-2}]
  Observe that $K_{1} \subseteq \reals^{m}$ is a topological space and define
  the centered Gaussian process $T_{x} := \ip{x, g}$ for
  $g_{i} \iid \mathcal{N}(0, 1)$. Observe that
  $X_{1} = \sup_{x \in K_{1}} |T_{x}|$ is almost surely finite. So, for any
  $u > 0$,
  \begin{align*}
    \mathbb{P} \left( X_{1} < \w (K_{1}) - u\right) %
    \leq \exp\left(- \frac{u^{2}}{2 \sigma_{K_{1}}^{2}}\right)
  \end{align*}
  by \autoref{thm:borell-tis}, where
  \begin{align*}
    \sigma_{K_{1}}^{2} %
    & = \sup_{x \in K_{1}} \E T_{x}^{2} %
    = \sup_{x \in K_{1}} \sum_{i = 1}^{N} x_{i}^{2} \E|g_{i}|^{2} %
    \\
    & = \sup_{x\in K_{1}} \|x\|_{2}^{2} %
    = \alpha_{1}^{2} %
    = a_{1}^{2} \sqrt m.
  \end{align*}
  Now, combine \autoref{thm:borell-tis} and \autoref{coro:bellec-random-hulls}.
  For $N \geq N_{*}$, it holds with probability at least $1 - \varepsilon_{1}$ on
  the realization of $A$ that for any $c \in (0, 1)$,
  \begin{align*}
    \mathbb{P}& \left(X_{1} < c \w (K_{1})\right) %
    \leq \exp\left(- \frac{(1-c)^{2} \w^{2}(K_{1})}{2 \sigma_{K_{1}}^{2}}\right) %
    \\
    & \leq \exp\left(
      - C_{a_{1}, c, \delta, L} \sqrt{m} \cdot \frac{
      \log\left(C_{a_{1}, \delta, \gamma, L, \theta} \sqrt N \log N\right)}{
      \log N}\right). %
  \end{align*}
  Choose $N_{1} \geq N_{*}$ so that the following chain of inequalities is satisfied:
  \begin{align*}
    \frac{
    \log\left(C_{a_{1}, \delta, \gamma, L, \theta} \sqrt N \log N\right)
    }{\log N} %
    & = \frac{1}{2} + \frac{
    \log\left(C_{a_{1}, \delta, \gamma, L, \theta} \log N\right)
    }{\log N} %
    \\
    & \geq \frac{1}{4}.
  \end{align*}
  Further, select $N^{\eqref{prop:gw-lb-2}} \geq N_{1}$ such that all
  $N \geq N^{\eqref{prop:gw-lb-2}}$ satisfy
  \begin{align*}
    N \geq \frac{\log^{2}\varepsilon_{2}^{-1}}{C_{a_{1}, c, \delta, \gamma, L, \theta}}.
  \end{align*}
  Then, for all $N \geq N^{\eqref{prop:gw-lb-2}}$, it holds with probability
  at least $1 - \varepsilon_{1}$ on the realization of $A$ that for any
  $c \in (0, 1)$,
  \begin{align*}
    \mathbb{P}\left(X_{1} < c \w (K_{1})\right) %
    \leq \exp\left(
    - C_{a_{1}, c, \delta, L} \sqrt{m} \right) %
    < \varepsilon_{2}.
  \end{align*}
  Thus, under the specified conditions, $X_{1} \geq c \w (K_{1})$ with
  probability at least $1 - \varepsilon_{1}$ on the realization of $A$ and
  probability at least $1 - \varepsilon_{2}$ on the realization of $z$. In
  particular, since $K_{1}$ is closed, it holds with probability at least
  $(1 - \varepsilon_{1})(1 - \varepsilon_{2})$ that there exists $q \in K_{1}$
  realizing the supremum, thereby admitting existence of a $q$ as claimed.
\end{proof}

\begin{prop}[Control $K_{1}\cap F$]
  \label{prop:gw-lb-3}

  Fix $C_{1}, \delta, \varepsilon_{1}, \varepsilon_{2} > 0$ and
  $\theta \in (0, \gamma)$. Given an admissible ensemble, there is an integer
  $N_{0}^{\eqref{prop:gw-lb-3}} \geq N_{*}$ and an absolute constant
  $k_{1} = k_{1}(N_{0}^{\eqref{prop:gw-lb-3}}, C_{1}, \varepsilon_{2}) > 0$ so
  that for all $N \geq N_{0}^{\eqref{prop:gw-lb-3}}$, with probability at
  least $1 - \varepsilon_{1}$ on the realization of $A$, there is an event
  $\mathcal{E}$ for $z$ satisfying
  \begin{align*}
    K_{1} \cap F \neq \emptyset \quad\text{on}\quad \mathcal{E}, %
    \qquad \text{and} \qquad %
    \mathbb{P}(\mathcal{E}) \geq k_{1}.
  \end{align*}

\end{prop}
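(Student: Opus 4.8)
The plan is to exhibit explicitly a single point lying in $K_1 \cap F$ on a suitable positive-probability event for $z$. Since $K_1 = \lambda B_{1,A} \cap \alpha_1 B_2^{m}$ is the intersection of the symmetric convex set $B_{1,A}$ (the image of $B_1^N$ under a linear map) with a Euclidean ball, it is convex, symmetric and contains the origin; hence every scaling $tq$ with $t \in [0,1]$ of a point $q \in K_1$ again lies in $K_1$. It therefore suffices to produce $q \in K_1$ and $t \in [0,1]$ with $\|tq - z\|_2 \leq \sqrt m$. First I would fix a constant $c \in (0,1)$ and invoke \autoref{prop:gw-lb-2} to obtain, with probability at least $1 - \varepsilon_1$ on $A$, an event $\mathcal{G}$ for $z$ with $\mathbb{P}(\mathcal{G}) \geq 1 - \varepsilon_2$ on which there is $q \in K_1$ satisfying $\ip{q,z} \geq c\,\w(K_1)$; then I would invoke \autoref{prop:gw-lb-1} to secure $\w(K_1) \geq \tfrac{a_1^2 + C_1}{2}\sqrt m$ on the same high-probability event for $A$. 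Recalling $\|q\|_2 \leq \alpha_1$, so that $\|q\|_2^2 \leq a_1^2 \sqrt m$, all the ingredients of a one-variable optimization are in place.

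The core is that optimization. Introduce $\mathcal{Z}_{<} := \{\|z\|_2^2 \leq m + C_1\sqrt m\}$ and set $\mathcal{E} := \mathcal{G} \cap \mathcal{Z}_{<}$. On $\mathcal{E}$, for any $t \in [0,1]$,
\begin{align*}
  \|tq - z\|_2^2
  &= \|z\|_2^2 - 2t\ip{q,z} + t^2\|q\|_2^2
  \\
  &\leq m + \sqrt m\left[\, C_1 - c(a_1^2 + C_1)\,t + a_1^2 t^2 \,\right].
\end{align*}
Writing $g(t) := a_1^2 t^2 - c(a_1^2 + C_1)t + C_1$, I must show $g$ attains a nonpositive value on $[0,1]$. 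This upward parabola has vertex $t_v = c(a_1^2+C_1)/(2a_1^2)$ and minimum $g(t_v) = C_1 - c^2(a_1^2+C_1)^2/(4a_1^2)$. I would then choose the constant $a_1$ (depending only on $c$ and $C_1$) so that simultaneously $t_v \leq 1$ and $g(t_v) \leq 0$: a short discriminant computation shows $g(t_v) \leq 0$ holds exactly when $a_1^2$ lies outside the interval between the roots of $c^2(a_1^2+C_1)^2 = 4C_1 a_1^2$, and taking $a_1^2$ at least the larger root $A_+$ gives $g(t_v)\le0$; one checks $A_+ \geq cC_1/(2-c)$, which is precisely the condition $t_v \leq 1$. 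With this $a_1$, choosing $t = t_v$ yields $\|t_v q - z\|_2^2 \leq m$, so $t_v q \in K_1 \cap F$.

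Finally I would lower bound $\mathbb{P}(\mathcal{E})$. Because $\|z\|_2^2 \sim \chi^2_m$ is right-skewed with median below its mean $m$, one has $\mathbb{P}(\mathcal{Z}_{<}) \geq \mathbb{P}(\|z\|_2^2 \leq m) > \tfrac12$ for every $m$, whence $\mathbb{P}(\mathcal{E}) \geq \mathbb{P}(\mathcal{Z}_{<}) - \varepsilon_2 =: k_1 > 0$, matching the stated dependence of $k_1$ on $C_1$, $\varepsilon_2$ (and on $N_0$ through $m(N_0)$). Setting $N_0^{\eqref{prop:gw-lb-3}}$ to the larger of the thresholds demanded by \autoref{prop:gw-lb-1} and \autoref{prop:gw-lb-2}, and taking $L > 1$ large enough to meet their hypotheses for the chosen $a_1$, then completes the argument. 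The main obstacle is the constant bookkeeping of the middle paragraph: one must check that a single pair $(a_1,L)$ makes $g$ nonpositive on $[0,1]$ while remaining admissible for both lower-bound propositions. The apparent tension dissolves because the tuned bound $\tfrac{a_1^2+C_1}{2}\sqrt m$ on $\w(K_1)$ is calibrated to exactly absorb the $C_1\sqrt m$ slack supplied by $\mathcal{Z}_{<}$.
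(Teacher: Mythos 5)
Your proposal is correct, and structurally it mirrors the paper's proof: combine \autoref{prop:gw-lb-2} (a witness $q \in K_{1}$ with $\ip{q, z} \geq c\,\w(K_{1})$) with \autoref{prop:gw-lb-1} (a lower bound on $\w(K_{1})$) and the event $\mathcal{Z}_{<} = \{\|z\|_{2}^{2} \leq m + C_{1}\sqrt m\}$, then conclude via $\mathbb{P}(\mathcal{E}) \geq \mathbb{P}(\mathcal{Z}_{<}) - \varepsilon_{2} =: k_{1}$. The genuine difference is where the loss factor $c<1$ is absorbed. The paper never touches the witness: it feeds the inflated constant $C_{1}' := c_{1}^{-1}(a_{1}^{2}+C_{1}) - a_{1}^{2}$ into \autoref{prop:gw-lb-1} (``increasing $L$ if necessary''), so that $c_{1}\w(K_{1}) \geq \tfrac12(a_{1}^{2}+C_{1})\sqrt m$ and the point $q$ itself satisfies $\|q-z\|_{2}^{2} \leq a_{1}^{2}\sqrt m - (a_{1}^{2}+C_{1})\sqrt m + m + C_{1}\sqrt m = m$, with no constraint on $a_{1}$. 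You instead keep the propositions' constants as stated and shrink the witness inside $K_{1}$ (valid, since $K_{1}$ is convex and contains the origin), optimizing the quadratic $g(t)$; your computation checks out --- $g(t_{v}) \leq 0$ exactly when $a_{1}^{2} \geq C_{1}(1+\sqrt{1-c^{2}})^{2}/c^{2}$, and that root indeed dominates $cC_{1}/(2-c)$, so $t_{v} \leq 1$ --- at the price of forcing $a_{1}$ to be large in terms of $(c, C_{1})$. What each approach buys: the paper's pushes all tuning into $(C_{1}', L)$ and uses the witness at $t=1$; yours keeps the constants legible and pushes the tuning into $(a_{1}, L)$ via a clean one-variable optimization.

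Two caveats, neither fatal and both shared to some extent with the paper. First, you invoke \autoref{prop:gw-lb-1} and \autoref{prop:gw-lb-2} for a \emph{prescribed} $a_{1}$, whereas their statements only assert existence of \emph{some} pair $(a_{1}, L)$; this is justified only by reading their proofs (any fixed $a_{1}$ is admissible once $L$ and then $N_{0}$ are enlarged, since gw-lb-1 merely needs $2\left((a_{1}^{2}+C_{1})/((1-\delta)^{2}L)\right)^{2} < 1/2$), and the paper itself exploits the same flexibility, so your final-paragraph acknowledgement resolves the ``tension'' correctly --- but be aware these propositions cannot be used as black boxes here. Second, to land the advertised probability $1-\varepsilon_{1}$ on $A$ you must invoke each proposition at level $\varepsilon_{1}/2$ and intersect the two events (the paper does exactly this); your phrase ``on the same high-probability event for $A$'' elides that the two propositions furnish different events. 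Finally, positivity of $k_{1} = \mathbb{P}(\mathcal{Z}_{<}) - \varepsilon_{2}$ implicitly requires $\varepsilon_{2} < \mathbb{P}(\mathcal{Z}_{<})$; your uniform bound $\mathbb{P}(\mathcal{Z}_{<}) > \tfrac12$ handles this for $\varepsilon_{2} < \tfrac12$, which is the same implicit assumption made in the paper.
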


\begin{proof}[Proof of {\autoref{prop:gw-lb-3}}]

  Fix $c_{1} \in (0, 1)$. By \autoref{prop:gw-lb-2}, there is a choice of
  $a_{1} > 0$ and $L > 1$, and an integer $N_{0} \geq N_{*}$ such that, with
  probability at least $1 - \varepsilon_{1}/2$ on the realization of $A$, there
  is an event $\mathcal{E}_{1}$ for $z$, with
  $\mathbb{P}(\mathcal{E}_{1}) \geq 1 - \varepsilon_{2}$, on which
  \begin{align*}
    \sup_{q \in K_{1}} \ip{q, z} \geq c_{1} \w (K_{1}).
  \end{align*}
  Further, there exists $q \in K_{1}$ realizing that supremum because $K_{1}$
  is closed. Selecting this $q$, we have ${\ip{q, z} \geq c_{1}
    \w (K_{1})}$. Next, define
  $C_{1}' := c_{1}^{-1}(a_{1}^{2} + C_{1}) - a_{1}^{2}$. By
  \autoref{prop:gw-lb-1}, increasing $L$ if necessary, there is an integer
  $N_{1} \geq N_{0}$ so that with probability at least $1 - \varepsilon_{1}/2$
  on the realization of $A$,
  \begin{align*}
    \w (K_{1}) %
    \geq \left(\frac{a_{1}^{2} + C_{1}'}{2}\right) \sqrt m.
  \end{align*}
  In particular, with probability at least $1 - \varepsilon_{1}$ on the
  realization of $A$ and probability at least $1 - \varepsilon_{2}$ on the
  realization of $z$, one has simultaneously:
  \begin{align*}
    \ip{q, z} %
    \geq c_{1} \w (K_{1}) %
    \geq \left(\frac{a_{1}^{2} + C_{1}}{2}\right) \sqrt m.
  \end{align*}
  Define the event
  $\mathcal{Z}_{<} := \{ \|z\|_{2}^{2} \leq m + C_{1} \sqrt m\}$. Because
  $q \in K_{1}$, $\|q\|_{2} \leq a_{1} m^{-1/4}$, whence conditioning on
  $\mathcal{E}_{1} \cap \mathcal{Z}_{<}$ gives $q \in F$. Indeed,
  \begin{align*}
    \|q - z\|_{2}^{2} %
    & = \|q\|_{2}^{2} - 2\ip{q, z} + \|z\|_{2}^{2} %
    \\
    & \leq a_{1}^{2} \sqrt m - (a_{1}^{2} + C_{1}) \sqrt m + m + C_{1} \sqrt m %
    \\
    & = m.
  \end{align*}
  Choose $N_{0}^{\eqref{prop:gw-lb-3}} := N_{1}$. Then, for each
  $N \geq N_{0}^{\eqref{prop:gw-lb-3}}$, with probability at least
  $1- \varepsilon_{1}$ on the realization of $A$, there is an event
  $\mathcal{E} := \mathcal{E}_{1} \cap \mathcal{Z}_{<}$ on which
  $q \in K_{1} \cap F$. Next, define
  \begin{align*}
    k_{1} := k_{1}(N_{0}^{\eqref{prop:gw-lb-3}}, C_{1}, \varepsilon_{2}) %
    := \left[\inf_{N \geq N_{0}^{\eqref{prop:gw-lb-3}}}
    \mathbb{P}(\mathcal{Z}_{<})\right] - \varepsilon_{2}
  \end{align*}
  and observe that $k_{1} > 0$, because $\mathbb{P}(\mathcal{Z}_{<})$ is
  bounded below by a dimension independent constant for $N \geq 2$. Finally,
  because $\mathcal{E}_{1}$ holds with probability at least
  $1 - \varepsilon_{2}$ on the realization of $z$ one has
  \begin{align*}
    \mathbb{P}(\mathcal{E}) %
    \geq \mathbb{P}(\mathcal{Z}_{<}) - \varepsilon_{2} %
    \geq k_{1}.
  \end{align*}
\end{proof}

\begin{prop}[Upper bound $\w (K_{2})$]
  \label{prop:gw-ub-1}

  Fix $C_{2}, \delta, \varepsilon > 0$ and $\theta \in (0, \gamma)$. Given an
  admissible ensemble, there exists a choice of absolute constants $a_{1} > 0$
  and $L > 1$, an integer $N_{0}^{\eqref{prop:gw-ub-1}} \geq N_{*}$, and a
  maximal choice of $\alpha_{2} = \alpha_{2}(N)$ so that, for each
  $N \geq N_{0}^{\eqref{prop:gw-ub-1}}$, it holds with probability at least
  $1 - \varepsilon$ on the realization of $A$ that
  \begin{align*}
    \w (K_{2}) \leq \frac{C_{2}}{2} \sqrt m.
  \end{align*}

\end{prop}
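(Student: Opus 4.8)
The plan is to peel off the scaling by $\lambda$ and reduce everything to the upper half of \autoref{coro:bellec-random-hulls}. Writing $T := B_{1,A}$ for the convex hull of $\{\pm A^{j} : j \in [N]\}$, one has $K_{2} = \lambda T \cap \alpha_{2} B_{2}^{m} = \lambda\big(T \cap \alpha B_{2}^{m}\big)$ with $\alpha := \alpha_{2}/\lambda$, so by positive homogeneity of the Gaussian width $\w(K_{2}) = \lambda\,\w(T \cap \alpha B_{2}^{m})$. First I would invoke \autoref{coro:bellec-random-hulls}, whose hypotheses hold with probability at least $1 - \varepsilon$ on $A$ once $N \geq N_{\mathrm{RIP}}$ (guaranteed in the admissible ensemble), to obtain
\begin{align*}
  \w(K_{2})
  \leq 4(1+\delta)\,\lambda\,
  \sqrt{\max\Big\{1,\ \log \tfrac{8eN\alpha^{2}}{(1+\delta)^{2}}\Big\}}.
\end{align*}
Keeping only the first branch of the corollary's $\min$ leaves a valid upper bound, and the $\max\{1,\cdot\}$ will be inessential once $N$ is large.

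Next I would substitute $\lambda = L\sqrt{m/\log N}$ and $\alpha^{2} = \alpha_{2}^{2}\log N/(L^{2}m)$ and choose $\alpha_{2}$ to be the largest value for which the displayed bound equals $\tfrac{C_{2}}{2}\sqrt m$; since the bound is strictly increasing in $\alpha_{2}$, this is precisely the \emph{maximal} $\alpha_{2}$ keeping $\w(K_{2}) \leq \tfrac{C_{2}}{2}\sqrt m$. Setting the two sides equal and dividing by $\sqrt m$ reduces to
\begin{align*}
  \log \frac{8eN\alpha^{2}}{(1+\delta)^{2}}
  = \beta \log N,
  \qquad
  \beta := \frac{C_{2}^{2}}{64(1+\delta)^{2}L^{2}},
\end{align*}
whence $\alpha^{2} = \tfrac{(1+\delta)^{2}}{8e}N^{\beta - 1}$ and
\begin{align*}
  \alpha_{2} = \lambda\alpha
  = \frac{(1+\delta)L}{\sqrt{8e}}\sqrt{\frac{m}{\log N}}\;N^{(\beta-1)/2}.
\end{align*}
Since $m/N \to \gamma$, this yields $\alpha_{2} \asymp N^{\beta/2}/\sqrt{\log N}$, the polynomially large lower window needed downstream.

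The main work is not this calculation but verifying that the resulting $\alpha_{2}$ is \emph{admissible simultaneously} with the constraints already in force. Three conditions must be reconciled: (i) the corollary requires $\alpha = \alpha_{2}/\lambda \in (0, 1-\delta)$, and indeed $\alpha \asymp N^{(\beta-1)/2} \to 0$ provided $\beta < 1$; (ii) the ensemble requires $\alpha_{2} \leq \alpha_{1} = a_{1}m^{1/4} \asymp N^{1/4}$, forcing $\beta/2 \leq 1/4$, i.e.\ $\beta \leq 1/2$; and (iii) the same pair $(a_{1},L)$ must also serve \autoref{prop:gw-lb-1} and \autoref{prop:gw-lb-2}, which demand $L$ large (e.g.\ $L > 2(a_{1}^{2}+C_{1})/(1-\delta)^{2}$). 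All three hold by taking $L$ sufficiently large, since $\beta = C_{2}^{2}/(64(1+\delta)^{2}L^{2}) \to 0$ as $L \to \infty$; then choosing $N_{0}^{\eqref{prop:gw-ub-1}} \geq N_{*}$ large enough forces $\beta\log N \geq 1$ (so the $\max$ is inactive), $\alpha < 1-\delta$, and $\alpha_{2} \leq \alpha_{1}$ for every $N \geq N_{0}^{\eqref{prop:gw-ub-1}}$. The genuine subtlety is that $a_{1}$ and $L$ are shared across the sibling propositions, so one must check that the lower bounds imposed on $L$ by each are jointly satisfiable — which they are, as each merely bounds $L$ below.
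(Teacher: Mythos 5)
Your proof follows essentially the same route as the paper's: both invoke the upper branch of \autoref{coro:bellec-random-hulls} (after the rescaling $K_{2} = \lambda\bigl(B_{1,A} \cap (\alpha_{2}/\lambda)B_{2}^{m}\bigr)$, which the paper performs implicitly), set the resulting bound equal to $\tfrac{C_{2}}{2}\sqrt{m}$, and solve for the maximal $\alpha_{2}$, arriving at the identical exponent $\beta = d = C_{2}^{2}/(64(1+\delta)^{2}L^{2})$ and the same formula $\alpha_{2}^{2} \asymp N^{d}/\log N$. Your extra verifications --- that $\alpha_{2}/\lambda < 1-\delta$ so the corollary applies, that $\alpha_{2} \leq \alpha_{1}$ (forcing $\beta \leq 1/2$ via large $L$), and that the lower bounds on $L$ imposed by the sibling propositions are jointly satisfiable --- are side conditions the paper's proof leaves implicit, so they only strengthen the argument.
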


\begin{proof}[Proof of \autoref{prop:gw-ub-1}]
  First, invoke \autoref{coro:bellec-random-hulls} in obtaining a sufficient
  chain of inequalities on $\w (K_{2})$:
  \begin{align*}
    \w (K_{2}) %
    \overset{\eqref{coro:bellec-random-hulls}}{\leq} %
    4 (1 + \delta) \lambda \sqrt{\log \left(
    \frac{8eN\alpha_{2}^{2}}{(1 + \delta)^{2}\lambda^{2}}\right)} %
    \overset{(**)}{\leq} %
    \frac{C_{2}}{2} \sqrt m.
  \end{align*}
  The first inequality holds with probability at least $1 - \varepsilon$ on the
  realization of $A$. Therefore, showing $(**)$ implies the desired
  result. Rewriting $(**)$ gives the equivalent condition
  \begin{align*}
    \log \left(C_{\delta, L} \alpha_{2}^{2} \frac{N\log N}{m}\right) %
    \leq C_{C_{2}, \delta, L} \log N.
  \end{align*}
  The left-hand side may be simplified using that $m \geq N(\gamma - \theta)$,
  since $N \geq N_{\theta}$, yielding a new sufficient condition:
  \begin{align}
    \label{eq:gw-ub-1-1}
    \log \left( C_{\delta,\gamma,L,\theta} \alpha_{2}^{2} \log N\right) %
    \leq C_{C_{2}, \delta, L} \log N.
  \end{align}
  Thus, \eqref{eq:gw-ub-1-1} is valid for any $\alpha_{2}$ satisfying
  $\alpha_{2} \leq \alpha_{2}(N)$, where
  \begin{align*}
    \alpha_{2}^{2}(N) &:= C_{\delta,\gamma,L,\theta} \frac{N^{d}}{\log N}, %
                        \quad \text{where} %
    \\ 
    d &:= \left(\frac{C_{2}}{8 (1 + \delta) L}\right)^{2}, %
    \\
    C_{\delta, \gamma, L, \theta} &:= \frac{(1+\delta)^{2} L^{2} (\gamma - \theta)}{8e}.
  \end{align*}
  Finally, set $N_{0}^{\eqref{prop:gw-ub-1}} := N_{*}$ and observe that for
  any $N \geq N_{0}^{\eqref{prop:gw-ub-1}}$, with
  $\alpha_{2} := \alpha_{2}(N)$, it holds with probability at least
  $1 - \varepsilon$ on the realization of $A$ that
  $\w (K_{2}) \leq \frac{C_{2}}{2} \sqrt m$, as desired.
\end{proof}

\begin{rmk}
  It will be convenient to reselect $N_{0}^{\eqref{prop:gw-ub-1}} \geq N_{*}$
  in \autoref{prop:gw-ub-1} so that $\alpha_{2}(N)$ is increasing for all
  $N \geq N_{0}^{\eqref{prop:gw-ub-1}}$. A quick calculation verifies that
  $N_{0}^{\eqref{prop:gw-ub-1}} \geq \exp(d^{-1})$ suffices.
\end{rmk}

\begin{prop}[Upper bound $X_{2}$]
  \label{prop:gw-ub-2}

  Fix $\delta, \varepsilon_{1}, \varepsilon_{2} > 0$ and
  $0 < \theta < \min \{1 - \gamma, \gamma\}$. Given an admissible ensemble,
  there exists a choice of absolute constants $a_{1} > 0$ and $L > 1$, as well
  as an integer $N_{0}^{\eqref{prop:gw-ub-2}} \geq N_{*}$ so that, for each
  $N \geq N_{0}^{\eqref{prop:gw-ub-2}}$, it holds with probability at least
  $1 - \varepsilon_{1}$ on the realization of $A$ and with probability at least
  $1 - \varepsilon_{2}$ on the realization of $z$ that for any $C > 1$,
  \begin{align*}
    \sup_{q \in K_{2}} \ip{q, z} \leq C \w (K_{2}).
  \end{align*}

\end{prop}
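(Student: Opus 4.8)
The plan is to mirror the lower-bound argument of \autoref{prop:gw-lb-2}, but now control the supremum from above using the upper-tail form of Borell--TIS (\autoref{thm:borell-tis}). Fix $A$ and take $z \sim \mathcal{N}(0, I_m)$ as the source of randomness, so that $T_q := \ip{q, z}$, $q \in K_2$, is a centred Gaussian process. Since $K_2 = \lambda B_{1,A} \cap \alpha_2 B_2^m$ is symmetric, $\sup_{q \in K_2} T_q$ has mean exactly $\w(K_2)$ and is almost surely finite because $K_2$ is bounded. The relevant variance proxy is $\sigma_{K_2}^2 = \sup_{q \in K_2} \E T_q^2 = \sup_{q \in K_2}\|q\|_2^2 \leq \alpha_2^2$, the last inequality because $K_2 \subseteq \alpha_2 B_2^m$.

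With these identifications, Borell--TIS gives, for any $C > 1$ and $u := (C-1)\w(K_2)$,
\begin{align*}
  \mathbb{P}\left(\sup_{q \in K_2}\ip{q, z} > C \w(K_2)\right)
  \leq \exp\left(-\frac{(C-1)^2 \w^2(K_2)}{2\alpha_2^2}\right).
\end{align*}
It therefore suffices to show that the ratio $\w^2(K_2)/\alpha_2^2$ grows fast enough that the right-hand side drops below $\varepsilon_2$ for all $N$ beyond some threshold $N_0^{\eqref{prop:gw-ub-2}}$. This is the one piece of new work: \autoref{prop:gw-ub-1} only furnishes an upper bound on $\w(K_2)$, whereas here I need a matching lower bound.

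To produce it, I would write $K_2 = \lambda\bigl(B_{1,A} \cap (\alpha_2/\lambda) B_2^m\bigr)$ and apply the lower bound of \autoref{coro:bellec-random-hulls} with radius $\alpha = \alpha_2/\lambda$; since $d>0$ forces $\alpha_2/\lambda \to 0$, the hypothesis $\alpha \in (0, 1-\delta)$ holds once $N$ is large. On the same $A$-event of probability at least $1-\varepsilon_1$ used in \autoref{prop:gw-ub-1}, this yields
\begin{align*}
  \w(K_2) \geq \frac{\sqrt 2}{4}(1-\delta)^2 \lambda
  \sqrt{\log\frac{N\alpha_2^2}{5(1-\delta)^2\lambda^2}}.
\end{align*}
Substituting the admissible-ensemble choices $\lambda = L\sqrt{m/\log N}$ and $\alpha_2^2 \asymp N^d/\log N$ from \autoref{prop:gw-ub-1}, together with $m \geq N(\gamma - \theta)$ (valid for $N \geq N_\theta$), collapses the logarithm to order $d \log N$ and gives $\w(K_2) \gtrsim L\sqrt{d\,m} \gtrsim \sqrt m$. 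Consequently $\w^2(K_2)/\alpha_2^2 \gtrsim m\log N / N^d \gtrsim N^{1-d}\log N$.

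The main obstacle, and the point requiring the most care, is ensuring $d < 1$ so that this exponent diverges: since $d = \bigl(C_2/(8(1+\delta)L)\bigr)^2$, this is arranged by taking $L$ sufficiently large --- exactly the latitude already exploited in choosing $\alpha_2$ in \autoref{prop:gw-ub-1}. With $d < 1$ fixed, $N^{1-d}\log N \to \infty$, so the Borell--TIS tail is dominated by
\begin{align*}
  \exp\left(-c\,(C-1)^2\, N^{1-d}\log N\right),
\end{align*}
which falls below $\varepsilon_2$ once $N \geq N_0^{\eqref{prop:gw-ub-2}}$ for a threshold depending only on $C, \varepsilon_2, \delta, \gamma, \theta, L$. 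Intersecting the $A$-event carrying the \autoref{coro:bellec-random-hulls} estimates (probability at least $1-\varepsilon_1$) with this $z$-event (probability at least $1-\varepsilon_2$) then gives $\sup_{q\in K_2}\ip{q,z} \leq C\,\w(K_2)$, as claimed.
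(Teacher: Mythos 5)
Your proposal is correct and follows essentially the same route as the paper's own proof: an upper-tail application of Borell--TIS (\autoref{thm:borell-tis}) to the Gaussian process $\ip{q,z}$ indexed by $K_{2}$, combined with a lower bound on $\w(K_{2})$ obtained from \autoref{coro:bellec-random-hulls} after writing $K_{2} = \lambda\bigl(B_{1,A} \cap (\alpha_{2}/\lambda)B_{2}^{m}\bigr)$ and substituting the admissible-ensemble choices of $\lambda$ and $\alpha_{2}$.

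The one place you deviate is the variance proxy. You keep the sharp bound $\sigma_{K_{2}}^{2} \leq \alpha_{2}^{2} \asymp N^{d}/\log N$, which gives an exponent of order $N^{1-d}\log N$ and therefore forces the extra constraint $d < 1$, arranged by taking $L$ large. The paper instead relaxes to $\sigma_{K_{2}}^{2} \leq \alpha_{1}^{2} = a_{1}^{2}\sqrt m$, which gives the weaker-looking but still divergent exponent of order $d\sqrt m$, valid for \emph{every} $d > 0$ with no side condition on $L$ beyond those already imposed elsewhere. Your constraint is harmless, since $L$ is indeed free to be enlarged (as in \autoref{prop:gw-ub-1} and the ``increasing $L$ if necessary'' steps of \autoref{lem:geometric-lemma} and \autoref{prop:gw-ub-3}), but the paper's coarser bound buys uniformity in $d$ for free. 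One small slip: it is $d < 1$, not $d > 0$, that makes $\alpha_{2}/\lambda \to 0$ and validates the hypothesis $\alpha \in (0, 1-\delta)$ of \autoref{coro:bellec-random-hulls}; since you impose $d < 1$ anyway, the argument stands. Finally, your threshold $N_{0}$ depends on $C$ through the factor $(C-1)^{2}$; this matches the paper's proof, whose own condition on $N_{1}$ carries the same dependence, and is consistent with how the result is consumed downstream with a fixed $c_{2} > 1$.
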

\begin{proof}[Proof of {\autoref{prop:gw-ub-2}}]

  Define the centered Gaussian process $T_{x} := \ip{x, g}$ for
  $x \in K_{2} \subseteq \reals^{m}$, a topological space, where
  $g_{i} \iid \mathcal{N}(0, 1)$. Observe that
  $X_{2} = \sup_{x \in K_{2}} |T_{x}| < \infty$ almost surely. So, for any
  $u > 0$,
  \begin{align*}
    \mathbb{P}\left(
    X_{2} > \w (K_{2}) + u\right) %
    \leq \exp\left(- \frac{u^{2}}{2 \sigma_{K_{2}}^{2}}\right)
  \end{align*}
  by \autoref{thm:borell-tis}, where
  \begin{align*}
    \sigma_{K_{2}}^{2} %
    & = \sup_{x \in K_{2}}\E_{g}|\ip{x,g}|^{2} %
    = \sup_{x \in K_{2}}\sum_{i=1}^{m} x_{i} \E_{i}|g_{i}|^{2} %
    \\
    & = \sup_{x\in K_{2}} \|x\|_{2}^{2} %
    = \alpha_{2}^{2} \leq \alpha_{1}^{2} %
    = a_{1}^{2}\sqrt m.
  \end{align*}
  Now, invoke \autoref{coro:bellec-random-hulls}. For $N \geq N_{*}$, it holds
  with probability at least $1 - \varepsilon_{1}$ on the realization of $A$
  that for any $C > 1$,
  \begin{align*}
    \mathbb{P}&\left(X_{2} > C \w (K_{2})\right) %
    \leq \exp\left(- \frac{(C-1)^{2} \w^{2}(K_{2})}{2 \sigma_{K_{2}}^{2}}\right) %
    \\
    & \leq \exp\left(- \frac{(C-1)^{2} C_{\delta, L} \sqrt m
      \log(C_{\delta, \gamma, L, \theta} \alpha_{2}^{2} \log N) }{
      2a_{1}^{2} \log N}\right) %
    \\
    & = \exp\left(- C_{a_{1}, C, \delta, L}\frac{ \sqrt m
      \log(C_{\delta, \gamma, L, \theta} N^{d}) }{\log N}\right) %
    \\
    & = \exp\left(- C_{a_{1}, C, \delta, L}\sqrt m \left(
      d + \frac{C_{\delta, \gamma, L, \theta}}{\log N}\right)\right) %
    \\
    & \leq \exp\left(- C_{a_{1}, C, \delta, L} d \sqrt m\right) %
      < \varepsilon_{2}.
  \end{align*}
  The latter line follows by taking $N_{1} \geq N_{*}$ sufficiently large so
  that for all $N \geq N_{1}$,
  \begin{align*}
    \frac{C_{\delta, \gamma, L, \theta}}{\log N} > - \frac{d}{2}, %
    \qquad \text{and}\qquad %
    N %
    \geq \frac{\log^{2}\varepsilon_{2}^{-1}}{
    C_{a_{1}, C, \delta, L} d^{2} (\gamma - \theta)}.
  \end{align*}
  Finally, set $N_{0}^{\eqref{prop:gw-ub-2}} := N_{1}$. Then, for all
  $N \geq N_{0}^{\eqref{prop:gw-ub-2}}$, it holds with probability at least
  $1 - \varepsilon_{1}$ on the realization of $A$ that, for any $C > 1$,
  $X_{2} \leq C \w (K_{2})$ with probability at least $1 - \varepsilon_{2}$ on
  the realization of $z$, as desired.
\end{proof}

\begin{prop}[Control $\w (K_{2})\cap F$]
  \label{prop:gw-ub-3}

  Fix $C_{2}, \delta, \epsilon_{1}, \varepsilon_{2} > 0$ and
  $\theta \in (0, \gamma)$. Given an admissible ensemble, there is an integer
  $N_{0}^{\eqref{prop:gw-ub-3}} \geq N_{*}$ and an absolute constant
  $k_{2} = k_{2}(N_{0}^{\eqref{prop:gw-ub-3}}, C_{2}, \varepsilon_{2}) > 0$ so
  that for all $N \geq N_{0}^{\eqref{prop:gw-ub-3}}$, with probability at
  least $1 - \varepsilon_{1}$ on the realization of $A$, there is an event
  $\mathcal{E}$ for $z$ satisfying
  \begin{align*}
    K_{2} \cap F = \emptyset \quad \text{on} \quad \mathcal{E} %
    \qquad \text{and}\qquad %
    \mathbb{P}(\mathcal{E}) \geq k_{2}.
  \end{align*}
\end{prop}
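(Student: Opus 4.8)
The plan is to assemble this statement from its two ingredients---\autoref{prop:gw-ub-1} (the upper bound on $\w(K_2)$) and \autoref{prop:gw-ub-2} (the upper bound on $\sup_{q \in K_2}\ip{q,z}$)---in exact analogy with the lower-bound companion \autoref{prop:gw-lb-3}, with every inequality reversed. The goal is to exhibit a $z$-event $\mathcal{E}$, of dimension-independent probability, on which every $q \in K_2$ is pushed strictly outside the ball $F$, i.e.\ $\|q - z\|_2 > \sqrt m$ uniformly in $q$, so that $K_2 \cap F = \emptyset$.

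First I would fix an absolute constant $C > 1$ and introduce the matched target constant $C_2' := C_2 / C < C_2$. Applying \autoref{prop:gw-ub-1} with $C_2'$ in place of $C_2$ furnishes an integer $N_a \geq N_*$ together with the maximal admissible choice of $\alpha_2 = \alpha_2(N)$ such that, for all $N \geq N_a$, with probability at least $1 - \varepsilon_1/2$ on the realization of $A$ one has $\w(K_2) \leq \tfrac{C_2'}{2}\sqrt m$. Applying \autoref{prop:gw-ub-2} with this same $C$ (and the same $\alpha_2$) furnishes an integer $N_b \geq N_*$ so that, for all $N \geq N_b$, with probability at least $1 - \varepsilon_1/2$ on $A$ there is a $z$-event $\mathcal{E}_2$ with $\mathbb{P}(\mathcal{E}_2) \geq 1 - \varepsilon_2$ on which $\sup_{q \in K_2}\ip{q,z} \leq C\,\w(K_2)$. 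Intersecting the two $A$-events (which together have probability at least $1 - \varepsilon_1$) and using the constant matching collapses the combined estimate to the clean bound $\sup_{q \in K_2}\ip{q,z} \leq C \cdot \tfrac{C_2'}{2}\sqrt m = \tfrac{C_2}{2}\sqrt m$ on $\mathcal{E}_2$. This is the same device used to define $C_1'$ in the proof of \autoref{prop:gw-lb-3}.

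Next I would set $\mathcal{Z}_> := \{\|z\|_2^2 > m + C_2\sqrt m\}$ and take $\mathcal{E} := \mathcal{E}_2 \cap \mathcal{Z}_>$ with $N_0^{\eqref{prop:gw-ub-3}} := \max\{N_a, N_b\}$. On $\mathcal{E}$, for an arbitrary $q \in K_2$ the decomposition $\|q - z\|_2^2 = \|q\|_2^2 - 2\ip{q,z} + \|z\|_2^2$, the nonnegativity $\|q\|_2^2 \geq 0$, the bound $2\ip{q,z} \leq C_2\sqrt m$, and the strict defining inequality of $\mathcal{Z}_>$ combine to give $\|q - z\|_2^2 \geq \|z\|_2^2 - C_2\sqrt m > m$. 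Since the bound is uniform over $q \in K_2$, no point of $K_2$ lies in $F$, yielding $K_2 \cap F = \emptyset$. For the probability, the elementary estimate $\mathbb{P}(\mathcal{E}) = \mathbb{P}(\mathcal{E}_2 \cap \mathcal{Z}_>) \geq \mathbb{P}(\mathcal{Z}_>) - \varepsilon_2$ reduces matters to lower-bounding $\mathbb{P}(\mathcal{Z}_>)$. Since $\|z\|_2^2 \sim \chi_m^2$, the central limit theorem gives $\mathbb{P}(\mathcal{Z}_>) \to \mathbb{P}\big(g > C_2/\sqrt 2\big) > 0$ as $m \to \infty$ for $g \sim \mathcal{N}(0,1)$, each term being strictly positive; hence $k_2 := \big[\inf_{N \geq N_0^{\eqref{prop:gw-ub-3}}} \mathbb{P}(\mathcal{Z}_>)\big] - \varepsilon_2 > 0$ for $\varepsilon_2$ small, exactly as $k_1$ is obtained in \autoref{prop:gw-lb-3}.

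The conceptually substantial work has already been discharged into \autoref{prop:gw-ub-1} and \autoref{prop:gw-ub-2}; within the present argument the only delicate points are bookkeeping. The first is the constant-matching choice $C_2' = C_2/C$, which is exactly what makes the product of the two estimates land on $\tfrac{C_2}{2}\sqrt m$ rather than something larger---without it the combined slack would be too big to beat the $\chi^2$ fluctuation scale $C_2\sqrt m$. The second, and the part I would treat most carefully, is verifying that $\mathbb{P}(\mathcal{Z}_>)$ stays bounded away from $0$ uniformly in $N$: this hinges on the $\chi_m^2$ upper tail evaluated at the fluctuation scale $m + C_2\sqrt m$ not degenerating, which the central limit theorem secures but which must be stated uniformly over the (infinite) range $N \geq N_0^{\eqref{prop:gw-ub-3}}$ rather than merely in the limit.
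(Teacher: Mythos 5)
Your proposal is correct and takes essentially the same route as the paper's own proof: combine \autoref{prop:gw-ub-1} and \autoref{prop:gw-ub-2} with matched constants so that $\sup_{q \in K_{2}}\ip{q,z} \leq \tfrac{C_{2}}{2}\sqrt m$ on an event of $z$-probability at least $1 - \varepsilon_{2}$, intersect with $\mathcal{Z}_{>}$, expand $\|q - z\|_{2}^{2}$ to force $K_{2} \cap F = \emptyset$, and set $k_{2} := \inf_{N}\mathbb{P}(\mathcal{Z}_{>}) - \varepsilon_{2}$. The only differences are cosmetic — the order in which the two propositions are invoked, exact constant matching $C C_{2}' = C_{2}$ versus the paper's strict $c_{2}C_{2}' < C_{2}$, and your explicit CLT argument for the dimension-independent lower bound on $\mathbb{P}(\mathcal{Z}_{>})$, which the paper simply asserts.
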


\begin{proof}[Proof of {\autoref{prop:gw-ub-3}}]

  Fix $c_{2} > 1$. By \autoref{prop:gw-ub-2}, there is a choice of $a_{1} > 0$
  and $L > 1$, and an integer $N_{0} \geq N_{*}$ such that, with probability at
  least $1 - \varepsilon_{1} / 2$ on the realization of $A$, there is an event
  $\mathcal{E}_{2}$ for $z$, with
  $\mathbb{P}(\mathcal{E}_{2}) \geq 1 - \varepsilon_{2}$, on which
  \begin{align*}
    \sup_{q \in K_{2}} \ip{q, z} \leq c_{2} \w (K_{2}).
  \end{align*}
  Now, select $C_{2}' > 0$ so that $0 < c_{2} C_{2}' < C_{2}$. By
  \autoref{prop:gw-ub-1}, increasing $L$ if necessary, there is an integer
  $N_{1} \geq N_{0}$ so that with probability at least
  $1 - \varepsilon_{1} / 2$ on the realization of $A$,
  \begin{align*}
    \w (K_{2}) \leq \frac{C_{2}'}{2} \sqrt m.
  \end{align*}
  In particular, with probability at least $1 - \varepsilon_{1}$ on the
  realization of $A$ and probability at least $1 - \varepsilon_{2}$ on the
  realization of $z$, one has simultaneously:
  \begin{align*}
    \sup_{q \in K_{2}} \ip{q, z} %
    \leq c_{2} \w (K_{2}) %
    < c_{2} \frac{C_{2}'}{2} \sqrt m %
    <  \frac{C_{2}}{2} \sqrt m.
  \end{align*}
  Define the event
  $\mathcal{Z}_{>} := \{ \|z\|_{2}^{2} \geq m + C_{2} \sqrt m\}$. With
  probability at least $1 - \varepsilon_{1}$ on the realization of $A$,
  conditioning on $\mathcal{E}_{2} \cap \mathcal{Z}_{>}$ gives, for any
  $q \in K_{2}$,
  \begin{align*}
    \|q - z\|_{2}^{2} %
    & = \|q \|_{2}^{2} - 2 \ip{q, z} + \|z\|_{2}^{2} %
    \\
    & > \|q\|_{2}^{2} - C_{2} \sqrt m + m + C_{2} \sqrt m %
    \\
    & \geq m.
  \end{align*}
  In particular, $\|q - z\|_{2}^{2} > m$ for all $q \in K_{2}$, and so
  $K_{2} \cap F = \emptyset$. Choose $N_{0}^{\eqref{prop:gw-ub-3}} :=
  N_{1}$. Then, for each $N \geq N_{0}^{\eqref{prop:gw-ub-3}}$, with
  probability at least $1 - \varepsilon_{1}$ on the realization of $A$, there
  is an event $\mathcal{E} := \mathcal{E}_{2} \cap \mathcal{Z}_{>}$ on which
  $K_{2} \cap F = \emptyset$. Next, define
  \begin{align*}
    k_{2} %
    := k_{2}(N_{0}^{\eqref{prop:gw-ub-3}}, C_{2}, \varepsilon_{2}) %
    := \left[\inf_{N\geq N_{0}^{\eqref{prop:gw-ub-3}}}
    \mathbb{P}(\mathcal{Z}_{>})\right] - \varepsilon_{2}
  \end{align*}
  and observe that $k_{2} > 0$ because $\mathbb{P}(\mathcal{Z}_{>})$ is bounded
  below by a dimension independent constant for $N \geq 2$. Finally, because
  $\mathcal{E}_{2}$ holds with probability at least $1 - \varepsilon_{2}$ on
  the realization of $z$, one has
  \begin{align*}
    \mathbb{P}(\mathcal{E}) %
    \geq \mathcal{P}(\mathcal{Z}_{>}) - \varepsilon_{2} %
    \geq k_{2}.
  \end{align*}
\end{proof}

\begin{rmk}[Dimension-independent bound for $k_{1}, k_{2}$]
  In \autoref{prop:gw-lb-3} and \autoref{prop:gw-ub-3} above, it is not
  necessary to have $k_{1}$ depend on $N_{0}^{\eqref{prop:gw-lb-3}}$ or
  $k_{2}$ on $N_{0}^{\eqref{prop:gw-ub-3}}$. For example, one could bound
  $\mathbb{P}(\mathcal{Z}_{<})$ for all $N \geq 2$. However, the dependence of
  $k_{1}$ on $N_{0}^{\eqref{prop:gw-lb-3}}$ is simply to note that the lower
  bound on $\mathbb{P}(\mathcal{Z}_{<})$ is improved by considering
  $N \geq N_{0}^{\eqref{prop:gw-lb-3}}$ as opposed to merely $N \geq
  2$. Analogously so for $k_{2}$.
\end{rmk}

\subsection{Appendix II}
\label{sec:appendix-ii}

\subsubsection{RBF approximation of the average loss}
\label{sec:rbf-approximation}

For a particular program $\mathfrak P$, $A \in \reals^{m \times N}$,
$z_{i} \iid \mathcal{N}(0, 1)$, $\eta > 0$ and $x_{0} \in \Sigma_{s}^{N}$,
define $y^{(j)} = Ax_{0} + \eta \hat z^{(j)}, j \in [k]$ where $\hat z^{(j)}$ is
an iid copy of $z$. For the program $\mathfrak P$ denote the solution to the
program by $x^{*}(\upsilon; y^{(j)}, A)$, where $\upsilon > 0$ is the governing
parameter. Here, $\upsilon \in \Upsilon^{(j)}$, where $\Upsilon^{(j)}$ is a
logarithmically spaced grid of $n$ points whose centre is approximately equal to
the optimal parameter choice $\upsilon^{*}(x_{0}, A, \eta z^{(j)})$.  Next,
define the concatenated vectors
\begin{align*}
  \mathbf{z} := (z^{(j)} : j \in [k]) \in \reals^{kn},\quad
  \mathbf{y} := (y^{(j)} : j \in [k]) \in \reals^{kn},
\end{align*}
and define the collection
\begin{align*}
  \Gamma %
  & := \Gamma(x_{0}, A, \eta\mathbf{z}) %
  \\
  & := \{ (\upsilon_{ij}, \mathscr{L}(\upsilon_{ij}; x_{0}, A, \eta z^{(j)})) : %
  \\
  & \quad\quad \upsilon_{i,j} \in \Upsilon^{(j)}, i \in [n], j \in [k] \}.
\end{align*}
From here, we describe how to approximate the average loss and the normalized
parameter. Specifically, we construct the RBF approximator $L^{\dagger}$
satisfying $L^{\dagger}(\rho; \Gamma) \approx \bar L(\rho; x_{0}, A, \eta,
k)$. Define the multiquadric RBF kernel by
\begin{align*}
  \kappa(\upsilon, \upsilon') %
  := \sqrt{1 + \left(\frac{|\upsilon-\upsilon'|}{\varepsilon_{\text{rbf}}}\right)^{2}}, %
  \qquad \upsilon, \upsilon' > 0,
\end{align*}
and define the matrix $X \in \reals^{kn \times kn}$ by
\begin{align*}
  X_{ij} = \kappa(\upsilon_{i}, \upsilon_{j}), \quad %
  \upsilon_{i}, \upsilon_{j} \in \Gamma.
\end{align*}
For $\mu_{\text{rbf}} \geq 0$, the coefficients of the RBF approximator are
given by $\tilde w \in \reals^{kn}$ where $\tilde w$ solves
\begin{align*}
  \mathbf{y} = (X - \mu_{\text{rbf}}I_{kn})\tilde w
\end{align*}
with $I_{kn} \in \reals^{kn \times kn}$ being the identity matrix. To evaluate
the approximant at a set of points $\xi \in \reals^{n_{\text{rbf}}}$, one
simply computes
\begin{align*}
  \tilde y &= \mathscr{L}^{\dagger}(\xi; \Gamma(\mathbf{y}, A)) := \tilde X \tilde w
  \quad \text{where} \quad%
  \\
  \tilde X_{ij} &:= \kappa(\xi_{i}, \upsilon_{j}), i \in [n_{\text{rbf}}], j \in [kn]. 
\end{align*}
The optimal parameter choice for the approximator, $\upsilon^{\dagger} > 0$, is
given as
\begin{align*}
  \upsilon^{\dagger} \in \argmin_{\upsilon > 0} \mathscr{L}^{\dagger}(\upsilon; \Gamma),
\end{align*}
Finally, the normalized parameter $\rho$ is approximated as
$\rho \approx \upsilon / \upsilon^{\dagger}$ and the average loss thus
approximated by
\begin{align*}
  L^{\dagger} (\rho; \Gamma(x_{0}, A, \eta \mathbf{z})) %
  := \mathscr{L} (\rho \upsilon^{\dagger}; \Gamma(x_{0}, A, \eta \mathbf{z})). 
\end{align*}

\subsubsection{Interpolation parameter settings}
\label{sec:interp-param-sett}

The RBF interpolation parameter settings for each of the approximations of the
average loss pertaining to {\ls} parameter sensitivity numerics appearing
in~\autoref{sec:ls-numerics} can be found in~\autoref{tab:ls-numerics-1}. For
those pertaining to {\qp}, appearing in~\autoref{fig:qp-instability-2}
and~\ref{fig:synthetic-example-0}--\ref{fig:synthetic-example-2},
see~\autoref{tab:qp-numerics-2}. For those pertaining to {\bp}, appearing
in~\autoref{sec:bp-numerics}, see~\autoref{tab:rbf-parameter-settings}. The RBF
interpolation parameter settings for each of the average loss approximations
in~\autoref{sec:1d-wavel-compr} are given
in~\autoref{tab:lasso-realistic-1d}. For those pertaining
to~\autoref{fig:2d-wavelet-nnse-1}--\ref{fig:realistic-lasso-sslp-2}
of~\autoref{sec:2d-wavel-compr}, see~\autoref{tab:2d-wvlt-cs-params}.
  
\begin{table}[h]
  \centering
  \begin{tabular}{crr}
    \toprule
    program & $\epsilon_{\text{rbf}}$ & $\mu_{\text{rbf}}$ \\
    \midrule
    {\ls} & $10^{-5}$        & $0.1$ \\
    {\qp} & $3\cdot 10^{-2}$ & $0.5$ \\
    {\bp} & $3\cdot 10^{-2}$ & $0.5$ \\
    \bottomrule
  \end{tabular}
  \caption[interpolation settings]{Average loss interpolation parameter settings
    for {\ls} parameter instability numerics in \autoref{sec:ls-numerics}.
    $(s, N, m, \eta) = (1, 10^{5}, 2500, 2\cdot 10^{-3})$;
    $(n_{\text{rbf}}, \text{function}) = (301, \text{multiquadric})$.}
  \label{tab:ls-numerics-1}
\end{table}

\begin{table}[h]
  \centering
  \begin{tabular}{rrrrlrr}
\toprule
      $s$ & $N$    &  $m$   &  $\eta$ & program & $\epsilon_{\text{rbf}}$ & $\mu_{\text{rbf}}$ \\
  \midrule
  $1$ & $10000$ & $2500$ & $10^{-5}$ & {\ls} & $0.005$ & $1$ \\
  $1$ & $10000$ & $2500$ & $10^{-5}$ & {\qp} & $0.05$  & $1$ \\
  $1$ & $10000$ & $2500$ & $10^{-5}$ & {\bp} & $0.04$  & $0.9$ \\
  $1$ & $10000$ & $4500$ & $10^{-5}$ & {\ls} & $0.005$ & $1$ \\
  $1$ & $10000$ & $4500$ & $10^{-5}$ & {\qp} & $0.05$  & $1$ \\
  $1$ & $10000$ & $4500$ & $10^{-5}$ & {\bp} & $0.04$  & $0.9$ \\
\midrule
  $750$ & $10000$ & $4500$ & $0.1$ & {\ls} & $0.005$ & $1$ \\
  $750$ & $10000$ & $4500$ & $0.1$ & {\qp} & $0.05$  & $1$ \\
  $750$ & $10000$ & $4500$ & $0.1$ & {\bp} & $0.05$  & $0.5$ \\
\midrule
  $100$ & $10000$ & $2500$ & $100$ & {\ls} & $0.005$ & $1$ \\
  $100$ & $10000$ & $2500$ & $100$ & {\qp} & $0.05$  & $1$ \\
  $100$ & $10000$ & $2500$ & $100$ & {\bp} & $0.05$  & $0.5$ \\
  $100$ & $10000$ & $4500$ & $100$ & {\ls} & $0.005$ & $1$ \\
  $100$ & $10000$ & $4500$ & $100$ & {\qp} & $0.05$  & $1$ \\
  $100$ & $10000$ & $4500$ & $100$ & {\bp} & $0.05$  & $0.5$ \\

\bottomrule
\end{tabular}

\caption[interpolation settings]{Average loss interpolation parameter settings
  for {\qp} parameter instability numerics in \autoref{fig:qp-instability-2} and
  \autoref{fig:synthetic-example-0},~\ref{fig:synthetic-example-1}
  and~\ref{fig:synthetic-example-2}. $n_{\text{rbf}} = 501$, function $=$ multiquadric.}
\label{tab:qp-numerics-2}
\end{table}

  

\begin{table}[h]
  \centering
  \begin{tabular}{rrrrlrrr}
    \toprule
    $N$ & $m$ & $\eta$ & $\delta$ & program & $\epsilon_{\text{rbf}}$ & $\mu_{\text{rbf}}$ & $n_{\text{rbf}}$\\
    \midrule
    $4000$ &  $400$  & $1$   & $0.1$   & {\bp} & $.05$  & $1$ & $301$ \\
    $4000$ &  $400$  & $1$   & $0.1$   & {\ls} & $.001$ & $1$ & $301$ \\
    $4000$ &  $400$  & $1$   & $0.1$   & {\qp} & $.05$  & $1$ & $301$ \\
    $4000$ &  $1000$ & $1$   & $0.25$  & {\bp} & $.05$  & $1$ & $301$ \\
    $4000$ &  $1000$ & $1$   & $0.25$  & {\ls} & $.001$ & $1$ & $301$ \\
    $4000$ &  $1000$ & $1$   & $0.25$  & {\qp} & $.05$  & $1$ & $301$ \\
    $4000$ &  $1800$ & $1$   & $0.45$  & {\bp} & $.05$  & $1$ & $301$ \\
    $4000$ &  $1800$ & $1$   & $0.45$  & {\ls} & $.001$ & $1$ & $301$ \\
    $4000$ &  $1800$ & $1$   & $0.45$  & {\qp} & $.05$  & $1$ & $301$ \\
    $7000$ &  $700$  & $1$   & $0.1$   & {\bp} & $.05$  & $1$ & $301$ \\
    $7000$ &  $700$  & $1$   & $0.1$   & {\ls} & $.001$ & $1$ & $301$ \\
    $7000$ &  $700$  & $1$   & $0.1$   & {\qp} & $.05$  & $1$ & $301$ \\
    $7000$ &  $1750$ & $1$   & $0.25$  & {\bp} & $.05$  & $1$ & $301$ \\
    $7000$ &  $1750$ & $1$   & $0.25$  & {\ls} & $.001$ & $1$ & $301$ \\
    $7000$ &  $1750$ & $1$   & $0.25$  & {\qp} & $.05$  & $1$ & $301$ \\
    $7000$ &  $700$ & $100$ &  $0.1$   & {\ls} & $0.05$ & $1$ & $501$ \\
    $7000$ &  $700$ & $100$ &  $0.1$   & {\qp} & $0.05$ & $1$ & $501$ \\
    $7000$ &  $700$ & $100$ &  $0.1$   & {\bp} & $0.05$ & $1$ & $501$ \\
    $7000$ &  $1750$ & $100$ & $0.25$  & {\ls} & $0.05$ & $1$ & $501$ \\
    $7000$ &  $1750$ & $100$ & $0.25$  & {\qp} & $0.05$ & $1$ & $501$ \\
    $7000$ &  $1750$ & $100$ & $0.25$  & {\bp} & $0.05$ & $1$ & $501$ \\
    \bottomrule
  \end{tabular}
  \caption{The RBF interpolation parameter settings for each of the
    approximations of the average loss in \autoref{sec:bp-numerics}, including
    for \autoref{fig:bp-numerics-1}. $s=1$, function $=$ multiquadric.}
  \label{tab:rbf-parameter-settings}
\end{table}

\begin{table}[h]
  \centering
  \begin{tabular}{rlrr}
    \toprule
    program   & $\varepsilon_{\mathrm{rbf}}$ & $\mu_{\mathrm{rbf}}$ & $n_{\text{rbf}}$ \\
    \midrule
    {\ls} & $0.01$ & $1$ & $501$ \\
    {\qp} & $0.01$ & $1$ & $501$ \\
    {\bp} & $0.01$ & $1$ & $501$ \\
    \bottomrule
  \end{tabular}
  \caption{The RBF interpolation parameter settings for each of the average loss
    approximations in~\autoref{sec:1d-wavel-compr}. $(s, N, m, \eta) = (10, 4096, 1843, 50)$; function = multiquadric.}
\label{tab:lasso-realistic-1d}
\end{table}

\begin{table}[h]
  \centering
  \begin{tabular}{rrlrrr}
    \toprule
    $\eta$ & program & $\varepsilon_{\text{rbf}}$ & $\mu_{\mathrm{rbf}}$ & $n_{\text{rbf}}$ \\
    \midrule
    $0.01$ & {\ls} &     $0.001$  & $1$ & $501$ \\
    $0.01$ & {\qp} &     $0.05$   & $1$ & $501$ \\
    $0.01$ & {\bp} &     $0.05$   & $1$ & $501$ \\
    $0.5$  & {\ls} &     $0.05$   & $1$ & $501$ \\
    $0.5$  & {\qp} &     $0.05$   & $1$ & $501$ \\
    $0.5$  & {\bp} &     $0.05$   & $1$ & $501$ \\
    \bottomrule
  \end{tabular}
  \caption{The RBF interpolation parameter settings for each of the RBF
    interpolations
    in~\autoref{fig:2d-wavelet-nnse-1}--\ref{fig:realistic-lasso-sslp-2}
    of~\autoref{sec:2d-wavel-compr}. $(s, N, m) = (416, 6418, 2888)$; function
    $=$ multiquadric.}
  \label{tab:2d-wvlt-cs-params}
\end{table}

\bibliographystyle{IEEEtran}
\bibliography{lasso-parameter-instability}

\end{document}